\newcommand\abs[1]{\lvert #1\rvert}
\newtheorem{THM}{Theorem}[section]
\newtheorem{LEM}[THM]{Lemma}
\newtheorem{COR}[THM]{Corollary}
\newtheorem{PROP}[THM]{Proposition}
\newtheorem{CLM}{Claim}
\newtheorem{OBS}[THM]{Observation}
\theoremstyle{remark}
\newenvironment{subproof}[1][\proofname]{
	
	\begin{proof}[#1]}{\end{proof}
}
\theoremstyle{definition}
\newcommand{\dist}{\mathrm{dist}}
\newcommand{\wc}{\mathrm{wcol}}
\newcommand{\wre}{\mathrm{WReach}}
\newcommand{\idx}{\operatorname{index}}
\begin{document}
\title{Unified almost linear kernels for \\ generalized covering and packing problems \\ on nowhere dense classes}
\author[1,2]{Jungho~Ahn}
\author[2]{Jinha~Kim}
\author[3,2]{O-joung~Kwon}
\affil[1]{Department of Mathematical Sciences, KAIST, Daejeon,~South~Korea}
\affil[2]{Discrete Mathematics Group, Institute for Basic Science (IBS), Daejeon, South~Korea}
\affil[3]{Department of Mathematics, Hanyang University, Seoul,~South~Korea}
\affil[ ]{\small\textit{Email addresses:}
	\texttt{junghoahn@kaist.ac.kr},
	\texttt{jinhakim@ibs.re.kr},
	\texttt{ojoungkwon@hanyang.ac.kr}
}
\footnotetext[1]{Jungho Ahn, Jinha Kim, and O-joung Kwon are supported by the Institute for Basic Science (IBS-R029-C1). O-joung Kwon is also supported by the National Research Foundation of Korea (NRF) grant funded by the Ministry of Education (No. NRF-2018R1D1A1B07050294) and (No. NRF-2021K2A9A2A11101617).}

\maketitle

\begin{abstract}
    Let $\mathcal{F}$ be a family of graphs, and let $p$ and $r$ be nonnegative integers.
    The \textsc{$(p,r,\mathcal{F})$-Covering} problem asks whether for a graph $G$ and an integer $k$, there exists a set $D$ of at most $k$ vertices in $G$ such that $G^p\setminus N_G^r[D]$ has no induced subgraph isomorphic to a graph in $\mathcal{F}$, where $G^p$ is the $p$-th power of $G$ and $N^r_G[D]$ is the set of all vertices in $G$ at distance at most $r$ from $D$ in $G$.
    The \textsc{$(p,r,\mathcal{F})$-Packing} problem asks whether for a graph $G$ and an integer $k$, $G^p$ has $k$ induced subgraphs $H_1,\ldots,H_k$ such that each $H_i$ is isomorphic to a graph in $\mathcal{F}$, and for distinct $i,j\in \{1, \ldots, k\}$, the distance between $V(H_i)$ and $V(H_j)$ in $G$ is larger than $r$.
    The \textsc{$(p,r,\mathcal{F})$-Covering} problem generalizes \textsc{Distance-$r$ Dominating Set} and \textsc{Distance-$r$ Vertex Cover}, and 
    the \textsc{$(p,r,\mathcal{F})$-Packing} problem generalizes \textsc{Distance-$r$ Independent Set} and \textsc{Distance-$r$ Matching}.
    By taking $(p',r',\mathcal{F}')=(pt, rt, \mathcal{F})$, we may formulate the \textsc{$(p,r,\mathcal{F})$-Covering} and \textsc{$(p, r, \mathcal{F})$-Packing} problems on the $t$-th power of a graph.
    Moreover, \textsc{$(1,0,\mathcal{F})$-Covering} is the \textsc{$\mathcal{F}$-Free Vertex Deletion} problem, and \textsc{$(1,0,\mathcal{F})$-Packing} is the \textsc{Induced-$\mathcal{F}$-Packing} problem. 
    
    We show that for every fixed nonnegative integers $p,r$ and every fixed nonempty finite family $\mathcal{F}$ of connected graphs, the \textsc{$(p,r,\mathcal{F})$-Covering} problem with $p\leq2r+1$ and the \textsc{$(p,r,\mathcal{F})$-Packing} problem with $p\leq2\lfloor r/2\rfloor+1$ admit almost linear kernels on every nowhere dense class of graphs, and admit linear kernels on every class of graphs with bounded expansion, parameterized by the solution size $k$. We obtain the same kernels for their annotated variants.
    As corollaries, we prove that \textsc{Distance-$r$ Vertex Cover}, \textsc{Distance-$r$ Matching}, \textsc{$\mathcal{F}$-Free Vertex Deletion}, and \textsc{Induced-$\mathcal{F}$-Packing} for any fixed finite family $\mathcal{F}$ of connected graphs admit almost linear kernels on every nowhere dense class of graphs and linear kernels on every class of graphs with bounded expansion.
    Our results extend the results for \textsc{Distance-$r$ Dominating Set} by Drange et al. (STACS 2016) and Eickmeyer et al. (ICALP 2017), and the result for \textsc{Distance-$r$ Independent Set} by Pilipczuk and Siebertz (EJC 2021).
\end{abstract}

\section{Introduction}\label{sec:intro}

The \textsc{Dominating Set} problem is one of the classical NP-hard problems which asks whether a given graph $G$ contains a set of at most $k$ vertices whose closed neighborhood contains all the vertices of $G$. A natural variant of it is the \textsc{Distance-$r$ Dominating Set} problem which asks whether a given graph $G$ contains a set of at most $k$ vertices such that every vertex of $G$ is at distance at most $r$ from one of these vertices. This problem has many applications in  facility location problems; for instance, when we design a city, we want to afford fire stations so that every building is within $r$ city blocks from one of the fire stations. By solving \textsc{Distance-$r$ Dominating Set} problem, we can test whether at most $k$ fire stations may cover the city. 

\textsc{Dominating Set} has been intensively studied in the context of fixed-parameter algorithms. In a parameterized problem, we are given an instance $(x,k)$ where $k$ is a parameter, and the central question is whether a parameterized problem admits an algorithm that runs in time $f(k)\cdot |x|^c$ for some computable function $f$ and a constant $c$. Such an algorithm is called a \emph{fixed-parameter} algorithm parameterized by $k$. 
It is known that \textsc{Dominating Set} is W[2]-complete parameterized by $k$~\cite{DF1992,DF1995-1,DF1995-2}, meaning that it cannot be solved by a fixed-parameter algorithm, unless an unexpected collapse occurs in the parameterized complexity hierarchy.
So, it is natural to restrict graph classes and see whether a fixed-parameter algorithm exists. \textsc{Dominating Set} admits a fixed-parameter algorithm on planar graphs~\cite{Dorn2006, FominT2006}, and the project of finding larger sparse graph classes on which fixed-parameter algorithms for \textsc{Dominating Set} exist has been studied intensively, see~\cite{DemaineFHT2005, AlonG2009, Gutner2009,  GeevargheseVS2012, GroheKS2017, FominLST2018, TelleV2019}. 

A kernelization algorithm for a parameterized problem takes an instance $(x,k)$ and outputs an equivalent instance $(x', k')$ in time polynomial in $|x|+k$, where $|x'|+k'\le g(k)$ for some function $g$, called the size of the kernel. If $g$ is polynomial (resp. linear), then such an algorithm is called a polynomial (resp. linear) kernel. Observe that if there is a polynomial kernel, then the problem admits a fixed-parameter algorithm. After applying a kernelization, any brute-force search runs in time bounded by a function of~$k$. Therefore, the existence of a polynomial kernel or a linear kernel is an interesting research direction. In particular, the existence of linear kernels for \textsc{Dominating Set} on sparse graph classes have been investigated.

One of the first results is a linear kernel for \textsc{Dominating Set} on planar graphs due to Alber, Fellows, and Niedermeier~\cite{AlberFN2004}.
It has been generalized to classes of bounded genus graphs~\cite{FominT2004},  $H$-minor free graphs~\cite{FominLS2012}, and $H$-topological minor free graphs~\cite{FominLST2018}. 
Drange et al.~\cite{KnS2016} extended the previous results to classes of graphs with bounded expansion for general \textsc{Distance-$r$ Dominating Set}, and Eickmeyer et al.~\cite{NCK2017} obtained almost linear kernels for \textsc{Distance-$r$ Dominating Set} on even larger nowhere dense classes.
Classes of graphs with bounded expansion and nowhere dense classes of graphs were introduced by Ne\v{s}et\v{r}il and Ossona de Mendez~\cite{NesetrilO2008}, which are defined in terms of shallow minors and capture most of well-studied sparse graph classes.  
 
 \textsc{Independent Set} is another classic NP-hard problem which asks to find a set of $k$ vertices in a given graph whose pairwise distance is more than~$1$, and \textsc{Distance-$r$ Independent Set} is the problem obtained by replacing $1$ with $r$.
 It is known that \textsc{Independent Set} is W[1]-complete parameterized by $k$~\cite{DF1995-2}.
 The distance variations of \textsc{Dominating Set} and \textsc{Independent Set} are closely related, in a sense that the size of a distance-$2r$ independent set is a lower bound of the minimum size of a distance-$r$ dominating set. Dvo\v{r}\'{a}k~\cite{Dvorak2013} presented an approximation algorithm for \textsc{Distance-$r$ Dominating Set}, which outputs a set of size bounded by a function of the $2r$-weak coloring number and the maximum size of a distance-$2r$ independent set. Pilipczuk and Siebertz~\cite{PS2021} recently presented an almost linear kernel for \textsc{Distance-$r$ Independent Set} on nowhere dense classes of graphs.
 
 We remark that for a fixed $r$, both of the \textsc{Distance-$r$ Dominating Set} and \textsc{Distance-$r$ Independent Set} can be expressed in first-order logic.
 Thus, by the meta-theorem of Grohe, Kreutzer, and Siebertz~\cite{GroheKS2017}, they are almost-linear-time fixed-parameter tractable on every nowhere dense class of graphs.
 Fabia\'{n}ski, Pilipczuk, Siebertz, and Toru\'{n}czyk~\cite{FPST2019} presented linear-time fixed-parameter algorithms for the \textsc{Distance-$r$ Dominating Set} problem on various graph classes, including powers of nowhere dense classes and map graphs, and a linear-time fixed-parameter algorithm for the \textsc{Distance-$r$ Independent Set} problem on every nowhere dense class.
 Einarson and Reidl~\cite{ER2020} presented linear kernels for generalizations of the \textsc{Distance-$r$ Dominating Set} and \textsc{Distance-$r$ Independent Set} problems on classes with bounded expansion, called the \textsc{$(r,c)$-Dominating Set}, \textsc{$(r,c)$-Scattered Set}, and \textsc{$(r,[\lambda,\mu])$-Domination} problems.

A natural question is whether there are  linear/polynomial kernels for other problems on classes of graphs with bounded expansion and nowhere dense classes of graphs.
Meta-type kernelization results have been studied for graphs on bounded genus~\cite{BodlaenderFLPT2016}, $H$-minor free graphs~\cite{FominLST2020}, $H$-topological minor free graphs~\cite{KimLPRRSS2016}, graph classes with bounded expansion and nowhere dense classes of graphs~\cite{Gajarsky2017}.
Note that the last result by Gajarsky et al.~\cite{Gajarsky2017} is to obtain kernelizations parameterized by the size of a modulator to constant treedepth, and not by the solution size. 
At the moment, no much variations of Drange et al.~\cite{KnS2016}, Eickmeyer et al.~\cite{NCK2017} for \textsc{Distance-$r$ Dominating Set} and Pilipczuk and Siebertz~\cite{PS2021} for \textsc{Distance-$r$ Independent Set} have been studied.
In this paper, we consider generic problems to hit certain finite graphs by the $r$-th neighborhood of a set of vertices, and present almost linear kernels. \textsc{Distance-$r$ Dominating Set} can be seen as a problem of hitting all singleton vertices by the $r$-th neighborhood of a set of vertices.

Let $\mathcal{F}$ be a family of graphs, and let $p$ and $r$ be nonnegative integers.
For a graph $G$, let $G^p$ be the graph with vertex set $V(G)$ such that distinct vertices $v$ and $w$ are adjacent in $G^p$ if and only if the distance between $v$ and $w$ in $G$ is at most $p$, and let $N^r_G[D]$ be the set of all vertices in $G$ at distance at most $r$ from $D$ in $G$.
For a graph $G$ and a set $A$ of vertices in $G$, a set $D$ of vertices in $G$ is a \emph{$(p,r,\mathcal{F})$-cover} of $A$ in $G$ if there is no set $X\subseteq A\setminus N_G^r[D]$ such that $G^p[X]$ is isomorphic to a graph in $\mathcal{F}$.
 A $(p,r,\mathcal{F})$-cover of $A$ in $G$ is called a $(p,r,\mathcal{F})$-cover of $G$ if $A=V(G)$.

The \textsc{$(p,r,\mathcal{F})$-Covering} problem asks whether for a graph $G$ and an integer $k$, there is a $(p,r,\mathcal{F})$-cover of $G$ having size at most $k$.
Its annotated variant, named \textsc{Annotated $(p,r,\mathcal{F})$-Covering}, asks whether for a graph $G$ and $A\subseteq V(G)$ and an integer $k$, there is a $(p,r,\mathcal{F})$-cover of $A$ in $G$ having size at most $k$.
Note that the \textsc{$(p,r,\mathcal{F})$-Covering} problem generalizes the \textsc{Distance-$r$ Dominating Set} problem, because the latter is identical to the \textsc{$(1,r,\{K_1\})$-Covering} problem.
We denote by $\gamma_{p,r}^\mathcal{F}(G,A)$ the minimum size of a $(p,r,\mathcal{F})$-cover of $A$ in $G$, and $\gamma_{p,r}^\mathcal{F}(G):=\gamma_{p,r}^\mathcal{F}(G,V(G))$.

Our main results are the following.
For classes of graphs with bounded expansion, we obtain linear kernels.
We denote by $\mathbb{N}$ the set of nonnegative integers, and by $\mathbb{R}_+$ the set of positive reals.

\begin{restatable}{THM}{domnowhere}\label{thm:ker1 origin}
    For every nowhere dense class $\mathcal{C}$ of graphs, there exists a function $g_{\mathrm{cov}}:\mathbb{N}\times\mathbb{N}\times\mathbb{R}_+\to\mathbb{N}$ satisfying the following.
    For every nonempty family $\mathcal{F}$ of connected graphs with at most $d$ vertices, $p,r\in\mathbb{N}$ with $p\leq2r+1$, and $\varepsilon>0$, there exists a polynomial-time algorithm that given a graph $G\in\mathcal{C}$ and $k\in\mathbb{N}$, either
    \begin{enumerate}
        \item[$\bullet$]    correctly decides that $\gamma_{p,r}^\mathcal{F}(G)>k$, or
        \item[$\bullet$]    outputs a graph $G'$ with $\abs{V(G')}\leq g_{\mathrm{cov}}(r,d,\varepsilon)\cdot k^{1+\varepsilon}$ such that $\gamma_{p,r}^\mathcal{F}(G)\leq k$ if and only if $\gamma_{p,r}^\mathcal{F}(G')\leq k+1$.
    \end{enumerate}
\end{restatable}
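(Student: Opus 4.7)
The plan is to follow the closure/core construction framework developed by Eickmeyer et al.\ for \textsc{Distance-$r$ Dominating Set} and extended by Pilipczuk and Siebertz for \textsc{Distance-$r$ Independent Set}, suitably generalized to handle arbitrary families $\mathcal{F}$ of connected obstructions. Throughout, it is convenient to first work with the annotated variant \textsc{Annotated $(p,r,\mathcal{F})$-Covering} on an instance $(G,A,k)$, and only reduce back to the non-annotated version at the very end; the conversion is a standard gadget attachment that forces exactly one extra vertex into every cover, which accounts for the ``$+1$'' discrepancy in the conclusion of the theorem.

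The heart of the proof is the construction of what we will call an \emph{$\mathcal{F}$-cover core}: given $(G,A,k)$ with $G\in\mathcal{C}$, we compute in polynomial time a set $Z\subseteq A$ of size at most $g_{\mathrm{cov}}(r,d,\varepsilon)\cdot k^{1+\varepsilon}$ such that any $D\subseteq V(G)$ of size at most $k$ is a $(p,r,\mathcal{F})$-cover of $A$ in $G$ if and only if it is a $(p,r,\mathcal{F})$-cover of $Z$ in $G$. The core is obtained by an iterative pruning procedure: as long as $A$ is too large, we find a vertex $v\in A$ that is \emph{redundant} in the sense that every obstruction through $v$ is automatically hit by any cover that takes care of obstructions inside $A\setminus\{v\}$. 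The existence of such a vertex whenever $\abs{A}$ exceeds the desired bound follows from \emph{uniform quasi-wideness} of nowhere dense classes: otherwise one may extract a distance-$(2r+p)$ scattered family of more than $k$ obstructions, which cannot all be covered by a single set of size $k$. The hypothesis $p\leq 2r+1$ is used precisely here to guarantee that each obstruction fits inside a ball of bounded radius in $G$, so that scattered obstructions behave like scattered single vertices and the quasi-wideness machinery applies directly.

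Once the core $Z$ is fixed, we compress $V(G)\setminus Z$ by a type-merging argument. For every vertex $v\in V(G)$ we record its \emph{profile} on $Z$: both the set $N_G^r[v]\cap Z$, which determines how $v$ contributes to hitting obstructions when placed in $D$, and the data needed to decide, for each choice of up to $d-1$ other vertices in $Z$, whether $v$ together with them induces a graph in $\mathcal{F}$ in $G^p$. Using the bounded VC-dimension of bounded-radius neighborhoods on nowhere dense classes, the number of distinct profiles is $O(\abs{Z}^{1+\varepsilon})$. We then keep a constant number of representatives of each profile class and replace the remaining vertices by small gadgets that simulate their distance-$r$ coverage behavior and their participation in $\mathcal{F}$-copies, yielding a graph $G'$ with $\abs{V(G')}\leq g_{\mathrm{cov}}(r,d,\varepsilon)\cdot k^{1+\varepsilon}$. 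De-annotating at this point produces the claimed equivalence between $\gamma_{p,r}^\mathcal{F}(G)\leq k$ and $\gamma_{p,r}^\mathcal{F}(G')\leq k+1$.

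The principal obstacle is the generalization from single-vertex covering to multi-vertex obstructions. In the pruning step we must argue that removing a vertex $v$ from $A$ does not change the minimum cover size, which for \textsc{Distance-$r$ Dominating Set} reduces to finding a single nearby vertex still in $A$; for a family $\mathcal{F}$ of larger graphs one instead has to find, for every obstruction $X$ through $v$, another obstruction $X'\subseteq A\setminus\{v\}$ that certifies redundancy, and this forces us to control induced (non-)adjacencies in $G^p$ among up to $d$ vertices while preserving distance-$r$ projections onto $Z$. Correspondingly, the notion of profile must track not only distance-$r$ neighborhoods but also which $(d-1)$-subsets of $Z$ can be completed to an obstruction through $v$. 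Showing that the total number of such higher-arity profiles is still $O(\abs{Z}^{1+\varepsilon})$ on every nowhere dense class, by a VC-type argument applied to a set system of uniformly bounded arity generated by radius-$r$ balls and $p$-th power adjacency, is the technically heaviest part of the argument and the place where the inequality $p\leq 2r+1$ is invoked repeatedly.
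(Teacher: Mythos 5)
Your overall architecture matches the paper's: kernelize the annotated problem by iteratively shrinking $A$ to a small core $Z$, enrich $Z$ by projection-class representatives and a closure to get $Y$, and finally attach a gadget that forces exactly one extra vertex into every cover to remove the annotation at the cost of the ``$+1$''. However, at the two places where the actual work happens your sketch has genuine gaps.

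First, the pruning step. You assert that if $\abs{A}$ is large and no vertex is redundant, then uniform quasi-wideness yields a distance-$(2r+p)$ scattered family of more than $k$ obstructions, hence a no-instance. This conflates two different outcomes and misstates what quasi-wideness gives you. Quasi-wideness produces a large set $L$ that is scattered only \emph{after deleting} a bounded set $S$ of vertices; the obstructions through vertices of $L$ can still be pairwise close in $G$ via paths through $S$, so you cannot conclude they need many distinct cover vertices. Moreover, even a correct lower bound of this form would only let you reject the instance; it would not certify that some particular vertex $z$ is removable from the core. The paper's argument instead (i) runs a VC-dimension-based approximation to either reject or obtain a small cover $X$ and its closure $X_{\mathrm{cl}}$, (ii) classifies the vertices of $L$ by a \emph{pattern} recording both the isomorphism type of the obstruction restricted to $G\setminus S$ (which may be disconnected in $(G\setminus S)^p$ even though the obstruction is connected in $G^p$) and the $(2r+1)$-projection profiles of its vertices onto $S$, and (iii) proves redundancy of a chosen $z$ by an exchange argument: assuming a minimum cover of $Z\setminus\{z\}$ misses some obstruction in $Z$, it builds a strictly smaller cover by selling the $\abs{\kappa_3}$ distinct cover vertices hit near the scattered copies and buying $M^G_{r'}(z,X_{\mathrm{cl}})\cup S$. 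None of this machinery --- in particular the handling of $S$ and the sell/buy swap --- is present in your sketch, and it is exactly the part you defer as ``the technically heaviest part''.

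Second, the de-annotation gadget. For \textsc{Distance-$r$ Dominating Set} a standard pendant gadget suffices, but for a general family $\mathcal{F}$ you must attach a graph $H$ whose $p$-th power contains an induced obstruction, yet such that every single vertex of $H$ already $(p,\lfloor p/2\rfloor,\mathcal{F})$-covers $H$; otherwise the gadget could force more than one extra vertex, or none. The paper constructs such an $H$ as a $(p,\mathcal{F})$-critical graph obtained by greedily deleting vertices from the $p$-subdivision of some $F\in\mathcal{F}$, proves the guarding property, and then wires $H$ and $Y\setminus Z$ to a hub $h$ by paths of length $r$ so that $Z$ is exactly the part not dominated by $h$; separate case analyses are needed for $p=0$ and $r=0$. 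Calling this ``a standard gadget attachment'' papers over a construction that genuinely depends on $\mathcal{F}$ and on the hypothesis $p\leq 2r+1$ (used to show the gadget does not create new short paths between vertices of $Z$). Finally, your compression step replaces non-core vertices by ``gadgets that simulate coverage behavior''; the paper avoids this entirely by taking the induced subgraph on a $(2r+1)$-path closure of $Z$ together with one representative per $r$-projection class, which is both simpler and easier to verify. Your gadget-simulation variant would need its own correctness and size argument, which you do not supply.
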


A \emph{$(p,r,\mathcal{F})$-packing} of $A$ in $G$ is a family of subsets of $A$, say $A_1,\ldots,A_\ell$, such that each $G^p[A_i]$ is isomorphic to a graph in $\mathcal{F}$, and for all $1\leq i<j\leq\ell$, the distance between $A_i$ and $A_j$ in $G$ is more than $r$.
A \emph{$(p,r,\mathcal{F})$-packing} of $A$ in $G$ is called a $(p,r,\mathcal{F})$-packing of $G$ when $A=V(G)$.

The \textsc{$(p,r,\mathcal{F})$-Packing} problem asks whether for a graph $G$ and an integer $k$, there is a $(p,r,\mathcal{F})$-packing of $G$ having size at least $k$.
Its annotated variant, named \textsc{Annotated $(p,r,\mathcal{F})$-Packing},  asks whether for a graph $G$ and $A\subseteq V(G)$ and an integer $k$, there is a $(p,r,\mathcal{F})$-packing of $A$ in $G$ having size at least $k$.
Note that the \textsc{$(p,r,\mathcal{F})$-Packing} problem generalizes the \textsc{Distance-$r$ Independent Set} problem, because the latter is identical to the \textsc{$(1,r,\{K_1\})$-Packing} problem.
We denote by $\alpha_{p,r}^\mathcal{F}(G,A)$ the maximum size of a $(p,r,\mathcal{F})$-packing of $A$ in $G$, and $\alpha_{p,r}^\mathcal{F}(G):=\alpha_{p,r}^\mathcal{F}(G,V(G))$.

For the \textsc{$(p,r,\mathcal{F})$-Packing} problem, we prove the following.

\begin{restatable}{THM}{pcknowhere}\label{thm:ker2 origin}
    For every nowhere dense class $\mathcal{C}$ of graphs, there exists a function $g_{\mathrm{pck}}:\mathbb{N}\times\mathbb{N}\times\mathbb{R}_+\to\mathbb{N}$ satisfying the following.
    For every nonempty family $\mathcal{F}$ of connected graphs with at most~$d$ vertices, $p,r\in\mathbb{N}$ with $p\leq2\lfloor r/2\rfloor+1$, and $\varepsilon>0$, there exists a polynomial-time algorithm that given a graph $G\in\mathcal{C}$ and $k\in\mathbb{N}$, either
    \begin{enumerate}
        \item[$\bullet$]    correctly decides that $\alpha_{p,r}^\mathcal{F}(G)=0$,
        \item[$\bullet$]    correctly decides that $\alpha_{p,r}^\mathcal{F}(G)>k$, or
        \item[$\bullet$]    outputs a graph $G'$ with $\abs{V(G')}\leq g_{\mathrm{pck}}(r,d,\varepsilon)\cdot k^{1+\varepsilon}$ such that $\alpha_{p,r}^\mathcal{F}(G)\geq k$ if and only if $\alpha_{p,r}^\mathcal{F}(G')\geq k+1$.
    \end{enumerate}
\end{restatable}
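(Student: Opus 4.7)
My plan is to follow the same blueprint as the proof of Theorem~\ref{thm:ker1 origin}, which in turn extends the Pilipczuk--Siebertz kernel for \textsc{Distance-$r$ Independent Set} (the special case $p=1$, $\mathcal F=\{K_1\}$). First I would reduce to the annotated variant: since \textsc{$(p,r,\mathcal F)$-Packing} is exactly the case $A=V(G)$ of \textsc{Annotated $(p,r,\mathcal F)$-Packing}, and since the reduction rules can be designed to preserve~$A$, it suffices to kernelize the annotated problem.

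The second step is a greedy pre-processing. Compute in polynomial time a maximal $(p,r,\mathcal F)$-packing $\mathcal{P}=\{A_1,\ldots,A_\ell\}$ of $A$ in~$G$. If $\ell=0$ we report the first bullet; if $\ell\geq k+1$ we report the second. Otherwise $\ell\leq k$, the support $S:=\bigcup_i A_i$ satisfies $\abs{S}\leq dk$, and by maximality every induced copy of a pattern from $\mathcal F$ in $G^p[A]$ lies at $G$-distance at most $r$ from $S$. Consequently, all ``active'' vertices fall inside a bounded-radius $G$-neighborhood of $S$.

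The third step is the kernelization proper. Here I would invoke the nowhere dense toolkit --- weak coloring numbers $\wc_s(G)\leq O_\varepsilon(\abs{V(G)}^\varepsilon)$ for fixed $s$, small-size neighborhood covers, shallow minors, and uniform quasi-wideness --- to identify and delete ``irrelevant'' vertices outside a carefully chosen closure of $S$. The hypothesis $p\leq 2\lfloor r/2\rfloor+1$ enters exactly here: it is the packing counterpart of the $p\leq 2r+1$ hypothesis of Theorem~\ref{thm:ker1 origin}, and it is what is needed so that half-radius neighborhoods of chosen representatives of distinct packed copies become disjoint enough to license the same closure-and-equivalence-class reductions used on the covering side. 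The output $G'$ has at most $g_{\mathrm{pck}}(r,d,\varepsilon)\cdot k^{1+\varepsilon}$ vertices, and attaching one sentinel copy of some $H\in\mathcal F$ disjoint from everything else produces the standard $+1$ off-by-one equivalence $\alpha_{p,r}^{\mathcal F}(G)\geq k\Longleftrightarrow\alpha_{p,r}^{\mathcal F}(G')\geq k+1$.

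The main obstacle will be the kernel-size bound in Step~3: showing that for an arbitrary finite family of connected patterns on at most $d$ vertices, the reduction really does shrink the irrelevant portion to $O_\varepsilon(k^{1+\varepsilon})$. Unlike the singleton setting of \textsc{Distance-$r$ Independent Set}, copies of patterns in $\mathcal F$ can share vertices and interact non-trivially with shallow minors, so a type-counting argument --- bounding, via weak coloring numbers, the number of isomorphism classes of potential pattern copies reaching a given closure --- must be combined with the disjointness afforded by $p\leq 2\lfloor r/2\rfloor+1$. Marrying these two ingredients uniformly over $\mathcal F$, while keeping the reductions sound with respect to the annotated set $A$, is the technical heart of the argument and is exactly parallel to the analogous obstacle resolved in the proof of Theorem~\ref{thm:ker1 origin}.
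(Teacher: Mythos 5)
Your high-level architecture (kernelize the annotated problem, then attach a gadget to remove the annotation) does match the paper's, but two essential pieces are missing or wrong. First, the reduction step. Your greedy maximal $(p,r,\mathcal F)$-packing gives a $(p,r,\mathcal F)$-cover of $A$ of size at most $dk$, but the irrelevant-vertex argument (Lemma~\ref{lem:rd recursion}) needs a $(p,\lfloor r/2\rfloor,\mathcal F)$-cover $X$ whose size is bounded in terms of $\alpha_{p,r}^{\mathcal F}(G,A)$: the half radius is what guarantees that all but at most $\abs{S}$ of the pairwise $r$-separated packing elements $C_u$ admit $S$-avoiding paths of length $\le\lfloor r/2\rfloor$ to $X_{\mathrm{cl}}$ (Claim in Lemma~\ref{lem:rd recursion} via $\kappa_3'$), which in turn lets two such walks be concatenated into a walk of length $\le r'$ contradicting quasi-wideness. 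A cover at radius $r$ does not give this (two elements within distance $r$ of the same cover vertex may be at distance up to $2r$, which the packing does not forbid), and a greedy maximal packing at radius $\lfloor r/2\rfloor$ has size controlled only by $\alpha_{p,\lfloor r/2\rfloor}$, not by $k$. Supplying such a cover is exactly the content of Propositions~\ref{prop:approx1} and~\ref{prop:approx2}, which bootstrap the covering kernel of Theorem~\ref{thm:ker1} and weak coloring numbers; your proposal has no substitute for this, and the rest of Step~3 (projection profiles, uniform quasi-wideness, the pattern/exchange argument showing a vertex $z$ can be deleted from $A$) is deferred to ``the toolkit'' rather than argued.

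Second, the de-annotation gadget is incorrect as stated. Attaching ``one sentinel copy of some $H\in\mathcal F$ disjoint from everything else'' does not yield $\alpha_{p,r}^{\mathcal F}(G)\ge k\Leftrightarrow\alpha_{p,r}^{\mathcal F}(G')\ge k+1$, because the annotated instance $(G[Y],Z)$ only counts packing elements contained in $Z$, while an unannotated packing of $G'$ may use arbitrarily many elements meeting $Y\setminus Z$; the disjoint union gives no way to convert those back. The paper's construction adds a hub vertex $h$ joined by paths of length $\lfloor r/2\rfloor$ to every vertex of $Y\setminus Z$ and by paths of length $\lceil r/2\rceil$ to a half-radius ball of a $(p,\mathcal F)$-critical graph $H$ (Lemmas~\ref{lem:critical} and~\ref{lem:guard}), so that all vertices of $V(G')\setminus Z$ are pairwise within distance $r$ and hence at most one packing element can leave $Z$, and that one element can always be realized inside $H$. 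Without the critical-graph/hub mechanism (and the separate treatment of the degenerate cases $r\le 1$ and $p=0$), the final equivalence fails.
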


   We also obtain almost linear kernels for \textsc{Annotated $(p, r, \mathcal{F})$-Covering} and \textsc{Annotated $(p, r, \mathcal{F})$-Packing} on nowhere dense classes, with the same assumptions on $p$, $r$, and $\mathcal{F}$ as for their original problems. In fact, we first obtain kernels for annotated versions, and to obtain kernels for original problems, we apply the kernelization algorithm for $(G, V(G))$ as an annotated version, and then translate the output $(G', A)$ to an equivalent instance. 

\bigskip
\noindent\textbf{Applications}. Canales, Hern\'{a}ndez, Martins, and Matos~\cite{CHMM2015} introduced a distance-$r$ vertex cover and a distance-$r$ guarding set.
For a graph $G$ and a positive integer $r$, a set $D\subseteq V(G)$ is a \emph{distance-$r$ vertex cover} if $G\setminus N_G^r[D]$ has no edge, and a \emph{distance-$r$ guarding set} if $G\setminus N_G^{r-1}[D]$ has no triangle.
For a positive integer $k$, the \textsc{Distance-$r$ Vertex Cover} problem asks whether a graph $G$ has a distance-$r$ vertex cover of size at most $k$.
Similarly, the \textsc{Distance-$r$ Guarding Set} problem asks whether a graph $G$ has a distance-$r$ guarding set of size at most $k$.
Observe that the \textsc{Distance-$r$ Vertex Cover} problem can be formulated as the \textsc{$(1,r,\{K_2\})$-Covering} problem, the \textsc{Distance-$r$ Guarding Set} problem for $r\geq1$ can be formulated as the \textsc{$(1,r-1,\{K_3\})$-Covering} problem.

The notion of guarding originates from a problem posed by Victor Klee and solved by Chv\'atal~\cite{Chvatal1975}, which concerns the minimum number of guards needed to control an art gallery. In a plane triangulation, assume that a guard on a vertex may control triangles at distance at most $r$ from the vertex. We may ask whether at most $k$ guards are sufficient to control. This problem is formulated by the \textsc{Distance-$r$ Guarding Set} problem. 

Dallard, Krbezlija, and Milani\v{c}~\cite{DKM2021} characterized graphs $H$ such that \textsc{Distance-$r$ Vertex Cover} is polynomial-time solvable on graphs having no induced subgraph isomorphic to $H$.
Canales, Hern\'{a}ndez, Martins, and Matos~\cite{CHMM2015} and Alvardo, Dantas, and Rautenbach~\cite{ADR2015} computed the distance-$r$ vertex cover number and guarding number of maximal outerplanar graphs.
As applications of Theorem~\ref{thm:ker1 origin}, we obtain almost linear kernels for the \textsc{Distance-$r$ Vertex Cover} and \textsc{Distance-$r$ Guarding Set} problems on every nowhere dense class of graphs, and linear kernels for the same problems on every class of graphs with bounded expansion.
In particular, we may apply our algorithm on plane triangulations.

A matching of a graph $G$ is a set $M$ of edges of $G$ such that no two edges in $M$ share an end.
For a positive integer $r$, a \emph{distance-$r$ matching} of $G$ is a matching $M$ of $G$ such that for distinct edges $u_1u_2,v_1v_2\in M$, $\min\{\dist_G(u_i,v_j):i,j\in[2]\}\geq r$.
Note that every distance-$1$ matching is nothing but a matching, and every distance-$2$ matching is an induced matching.
For a positive integer $k$, the \textsc{Distance-$r$ Matching} problem asks whether a graph $G$ has a distance-$r$ matching of size at least $k$.
Observe that the \textsc{Distance-$1$ Matching} is nothing but to find a matching of size at least $k$, so can be solved in polynomial time by the algorithm of~\cite{blossom}.
The \textsc{Distance-$r$ Matching} for $r\geq2$ can be formulated as the \textsc{$(1,r-1,\{K_2\})$-Packing} problem.

There are several structural results about distance-$r$ matching~\cite{WY2005,AP2011,AFS2020,LY2022,KM2012} and hardness results about the \textsc{Distance-$r$ Matching} problem for $r\in\{2,3\}$~\cite{Cameron1989,BM2011}.
Moser and Sikdar~\cite{MS2009} presented linear kernels for the \textsc{Distance-$2$ Matching} problem on planar graphs and graphs of bounded degree, and a cubic kernel for the same problem on graphs of girth at least $6$.
Later, Kanj, Pelsmajer, Schaefer, and Xia~\cite{Kanj2011} presented a kernel of size $40k$ for the \textsc{Distance-$2$ Matching} problem on planar graphs.
As applications of Theorem~\ref{thm:ker2 origin}, we obtain an almost linear kernel for the \textsc{Distance-$r$ Matching} problem for $r\geq2$ on every nowhere dense class of graphs, and a linear kernel for the same problem on every class of graphs with bounded expansion.

For a family $\mathcal{F}$ of graphs, the \textsc{$\mathcal{F}$-Free Vertex Deletion} problem asks whether for a graph $G$ and an integer $k$, there is a set $S$ of at most $k$ vertices in $G$ such that $G\setminus S$ has no induced subgraph isomorphic to $\mathcal{F}$. For hereditary classes $\mathcal{C}$ of graphs that are characterized by a family $\mathcal{F}$ of graphs, the problem asks whether one can remove at most $k$ vertices so that the remaining graph is in $\mathcal{C}$.   If all graphs in $\mathcal{F}$ have at most $d$ vertices, then this problem is related to the \textsc{$d$-Hitting Set} problem, and has a kernel of size $\mathcal{O}(k^{d-1})$~\cite{Abu2010}. Our results imply that if $\mathcal{F}$ is a finite set of connected graphs, then \textsc{$\mathcal{F}$-Free Vertex Deletion} admits an almost linear kernel on nowhere dense classes and a linear kernel on classes of graphs with bounded expansion.
This can be applied to the problems of deleting to cographs (\textsc{Cograph Vertex Deletion})~\cite{Nastos2012}, to the disjoint union of complete graphs (\textsc{Cluster Vertex Deletion})~\cite{AprileDFH2021, Tsur2021}, to claw-free graphs (\textsc{Claw-Free Vertex Deletion})~\cite{BonomoNJO2020}, and so on.

For a graph $H$, the \textsc{$H$-Matching} problem asks for a graph $G$ and an integer $k$, whether $G$ has $k$ vertex-disjoint subgraphs which are isomorphic to $H$.
For every positive integer $d$, let $P_d$ be a path on $d$ vertices.
Dell and Marx~\cite{DM2012} presented a kernel for the \textsc{$P_3$-Matching} problem with $O(k^{2.5})$ edges, and a unified kernel for the \textsc{$P_d$-Matching} problem with $O(d^{d^2}d^7k^3)$ vertices.
They also showed that for every integer $d\geq3$ and every positive real $\varepsilon$, under some complexity hypothesis, \textsc{$K_d$-Matching} does not have kernels with $O(k^{d-1-\varepsilon})$ edges. 
By taking $\mathcal{F}$ as the set of all graphs on $\abs{V(H)}$ vertices that contain $H$ as a subgraph, we can formulate the \textsc{$H$-Matching} problem as the \textsc{$(1,0,\mathcal{F})$-Packing} problem.
Generally, we may consider \textsc{Induced-$\mathcal{F}$-Packing} which asks whether $G$ has $k$ vertex-disjoint induced subgraphs each isomorphic to some graph in $\mathcal{F}$.
As applications of Theorem~\ref{thm:ker2 origin}, we obtain almost linear kernels for \textsc{$K_d$-Matching}, \textsc{$P_d$-Matching} for every fixed positive integer $d$, and \textsc{Induced-$\mathcal{F}$-Packing} for every fixed finite family $\mathcal{F}$ of connected graphs on every nowhere dense class of graphs, and linear kernels for the same problems on every class of graphs with bounded expansion.

We may formulate the \textsc{$(p,r,\mathcal{F})$-Covering} and \textsc{$(p,r,\mathcal{F})$-Packing} problems on fixed powers of a given graph.  
Formally, for a fixed positive integer $t$, 
the \textsc{$(pt, rt, \mathcal{F})$-Covering} problem on a graph $G$ is exactly same as the \textsc{$(p, r, \mathcal{F})$-Covering} problem on its $t$-th power $G^t$. Therefore, our result provides the existence of almost linear kernels for both problems on $t$-th powers $G^t$ of graphs from a nowhere dense class of graphs, assuming that the original graph $G$ is given. However, if the power $G^t$ is only given, then we need to find the graph $G$ to apply our kernelization algorithm. 

The annotated variants provide interesting applications. 
A \emph{map graph} is a graph obtained as the intersection model of finitely many simply connected and internally disjoint regions of a plane.
Chen, Grigni, and Papadimitriou~\cite[Theorem 2.2, Lemma 2.3]{ChenGP2002} showed that $G$ is a map graph if and only if there is a planar bipartite graph $H$ with bipartition $(V(G), P)$ such that $\abs{P}\le 4\abs{V(G)}$ and for $u,v\in V(G)$, $uv\in E(G)$ if and only if $up,vp\in E(H)$ for some $p\in F$.
Observe that for $u,v\in V(G)$, $\dist_G(u,v)\le r$ if and only if $\dist_H(u,v)\le 2r$.
Therefore, the \textsc{$(p, r, \mathcal{F})$-Packing} problem on $G$ is the same as the \textsc{Annotated $(2p, 2r, \mathcal{F})$-Packing} problem on $(H, V(G))$.
As the class of 
planar graphs is of bounded expansion,
the annotated variant of Theorem~\ref{thm:ker2 origin} implies that there is a linear kernel for the \textsc{$(p,r,\mathcal{F})$-Packing} problem on map graphs, assuming that the planar bipartite graph $H$ is given. Note that we cannot adapt to the \textsc{$(p,r,\mathcal{F})$-Covering} problem because in the annotated version, we do not require that the solution has to be in the selected set $A$. It would be interesting if one can obtain linear kernels for \textsc{$(p, r, \mathcal{F})$-Packing} and  \textsc{$(p,r,\mathcal{F})$-Covering}  for map graphs $G$ without having $H$ at hand.

\bigskip
\noindent\textbf{Proof ideas}. We first summarize proof ideas for Theorem~\ref{thm:ker1 origin}.
The proof falls into two major steps.
Given an instance $(G,k)$ for the \textsc{$(p,r,\mathcal{F})$-Covering} problem, we first find a small set $Z$ and its superset $Y$ whose size is $O(\abs{Z}^{1+\varepsilon})$ for any small $\varepsilon>0$ such that $\gamma_{p,r}^\mathcal{F}(G)\leq k$ if and only if $\gamma_{p,r}^\mathcal{F}(G[Y],Z)\leq k$.
So, we actually find an instance for the \textsc{Annotated $(p,r,\mathcal{F})$-Covering} problem, which keeps the information that the covering number is at most the parameter $k$.
After finding these sets, we add a small gadget to make an instance $(G',k)$ which is equivalent to the input instance $(G,k)$ with respect to the \textsc{$(p,r,\mathcal{F})$-Covering} problem.

For the first step, we show that given an instance $(G,A,k)$ for the \textsc{Annotated $(p,r,\mathcal{F})$-Covering} problem, if the size of $A$ is large, then we can either show that the input instance is a no-instance, or find in polynomial time a vertex $z$ in $A$ such that $(G,A\setminus \{v\},k)$ is equivalent to $(G,A,k)$ with respect to the \textsc{Annotated $(p,r,\mathcal{F})$-Covering} problem.
If the size of $A$ is small, we can find a superset $Y$ of $A$ such that the size of $Y$ is $O(\abs{A}^{1+\varepsilon})$ and $\gamma_{p,r}^\mathcal{F}(G,A)=\gamma_{p,r}^\mathcal{F}(G[Y],Z)$.
Starting with $A=V(G)$, by recursively finding the vertex $z$, we lead to desired sets $Z$ and $Y$ at the end.
Finding the superset $Y$ of $Z$ can be easily dealt with known results for nowhere dense classes of graphs, so let us focus on finding a desired vertex $z$.

If the size of $A$ is large, then we can use the \emph{uniformly quasi-wideness} property of nowhere dense class of graphs.
Basically, the uniformly quasi-wideness property tells us that for any nowhere dense class of graphs and a graph $G$ in the class, if we are given a large set $A\subseteq V(G)$, then for any nonnegative $r$, we can find a tiny set $S$ of vertices and a large set $L\subseteq A\setminus S$ which is distance-$r$ independent in $G\setminus S$.
For $r':=2pd+3r$, we find a tiny set $S$ and a large set $L\subseteq A$ which is distance-$r'$ independent in $G\setminus S$.
We are going to find a desired vertex $z$ from $L$, and to do this, in the reasoning, we heavily rely on the fact that all vertices in $L$ are far from each other in $G\setminus S$.

As another tool beside the uniformly quasi-wideness property, we equip with an approximation algorithm for the $(p,r,\mathcal{F})$-covering number.
Based on the polynomial-time $O(\log\mathsf{OPT})$-factor approximation algorithm for the \textsc{$d$-Hitting Set} problem on bounded VC-dimension family of sets, we can easily design a polynomial-time $O(\mathsf{OPT}^\varepsilon)$-factor approximation algorithm for the \textsc{$(p,r,F)$-Covering} problem, so we can find an approximate solution $X$ for the problem in polynomial time.
If the size of $X$ is large, then it immediately tells us that the instance has large covering number, so we can reject the instance.
Thus, we may assume that the size of $X$ is small.
For the technical reasons, we inflate $X$ by adding a small number of vertices, but the resulting set would have size $O(k^{1+\varepsilon})$.

Now, with these tools, we can find a desired vertex $z$.
If some vertex in $L$ is not involved in a set of vertices which induces a graph in $\mathcal{F}$ in $G^p$, then the vertex is irrelevant for our problem, so we can safely choose the vertex as $z$.
Otherwise, every vertex $u\in L$ is contained in some set $B_u$ of vertices such that $G^p[B_u]$ is isomorphic to a graph in $\mathcal{F}$.

Although $B_u$ induces a connected graph in $G^p$, $B_u$ might induce a disconnected graph in $(G\setminus S)^p$.
So for each $u\in L$, we carefully observe some pattern obtained by $(G\setminus S)^p[B_u\setminus S]$.
The pattern consists of the labelled graph on at most $d$ vertices, to which $(G\setminus S)^p[B_u\setminus S]$ is isomorphic, and the $(2r+1)$-projection profile of each vertex $u'\in B_u\setminus S$ toward $S$, which is a function containing distance information from $u'$ to $S$ up to $2r+1$.
Since the size of $L$ is large and the number of graphs on at most $d$ vertices and the number of $r$-projection profiles toward $S$ are constants in $r$ and $d$, we can find many vertices of $L$ seeking the same patterns.

With these in hand, we choose a vertex $z$ appropriately and can show that $z$ is indeed a desired vertex, because otherwise we can derive a contraction by carefully concatenating several paths in $G\setminus S$ to obtain a short walk between two vertices in $L$, which contradicts the assumption that $L$ is distance-$r'$ independent in $G\setminus S$, and this is the end of the first step.

After the first step, we add a small gadget to the resulting graph $G[Y]$ to construct a graph $G'$ having at most $O(\abs{Y})$ vertices such that $\gamma_{p,r}^\mathcal{F}(G)\leq k$ if and only if $\gamma_{p,r}^\mathcal{F}(G')\leq k+1$.
To do this, we define a $(p,\mathcal{F})$-critical graph and show its existence with a polynomial-time algorithm to find it.
A \emph{$(p,\mathcal{F})$-critical graph} $H$ is a graph whose $p$-th power contains an induced subgraph isomorphic to a graph in $\mathcal{F}$ and for every $v\in V(H)$, the $p$-th power of $H\setminus v$ has no such induced subgraph.
Thus, basically, a $(p,\mathcal{F})$-critical graph is a graph where every vertex is involved to form a set of vertices inducing a graph in $\mathcal{F}$ in the $p$-th power.
We show that every vertex of a $(p,\mathcal{F})$-critical graph $H$ forms a $(p,\lfloor p/2\rfloor,\mathcal{F})$-cover of $H$.
Using this property, we can show that for the resulting graph $G'$, $\gamma_{p,r}^\mathcal{F}(G)\leq k$ if and only if $\gamma_{p,r}^\mathcal{F}(G')\leq k+1$, and this will prove Theorem~\ref{thm:ker1 origin}.

\bigskip
Now, we summarize proof ideas for Theorem~\ref{thm:ker2 origin}.
The major steps of the proof is similar to those of Theorem~\ref{thm:ker1 origin}.
Given an instance $(G,k)$ for the \textsc{$(p,r,\mathcal{F})$-Packing} problem, we first find a small set $Z$ and its superset $Y$ whose size is $O(\abs{Z}^{1+\varepsilon})$ for any small $\varepsilon>0$ such that $\alpha_{p,r}^\mathcal{F}(G)\geq k$ if and only if $\alpha_{p,r}^\mathcal{F}(G[Y],Z)\geq k$, so we actually find an instance for the \textsc{Annotated $(p,r,\mathcal{F})$-Packing} problem, which keeps the information that the packing number is at least the parameter $k$.
Similarly as in the previous case, after finding the sets, we add a small gadget to make an instance $(G',k)$ which is equivalent to the input instance $(G,k)$ with respect to \textsc{$(p,r,\mathcal{F})$-Packing} problem.

For the first step, we show that given an instance $(G,A,k)$ for the \textsc{Annotated $(p,r,\mathcal{F})$-Packing} problem, if the size of $A$ is large, then we can either show that the input instance is a no-instance, or find in polynomial time a vertex $z$ in $A$ such that $\alpha_{p,r}^\mathcal{F}(G,A)\geq k$ if and only if $\alpha_{p,r}^\mathcal{F}(G,A\setminus\{z\})\geq k$.
If the size of $A$ is small, then we can find a superset $Y$ of $A$ such that the size of $Y$ is $O(\abs{A}^{1+\varepsilon})$ and that $\alpha_{p,r}^\mathcal{F}(G)\geq k$ if and only if $\alpha_{p,r}^\mathcal{F}(G[Y],Z)\geq k$.
Starting with $A=V(G)$, by recursively finding the vertex $z$, we lead to desired sets $Z$ and $Y$ at the end.
Finding the superset $Y$ of $Z$ can be easily dealt with the known results, so let us focus on finding a desired vertex.

To find a desired vertex $z$, similar to the \textsc{$(p,r,\mathcal{F})$-Covering} problem, we use the uniformly quasi-wideness to find a tiny set $S$ and a large set $L\subseteq V(G)\setminus S$ which is distance-$r'$ independent in $G\setminus S$ for $r':=4pd+3r$.
We also design polynomial-time approximation algorithms and argue by the pattern argument to choose the vertex $z$.

Although all these steps look similar to those of the \textsc{$(p,r,\mathcal{F})$-Covering} problem, the plot of the proof is slightly different, because now the problem we dealt with is a maximization problem.
Note that for every vertex $z\in L$, $\alpha_{p,r}^\mathcal{F}(G,A\setminus\{z\})\geq k$ always implies $\alpha_{p,r}^\mathcal{F}(G,A)\geq k$ by definition.
Thus, we may assume that $G$ has a $(p,r,\mathcal{F})$-packing $I$ of $A$ having size at least $k$.

If we can find a vertex $z\in L$ so that there exists a $(p,r,\mathcal{F})$-packing $I'$ of $A$ which has the same size as $I$ and avoids $z$, then the packing $I'$ is a $(p,r,\mathcal{F})$-packing of $A\setminus\{z\}$, and it implies that $\alpha_{p,r}^\mathcal{F}(G,A\setminus\{z\})\geq k$.
Thus, we may assume that there exists no such $z$.
It tells us the following two things.
First, $I$ contains an element $B_z$ including $z$.
Second, for every $u\in L$, no matter how we find a set $B_u$ of vertices which includes $z$ and induces a subgraph of $G^p$ isomorphic to a graph in $\mathcal{F}$, there exists an element in $I\setminus\{B_z\}$ which is at distance at most $r$ from $B_u$, because otherwise we can substitute $B_z$ with $B_u$ from $I$ to obtain a $(p,r,\mathcal{F})$-packing of $A\setminus\{z\}$ in $G$.
In other words, for each $u\in L$, the set $B_u$ is connected to some element $C_u\in I$ by a path of $G$ having length at most $r$.
However, each element in $I$ has at most $d$ vertices and all distinct vertices in $L$ cannot be connected by paths of $G\setminus S$ having length at most $r'>2r$.
Since the size of $L$ is large, we can show that $L$ contains many vertices such that for distinct such vertices $v$ and $w$, the assigned elements $C_v$ and $C_w$ are distinct.

With these in hand, we can derive a contradiction and show that there exists a $(p,r,\mathcal{F})$-packing of $A$ which has the same size as $I$ and avoids $z$.

After the first step, we add a small gadget to the resulting graph $G[Y]$ to construct a graph $G'$ having at most $O(\abs{Y})$ vertices such that $\alpha_{p,r}^\mathcal{F}(G)\geq k$ if and only if $\alpha_{p,r}^\mathcal{F}(G')\geq k+1$.
The construction is almost same as the previous construction for the \textsc{$(p,r,\mathcal{F})$-Covering} problem, which uses a $(p,\mathcal{F})$-critical graph, but because of some parity issue, we modify the construction a bit.
Using the property of a $(p,\mathcal{F})$-critical graph, we can show that for the resulting graph $G'$, $\alpha_{p,r}^\mathcal{F}(G)\geq k$ if and only if $\alpha_{p,r}^\mathcal{F}(G')\geq k+1$, and this will prove Theorem~\ref{thm:ker2 origin}.

\bigskip
\noindent\textbf{Organization}. We organize this paper as follows.
In Section~\ref{sec:prelim}, we present some terminology from graph theory, especially lemmas on nowhere dense classes of graphs and class of graphs with bounded expansion.
In Section~\ref{sec:ker1}, we present an almost linear kernel for the \textsc{$(p,r,\mathcal{F})$-Covering} problem on nowhere dense classes of graphs, and a linear kernel for the problem on classes of graphs with bounded expansion.
In Section~\ref{sec:ker2}, we present an almost linear kernel for the \textsc{$(p,r,\mathcal{F})$-Packing} problem on nowhere dense classes of graphs, and a linear kernel for the problem on classes of graphs with bounded expansion.

\section{Preliminaries}\label{sec:prelim}

In this paper, all graphs are simple and finite and have at least one vertex.
For an equivalence relation~$\sim$ on a set $X$, we denote by $\idx(\sim)$ the number of equivalence classes of $\sim$ in $X$.
For every integer $n$, let $[n]$ be the set of positive integers at most $n$.

Throughout this section, let $p$ and $r$ be nonnegative integers, $G$ be a graph, and $A$ and $B$ be subsets of $V(G)$.
The \emph{order} of $G$ is $\abs{V(G)}$.
A \emph{walk} of $G$ is a sequence $(v_0,e_1,v_1, \ldots, e_k,v_k)$ of vertices and edges in $G$ such that $\{v_0,\ldots,v_k\}\subseteq V(G)$ and for each $i\in[k]$, $e_i=\{v_{i-1},v_i\}\in E(G)$.
If a walk has no repeated vertices, then we call it a \emph{path} of $G$, and regard it as a subgraph of $G$.
The \emph{length} of a walk or a path $P$ is the number of edges in $P$.
For vertices $v$ and $w$ of $G$, the \emph{distance} between $v$ and $w$ in~$G$, denoted by $\dist_G(v,w)$, is the length of a shortest path in $G$ connecting $v$ and $w$.
If $G$ has no path connecting $v$ and $w$, then we define $\dist_G(v,w)$ as~$\infty$.
The \emph{distance} between $A$ and $B$ in~$G$, denoted by $\dist_G(A,B)$, is the minimum length of paths in~$G$ connecting some $v\in A$ and $w\in B$.
If $G$ has no path having one end in $A$ and the other end in $B$, then we define $\dist_G(A,B)$ as~$\infty$.
If $A=\{v\}$, then we write $\dist_G(v,B)$ instead of $\dist_G(\{v\},B)$.
The \emph{radius} of $G$ is the minimum $t$ such that $G$ has a vertex $x$ with $\dist_G(x,y)\leq t$ for every $y\in V(G)$.
If $G$ has no such $x$, then we define the radius of $G$ as~$\infty$.
The \emph{$p$-th power} of $G$, denoted by $G^p$, is the graph with vertex set $V(G)$ such that distinct vertices $v$ and $w$ are adjacent in $G^p$ if and only if $\dist_G(v,w)\leq p$.
For a positive integer $q$, the \emph{$q$-subdivision} of $G$ is the graph obtained from $G$ by substituting each edge of $G$ with a path of length exactly $q$.

For graphs $G=(V,E)$ and $G'=(V',E')$, an \emph{isomorphism} between $G$ and $G'$ is a bijection $\phi:V\to V'$ such that $vw\in E$ if and only if $\phi(v)\phi(w)\in E'$.
We denote by $G\setminus A$ the graph obtained from $G$ by removing all vertices in $A$ and all edges incident with some vertices in $A$.
If $A=\{v\}$ for some vertex $v$ of $G$, then we write $G\setminus v$ instead of $G\setminus\{v\}$.
Let $G[A]:=G\setminus(V\setminus A)$, called the \emph{subgraph of $G$ induced by $A$}.

For a vertex $v$ of $G$, let $N_G^r[v]$ be the set of vertices of $G$ which are at distance at most $r$ from $v$ in $G$, and $N_G^r(v):=N_G^r[v]\setminus\{v\}$.
Let $N_G^r[A]:=\bigcup_{v\in A}N_G^r[v]$ and $N_G^r(A):=N_G^r[A]\setminus A$.
If $r=1$, then we omit the superscripts in these notations.

A set $X\subseteq V(G)$ is an \emph{independent set} in $G$ if the vertices in $X$ are pairwise nonadjacent in $G$.
A vertex set $X$ is a \emph{distance-$r$ independent set} in $G$ if~$X$ is an independent set in $G^r$.
Let $\alpha_r(G,A)$ be the maximum size of a subset of $A$ that is distance-$r$ independent in $G$.
A set $D\subseteq V(G)$ is a \emph{dominating set} of $A$ in $G$ if $A\subseteq N_G[D]$.
A vertex set $D$ of $G$ is a \emph{distance-$r$ dominating set} of~$A$ in~$G$ if $D$ is a dominating set of $A$ in $G^r$.
Let $\gamma_r(G,A)$ be the minimum size of a distance-$r$ dominating set of $A$ in $G$.
Note that for every integer $r'\geq r$, $\gamma_{r'}(G,A)$ is at most $\gamma_r(G,A)$.

\subsection{Sparse graphs}

A graph $H$ with vertex set $\{v_1,\ldots,v_n\}$ is an \emph{$r$-shallow minor} of $G$ if there exist pairwise disjoint subsets $V_1,\ldots,V_n$ of $V(G)$ satisfying the following.
\begin{enumerate}
    \item[$\bullet$]    For each $i\in[n]$, $G[V_i]$ has radius at most $r$.
    \item[$\bullet$]    If $v_i$ and $v_j$ are adjacent in $H$, then $\dist_G(V_i,V_j)=1$.
\end{enumerate}

A class $\mathcal{C}$ of graphs has \emph{bounded expansion} if there exists a function $f:\mathbb{N}\to\mathbb{N}$ such that for all $r\in\mathbb{N}$, $G\in\mathcal{C}$, and an $r$-shallow minor $H$ of $G$, $\abs{E(H)}/\abs{V(H)}\leq f(r)$.
A class $\mathcal{C}$ of graphs is \emph{nowhere dense} if there exists a function $g:\mathbb{N}\to\mathbb{N}$ such that for all $r\in\mathbb{N}$ and $G\in\mathcal{C}$, $K_{g(r)}$ is not an $r$-shallow minor of $G$.

Let $L$ be a linear ordering of the vertices of $G$.
We denote by $\leq_L$ the total order on $V(G)$ induced by the ordering $L$.
For vertices $v$ and $w$ of $G$, we say $v<_Lw$ if $v\leq_Lw$ and $v\neq w$.
For a nonnegative integer $k$ and vertices $v$ and $w$ of $G$, $v$ is \emph{weakly $k$-accessible} from $w$ in $L$ if $v\leq_Lw$ and there is a path $P$ of length at most $k$ between $v$ and $w$ such that for every vertex $u$ of~$P$, $v\leq_Lu$.
Let $\wre_k[G,L,v]$ be the set of vertices which are weakly $k$-accessible from $v$ in $L$.
The \emph{weak $k$-coloring number} of $L$ is the maximum size of $\wre_k[G,L,v]$ for all vertices $v$ of $G$.
The \emph{weak $k$-coloring number} of $G$, denoted by $\wc_k(G)$, is the minimum of the weak $k$-coloring numbers among all possible linear orderings of the vertices of $G$.

\begin{LEM}[Zhu~\cite{Zhu2009}]\label{lem:zhu}
    A class $\mathcal{C}$ of graphs is nowhere dense if and only if there exists a function $f_{\mathrm{wcol}}:\mathbb{N}\times\mathbb{R}_+\rightarrow\mathbb{N}$ such that for all $r\in\mathbb{N}$, $\varepsilon\in\mathbb{R}_+$, $G\in\mathcal{C}$, and a subgraph $H$ of $G$, $\wc_r(H)\leq f_{\mathrm{wcol}}(r,\varepsilon)\cdot\abs{V(H)}^\varepsilon$.
\end{LEM}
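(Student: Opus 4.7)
The plan is to prove both directions through the standard bridge between weak coloring numbers and the \emph{greatest reduced average density at shallow depth}, which I will denote $\nabla_r(G) = \max\{\abs{E(H)}/\abs{V(H)} : H \text{ is an } r\text{-shallow minor of } G\}$. A useful reformulation, due to Ne\v{s}et\v{r}il and Ossona de Mendez, is that $\mathcal{C}$ is nowhere dense if and only if for every $r\in\mathbb{N}$ and every $\varepsilon>0$ there is a constant $c(r,\varepsilon)$ with $\nabla_r(G)\le c(r,\varepsilon)\cdot\abs{V(G)}^\varepsilon$ for all $G\in\mathcal{C}$. This follows because $K_t$ being an $r$-shallow minor of $G$ forces $\nabla_r(G)\ge(t-1)/2$, and conversely a shallow minor of large average density must contain a large clique as an $s$-shallow minor for some $s$ depending only on $r$.

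For the forward direction, I would combine this reformulation with Zhu's main inequality in \cite{Zhu2009}: $\wc_r(G)$ is bounded by a polynomial (with exponent depending only on $r$) in $\nabla_r(G)$, or a close variant. The proof is constructive: one iteratively extracts a low-degeneracy subgraph of an $r$-shallow minor of the remaining graph, places those vertices last in the ordering, and repeats, so the resulting linear order yields weakly reachable sets bounded by products of the extracted degeneracies. Since nowhere denseness and $\nabla_r$ are both monotone under taking subgraphs, any subgraph $H\subseteq G\in\mathcal{C}$ satisfies $\nabla_r(H)\le c(r,\varepsilon')\cdot\abs{V(H)}^{\varepsilon'}$; feeding this into Zhu's inequality with $\varepsilon'$ chosen sufficiently small relative to $\varepsilon$ and the polynomial degree produces $\wc_r(H)\le f_{\mathrm{wcol}}(r,\varepsilon)\cdot\abs{V(H)}^\varepsilon$.

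For the converse I would contrapose: if $\mathcal{C}$ is not nowhere dense, then there is a fixed $r$ such that for every $n$ some $G_n\in\mathcal{C}$ has $K_n$ as an $r$-shallow minor. Let $H_n$ be the subgraph of $G_n$ formed by the $n$ branch sets of this model together with one connecting path of length at most $1$ between each adjacent pair, so $\abs{V(H_n)}=O(rn^2)$. For any linear ordering $L$ of $V(H_n)$, the branch-set representative that appears latest in $L$ is weakly $(2r+1)$-reachable from some representative of each of the other $n-1$ branch sets along a short walk that passes through the respective connecting path. A pigeonhole on the $n$ branch sets yields a vertex $v$ with $\abs{\wre_{2r+1}[H_n,L,v]}=\Omega(n)=\Omega(\abs{V(H_n)}^{1/2})$. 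Choosing $\varepsilon<1/2$ in the hypothesized bound gives a contradiction for large $n$, so $\mathcal{C}$ must be nowhere dense.

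The main obstacle is Zhu's inequality $\wc_r(G)\le\mathrm{poly}(\nabla_r(G))$ itself; its proof is an intricate greedy construction, and one must carefully track how the polynomial exponent depends on $r$ so that composing with the subgraph-density bound produces a single function $f_{\mathrm{wcol}}(r,\varepsilon)$ that works uniformly for every $G\in\mathcal{C}$ and every subgraph of it. The converse is conceptually simpler, but still requires a careful choice of the subgraph $H_n$ extracted from the $r$-shallow minor model so that weakly reachable sets are provably large under an arbitrary linear order.
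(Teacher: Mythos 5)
This lemma is quoted, not proved, in the paper (it is Zhu's theorem combined with Ne\v{s}et\v{r}il and Ossona de Mendez's density characterization of nowhere denseness), so there is no in-paper proof to compare against; judging your sketch on its own, the forward direction follows the standard and correct route: compose Zhu's bound $\wc_r(H)\le\mathrm{poly}_r(\nabla_{O(r)}(H))$ with the fact that nowhere denseness gives $\nabla_{O(r)}(H)\le c(r,\varepsilon')\cdot\abs{V(H)}^{\varepsilon'}$ uniformly over subgraphs $H$, choosing $\varepsilon'$ small relative to the polynomial degree. That part is fine as a sketch, with the heavy lifting correctly delegated to the cited results.

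The converse, however, contains a genuine gap. First, under the paper's definition a vertex $v$ is weakly $k$-accessible from $w$ only if $v\le_L w$, so the representative appearing \emph{latest} in $L$ cannot lie in $\wre_k[H_n,L,c_i]$ for any other representative $c_i$; you need the reachability in the opposite direction (many vertices weakly reachable \emph{from} the latest representative), and your sentence has it backwards. More importantly, even after correcting the direction, the pigeonhole does not deliver a vertex $v$ with $\abs{\wre_{2r+1}[H_n,L,v]}=\Omega(n)$ by the route you describe: the natural witnesses are the $L$-minimal vertices $m_i$ of the connecting paths $P_i$ from the other branch sets to the chosen one, but all these paths pass through the \emph{same} branch set, so they are not internally disjoint and the $m_i$ need not be distinct (they could all coincide with one very small vertex of that branch set). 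That distinctness argument is valid only for shallow \emph{topological} minors. The standard repair is to bound the minor's density rather than a single weakly reachable set: for each minor edge $\{i,j\}$ let $m_{ij}$ be the $L$-minimum of the connecting path $P_{ij}$ of length at most $2r+1$; if $m_{ij}\in V_i$ then $m_{ij}\in\wre_{2r+1}[H_n,L,c_j]$, and for fixed $j$ the map $i\mapsto m_{ij}$ is injective because distinct branch sets are disjoint. Charging each edge to an endpoint this way gives $\binom{n}{2}\le n\cdot\max_v\abs{\wre_{2r+1}[H_n,L,v]}$ for every ordering $L$, hence $\wc_{2r+1}(H_n)\ge(n-1)/2$, after which your $\varepsilon<1/2$ computation against $\abs{V(H_n)}=O(rn^2)$ completes the contrapositive.
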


For vertices $v\in A$ and $u\in V(G)\setminus A$, a path $P$ from $u$ to $v$ is \emph{$A$-avoiding} if $V(P)\cap A=\{v\}$.
For a nonnegative integer $r$ and a vertex $u\in V(G)\setminus A$, the \emph{$r$-projection} of $u$ on~$A$, denoted by $M_r^G(u,A)$, is the set of all vertices $v\in A$ connected to $u$ by an $A$-avoiding path of length at most $r$ in $G$.
The \emph{$r$-projection profile} of $u$ on $A$ is a function $\rho_r^G[u,A]:A\rightarrow[r]\cup\{\infty\}$ such that for each vertex $v\in A$, $\rho_r^G[u,A](v)$ is $\infty$ if there is no $A$-avoiding path of length at most $r$ from $u$ to $v$, and otherwise, the shortest length of an $A$-avoiding path from $u$ to $v$.
We say that a function $f:A\rightarrow[r]\cup\{\infty\}$ is \emph{$r$-realized} if there is a vertex $u\in V(G)\setminus A$ with $f=\rho_r^G[u,A]$.
Let $\mu_r(G,A)$ be the number of distinct $r$-realized functions on $A$, that is, $\mu_r(G,A):=\abs{\{\rho_r^G[u,A]:u\in V(G)\setminus A\}}$.

We will use the following lemmas.

\begin{LEM}[Drange et al.~\cite{KnS2016}]\label{lem:proj'}
    For every class $\mathcal{C}$ of graphs with bounded expansion, there exists a function $f_{\mathrm{proj}}:\mathbb{N}\to\mathbb{N}$ such that for all $r\in\mathbb{N}$, $G\in\mathcal{C}$, and $X\subseteq V(G)$, $\mu_r(G,X)\leq f_{\mathrm{proj}}(r)\cdot\abs{X}$.
\end{LEM}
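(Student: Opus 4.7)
The plan is to reduce the bound on $\mu_r(G,X)$ to a neighborhood-complexity statement in a shallow minor of $G$, and then invoke the bounded-expansion hypothesis. For every $r$-realized profile $\pi$ on $X$, fix one representative vertex $u_\pi\in V(G)\setminus X$ with $\rho_r^G[u_\pi,X]=\pi$, and let $W:=\{u_\pi:\pi \text{ is } r\text{-realized}\}$, so that $\mu_r(G,X)=\abs{W}$. For every $u_\pi\in W$ and every $v\in X$ with $\pi(v)=d\leq r$, fix a single shortest $X$-avoiding path $P_{\pi,v}$ in $G$ from $u_\pi$ to $v$ of length exactly $d$. The union of all these paths together with $W\cup X$ yields a subgraph $G^*$ of $G$ which itself lies in $\mathcal{C}$, since bounded expansion is preserved under taking subgraphs.

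The key step is to build an $r$-shallow minor $H$ of $G$ together with a tag set $T\subseteq V(H)$ of size at most $r\abs{X}$ whose bipartite adjacencies between $W$ and $T$ exactly encode the profiles. Concretely, introduce a tag vertex $v^{(d)}$ for each pair $(v,d)\in X\times[r]$, and arrange the branch sets so that in $H$, the representative $u_\pi$ is adjacent to $v^{(d)}$ if and only if $\pi(v)=d$. This is done by first passing to a bounded subdivision of $G^*$, which provides enough combinatorial slack to route paths that share internal vertices in $G^*$ into distinct tag copies without branch-set collisions, and then contracting each appropriate segment onto its corresponding tag. Since each $P_{\pi,v}$ has length at most $r$, the resulting $H$ is an $r$-shallow minor of $G$ up to a depth blow-up that depends only on $r$. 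By construction, the trace neighborhood $N_H(u_\pi)\cap T$ determines $\pi$ uniquely, so $\abs{W}$ is bounded by the number of distinct such trace neighborhoods in $H$.

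To conclude, I would use the standard fact that for every class $\mathcal{C}$ of bounded expansion and every constant depth, the class of shallow minors of bounded depth also has bounded expansion, and that on any bounded-expansion class the number of distinct trace neighborhoods $\{N_H(u)\cap T:u\in V(H)\setminus T\}$ is linear in $\abs{T}$ with a constant depending only on the class. Applying this to $H$ and the tag set $T$ of size at most $r\abs{X}$ yields $\abs{W}\leq c(r)\cdot r\cdot\abs{X}$ for a constant $c(r)$ depending on $\mathcal{C}$, so setting $f_{\mathrm{proj}}(r):=r\cdot c(r)$ produces the desired bound.

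The main obstacle is the shallow-minor construction itself. Distinct representatives may share internal vertices along their chosen $X$-avoiding paths, so one cannot simply declare the branch sets to be the internal vertex sets of the $P_{\pi,v}$; the subdivision step is essential for separating the contributions of different pairs $(\pi,v)$ so that the branch sets are pairwise disjoint and have radius at most $r$. Verifying that the resulting $H$ is genuinely an $r$-shallow minor of $G$ with the right tagged adjacencies and no spurious ones, while keeping the depth linear in $r$, is the technical heart of the argument; once this is in hand, the neighborhood-complexity bound for bounded expansion is a black box that finishes the proof.
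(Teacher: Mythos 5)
The paper does not actually prove this lemma; it is imported verbatim from Drange et al., whose argument runs through generalized (weak) coloring numbers and transitive fraternal augmentations rather than an explicit tagged shallow minor. Your overall strategy---encode each value $\pi(v)=d$ as an adjacency to a tag $v^{(d)}$ in an auxiliary graph of bounded expansion and then invoke linear neighborhood complexity---is a plausible route in spirit, but the construction has a genuine gap at exactly the point you identify as the technical heart. Subdividing $G^*$ does not provide the claimed slack: subdivision only inserts new vertices into edges, so two chosen paths $P_{\pi,v}$ and $P_{\pi',v'}$ that share an internal vertex $w$ of $G^*$ still both pass through $w$ afterwards, and $w$ cannot lie in two disjoint branch sets. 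The collision problem is therefore untouched. Moreover, even if it were resolved, a minor of a subdivision of $G^*$ is not a shallow minor of $G$---subdivision is not a subgraph operation---so you would additionally need the nontrivial passage from shallow topological minors to shallow minors to keep $H$ inside a class of bounded expansion.

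Two further problems. First, the $r$ tags $v^{(1)},\ldots,v^{(r)}$ attached to a single $v\in X$ must be realized by pairwise disjoint connected branch sets, each adjacent in $G$ to the terminal segments of all the paths arriving at $v$ with the corresponding length; it is unclear where $r$ such disjoint sets come from, and any branch set that absorbs extra path vertices will create spurious adjacencies $u_\pi v^{(d)}$ with $\pi(v)\neq d$, which destroys the claimed injectivity of $\pi\mapsto N_H(u_\pi)\cap T$ (you flag "no spurious ones" as something to verify, but the construction gives no mechanism to prevent them). Second, the black box you invoke---that the number of distinct traces $N_H(u)\cap T$ is linear in $\abs{T}$ over a bounded expansion class---is itself a theorem of essentially the same depth as the lemma being proved, and its known proofs use precisely the machinery that yields the projection bound directly. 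A workable repair would abandon the minor construction: fix a linear order $L$ with $\wc_{2r}(G,L)$ bounded, observe that every $X$-avoiding path of length at most $r$ from $u$ to $x\in X$ has a $\leq_L$-minimal vertex lying in $\wre_r[G,L,u]\cap\wre_r[G,L,x]$, and bound the number of profiles by bookkeeping over these at most $\wc_r(G)$ witnesses per vertex.
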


\begin{LEM}[Eickmeyer et al.~\cite{NCK2017}]\label{lem:proj}
    For every nowhere dense class $\mathcal{C}$ of graphs, there exists a function $f_{\mathrm{proj}}:\mathbb{N}\times\mathbb{R}_+\to\mathbb{N}$ such that for all $r\in\mathbb{N}$, $\varepsilon\in\mathbb{R}_+$, $G\in\mathcal{C}$, and $X\subseteq V(G)$, $\mu_r(G,X)\leq f_{\mathrm{proj}}(r,\varepsilon)\cdot\abs{X}^{1+\varepsilon}$.
\end{LEM}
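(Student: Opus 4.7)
The plan is to follow the strategy of Eickmeyer, Giannopoulou, Kreutzer, and Kwon, which derives the bound on projection profiles from the weak coloring number characterization of nowhere dense classes (Lemma~\ref{lem:zhu}) together with an iterative closure argument. The key intuition is that every $r$-projection profile $\rho_r^G[u,X]$ is encoded by a bounded-size combinatorial ``fingerprint'' inside a slightly enlarged superset of $X$, and by Lemma~\ref{lem:zhu} the number of such fingerprints grows only like $\abs{X}^{1+\varepsilon}$.

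Concretely, first I would fix $\varepsilon'>0$ to be chosen later in terms of $r$ and $\varepsilon$, and apply Lemma~\ref{lem:zhu} to obtain, for every subgraph $H$ of $G$, a linear ordering $L_H$ with $\wc_r(H)\leq f_{\mathrm{wcol}}(r,\varepsilon')\cdot\abs{V(H)}^{\varepsilon'}$. For a single vertex $u\in V(G)\setminus X$ and each $v\in X$ with $\rho_r^G[u,X](v)<\infty$, a shortest $X$-avoiding $u$--$v$ path $P_{u,v}$ has an $L$-minimum vertex $x_{u,v}$, which lies in $\wre_r[G,L,u]\cap\wre_r[G,L,v]$. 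Thus the profile of $u$ is determined by the trace of the weakly reachable set of $u$ on a ``witness set'' containing $X$ and such inner vertices $x_{u,v}$, together with the distance labels, which take values in $[r]\cup\{\infty\}$.

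Second, I would run a closure construction: starting from $X_0:=X$, iteratively add to $X_i$ a superset $X_{i+1}$ of all potential witnesses $x_{u,v}$ for length-$\leq r$ $X_i$-avoiding paths between pairs of vertices of $X_i$. A direct application of Lemma~\ref{lem:zhu} to the subgraph induced by $X_i$ together with these witnesses bounds $\abs{X_{i+1}}$ by $f_{\mathrm{wcol}}(r,\varepsilon')\cdot\abs{X_i}^{1+\varepsilon'}$. After a constant number $c=c(r,\varepsilon)$ of rounds (bounded in terms of $r$), every $r$-projection profile $\rho_r^G[u,X]$ of a vertex $u\notin X_c$ is determined purely by the pair $(\wre_r[G,L_{G[X_c\cup\{u\}]},u],\text{distance labels})$, and distinct profiles yield distinct such pairs. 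Choosing $\varepsilon'$ so that $(1+\varepsilon')^c\leq1+\varepsilon$ gives $\abs{X_c}\leq g(r,\varepsilon)\cdot\abs{X}^{1+\varepsilon}$ for some $g$.

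Finally, I would count the number of fingerprints. Each fingerprint is a subset $W\subseteq X_c$ of size at most $\wc_r(G[X_c])\leq f_{\mathrm{wcol}}(r,\varepsilon')\cdot\abs{X_c}^{\varepsilon'}$ together with a labelling $W\to[r]\cup\{\infty\}$, and a standard coarse counting shows that the total number is at most $f_{\mathrm{proj}}(r,\varepsilon)\cdot\abs{X}^{1+\varepsilon}$ after adjusting $\varepsilon'$ one more time. The main obstacle is Step~2: a naive single application of the weak coloring number bound only yields a bound in $\abs{V(G)}^\varepsilon$, so it is essential that the iterative closure converges in a constant number of steps and that the total blow-up over all rounds telescopes to $\abs{X}^{1+\varepsilon}$; this is precisely where the nowhere dense hypothesis (closure under subgraphs of the weak coloring bound from Lemma~\ref{lem:zhu}) is used crucially.
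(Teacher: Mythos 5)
First, a framing remark: the paper does not prove Lemma~\ref{lem:proj} at all --- it is imported verbatim from Eickmeyer et al.~\cite{NCK2017} --- so there is no in-paper argument to compare against. Judged on its own terms, your proposal has a fatal gap in the final counting step. You encode each projection profile by a ``fingerprint'' consisting of a subset $W\subseteq X_c$ of size at most $\wc_r(G[X_c])\le f_{\mathrm{wcol}}(r,\varepsilon')\cdot\abs{X_c}^{\varepsilon'}$ together with a labelling $W\to[r]\cup\{\infty\}$, and assert that a ``standard coarse counting'' bounds the number of fingerprints by $f_{\mathrm{proj}}(r,\varepsilon)\cdot\abs{X}^{1+\varepsilon}$. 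But the number of subsets of $X_c$ of size at most $s:=f_{\mathrm{wcol}}(r,\varepsilon')\abs{X_c}^{\varepsilon'}$ is of order $\binom{\abs{X_c}}{s}\ge(\abs{X_c}/s)^{s}$, which is $\abs{X}^{\Theta(\abs{X}^{\varepsilon'})}$ --- super-polynomial in $\abs{X}$, nowhere near $O(\abs{X}^{1+\varepsilon})$. The standard refinement (charge each profile to the $L$-least vertex of its fingerprint and bound the number of profiles per charged vertex) does not rescue this: in the bounded-expansion setting that per-vertex count is a function of $r$ and the weak coloring number alone, hence a constant, but here it is exponential in the weak coloring number, i.e.\ of order $(r+2)^{\abs{X}^{\varepsilon'}}$, again super-polynomial. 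This is precisely why the nowhere dense bound cannot be obtained by substituting $\wc_r\le f\cdot n^{\varepsilon}$ into the bounded-expansion argument behind Lemma~\ref{lem:proj'}, and why \cite{NCK2017} required a genuinely different, more delicate argument rather than the transfer you describe.

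A secondary gap sits in your Step~2: you justify $\abs{X_{i+1}}\le f_{\mathrm{wcol}}(r,\varepsilon')\cdot\abs{X_i}^{1+\varepsilon'}$ by ``a direct application of Lemma~\ref{lem:zhu}'', but that lemma bounds weak coloring numbers of subgraphs, not the number of witness vertices being added; controlling the closure size is a separate (true but nontrivial) statement of the flavour of Lemma~\ref{lem:pth}, itself an imported result. You would also need to argue why a number of rounds bounded in terms of $r$ and $\varepsilon$ suffices and why the resulting fingerprint actually determines the $r$-projection profile. As written, the proposal reproduces the shape of the bounded-expansion proof but omits the new idea that makes the $\abs{X}^{1+\varepsilon}$ bound go through for nowhere dense classes.
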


For a nonnegative integer $r$ and a nonnegative real $t$, an \emph{$(r,t)$-close set} is a set $X\subseteq V(G)$ such that for every $u\in V(G)\setminus X$, $\abs{M_r^G(u,X)}\leq t$.

\begin{LEM}[Drange et al.~\cite{KnS2016}]\label{lem:cl'}
    For every class $\mathcal{C}$ of graphs with bounded expansion, there exist a function $f_{\mathrm{cl}}:\mathbb{N}\to\mathbb{N}$ and a polynomial-time algorithm that for all $r\in\mathbb{N}$, $G\in\mathcal{C}$, and $X\subseteq V(G)$, outputs an $(r,f_{\mathrm{cl}}(r))$-close set $X_{\mathrm{cl}}\supseteq X$ of size at most $f_{\mathrm{cl}}(r)\cdot\abs{X}$.
\end{LEM}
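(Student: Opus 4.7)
The plan is to construct $X_{\mathrm{cl}}$ by an iterative closure procedure. Initialize $X_0 := X$. While there exists a vertex $u \in V(G) \setminus X_i$ with $\abs{M_r^G(u, X_i)} > t$ for a threshold $t = t(r)$ depending only on $r$ and on the expansion parameters of $\mathcal{C}$, pick such a $u$ and set $X_{i+1} := X_i \cup \{u\}$. Terminate when no such $u$ remains and output $X_{\mathrm{cl}} := X_i$. The stopping condition immediately yields the $(r, t)$-closeness, so we take $f_{\mathrm{cl}}(r) := t(r)$. The algorithmic implementation is routine: each iteration computes $M_r^G(u, X_i)$ for every $u \notin X_i$ by BFS in $G$, which is polynomial-time, so a linear bound on the number of iterations immediately yields polynomial total runtime.

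The quantitative task is to prove $\abs{X_{\mathrm{cl}}} \leq f_{\mathrm{cl}}(r) \cdot \abs{X}$ via an amortized charging argument using Lemma~\ref{lem:proj'}. At every step $i$, Lemma~\ref{lem:proj'} bounds the number of distinct $r$-projection profiles on $X_i$ by $f_{\mathrm{proj}}(r) \cdot \abs{X_i}$, hence at most $f_{\mathrm{proj}}(r) \cdot \abs{X_i}$ profiles can have support of size strictly more than $t$. Each added vertex $u_i$ realizes one such ``big'' profile on $X_i$. The plan is to track an amortized potential $\Phi(X_i)$ — a weighted combination of $\abs{X_i}$ and the total support size over all big profiles, with weights tuned to $f_{\mathrm{proj}}(r)$ — and to show that $\Phi$ decreases by a uniform positive amount per iteration, provided $t$ is chosen sufficiently large relative to $f_{\mathrm{proj}}(r)$. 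Since $\Phi(X_0)$ is bounded by a linear function of $\abs{X}$, this forces termination after $O(\abs{X})$ iterations.

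The main obstacle is that adding $u_i$ to $X_i$ perturbs the $r$-projection profiles of all remaining outside vertices in two competing ways: some paths that were $X_i$-avoiding are broken because they now pass through the interior vertex $u_i \in X_{i+1}$ (so profiles shrink or merge), while $u_i$ itself becomes a new potential projection target (so other profiles may gain $u_i$ in their support). Consequently the family of big profiles is not monotone across iterations, so a naive potential may fail, and bounding the number of newly created big profiles is the delicate combinatorial step. I expect to handle this by applying Lemma~\ref{lem:proj'} twice: once to bound the number of big profiles at any given step, and once more to control how many new big profiles can be born in a single iteration, in terms of $\abs{X_{i+1}}$. If this direct amortization proves too weak, a natural fallback is to enrich the update rule so that at each step we add not only $u_i$ but also a small witness set of interior vertices on short paths from $u_i$ into its projection, thereby directly cutting the problematic paths and guaranteeing clean progress.
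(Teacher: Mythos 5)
The paper does not prove this lemma at all: it is imported verbatim from Drange et al.~\cite{KnS2016}, so there is no in-paper proof to match your attempt against, and your proposal has to stand on its own. Its first half does: the greedy closure procedure is exactly the standard construction, the $(r,t)$-closeness of the output is immediate from the stopping condition, and polynomial running time is automatic since each round adds a vertex (so there are at most $\abs{V(G)}$ rounds, irrespective of any size bound).

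The genuine gap is the size bound $\abs{X_{\mathrm{cl}}}\leq f_{\mathrm{cl}}(r)\cdot\abs{X}$, which is the entire content of the lemma, and your argument for it is not a proof. The potential $\Phi$ is never defined, you yourself identify the non-monotonicity of the profile family as the crux and offer only an expectation that a doubled application of Lemma~\ref{lem:proj'} will resolve it, plus a fallback. More fundamentally, counting \emph{profile classes} does not bound the number of \emph{rounds}: Lemma~\ref{lem:proj'} says at most $f_{\mathrm{proj}}(r)\cdot\abs{X_i}$ distinct big profiles exist at step $i$, but arbitrarily many vertices added over the course of the procedure could each realize a big profile at their own insertion time, and the bound at step $i$ is silent about step $i+1$ after the ground set has changed. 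There is also a circularity risk: in the cited source the projection-complexity bound and the closure lemma are developed together (the complexity bound is what the closure is \emph{for}), so building the closure on top of Lemma~\ref{lem:proj'} inverts the intended dependency. The argument that actually closes this gap is a density argument, not a profile-counting one: choose $t(r)$ larger than (a constant times) the depth-$\lfloor r/2\rfloor$ edge density $\nabla_{\lfloor r/2\rfloor}(G)$ guaranteed by bounded expansion; when $u_i$ is inserted it has more than $t$ internally $X_i$-avoiding paths of length at most $r$ to distinct vertices of $X_i\subseteq X_{\mathrm{cl}}$, and splitting each such path at its midpoint yields, after a standard greedy disjointification, a shallow minor at depth $\lfloor r/2\rfloor$ on a vertex set of size $O(\abs{X}+m)$ with at least $m\cdot t/2$ edges, where $m$ is the number of added vertices. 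Bounded expansion then gives $m\cdot t/2\leq c(r)\cdot(\abs{X}+m)$, hence $m=O(\abs{X})$ once $t>2c(r)$. Without this (or an equivalent weak-coloring-number charging), the lemma is not established.
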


\begin{LEM}[Eickmeyer et al.~\cite{NCK2017}, Drange et al.~\cite{KnS2016}]\label{lem:cl}
    For every nowhere dense class $\mathcal{C}$ of graphs, there exist a function $f_{\mathrm{cl}}:\mathbb{N}\times\mathbb{R}_+\to\mathbb{N}$ and a polynomial-time algorithm that for all $r\in\mathbb{N}$, $\varepsilon\in\mathbb{R}_+$, $G\in\mathcal{C}$, and $X\subseteq V(G)$, outputs an $(r,f_{\mathrm{cl}}(r,\varepsilon)\cdot\abs{X}^\varepsilon)$-close set $X_{\mathrm{cl}}\supseteq X$ of size at most $f_{\mathrm{cl}}(r,\varepsilon)\cdot\abs{X}^{1+\varepsilon}$.
\end{LEM}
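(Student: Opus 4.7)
The plan is an iterative closure construction driven by the projection bound of Lemma~\ref{lem:proj}. Starting from $X^{(0)} := X$, at each round one absorbs a bounded number of representatives of the equivalence classes of external vertices whose common $r$-projection onto the current set is oversized; the projection bound controls the growth per round, and a careful balance of parameters yields both termination in boundedly many rounds and the target size bound.

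Concretely, fix the target threshold $t := f_{\mathrm{cl}}(r,\varepsilon)\cdot \abs{X}^\varepsilon$, where $f_{\mathrm{cl}}$ is to be chosen at the end. At round $i\ge 0$, quotient $V(G)\setminus X^{(i)}$ by the equivalence $u \sim_i v \iff \rho_r^G[u,X^{(i)}] = \rho_r^G[v,X^{(i)}]$. For every class $[u]_i$ whose common projection satisfies $\abs{M_r^G(u,X^{(i)})} > t$, pick an arbitrary representative and add it to the set, producing $X^{(i+1)}$. By Lemma~\ref{lem:proj}, the number of representatives added in one round is at most $\mu_r(G,X^{(i)}) \le f_{\mathrm{proj}}(r,\varepsilon')\cdot \abs{X^{(i)}}^{1+\varepsilon'}$ for any chosen $\varepsilon' > 0$. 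Iterate until the fixed point $X_{\mathrm{cl}}$ is reached; by construction, every vertex $v \in V(G)\setminus X_{\mathrm{cl}}$ has $\abs{M_r^G(v,X_{\mathrm{cl}})} \le t$, establishing the closure property.

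The main obstacle is showing that this iteration terminates in $O_{r,\varepsilon}(1)$ rounds while keeping $\abs{X^{(i)}} \le f_{\mathrm{cl}}(r,\varepsilon)\cdot \abs{X}^{1+\varepsilon}$ throughout. Following the framework of Eickmeyer et al., one chooses $\varepsilon' := \varepsilon/c_r$ for a sufficiently large constant $c_r$ depending only on $r$, so that $c_r$ applications of the per-round growth $\abs{X^{(i+1)}} \le \abs{X^{(i)}}\bigl(1 + f_{\mathrm{proj}}(r,\varepsilon')\cdot \abs{X^{(i)}}^{\varepsilon'}\bigr)$ compound to at most $f_{\mathrm{cl}}(r,\varepsilon)\cdot \abs{X}^{1+\varepsilon}$. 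Termination after at most $c_r$ rounds then follows by a potential-function argument: absorbing a representative $u$ of an oversized class $[u]_i$ strictly alters the projection structure of every former classmate $v$, since either a previously witnessed $X^{(i)}$-avoiding path from $v$ to some $w \in M_r^G(v,X^{(i)})$ is now blocked by $u$, or $u$ itself enters the new projection as a closer landmark that replaces farther ones; after $r+1$ such absorptions no oversized class on the original profile can persist, yielding the constant bound on $c_r$. Polynomial-time implementation is routine: each round computes all $r$-projection profiles by BFS from every external vertex, groups them by profile via hashing, and selects one representative per oversized class, all in time polynomial in $\abs{V(G)}$.
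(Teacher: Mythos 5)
The paper does not prove this lemma; it imports it from Drange et al.\ and Eickmeyer et al., where the construction is a one-vertex-at-a-time greedy: while some $u\notin X$ has $\abs{M_r^G(u,X)}$ above the threshold, add that single $u$, and bound the \emph{total} number of additions by a global counting argument (each added vertex contributes more than $t$ edges to a low-depth shallow minor built on the accumulated set, and shallow minors of graphs from a nowhere dense class have edge density $O(n^{\varepsilon})$; equivalently one can count against weak coloring numbers). Your round-based variant could in principle be analyzed the same way, but the analysis you actually give has a genuine gap in the termination step.

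The claim that absorbing a representative $u$ of an oversized class forces its classmates to be resolved within $r+1$ (or any constant number of) rounds is unjustified, and the reasoning offered for it is incorrect. Two vertices $u\sim_i v$ share the same projection profile, i.e.\ the same set $M_r^G(\cdot,X^{(i)})$ with the same distances, but the $X^{(i)}$-avoiding paths witnessing these projections for $v$ need not pass through $u$ at all; so after adding $u$, the set $M_r^G(v,X^{(i+1)})$ still contains every old target $w$ that has some witnessing path avoiding $u$ (typically all of them), and may additionally gain $u$ itself. The projection is a set, so $u$ ``entering as a closer landmark'' does not replace farther ones, and no potential decreases. Hence $v$ remains oversized round after round, and the only mechanism your algorithm has for resolving oversized vertices is to absorb them one round's worth of class representatives at a time; nothing bounds the number of rounds by a function of $r$ and $\varepsilon$. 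Without a constant round bound your compounding size estimate collapses, and if you instead truncate after $c_r$ rounds you lose the closure property at the fixed point. The missing ingredient is precisely the density argument of the cited proofs: charge each added vertex with the $>t$ edges of a shallow minor it creates, conclude that at most $O_{r,\varepsilon}(\abs{X}^{1+\varepsilon})$ vertices can ever be added when $t=f_{\mathrm{cl}}(r,\varepsilon)\cdot\abs{X}^{\varepsilon}$, and derive both termination and the size bound from that single count; Lemma~\ref{lem:proj} alone cannot substitute for it.
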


For $r\in\mathbb{N}$ and a set $X\subseteq V(G)$, an \emph{$r$-path closure} of $X$ is a set $X_{\mathrm{pth}}\supseteq X$ such that for $u,v\in X$, if $\dist_G(u,v)\leq r$, then $\dist_{G[X_{\mathrm{pth}}]}(u,v)=\dist_G(u,v)$.

\begin{LEM}[Drange et al.~\cite{KnS2016}]\label{lem:pth'}
    For every class $\mathcal{C}$ of graphs with bounded expansion, there exist a function $f_{\mathrm{pth}}:\mathbb{N}\to\mathbb{N}$ and a polynomial-time algorithm that for all $r\in\mathbb{N}$, $G\in\mathcal{C}$, and $X\subseteq V(G)$, outputs an $r$-path closure of $X$ having size at most $f_{\mathrm{pth}}(r)\cdot\abs{X}$.
\end{LEM}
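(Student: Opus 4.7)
The plan is to attach short witness paths to every vertex of $X$ using a linear ordering with small weak $r$-coloring number. Since $\mathcal{C}$ has bounded expansion, there is a constant $c=c(\mathcal{C},r)$ and a linear ordering $L$ of $V(G)$, computable in polynomial time, such that $\abs{\wre_r[G,L,v]}\leq c$ for every $v\in V(G)$. This is the bounded-expansion counterpart of Lemma~\ref{lem:zhu}, where the factor $\abs{V(H)}^\varepsilon$ may be replaced by a constant depending only on $r$. Fixing such an $L$, for every $v\in X$ and every $w\in\wre_r[G,L,v]$ I would pick a shortest $v$--$w$ path $P_{v,w}$ in $G$ among those whose vertices are all $\geq_L w$; such a path exists by the definition of weak $r$-accessibility, and it is computable in polynomial time by a BFS inside the subgraph induced by $\{u\in V(G):u\geq_L w\}$. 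I would then define $X_{\mathrm{pth}}:=X\cup\bigcup_{v\in X,\,w\in\wre_r[G,L,v]}V(P_{v,w})$.

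The size bound is immediate: each path $P_{v,w}$ has length at most $r$ and so contributes at most $r$ new vertices to $X_{\mathrm{pth}}$; since each $v\in X$ has at most $c$ choices of $w$, we obtain $\abs{X_{\mathrm{pth}}}\leq (1+cr)\cdot\abs{X}$, which yields $f_{\mathrm{pth}}(r):=1+cr$. For the distance-preservation property, fix $u,v\in X$ with $\dist_G(u,v)\leq r$ and a shortest $u$--$v$ path $Q$ in $G$, and let $w$ be the $L$-minimum vertex of $Q$. By minimality of $w$, every vertex of $Q$ is $\geq_L w$, so the subpaths of $Q$ from $u$ to $w$ and from $v$ to $w$ witness that $w\in\wre_r[G,L,u]\cap\wre_r[G,L,v]$. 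Hence both $P_{u,w}$ and $P_{v,w}$ lie in $G[X_{\mathrm{pth}}]$, and by the choice of each as a \emph{shortest} witnessing path, each is no longer than the corresponding subpath of $Q$. Concatenating them yields a $u$--$v$ walk in $G[X_{\mathrm{pth}}]$ of length at most $\dist_G(u,v)$, so $\dist_{G[X_{\mathrm{pth}}]}(u,v)\leq \dist_G(u,v)$; the reverse inequality holds since $G[X_{\mathrm{pth}}]$ is a subgraph of $G$.

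The main obstacle is not the local construction above but the appeal to a polynomial-time algorithm producing an ordering $L$ with $\wc_r(G)$ bounded by a constant on every graph of $\mathcal{C}$. This strengthening of Lemma~\ref{lem:zhu}, available precisely for bounded-expansion classes, was established by Zhu and later turned into efficient algorithms in the algorithmic sparsity literature; I would cite these results directly rather than re-derive the ordering. Once $L$ is at hand the remainder is elementary, and the overall procedure runs in polynomial time because only $c$ BFS computations are required per vertex of $X$.
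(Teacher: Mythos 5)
Your construction is correct and is essentially the standard proof of this lemma; the paper itself does not prove it but imports it from Drange et al.~\cite{KnS2016}, where the argument has exactly this form (fix an ordering witnessing $\wc_r(G)\leq c$, then add one shortest witnessing path for each pair $(v,w)$ with $v\in X$ and $w\in\wre_r[G,L,v]$, and use the $L$-minimum vertex of a shortest $u$--$v$ path to glue two such witnessing paths together). Both your size count $(1+cr)\cdot\abs{X}$ and the distance-preservation argument are sound, and the one external ingredient you flag --- a polynomial-time algorithm producing an ordering with $\wc_r$ bounded by a constant on a bounded-expansion class --- is indeed available in the literature you cite.
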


\begin{LEM}[Eickmeyer et al.~\cite{NCK2017}, Drange et al.~\cite{KnS2016}]\label{lem:pth}
    For every nowhere dense class $\mathcal{C}$ of graphs, there exist a function $f_{\mathrm{pth}}:\mathbb{N}\times\mathbb{R}_+\to\mathbb{N}$ and a polynomial-time algorithm that for all $r\in\mathbb{N}$, $\varepsilon\in\mathbb{R}_+$, $G\in\mathcal{C}$, and $X\subseteq V(G)$, outputs an $r$-path closure of $X$ having size at most $f_{\mathrm{pth}}(r,\varepsilon)\cdot\abs{X}^{1+\varepsilon}$.
\end{LEM}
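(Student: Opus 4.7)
The plan is to adapt the algorithm behind the bounded-expansion analogue (Lemma~\ref{lem:pth'}) by substituting each invocation of Lemma~\ref{lem:cl'} with Lemma~\ref{lem:cl}, and then reanalyzing the size bookkeeping: each application now multiplies the working set size by a factor of $\abs{X^{(i-1)}}^{\delta}$ rather than by a constant, so $\delta$ must be chosen small enough that $r$ iterations stay within $\abs{X}^{1+\varepsilon}$.

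Concretely, the algorithm produces sets $X=X^{(0)}\subseteq X^{(1)}\subseteq\cdots\subseteq X^{(r)}=:X_{\mathrm{pth}}$. At iteration $i\in[r]$, with $\delta:=\varepsilon/(2r)$: first invoke Lemma~\ref{lem:cl} to obtain an $(r,t_i)$-close superset $Y^{(i)}$ of $X^{(i-1)}$ with $t_i\leq f_{\mathrm{cl}}(r,\delta)\abs{X^{(i-1)}}^{\delta}$ and $\abs{Y^{(i)}}\leq f_{\mathrm{cl}}(r,\delta)\abs{X^{(i-1)}}^{1+\delta}$; then form $X^{(i)}$ by augmenting $Y^{(i)}$ with, for each of the at most $f_{\mathrm{proj}}(r,\delta)\abs{X^{(i-1)}}^{1+\delta}$ realized $r$-projection profiles on $X^{(i-1)}$ (Lemma~\ref{lem:proj}), one representative vertex $w$ together with a shortest $X^{(i-1)}$-avoiding path from $w$ to each vertex in $M_r^G(w,X^{(i-1)})$, incurring at most $r\cdot t_i$ extra vertices per profile. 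Correctness follows by induction on $i$: for every $u,v\in X$ at distance $d\leq r$, after iteration $i$ some shortest $u$-$v$ path of length $d$ has its length-$\min(i,\lceil d/2\rceil)$ prefixes from both sides inside $G[X^{(i)}]$. The inductive step uses that any not-yet-added internal vertex $w$ sharing a realized profile on $X^{(i-1)}$ triggers the addition of a same-profile representative whose profile distances sum to $d$, hence itself lies on a shortest $u$-$v$ path; after $\lceil r/2\rceil$ iterations the two prefixes meet.

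The size recurrence $\abs{X^{(i)}}\leq C(r,\delta)\cdot\abs{X^{(i-1)}}^{1+\delta}$ unfolds to $\abs{X_{\mathrm{pth}}}\leq C(r,\delta)^r\cdot\abs{X}^{(1+\delta)^r}$, and $(1+\varepsilon/(2r))^r\leq e^{\varepsilon/2}\leq 1+\varepsilon$ for $\varepsilon$ below a constant threshold (without loss of generality, after absorbing constants into $f_{\mathrm{pth}}(r,\varepsilon)$). The main obstacle is the bookkeeping turning ``some shortest $u$-$v$ path survives at each step'' into ``a single shortest $u$-$v$ path lies in $G[X_{\mathrm{pth}}]$'', since the representatives chosen at different iterations may correspond to different shortest paths. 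This is handled by phrasing the inductive invariant in terms of \emph{some} global path chosen at the outset and verifying that the representative-based pass always preserves such a choice, following the pattern of Drange et al.~\cite{KnS2016}.
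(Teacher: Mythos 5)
First, a point of reference: the paper does not prove Lemma~\ref{lem:pth}; it is imported verbatim from Drange et al.~\cite{KnS2016} and Eickmeyer et al.~\cite{NCK2017}. Your overall architecture --- take an $(r,t)$-close superset via Lemma~\ref{lem:cl}, group outside vertices by $r$-projection profiles via Lemma~\ref{lem:proj}, add one representative per realized profile together with shortest avoiding paths to its (now small) projection, and choose $\delta$ as a fraction of $\varepsilon$ so the exponents compose to $1+\varepsilon$ --- is exactly the mechanism of the cited proofs, and the size bookkeeping (including $(1+\varepsilon/(2r))^r\le e^{\varepsilon/2}$) is fine up to constants.

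There are, however, two concrete problems in how you wire it together. First, you compute profiles and projections with respect to $X^{(i-1)}$ but bound the cost per representative by $r\cdot t_i$, where $t_i$ only controls projections onto the close set $Y^{(i)}$; onto $X^{(i-1)}$ a projection can have size $\abs{X^{(i-1)}}$, which would blow the budget to $\abs{X}^{2+O(\delta)}$. Both the counting and the correctness argument must be carried out relative to $Y^{(i)}$: decompose a shortest $u$--$v$ path at its intersections with $Y^{(i)}$, so that each segment is internally $Y^{(i)}$-avoiding and its internal vertices have small projections. Second, your inductive invariant (prefixes of a fixed shortest path growing by one vertex per round, meeting after $\lceil r/2\rceil$ rounds) does not describe the algorithm you state, which adds \emph{entire} shortest avoiding paths from each representative; and the ``main obstacle'' you defer to Drange et al.\ --- forcing a single shortest $u$--$v$ path into $G[X_{\mathrm{pth}}]$ --- is not actually required by the definition of an $r$-path closure, which only asks that $\dist_{G[X_{\mathrm{pth}}]}(u,v)=\dist_G(u,v)$. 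The correct one-round argument is: for each internally $Y$-avoiding segment of a shortest path with endpoints $y,y'\in Y$ and an internal vertex $w_0$, the same-profile representative $w$ satisfies $\rho_r^G[w,Y](y)+\rho_r^G[w,Y](y')\le\dist_G(y,y')$, so the two added avoiding paths concatenate to a walk of the right length; summing over segments yields a walk of length $\dist_G(u,v)$ inside the output set. With this, a single round suffices and the $r$-fold iteration (and its muddled invariant) can be dropped entirely.
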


A class $\mathcal{C}$ of graphs is \emph{uniformly quasi-wide} if there exist functions $N:\mathbb{N}\times\mathbb{N}\rightarrow\mathbb{N}$ and $s:\mathbb{N}\rightarrow\mathbb{N}$ such that for all $G\in\mathcal{C}$ and $A\subseteq V(G)$ with $\abs{A}\geq N(r,m)$, there exist sets $S\subseteq V(G)$ and $B\subseteq A\setminus S$ such that $\abs{S}\leq s(r)$, $\abs{B}\geq m$, and $B$ is distance-$r$ independent in $G\setminus S$.

Ne\v{s}et\v{r}il and Ossona de Mendez~\cite{NOdM2011} showed that for a class $\mathcal{C}$ of graphs that is closed under taking subgraphs, $\mathcal{C}$ is uniformly quasi-wide if and only if $\mathcal{C}$ is nowhere dense.
They also showed that every nowhere dense class of graphs is uniformly quasi-wide, even if the class is not closed under taking subgraphs.
Kreutzer, Rabinovich, and Siebertz~\cite{KRS2019} showed that the function $N$ in the definition of uniformly quasi-wideness can be chosen as a polynomial depending on $\mathcal{C}$ and $r$.

\begin{THM}[Kreutzer, Rabinovich, and Siebertz~\cite{KRS2019}]\label{thm:uqw}
    Let $\mathcal{C}$ be a nowhere dense class of graphs.
    For every nonnegative integer $r$, there exist constants $p(r)$ and $s(r)$ such that for all $G\in\mathcal{C}$, $m\in\mathbb{N}$, and $A\subseteq V(G)$ with $\abs{A}\geq m^{p(r)}$, there exist sets $S\subseteq V(G)$ and $B\subseteq A\setminus S$ such that $\abs{S}\leq s(r)$, $\abs{B}\geq m$, and $B$ is distance-$r$ independent in $G\setminus S$.
    Moreover, if $K_c$ is not an $r$-shallow minor of $G$, then $s(r)\leq c\cdot r$, and one can find desired sets $S$ and $B$ in $O(r\cdot c\cdot\abs{A}^{c+6}\cdot\abs{V(G)}^2)$ time.
\end{THM}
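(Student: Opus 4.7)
I would prove the statement by induction on $r$, exploiting the defining property of a nowhere dense class: for every $r$, there is a constant $c=c(\mathcal{C},r)$ such that $K_c$ is not an $r$-shallow minor of any $G\in\mathcal{C}$. The base case is $r=0$, where we take $S=\emptyset$ and $B=A$. For $r=1$ we need an independent set: since $K_c$ is excluded as a $0$-shallow minor, every subgraph of $G$ has average degree below some $d(c)$, so greedily selecting a minimum-degree vertex in $G[A]$ and deleting its closed neighborhood from $A$ yields an independent set of size $\Omega(|A|/d(c))$, so $p(1)=1$ and $s(1)=0$ suffice.

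\textbf{Inductive step.} Given $A$ with $|A|\geq m^{p(r)}$, I would first apply the induction hypothesis with parameter $r-1$ (or $\lceil r/2\rceil$) to get a set $S_0$ of size at most $s(r-1)$ and a large set $B_0\subseteq A\setminus S_0$ that is distance-$(r-1)$ independent in $G\setminus S_0$. For every pair $\{u,v\}\subseteq B_0$ with $\dist_{G\setminus S_0}(u,v)=r$, fix a shortest $u$--$v$ path $P_{uv}$ and pick a distinguished interior vertex $m_{uv}$ (e.g.\ the one at distance $\lceil r/2\rceil$ from $u$). Now a dichotomy drives the argument. Either some vertex $w$ is chosen as $m_{uv}$ for a huge number $N$ of pairs, in which case those pairs correspond to $\Omega(\sqrt N)$ vertices of $B_0$ each joined to $w$ by a path of length at most $\lceil r/2\rceil$ inside $G\setminus S_0$; applying Mader-type density to the $\lceil r/2\rceil$-shallow minor obtained by contracting the branches around $w$ together with $w$ itself would force a $K_c$ minor at depth $r$ unless $N$ is polynomially bounded. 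Or every interior vertex is the midpoint of only few pairs, and a greedy extraction yields a subset $B_1\subseteq B_0$ of size $m$ whose pairwise midpoints are all distinct; adding these midpoints to the separator gives a distance-$r$ independent $B_1$ in $G\setminus S$, where $S=S_0\cup\{m_{uv}:u,v\in B_1\}$.

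\textbf{Bounds on $s(r)$ and running time.} For the ``moreover'' clause, the bound $s(r)\leq c\cdot r$ comes from a direct accounting: each of the $O(r)$ layers of recursion adds at most $c$ vertices to the separator, since the Mader step above needs only $c$ branch roots to certify the forbidden $K_c$. The claimed running time $O(r\cdot c\cdot |A|^{c+6}\cdot|V(G)|^2)$ comes from implementing each layer by a BFS on $G$ (cost $O(|V(G)|^2)$ per source), enumerating at most $\binom{|A|}{c}=O(|A|^c)$ choices of branch sets, and an $O(|A|^6)$ factor from pair-counting at midpoints; $r$ layers of recursion give the outer $r$.

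\textbf{Main obstacle.} The delicate point is pinning down $p(r)$ as a \emph{polynomial} (depending only on $\mathcal{C}$ and $r$) rather than a tower. A naive analysis of the midpoint dichotomy doubles the exponent at every recursive level, which after $r$ iterations yields a bound like $m^{2^r}$. The clever step of Kreutzer--Rabinovich--Siebertz is to balance the two horns of the dichotomy with a threshold that keeps the recurrence for $p(r)$ polynomial: one needs to argue that the ``too many pairs through one midpoint'' case cannot reoccur more than $O(c)$ times, so that the exponent is additive in $r$, not multiplicative. Getting this balance correctly and verifying that the midpoint-multiplicity bound survives the contribution of the newly added separator vertices is the main technical challenge of the proof.
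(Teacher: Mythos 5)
This statement is quoted from Kreutzer, Rabinovich, and Siebertz and is used as a black box; the paper contains no proof of it, so your sketch can only be judged on its own terms, and as it stands it has genuine gaps rather than being a complete alternative argument.

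The most concrete problem is in the second horn of your midpoint dichotomy: you propose to take $S=S_0\cup\{m_{uv}:u,v\in B_1\}$ with $|B_1|=m$. That adds up to $\binom{m}{2}$ vertices to the separator, whereas the theorem requires $|S|\leq s(r)$, a constant depending only on $r$ and $\mathcal{C}$ and in particular independent of $m$. A separator that grows with the target size of the scattered set proves nothing. In the correct argument the separator only ever absorbs \emph{hub} vertices (vertices close to many elements of the current candidate set), and the case where no hub exists must be resolved by shrinking the candidate set, not by enlarging $S$. The first horn is also not right as stated: having $\Omega(\sqrt{N})$ vertices of $B_0$ joined to a common vertex $w$ by short paths does not ``force a $K_c$ minor'' --- a large star is not a clique minor, and no contradiction arises; the correct move there is precisely to put $w$ into $S$ and recurse, which is how $S$ acquires its $\leq c\cdot r$ elements. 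Finally, you explicitly defer the central difficulty --- keeping the exponent $p(r)$ polynomial rather than doubling at each level --- to the cited authors, so the part of the theorem that distinguishes it from the classical Ne\v{s}et\v{r}il--Ossona de Mendez quasi-wideness (where the analogous function is non-elementary because of Ramsey's theorem) is not established by your sketch. Since the theorem is used here as an external citation, none of this affects the paper, but the proposal should not be read as a proof.
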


\subsection{VC-dimension}

A \emph{set-system} is a family of subsets of a set, called the \emph{ground set}.
Let $\mathcal{S}$ be a set-system with the ground set $S$.
A set $S'\subseteq S$ is \emph{shattered} by $\mathcal{S}$ if $\abs{\{S'\cap T:T\in\mathcal{S}\}}=2^{\abs{S'}}$.
The \emph{Vapnik-Chervonenkis dimension}, or \emph{VC-dimension} for short, of $\mathcal{S}$ is the largest cardinality of a shattered subset of $S$ by $\mathcal{S}$.
Observe that if a set $S'\subseteq S$ is shattered by $\mathcal{S}$, then every subset of $S'$ is also shattered by $\mathcal{S}$.
In addition, for every $\mathcal{S}'\subseteq\mathcal{S}$, the VC-dimension of $\mathcal{S}'$ is at most that of $\mathcal{S}$.

\begin{PROP}[See~{\cite[Proposition~10.3.3]{Matousek2002}}]\label{prop:Matousek2002}
    Let $F(X_1,\ldots,X_d)$ be a set-theoretic expression using set variables $X_1,\ldots,X_d$ and the operations of union, intersection, and difference.
    Let $\mathcal{S}$ be a set-system with the ground set $S$, and $\mathcal{T}:=\{F(S_1,\ldots,S_d):S_1,\ldots,S_d\in\mathcal{S}\}$.
    If $\mathcal{S}$ has VC-dimension $c<\infty$, then $\mathcal{T}$ has VC-dimension $O(cd\log d)$.
\end{PROP}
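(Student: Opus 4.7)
The plan is to prove the bound via the Sauer-Shelah lemma combined with a counting argument on the $d$ argument sets. Fix a set $X \subseteq S$ that is shattered by $\mathcal{T}$ and set $N = \abs{X}$; I would derive an upper bound on the trace $\{T \cap X : T \in \mathcal{T}\}$ to compare against the lower bound $2^N$ that shattering provides. The key observation is that the operations $\cup$, $\cap$, and $\setminus$ commute with intersection with $X$, so
\[
F(S_1, \ldots, S_d) \cap X = F(S_1 \cap X, \ldots, S_d \cap X)
\]
for any $S_1, \ldots, S_d \in \mathcal{S}$. Consequently, the trace of $F(S_1, \ldots, S_d)$ on $X$ is determined entirely by the $d$-tuple $(S_1 \cap X, \ldots, S_d \cap X)$ of traces of $\mathcal{S}$ on $X$.

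Next I would invoke the Sauer-Shelah lemma: since $\mathcal{S}$ has VC-dimension $c$, the number of traces of $\mathcal{S}$ on $X$ is at most $\sum_{i=0}^{c} \binom{N}{i} \leq (eN/c)^{c}$. Hence the trace of $\mathcal{T}$ on $X$ contains at most $(eN/c)^{cd}$ distinct members, and combining this with the lower bound $2^N$ and taking logarithms yields
\[
N \leq cd \cdot \log_2(eN/c).
\]
Solving this inequality for $N$ is the step that requires the most care: one has to separate the regimes $c \leq d$ and $c > d$ so that the $\log c$ contribution is absorbed into $\log N$ in the first case and shaves the right-hand side in the second case, after which a straightforward bootstrapping argument forces $N = O(cd \log d)$ rather than the weaker $O(cd \log(cd))$ bound that a naive estimate would give.

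Beyond this asymptotic manipulation, I do not anticipate any substantive obstacle: the argument is the classical proof of closure of finite VC-dimension under $d$-ary Boolean combinations, originally due to Dudley and presented in Matou\v{s}ek's book at the cited location. The only other point of caution is verifying that the difference operation commutes with restriction to $X$, which is immediate from the identity $(A \setminus B) \cap X = (A \cap X) \setminus (B \cap X)$; once union, intersection, and difference are each handled, a routine induction on the parse tree of $F$ dispenses with arbitrary compound expressions.
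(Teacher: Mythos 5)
Your proposal is correct and is exactly the standard argument behind the cited result: restriction to a shattered set $X$ commutes with union, intersection, and difference, the Sauer--Shelah lemma bounds the number of traces of $\mathcal{S}$ on $X$ by $(eN/c)^c$, so shattering forces $2^N\le(eN/c)^{cd}$, whence $N=O(cd\log d)$. The paper gives no proof of its own (it only cites Matou\v{s}ek), so there is nothing to compare beyond noting that your reconstruction matches the textbook proof; the one simplification worth making is that no case split on $c\le d$ versus $c>d$ is needed at the end, since substituting $N=cdt$ into $N\le cd\log_2(eN/c)$ cancels the $c$ inside the logarithm and yields $t\le\log_2(ed)+\log_2 t$, hence $t=O(\log d)$, uniformly.
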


A \emph{hitting set} of $\mathcal{S}$ is a set $X\subseteq S$ such that for every $T\in\mathcal{S}$, $T\cap X\neq\emptyset$.
Let $\tau(\mathcal{S})$ be the minimum size of a hitting set of $\mathcal{S}$.

Br\"{o}nnimann and Goodrich~\cite{BG1995} and Even, Rawitz, and Shahar~\cite{ERS2005} presented polynomial-time algorithms finding a hitting set $X$ of a nonempty set-system $\mathcal{S}$ having VC-dimension at most $c$ with $\abs{X}=O(c\cdot\tau(\mathcal{S})\cdot\ln\tau(\mathcal{S}))$.

\begin{THM}[\cite{BG1995,ERS2005}]\label{thm:approx hitting}
    There exist a constant $C_\tau$ and a polynomial-time algorithm that for every nonempty set-system $\mathcal{S}$ having VC-dimension at most $c$, outputs a hitting set of $\mathcal{S}$ having size at most $C_\tau\cdot c\cdot\tau(\mathcal{S})\cdot\ln\tau(\mathcal{S})+1$.
\end{THM}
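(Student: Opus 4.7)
The plan is to follow the weighted $\varepsilon$-net scheme of Br\"onnimann and Goodrich, which reduces approximate hitting set to the Haussler--Welzl $\varepsilon$-net theorem. The starting tool I need is a deterministic polynomial-time procedure that, given nonnegative integer weights $w\colon S\to\mathbb{N}$ with $w(S)>0$ and a parameter $\varepsilon>0$, outputs a set $N\subseteq S$ of size $O((c/\varepsilon)\ln(1/\varepsilon))$ such that every $T\in\mathcal{S}$ with $w(T)\geq\varepsilon\cdot w(S)$ intersects $N$. This is the standard $\varepsilon$-net theorem for a VC-dimension $c$ system, with the probability argument derandomized by the method of conditional expectations applied to the usual double-sampling proof.

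Next, I would wrap this inside a doubling search on the target size by trying $t=1,2,4,\ldots$ and invoking the following routine for each. Initialize unit weights on the ground set $S$. In each round, call the $\varepsilon$-net procedure with $\varepsilon:=1/(2t)$ to obtain a candidate set $N$ of size $O(ct\ln t)$. If every $T\in\mathcal{S}$ meets $N$, return $N$. Otherwise, pick any witness $T\in\mathcal{S}$ with $T\cap N=\emptyset$ and double the weights of all elements of $T$. The doubling search returns the first successful $N$, and the smallest $t$ at which the routine halts is at most $2\tau(\mathcal{S})$.

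The correctness and round bound rest on a two-sided potential argument. Fix a minimum hitting set $X^{\ast}$ of size $\tau:=\tau(\mathcal{S})$. Every chosen witness $T$ intersects $X^{\ast}$, so after $k$ doubling steps convexity yields $w(X^{\ast})\geq \tau\cdot 2^{k/\tau}$. On the other hand, a missed witness satisfies $w(T)<w(S)/(2t)$, so each round grows $w(S)$ by a factor of at most $1+1/(2t)$, giving $w(S)\leq \abs{S}\cdot e^{k/(2t)}$. When $t\geq\tau$, comparing the two inequalities forces $k=O(\tau\ln(\abs{S}/\tau))$, which is polynomial in the input size; hence the routine must halt within that many rounds and return a hitting set of size $O(c\tau\ln\tau)$. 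The additive $+1$ in the stated bound absorbs the degenerate case $\tau=1$, where $\ln\tau=0$.

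The main obstacle is the derandomized weighted $\varepsilon$-net construction itself, because the Haussler--Welzl theorem is naturally proved by random sampling. I would handle this by scaling the weights to integers and treating the ground set as a multi-set of polynomial size, thereby reducing the weighted problem to an unweighted one to which conditional expectations can be applied. A secondary technical point is the interaction between VC-dimension and the dual set-system used in the double-sampling argument; by Proposition~\ref{prop:Matousek2002} the dual has VC-dimension bounded in terms of $c$, so the final output size inherits only a constant-factor dependence on $c$, matching the claimed $C_\tau\cdot c\cdot\tau(\mathcal{S})\cdot\ln\tau(\mathcal{S})+1$ bound.
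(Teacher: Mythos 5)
The paper does not prove this statement; it is imported verbatim as a black-box result from the cited works of Br\"onnimann--Goodrich and Even--Rawitz--Shahar. Your proposal is a correct reconstruction of the standard Br\"onnimann--Goodrich iterative-reweighting argument (guess-and-double on $t$, weighted $(1/2t)$-net, double the weights of an unhit witness set, and the two-sided potential bound showing termination in $O(\tau\ln(\abs{S}/\tau))$ rounds once $t\geq\tau$), and you correctly identify the only technically heavy ingredient, namely a deterministic polynomial-time weighted $\varepsilon$-net finder, which is standard for explicitly given set systems of bounded VC-dimension. One small remark: the double-sampling proof and its derandomization need only the primal growth function via Sauer's lemma, not the dual VC-dimension, so the appeal to Proposition~\ref{prop:Matousek2002} there is unnecessary; also, the Even--Rawitz--Shahar route cited alongside reaches the same bound more directly by solving the fractional hitting-set LP and taking a single $\varepsilon$-net weighted by the fractional optimum, avoiding the reweighting loop altogether.
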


The VC-dimension of $G$ is defined by the VC-dimension of $\{N_G[v]:v\in V(G)\}$.

\begin{THM}[Adler and Adler~\cite{AA2014}]\label{thm:AA2014} 
	Let $\mathcal{C}$ be a nowhere dense class of graphs and $\phi(x,y)$ be a first-order formula such that for all $G\in\mathcal{C}$ and vertices $v$ and $w$ of $G$, $G\models\phi(v,w)$ if and only if $G\models\phi(w,v)$.
	For a graph $G\in\mathcal{C}$, let $G_\phi:=(V(G),\{vw:G\models\phi(v,w)\})$.
    Then there exists a nonnegative integer $c$ depending on $\mathcal{C}$ and $\phi$ such that every graph in $\{G_\phi:G\in\mathcal{C}\}$ has VC-dimension at most $c$.
\end{THM}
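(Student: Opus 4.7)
The plan is to prove the statement by contradiction: assume that for every $c\in\mathbb{N}$ there exist $G_c\in\mathcal{C}$ and a set $A_c\subseteq V(G_c)$ of size $c$ shattered by $\{N_{G_{c,\phi}}[v]:v\in V(G_c)\}$, and derive that $\mathcal{C}$ contains arbitrarily dense shallow minors, contradicting nowhere denseness. Concretely, a shattered set of size $c$ yields vertices $v_1,\ldots,v_c\in A_c$ together with witnesses $\{u_S:S\subseteq[c]\}$ so that $G_c\models\phi(u_S,v_i)$ iff $i\in S$. I will extract from this combinatorial profusion enough internally disjoint short paths in $G_c$ to build a dense shallow minor.

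The three ingredients I would combine are: Gaifman's locality theorem for first-order logic, the uniform quasi-wideness of nowhere dense classes (Theorem~\ref{thm:uqw}), and a pigeonhole on projection profiles (in the spirit of Lemma~\ref{lem:proj}). By Gaifman's theorem, $\phi(x,y)$ is equivalent over $\mathcal{C}$ to a Boolean combination of formulas that are either $r$-local around $x$, $r$-local around $y$, or $r$-local around the pair $(x,y)$, for some fixed $r$ depending only on $\phi$. Hence whether $u_S$ is $\phi$-adjacent to $v_i$ is determined by the isomorphism type of the $r$-ball around $\{u_S,v_i\}$ in $G_c$; in particular, if $\dist_{G_c}(u_S,v_i)>2r+1$ then this adjacency depends only on the separate local types of $u_S$ and $v_i$.

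Next, I would apply Theorem~\ref{thm:uqw} with distance parameter $r':=10r$ and target size $m:=\lceil\log c\rceil$ to find a set $S_0\subseteq V(G_c)$ of size at most $s(r')$ and a subset $B\subseteq A_c\setminus S_0$ of size $m$ that is distance-$r'$ independent in $G_c\setminus S_0$. Since every subset of a shattered set is shattered, $B$ is still shattered, producing $2^m\ge c$ witnesses $\{u_T:T\subseteq B\}$. I would first branch on the (bounded-size) $(2r+1)$-projection profile of each $u_T$ onto $S_0$: by pigeonhole, a sub-family of $2^m/(2r+2)^{|S_0|}$ witnesses share the same profile on $S_0$, so interactions through $S_0$ no longer distinguish them. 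For the restricted family, each $u_T$ must reach each $v_i\in T$ through a short path in $G_c\setminus S_0$ of length at most $2r+1$ (by Gaifman locality plus the fact that $B$ is $r'$-spread), giving an $A_c$-avoiding witness in the sense of the $r$-projection machinery.

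Finally, I would exploit Lemma~\ref{lem:proj} (or, since we only need a bound that grows, any nowhere-dense bound on the number of distinct $r$-projection profiles on $B$): the number of distinct $(2r+1)$-projection profiles on $B$ is at most $f_{\mathrm{proj}}(2r+1,\varepsilon)\cdot|B|^{1+\varepsilon}$, which is polynomial in $m$. Since $2^m/(2r+2)^{|S_0|}$ already dominates any such polynomial for large $c$, many witnesses share the same projection profile onto $B$, meaning they route to the same neighbors in $B$ via disjoint realizable patterns. Contracting the resulting bundles of internally disjoint short paths yields a $K_t$-minor of depth $O(r)$ for $t\to\infty$ with $c$, contradicting that $\mathcal{C}$ is nowhere dense. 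The main obstacle will be the bookkeeping around the apex set $S_0$: because $\phi$ can express nonlocal-looking adjacencies via short detours through $S_0$, I must argue that after fixing the profile on $S_0$ the remaining adjacency really is witnessed locally in $G_c\setminus S_0$, and then ensure that the short witnessing paths used to build the minor can be chosen internally disjoint, which is exactly where the Gaifman-locality radius $r$ and the quasi-wideness gap $r'\gg r$ have to be balanced.
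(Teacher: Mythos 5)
The paper does not actually prove this statement; it imports it from Adler and Adler, whose argument is model-theoretic (nowhere dense subgraph-closed classes are superflat, superflat classes are stable by Podewski--Ziegler, and stability implies that every formula has finite VC-dimension). Your attempt at a direct combinatorial proof has a genuine gap at its central step. After fixing the projection profile of the witnesses $u_T$ on $S_0$, you assert that $\phi(u_T,v_i)$ forces a path of length at most $2r+1$ from $u_T$ to $v_i$ in $G_c\setminus S_0$. This is false: Gaifman locality only says that $\phi(u,v)$ is determined by the isomorphism type of the $r$-neighborhood of $\{u,v\}$, so $\phi(u_T,v_i)$ can hold for arbitrarily distant pairs simply because their separate local types make the Boolean combination true (consider $\phi(x,y)$ asserting that both $x$ and $y$ have degree $3$). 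The usable consequence of locality is the converse direction: if $u$ is far from both $v_i$ and $v_j$ and these two vertices have the same local type, then $\phi(u,v_i)\leftrightarrow\phi(u,v_j)$. Exploiting this requires pigeonholing the $v_i$ on their local $\phi$-types (finitely many, by Gaifman plus bounded quantifier rank), not on $(2r+1)$-projection profiles, which record only distances to a set and cannot determine the truth of a local formula. Moreover, the short witnessing paths that closeness does force may pass through $S_0$, and fixing the profile on $S_0$ does not remove this; one needs a genuine composition (Feferman--Vaught) argument relative to $S_0$ together with an induction that eliminates $S_0$.

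The endgame is also not coherent as stated. If many witnesses share the same projection profile onto $B$, they reach the same vertices of $B$ by realizable paths, which at best produces a large complete bipartite pattern rather than a $K_t$ shallow minor; you never specify the branch sets of the alleged clique minor nor arrange the internal disjointness of the connecting paths, and witnesses agreeing on their profile is exactly the situation in which they fail to shatter, so this pigeonhole pulls in the wrong direction. In short, the ingredients you list (Gaifman locality, quasi-wideness, projection profiles) do appear in modern treatments of first-order logic on nowhere dense classes, but the argument as written does not close: the statement is essentially the assertion that nowhere dense classes are NIP, and proving it requires either the stability route of Adler--Adler or a considerably more careful locality-plus-quasi-wideness induction than the one you sketch.
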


For every $p\in\mathbb{N}$, since the property that the distance between two vertices is at most $p$ can be expressed in a first-order formula, Theorem~\ref{thm:AA2014} has the following corollary.

\begin{COR}\label{cor:AA2014}
    For every nowhere dense class $\mathcal{C}$ of graphs, there exists a function $f_{\mathrm{vc}}:\mathbb{N}\to\mathbb{N}$ such that for all $p\in\mathbb{N}$ and $G\in\mathcal{C}$, $G^p$ has VC-dimension at most $f_{\mathrm{vc}}(p)$.\qed
\end{COR}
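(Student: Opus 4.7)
The plan is to derive the corollary directly from Theorem~\ref{thm:AA2014} by choosing, for each fixed $p$, a first-order formula that defines the edge relation of $G^p$. For every nonnegative integer $p$, define
\[
\phi_p(x,y) := \exists z_0\,z_1\,\cdots\,z_p\;\Bigl(z_0=x\;\wedge\;z_p=y\;\wedge\;\bigwedge_{i=0}^{p-1}(z_i=z_{i+1}\;\vee\;E(z_i,z_{i+1}))\Bigr),
\]
which expresses exactly that $\dist_G(x,y)\leq p$ (padding shorter paths by repeating vertices covers the case $\dist_G(x,y)<p$). This formula has quantifier depth and length depending only on $p$.

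The first thing to check is that $\phi_p$ is symmetric in $x,y$: since the distance function $\dist_G$ is symmetric, $G\models\phi_p(v,w)$ holds if and only if $G\models\phi_p(w,v)$ holds, for every graph $G$ and all vertices $v,w$. Hence $\phi_p$ satisfies the hypothesis of Theorem~\ref{thm:AA2014}. Next, consider $G_{\phi_p}=(V(G),\{vw:G\models\phi_p(v,w)\})$. For distinct vertices $v,w\in V(G)$ we have $vw\in E(G_{\phi_p})$ if and only if $\dist_G(v,w)\leq p$, which is precisely the adjacency relation of $G^p$. Consequently $N_{G_{\phi_p}}[v]=N_{G^p}[v]$ for every $v\in V(G)$, so the set-systems $\{N_{G_{\phi_p}}[v]:v\in V(G)\}$ and $\{N_{G^p}[v]:v\in V(G)\}$ coincide, and therefore $G_{\phi_p}$ and $G^p$ have the same VC-dimension.

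Now I would apply Theorem~\ref{thm:AA2014} to the nowhere dense class $\mathcal{C}$ and the symmetric formula $\phi_p$: this yields an integer $c(\mathcal{C},p)$ such that every graph in $\{G_{\phi_p}:G\in\mathcal{C}\}$ has VC-dimension at most $c(\mathcal{C},p)$. Setting $f_{\mathrm{vc}}(p):=c(\mathcal{C},p)$ gives the desired function, and the identification $G_{\phi_p}\simeq G^p$ from the previous paragraph shows that $G^p$ has VC-dimension at most $f_{\mathrm{vc}}(p)$ for every $G\in\mathcal{C}$. There is no real obstacle here; the only mild point to double-check is that the definition of $G_\phi$ in Theorem~\ref{thm:AA2014} matches the edge set of $G^p$ on unordered pairs of distinct vertices, which is immediate from the form of $\phi_p$. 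The whole argument is essentially a one-line application, with the main work being the observation that ``distance at most $p$'' is expressible in first-order logic for each fixed $p$.
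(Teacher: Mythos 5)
Your proposal is correct and follows exactly the route the paper takes: the paper derives the corollary from Theorem~\ref{thm:AA2014} with the one-line observation that ``distance at most $p$'' is first-order expressible, and you have simply written out the explicit symmetric formula $\phi_p$ and the identification of $G_{\phi_p}$ with $G^p$ in full detail. Nothing is missing and no further comparison is needed.
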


\section{Kernels for the \textsc{$(p,r,\mathcal{F})$-Covering} problems}\label{sec:ker1}

Let $p$ and $r$ be nonnegative integers with $p\leq2r+1$, and $\mathcal{F}$ be a nonempty finite family of connected graphs.
We present an almost linear kernel for the \textsc{$(p,r,\mathcal{F})$-Covering} problem on every nowhere dense class of graphs, which generalizes the almost linear kernel of~\cite{NCK2017} for the \textsc{Distance-$r$ Dominating Set} problem.

We first present an almost linear kernel for the \textsc{Annotated $(p,r,\mathcal{F})$-Covering} problem.

\begin{THM}\label{thm:ker1}
    For every nowhere dense class $\mathcal{C}$ of graphs, there exists a function $f_{\mathrm{cov}}:\mathbb{N}\times\mathbb{N}\times\mathbb{R}_+\to\mathbb{N}$ satisfying the following.
    For every nonempty family $\mathcal{F}$ of connected graphs with at most~$d$ vertices, $p,r\in\mathbb{N}$ with $p\leq2r+1$, and $\varepsilon>0$, there exists a polynomial-time algorithm that given a graph $G\in\mathcal{C}$, $A\subseteq V(G)$, and $k\in\mathbb{N}$, either
     \begin{enumerate}
         \item[$\bullet$]   correctly decides that $\gamma_{p,r}^\mathcal{F}(G,A)>k$, or
         \item[$\bullet$]   outputs sets $Y\subseteq V(G)$ of size at most $f_{\mathrm{cov}}(r,d,\varepsilon)\cdot k^{1+\varepsilon}$ and $Z\subseteq A\cap Y$ such that $\gamma_{p,r}^\mathcal{F}(G[Y],Z)=\gamma_{p,r}^\mathcal{F}(G,A)$.
     \end{enumerate}
\end{THM}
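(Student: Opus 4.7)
My plan is to iteratively reduce the annotated set $A$ while preserving $\gamma_{p,r}^\mathcal{F}(G,A)$, and then take a short-distance-preserving closure of the resulting small set. Concretely, I will prove a reduction lemma of the following shape: there is a polynomial $T(k)$, with coefficients depending on $r$, $d$, $\varepsilon$, and $\mathcal{C}$, such that whenever $|A|\ge T(k)$, one can in polynomial time either certify $\gamma_{p,r}^\mathcal{F}(G,A)>k$, or produce a vertex $z\in A$ with $\gamma_{p,r}^\mathcal{F}(G,A\setminus\{z\})=\gamma_{p,r}^\mathcal{F}(G,A)$. Iterating at most $|V(G)|$ times yields either a correct rejection or a set $Z\subseteq A$ of size below $T(k)$ with $\gamma_{p,r}^\mathcal{F}(G,Z)=\gamma_{p,r}^\mathcal{F}(G,A)$. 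To this $Z$ I apply Lemma~\ref{lem:pth} with a radius roughly $2r+p(d-1)$; the resulting $Y$ has size $O(|Z|^{1+\varepsilon})$, and the choice of radius guarantees that every induced $\mathcal{F}$-copy in $G^p$ supported on $Z$, and every short path from $Z$ to a potential cover vertex, survives inside $G[Y]$, giving the required equality $\gamma_{p,r}^\mathcal{F}(G[Y],Z)=\gamma_{p,r}^\mathcal{F}(G,Z)$.

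\textbf{Ingredients for the reduction lemma.} Two classical tools drive the step. First, I approximate $\gamma_{p,r}^\mathcal{F}(G,A)$ by invoking Theorem~\ref{thm:approx hitting} on the set-system $\mathcal{H}=\{N_G^r[B]:B\subseteq A,\ G^p[B]\cong H\in\mathcal{F}\}$, whose members are unions of at most $d$ closed $r$-neighborhoods. By Corollary~\ref{cor:AA2014} applied to $G^r$ combined with Proposition~\ref{prop:Matousek2002}, the VC-dimension of $\mathcal{H}$ is bounded by a constant depending only on $r$ and $d$; a returned hitting set $X$ of size $\omega(k\log k)$ certifies $\gamma_{p,r}^\mathcal{F}(G,A)>k$, so otherwise we may assume $X$ is small. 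I inflate $X$ via Lemma~\ref{lem:cl} to an $(r',O(|X|^\varepsilon))$-close superset $X^\star$ of size $O(k^{1+\varepsilon})$, where $r':=2pd+3r$. Second, with $|A|$ still much larger than any polynomial in $|X^\star|$, I apply Theorem~\ref{thm:uqw} to $A$ with radius $r'$ and wideness parameter $m$ exceeding the pattern count described below, obtaining a constant-size separator $S$ and a distance-$r'$-independent set $L\subseteq A\setminus S$ in $G\setminus S$ of the required cardinality.

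\textbf{Pattern pigeonhole and vertex choice.} For each $u\in L$, if $u$ lies in no set $B\subseteq A$ containing $u$ with $G^p[B]\cong H\in\mathcal{F}$, then $u$ is trivially irrelevant and may be taken as $z$. Otherwise I fix a witness $B_u\ni u$ and record the \emph{pattern} of $u$: the isomorphism type of the labelled graph $(G\setminus S)^p[B_u\setminus S]$ together with the profile $\rho_{2r+1}^G[w,S]$ for each $w\in B_u\setminus S$. Since $|S|$ is constant and $|B_u|\le d$, the number of distinct patterns is bounded by a function of $r$ and $d$, so with $m$ above this bound pigeonhole forces many vertices of $L$ to share a common pattern; I select $z$ from this class. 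To confirm $z$ is irrelevant I need only the inequality $\gamma_{p,r}^\mathcal{F}(G,A\setminus\{z\})\ge\gamma_{p,r}^\mathcal{F}(G,A)$: from a putative cover $D$ of $A\setminus\{z\}$ leaving a bad copy through $z$ uncovered, I use the shared pattern together with the matched $S$-profiles to transport that copy onto a pattern-matched partner $v\in L$ and derive a bad copy through $v$ that also evades $D$, contradicting $D$ covering $A\setminus\{z\}$.

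\textbf{Main obstacle.} The crux of the proof is this transport-and-contradict step. Starting from an escaping bad copy $B_z\ni z$ together with a matched partner $v$ and its witness $B_v$, I must assemble a walk in $G\setminus S$ between two vertices of $L$ whose length violates the distance-$r'$ independence. The budget $r'=2pd+3r$ is calibrated to absorb a path of length at most $p(d-1)$ inside each witness (using connectedness of $G^p[B_u]$ together with the $p\le 2r+1$ hypothesis to bound in-$G$ distances), a detour of length at most $r$ through $D$, and a symmetric contribution from $v$'s side, with matched $S$-profiles converting any segment that would pass through $S$ into a short substitute staying in $G\setminus S$. Once the reduction lemma is established, the final path-closure step of Lemma~\ref{lem:pth} and the polynomial bookkeeping of the iteration deliver Theorem~\ref{thm:ker1}.
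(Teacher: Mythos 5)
Your reduction lemma follows the paper's Lemma~\ref{lem:core recursion} in outline, but the final step of your plan has a genuine gap: a path closure of $Z$ alone does not give $\gamma_{p,r}^\mathcal{F}(G[Y],Z)=\gamma_{p,r}^\mathcal{F}(G,Z)$. The problematic direction is $\gamma_{p,r}^\mathcal{F}(G[Y],Z)\le\gamma_{p,r}^\mathcal{F}(G,Z)$: an optimal cover $D$ of $Z$ in $G$ may use vertices outside $Y$, and nothing in a path closure of $Z$ guarantees that $G[Y]$ contains vertices that $r$-dominate the same subsets of $Z$. Concretely, take $\mathcal{F}=\{K_1\}$, $p=r=1$, and the planar graph with $Z=\{z_1,z_2,z_3\}$, a vertex $x$ adjacent to all of $Z$, and vertices $a_{ij}$ adjacent to $z_i,z_j$ for each pair. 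A legitimate $2$-path closure of $Z$ is $Y=Z\cup\{a_{12},a_{13},a_{23}\}$, since each pairwise distance $2$ is realized through the private $a_{ij}$; then $\gamma_{1,1}^{\{K_1\}}(G,Z)=1$ via $\{x\}$ but $\gamma_{1,1}^{\{K_1\}}(G[Y],Z)=2$. The paper avoids this by first augmenting $Z$ with one representative $v_\kappa$ per equivalence class of the $r$-projection profile $\rho_r^G[\cdot,Z]$ (Lemma~\ref{lem:proj} bounds the number of classes by $f_{\mathrm{proj}}(r,\varepsilon)\cdot\abs{Z}^{1+\varepsilon}$, so the augmented set $Z'$ stays small) and only then taking a $(2r+1)$-path closure of $Z'$. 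Any cover vertex of $G$ can then be exchanged for the representative of its projection class without losing any $Z$-vertex it dominates, and since the modified cover together with $Z$ lies inside $Z'$, all relevant distances are preserved in $G[Y]$. This augmentation is not cosmetic: without it the equality you assert fails even on planar graphs.

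A second, smaller concern is the logic of your transport-and-contradict step inside the reduction lemma. From a cover $D$ of $A\setminus\{z\}$ that misses a bad copy $B_z\ni z$, transporting the copy to a single pattern-matched partner $v$ does not yield a contradiction: the transported copy lies in $A\setminus\{z\}$, so $D$ may legitimately hit its $r$-neighborhood. The paper's argument instead shows that each of many partners $u$ forces a distinct vertex $d_u\in D$ within distance $r$ of its transported copy (distinctness coming from the distance-$r'$ independence of $L$), so $\abs{D}$ exceeds $\abs{M_{r'}^G(z,X_{\mathrm{cl}})}+\abs{S}$; one then exchanges the vertices $d_u$ for $M_{r'}^G(z,X_{\mathrm{cl}})\cup S$ to build a strictly smaller cover of $Z\setminus\{z\}$, contradicting the minimality of $D$. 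Your sketch contains the walk-assembly ingredient but not this counting-and-exchange skeleton, which is where the approximate solution $X$ and the closure $X_{\mathrm{cl}}$ are actually used.
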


For a graph $G$ and a set $A\subseteq V(G)$, a \emph{$(p,r,\mathcal{F})$-core} of $A$ in~$G$ is a set $Z\subseteq A$ such that every minimum-size $(p,r,\mathcal{F})$-cover of $Z$ in $G$ is a $(p,r,\mathcal{F})$-cover of $A$ in $G$.
Observe that $\gamma_{p,r}^\mathcal{F}(G,A)=\gamma_{p,r}^\mathcal{F}(G,Z)$ and $A$ is a $(p,r,\mathcal{F})$-core of $A$ in $G$.

The following lemma is a key lemma for proving Theorem~\ref{thm:ker1}.

\begin{LEM}\label{lem:core}
    For every nowhere dense class $\mathcal{C}$ of graphs, there exists a function $f_{\mathrm{core}}:\mathbb{N}\times\mathbb{N}\times\mathbb{R}_+\to\mathbb{N}$ satisfying the following.
    For every nonempty family $\mathcal{F}$ of connected graphs with at most~$d$ vertices, $p,r\in\mathbb{N}$ with $p\leq2r+1$, and $\varepsilon>0$, there exists a polynomial-time algorithm that given a graph $G\in\mathcal{C}$, $A\subseteq V(G)$, and $k\in\mathbb{N}$, either
    \begin{enumerate}
        \item[$\bullet$]    correctly decides that $\gamma_{p,r}^\mathcal{F}(G,A)>k$, or
        \item[$\bullet$]    outputs a $(p,r,\mathcal{F})$-core of $A$ in $G$ having size at most $f_{\mathrm{core}}(r,d,\varepsilon)\cdot k^{1+\varepsilon}$.
    \end{enumerate}
\end{LEM}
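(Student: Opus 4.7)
The plan is to iteratively peel a single vertex from $A$ while preserving the property of being a $(p,r,\mathcal{F})$-core, until $|A|$ drops below the claimed bound; this works because the core property is transitive under refinement. To enable early rejection, I would first compute an approximate cover. The obstructions to cover are $\mathcal{B}:=\{B\subseteq A:|B|\le d,\ G^p[B]\text{ is isomorphic to some graph in }\mathcal{F}\}$, and $D\subseteq V(G)$ is a $(p,r,\mathcal{F})$-cover of $A$ iff $D$ is a hitting set for $\mathcal{S}:=\{\bigcup_{v\in B}N_G^r[v]:B\in\mathcal{B}\}$. By Corollary~\ref{cor:AA2014} and Proposition~\ref{prop:Matousek2002}, $\mathcal{S}$ has VC-dimension bounded in $p,r,d$, so Theorem~\ref{thm:approx hitting} either certifies $\gamma_{p,r}^{\mathcal{F}}(G,A)>k$ (reject) or returns a cover $X$ of size $O(k\log k)$. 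Setting $r'':=2pd+3r$, apply Lemmas~\ref{lem:pth} and~\ref{lem:cl} to inflate $X$ to a set $W\supseteq X$ of size $O(k^{1+\varepsilon'})$ that is simultaneously an $r''$-path closure and an $(r'',O(k^{\varepsilon'}))$-close set.

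Set $r':=2pd+3r$ and apply Theorem~\ref{thm:uqw}: if $|A|$ exceeds a polynomial threshold chosen below, we obtain $S\subseteq V(G)$ with $|S|\le s(r')$ and a distance-$r'$ independent set $L\subseteq A\setminus S$ in $G\setminus S$. Absorb $S$ into $W$; the size bound on $W$ is preserved. For each $u\in L$, either $u$ lies in no $B\in\mathcal{B}$ and is obviously removable from $A$, or fix a witness $B_u\ni u$; since graphs in $\mathcal{F}$ are connected on at most $d$ vertices, $B_u\subseteq N_G^{pd}[u]$. Associate to $u$ the pattern $\pi(u)$ consisting of (i) the labelled isomorphism type of $(G\setminus W)^p[B_u\setminus W]$ marked at $u$, (ii) the intersection $B_u\cap W$, and (iii) the tuple of $(2r+1)$-projection profiles $\rho_{2r+1}^{G}[v,W]$ for $v\in B_u\setminus W$. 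By Lemma~\ref{lem:proj}, the number of profiles is $O(|W|^{1+\varepsilon'})$, so the total pattern count is bounded by a function of $r,d,\varepsilon'$ times a polynomial in $k$. Choose the quasi-wideness parameter so that $|L|$ exceeds this count; by pigeonhole a sub-family $L'\subseteq L$, still of large size, shares a single pattern $\pi$. Pick any $z\in L'$ as the candidate for removal.

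I claim $A\setminus\{z\}$ is a $(p,r,\mathcal{F})$-core of $A$. Suppose not: a minimum-size cover $D$ of $A\setminus\{z\}$ fails on $A$, witnessed by $B\in\mathcal{B}$ with $z\in B$ and $B\cap N_G^r[D]=\emptyset$. Since $|D|\le k$ is small compared to $|L'|$, there is $u'\in L'\setminus\{z\}$ whose $r$-ball in $G$ is disjoint from both $D$ and $B$. Using $\pi(z)=\pi(u')$, transplant $B$ to a set $B'\ni u'$: retain $B\cap W$, and replace the outside part by the vertices near $u'$ identified by the shared pattern. The common isomorphism type and projection profiles ensure $G^p[B']$ is isomorphic to $G^p[B]\in\mathcal{F}$, while distance-$r'$ independence of $L'$ in $G\setminus W$ together with $r'>2pd+2r$ ensures $z\notin B'$ and $B'\cap N_G^r[D]=\emptyset$; hence $B'$ witnesses a failure of $D$ on $A\setminus\{z\}$, a contradiction. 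The main obstacle is precisely this transplantation argument: one must verify that every edge of $G^p[B]$, possibly realised by a short path through $W$, has a genuine counterpart at $B'$, and conversely that any hypothetical hit of $D$ on $B'$ forces a matching hit on $B$. These constraints are what dictate the profile depth $2r+1$ and the extraction radius $r'=2pd+3r$, and the bulk of the proof is a careful quantitative route-tracing argument. Iterating the peeling runs in polynomial time and delivers a core of size $O(k^{1+\varepsilon})$, which yields the function $f_{\mathrm{core}}$.
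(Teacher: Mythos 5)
Your overall architecture matches the paper's (approximate cover via VC-dimension hitting sets, an $(r',\cdot)$-close superset, uniform quasi-wideness, a pattern pigeonhole, and a transplantation of the witness from $z$ to another vertex of the homogeneous set), but there are two places where the proposal, as written, does not deliver the claimed $k^{1+\varepsilon}$ bound, and the second one is the heart of the proof.

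First, your patterns are too expensive. You record the set $B_u\cap W$ and the $(2r+1)$-projection profiles of $B_u\setminus W$ onto $W$, where $|W|=\Theta(k^{1+\varepsilon'})$; this gives polynomially many (in $k$) patterns, and you then ask the quasi-wide set $L$ to beat that count by pigeonhole. Since Theorem~\ref{thm:uqw} only extracts $L$ of size $m$ from a set of size $m^{p(r')}$, demanding $m=k^{\Omega(1)}$ forces the threshold on $|A|$ to be $k^{\Omega(p(r'))}$, which is polynomial but not almost linear. The paper avoids this by splitting the two roles: it first partitions $A\setminus X_{\mathrm{cl}}$ by $r'$-projection profiles onto $X_{\mathrm{cl}}$ (only $O(k^{1+O(\varepsilon)})$ classes, handled \emph{outside} the quasi-wideness extraction) and applies Theorem~\ref{thm:uqw} inside one large class; afterwards the patterns used for the pigeonhole inside $L$ refer only to the separator $S$, whose size is a constant $s(r')$, so the number of patterns is a constant and $m$ stays at $O(k^{2\varepsilon})$.

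Second, and more seriously, the final step ``since $|D|\le k$ is small compared to $|L'|$, there is $u'$ whose transplanted witness avoids $N_G^r[D]$'' cannot work. You have no bound $|D|\le k$ (only $|D|\le\gamma_{p,r}^{\mathcal F}(G,A)\le|X|=O(k^{1+\varepsilon})$), and, more to the point, every transplanted set $B'\subseteq A\setminus\{z\}$ inducing a graph in $\mathcal F$ \emph{is} hit by $D$, because $D$ is by hypothesis a cover of $A\setminus\{z\}$. The only way to exploit this by pure counting is to show that the hitting vertices $d_{u'}$ are pairwise distinct and then make $|L'|$ exceed $|D|=O(k^{1+\varepsilon})$ --- which again inflates $m$ and destroys the almost-linear bound through the exponent $p(r')$. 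The paper's resolution is an exchange argument against the \emph{minimality} of $D$: it shows the witnesses $d_u$ are distinct (so $|D|\ge|\kappa_3|$ with $|\kappa_3|\approx k^{2\varepsilon}+s+1$), and then replaces them by the much smaller set $M_{r'}^G(z,X_{\mathrm{cl}})\cup S$ of size $\le k^{2\varepsilon}+s$, verifying that $D'=(D\setminus\{d_u:u\in\kappa_3\})\cup M_{r'}^G(z,X_{\mathrm{cl}})\cup S$ still covers $Z\setminus\{z\}$; this contradiction needs only $|\kappa_3|>|D_{\mathrm{buy}}|$, not $|\kappa_3|>|D|$. Making this exchange work requires additional structure absent from your proposal: $z$ must be chosen in the homogeneous class to maximize $\dist_{G\setminus S}(\mathcal B_z,X_{\mathrm{cl}})$, the vertices whose witnesses come within distance $r$ of $X_{\mathrm{cl}}$ in $G\setminus S$ must be discarded first, and one must check that any surviving obstruction for $D'$ would force a short $X_{\mathrm{cl}}$-avoiding walk landing in $M_{r'}^G(z,X_{\mathrm{cl}})\subseteq D'$. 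Without this exchange step the argument yields at best a polynomial kernel of degree governed by $p(r')$, not the stated $f_{\mathrm{core}}(r,d,\varepsilon)\cdot k^{1+\varepsilon}$.
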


We first prove Theorem~\ref{thm:ker1} by using Lemma~\ref{lem:core}.

\begin{proof}[Proof of Theorem~\ref{thm:ker1}]
    We apply the algorithm of Lemma~\ref{lem:core}.
    We may assume that the algorithm finds a $(p,r,\mathcal{F})$-core $Z$ of $A$ in $G$ having size at most $f_{\mathrm{core}}(r,d,\varepsilon)\cdot k^{1+\varepsilon}$.
    We define an equivalence relation $\sim$ on $V(G)\setminus Z$ such that for vertices $u,v\in V(G)\setminus Z$, $u\sim v$ if and only if $\rho_r^G[u,Z]=\rho_r^G[v,Z]$.
    By Lemma~\ref{lem:proj}, $\idx{(\sim)}\leq f_{\mathrm{proj}}(r,\varepsilon)\cdot\abs{Z}^{1+\varepsilon}$.
    
    Let $\mathcal{H}$ be the set of equivalence classes of $\sim$.
    For each $\kappa\in\mathcal{H}$, we choose a vertex $v_\kappa\in\kappa$.
    Let $Z':=Z\cup\{v_\kappa:\kappa\in\mathcal{H}\}$.
    Note that $\abs{Z'}\leq\abs{Z}+\idx{(\sim)}\leq(1+f_{\mathrm{proj}}(r,\varepsilon))\cdot\abs{Z}^{1+\varepsilon}$.
    By Lemma~\ref{lem:pth}, one can find in polynomial time a $(2r+1)$-path closure $Y$ of $Z'$ in $G$ with
    \begin{align*}
        \abs{Y}
        &\leq f_{\mathrm{pth}}(2r+1,\varepsilon)\cdot\abs{Z'}^{1+\varepsilon}\\
        &\leq f_{\mathrm{pth}}(2r+1,\varepsilon)\cdot(1+f_{\mathrm{proj}}(r,\varepsilon))^{1+\varepsilon}\cdot\abs{Z}^{1+3\varepsilon}\\
        &\leq f_{\mathrm{pth}}(2r+1,\varepsilon)\cdot(1+f_{\mathrm{proj}}(r,\varepsilon))^{1+\varepsilon}\cdot f_{\mathrm{core}}(r,d,\varepsilon)^{1+3\varepsilon}\cdot k^{1+7\varepsilon}.
    \end{align*}
    Note that $Z\subseteq A\cap Y$.
    
    \begin{CLM}\label{clm:dom}
        $\gamma_{p,r}^\mathcal{F}(G,A)=\gamma_{p,r}^\mathcal{F}(G[Y],Z)$.
    \end{CLM}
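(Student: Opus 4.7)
The plan is to reduce the equality to $\gamma_{p,r}^\mathcal{F}(G,Z)=\gamma_{p,r}^\mathcal{F}(G[Y],Z)$ using the core property, and then establish both inequalities separately. Since $Z$ is a $(p,r,\mathcal{F})$-core of $A$ in $G$, a minimum $(p,r,\mathcal{F})$-cover of $Z$ in $G$ is also one of $A$, so $\gamma_{p,r}^\mathcal{F}(G,A)=\gamma_{p,r}^\mathcal{F}(G,Z)$. The hypothesis $p\le 2r+1$ combined with $Y$ being a $(2r+1)$-path closure of $Z'\supseteq Z$ yields two working facts I will use throughout: (i) for every $X\subseteq Z$, $G^p[X]=(G[Y])^p[X]$, because $\dist_{G[Y]}(u,v)=\dist_G(u,v)$ whenever $u,v\in Z'$ and $\dist_G(u,v)\le 2r+1$; (ii) for every $u,v\in Z'$ with $\dist_G(u,v)\le r$, we also have $\dist_{G[Y]}(u,v)=\dist_G(u,v)$.

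For the easy direction $\gamma_{p,r}^\mathcal{F}(G,Z)\le\gamma_{p,r}^\mathcal{F}(G[Y],Z)$, I would take any $(p,r,\mathcal{F})$-cover $D'\subseteq Y$ of $Z$ in $G[Y]$ and show that $D'$ is also a cover of $Z$ in $G$. Given $X\subseteq Z\setminus N_G^r[D']$, the monotonicity $\dist_{G[Y]}\ge\dist_G$ gives $X\subseteq Z\setminus N_{G[Y]}^r[D']$, and by (i) the graph $G^p[X]$ equals $(G[Y])^p[X]$, so if $G^p[X]$ were isomorphic to some $F\in\mathcal{F}$, it would violate $D'$ being a cover in $G[Y]$.

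For the harder direction $\gamma_{p,r}^\mathcal{F}(G[Y],Z)\le\gamma_{p,r}^\mathcal{F}(G,Z)$, I would take a minimum $(p,r,\mathcal{F})$-cover $D$ of $Z$ in $G$ and define $D^*\subseteq Z'$ by the replacement $d^*:=d$ if $d\in Z$ and $d^*:=v_{\kappa(d)}$ (the representative of the $\sim$-class containing $d$) if $d\in V(G)\setminus Z$; clearly $|D^*|\le|D|$ and $D^*\subseteq Y$. To prove $D^*$ covers $Z$ in $G[Y]$, I would assume for contradiction a set $X\subseteq Z\setminus N_{G[Y]}^r[D^*]$ with $(G[Y])^p[X]$ isomorphic to $F\in\mathcal{F}$, so by (i) $G^p[X]$ is also isomorphic to $F$, and try to derive $X\cap N_G^r[D]=\emptyset$, contradicting $D$. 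Suppose $x\in X$ and $\dist_G(x,d)\le r$ for some $d\in D$; I will show $\dist_{G[Y]}(x,d^*)\le r$. If $d\in Z$, then $d^*=d\in Z'$ and (ii) does it immediately. Otherwise $d\notin Z$: let $z'$ be the first vertex of $Z$ on a shortest $d$-to-$x$ path in $G$, so the initial subpath of length $\ell':=\dist_G(d,z')$ is $Z$-avoiding, giving $\rho_r^G[d,Z](z')\le\ell'\le r$. Since $d\sim d^*$ implies $\rho_r^G[d^*,Z]=\rho_r^G[d,Z]$, there is a $Z$-avoiding path from $d^*$ to $z'$ of length at most $\ell'$ in $G$, so $\dist_G(d^*,z')\le\ell'$. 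The triangle inequality gives
\[
    \dist_G(d^*,x)\le\dist_G(d^*,z')+\dist_G(z',x)\le\ell'+(\dist_G(d,x)-\ell')=\dist_G(d,x)\le r,
\]
and as $d^*,x\in Z'$, fact (ii) upgrades this to $\dist_{G[Y]}(d^*,x)\le r$, the required contradiction.

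The main obstacle is the replacement step in the harder direction: one must guarantee that swapping $d$ for its equivalence-class representative $v_{\kappa(d)}$ does not uncover any vertex of $Z$ that $d$ was covering. The trick is to aim at the first $Z$-vertex $z'$ along a shortest path from $d$ to $x$; the $Z$-avoiding nature of the initial segment lets the projection profile transfer the bound from $d$ to $d^*$, and summing with $\dist_G(z',x)$ never exceeds the original $\dist_G(d,x)$. All the rest is a careful use of facts (i) and (ii), which encapsulate why the joint choice $Y$ (path closure for $2r+1$) and $Z'$ (one representative per $\sim$-class) is exactly what the argument needs under the assumption $p\le 2r+1$.
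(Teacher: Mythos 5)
Your proof is correct and follows essentially the same route as the paper's: both directions hinge on the $(2r+1)$-path closure preserving induced subgraphs of $G^p$ on subsets of $Z$, and on swapping a cover vertex $d\notin Z'$ for its $\sim$-class representative by rerouting the $Z$-avoiding prefix of a shortest path to the first $Z$-vertex. Your only deviation is a mild simplification in the direction $\gamma_{p,r}^\mathcal{F}(G,Z)\le\gamma_{p,r}^\mathcal{F}(G[Y],Z)$, where you correctly observe that monotonicity of distances makes the representative replacement unnecessary.
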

    \begin{subproof}
        Firstly, we show that $\gamma_{p,r}^\mathcal{F}(G,A)\geq\gamma_{p,r}^\mathcal{F}(G[Y],Z)$.
        Let $D$ be a $(p,r,\mathcal{F})$-cover of $A$ in $G$ having size $\ell:=\gamma_{p,r}^\mathcal{F}(G,A)$ and
        \begin{equation*}
            D':=(D\cap Z)\cup\{v_\kappa:\kappa\in\mathcal{H},\ \kappa\cap D\neq\emptyset\}.
        \end{equation*}
        Note that $D'\subseteq Z'$ and $\abs{D'}\leq\abs{D}$.
        Since $D$ is a $(p,r,\mathcal{F})$-cover of $Z$ in $G$ and $v_\kappa$ and each vertex in $\kappa\cap D$ have the same $r$-projection profiles on $Z$, $D'$ is a $(p,r,\mathcal{F})$-cover of $Z$ in $G$.
        
        To show that $\ell\geq\gamma_{p,r}^\mathcal{F}(G[Y],Z)$, it suffices to show that $D'$ is a $(p,r,\mathcal{F})$-cover of $Z$ in $G[Y]$.
        Since $Z\cup D'\subseteq Z'$ and $Y$ is a $(2r+1)$-path closure of $Z'$, for vertices $u\in D'$ and $v\in Z$, if $\dist_G(u,v)\leq r$, then $\dist_{G[Y]}(u,v)=\dist_G(u,v)$.
        It implies that $N^r_{G[Y]}[D']\cap Z=N^r_G[D']\cap Z$, so that $Z\setminus N^r_{G[Y]}[D']=Z\setminus N^r_G[D']$.
        Since $D'$ is a $(p,r,\mathcal{F})$-cover of $Z$ in $G$, $Z\setminus N^r_G[D']$ has no subset $B$ such that $G^p[B]$ is isomorphic to a graph in $\mathcal{F}$.
        Since $p\leq2r+1$, for every set $B\subseteq Z\setminus N^r_{G[Y]}[D']$, $G[Y]^p[B]$ is isomorphic to $G^p[B]$.
        Since $Z\setminus N^r_{G[Y]}[D']=Z\setminus N^r_G[D']$, $D'$ is a $(p,r,\mathcal{F})$-cover of $Z$ in $G[Y]$.
        Therefore, $\gamma_{p,r}^\mathcal{F}(G[Y],Z)\leq\abs{D'}\leq\abs{D}=\ell$.
        
        Conversely, we show that $\gamma_{p,r}^\mathcal{F}(G,A)\leq\gamma_{p,r}^\mathcal{F}(G[Y],Z)$.
        Let $D\subseteq Y$ be a $(p,r,\mathcal{F})$-cover of $Z$ in $G[Y]$ having size $\ell:=\gamma_{p,r}^\mathcal{F}(G[Y],Z)$ and
        \begin{equation*}
            D':=(D\cap Z)\cup\{v_\kappa:\kappa\in\mathcal{H},\ \kappa\cap D\neq\emptyset\}.
        \end{equation*}
        Note that $D'\subseteq Z'$ and $\abs{D'}\leq\abs{D}$.

        Since $\gamma_{p,r}^\mathcal{F}(G,A)=\gamma_{p,r}^\mathcal{F}(G,Z)$, to show that $\gamma_{p,r}^\mathcal{F}(G,A)\leq\ell$, it suffices to show that $D'$ is a $(p,r,\mathcal{F})$-cover of $Z$ in $G$.
        Since $p\leq2r+1$, $Z\subseteq Z'$, and $Y$ is a $(2r+1)$-path closure of $Z'$, for every set $B\subseteq Z\setminus N^r_{G[Y]}[D]$, $G[Y]^p[B]$ is isomorphic to $G^p[B]$.
        Since $D$ is a $(p,r,\mathcal{F})$-cover of $Z$ in $G[Y]$, $Z\setminus N_{G[Y]}^r[D]$ has no subset $B$ such that $G^p[B]$ is isomorphic to a graph in $\mathcal{F}$.
        Thus, to show that $D'$ is a $(p,r,\mathcal{F})$-cover of $Z$ in $G$, it suffices to show that $N^r_{G[Y]}[D]\cap Z\subseteq N^r_G[D']\cap Z$.
        
        Let $y\in N^r_{G[Y]}[D]\cap Z$.
        There is a vertex $x\in D$ with $\dist_{G[Y]}(x,y)\leq r$.
        If $x\in Z'$, then $x\in D'$, so that $y\in N^r_{G[Y]}(x)\cap Z\subseteq N^r_{G}[D']\cap Z$.
        Thus, we may assume that $x\in D\setminus Z'$.
        Since $Z\subseteq Z'$, $x$ is not contained in $Z$, and therefore there is an equivalence class $\tau$ of $\sim$ containing $x$.
        Note that $v_\tau\in D'$.
        
        We verify that $y\in N^r_G(v_\tau)\cap Z$.
        We take an arbitrary path $P$ of $G[Y]$ between $x$ and $y$ having length at most $r$.
        Let $y'$ be the vertex in $V(P)\cap Z$ with $\dist_P(x,y')$ is minimum.
        Let $m$ be the shortest length of a $Z$-avoiding path of $G$ between $x$ and $y'$.
        Since $P$ has length at most $r$ and the subpath $P_1$ of $P$ between $x$ and $y'$ is $Z$-avoiding, $m$ is at most $r$.
        Since $\{x,v_\tau\}\subseteq\tau$ which is an equivalence class of $\sim$, $G$ also has a $Z$-avoiding $P_2$ between $v_\tau$ and $y'$ having length $m$.
        Note that $m$ is at most the length of $P_1$.
        By substituting $P_1$ with $P_2$ from $P$, we obtain a walk of $G$ between $v_\tau$ and $y$ having length at most $r$.
        Thus, $y\in N^r_G(v_\tau)\cap Z\subseteq N^r_G[D']\cap Z$.
        
        By the arbitrary choice of $y$, we have that $N^r_{G[Y]}[D]\cap Z\subseteq N^r_G[D']\cap Z$.
        Hence, $D'$ is a $(p,r,\mathcal{F})$-cover of $Z$ in~$G$.
        Since $Z$ is a $(p,r,\mathcal{F})$-core of $A$ in~$G$, we have that $\gamma_{p,r}^\mathcal{F}(G,A)=\gamma_{p,r}^\mathcal{F}(G,Z)\leq\abs{D'}\leq\abs{D}=\ell$.
    \end{subproof}
    
    By scaling $\varepsilon$ accordingly, one can choose the function $f_{\mathrm{cov}}(r,d,\varepsilon)$ with $\abs{Y}\leq f_{\mathrm{cov}}(r,d,\varepsilon)\cdot k^{1+\varepsilon}$, and this completes the proof.
\end{proof}

The following observation easily follows from the proof of Theorem~\ref{thm:ker1}, since $Y$ is a $(2r+1)$-path closure of $Z'\supseteq Z$ in $G$.
This observation will be used to prove Theorem~\ref{thm:ker2} in Section~\ref{sec:ker2}.

\begin{OBS}\label{obs:obs}
    For every nonnegative integer $r_0\leq2r+1$, $\alpha_{p,r_0}^\mathcal{F}(G[Y],Z)$ and $\alpha_{p,r_0}^\mathcal{F}(G,Z)$ are equal.\qed
\end{OBS}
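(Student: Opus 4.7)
The plan is to establish both inequalities $\alpha_{p,r_0}^\mathcal{F}(G,Z) \le \alpha_{p,r_0}^\mathcal{F}(G[Y],Z)$ and $\alpha_{p,r_0}^\mathcal{F}(G,Z) \ge \alpha_{p,r_0}^\mathcal{F}(G[Y],Z)$ by transferring packings between $G$ and $G[Y]$, using only the fact that $Y$ is a $(2r+1)$-path closure of $Z' \supseteq Z$ in $G$ together with the hypotheses $p \le 2r+1$ and $r_0 \le 2r+1$.

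The preliminary observation to extract is a distance equivalence on $Z$: for any $u,v \in Z \subseteq Z'$, the path-closure definition gives $\dist_{G[Y]}(u,v) = \dist_G(u,v)$ whenever $\dist_G(u,v) \le 2r+1$, while the subgraph inclusion gives $\dist_{G[Y]}(u,v) \ge \dist_G(u,v)$ in general. Combining these, for every $q \le 2r+1$, the conditions $\dist_G(u,v) \le q$ and $\dist_{G[Y]}(u,v) \le q$ are equivalent for vertices of $Z$. In particular, applied with $q = p$, for every $B \subseteq Z$ the induced subgraphs $G^p[B]$ and $G[Y]^p[B]$ are identical, so the ``shape'' condition in the definition of a packing is invariant under passing between $G$ and $G[Y]$ as long as the packing sets lie in $Z$.

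Now for $\alpha_{p,r_0}^\mathcal{F}(G,Z) \ge \alpha_{p,r_0}^\mathcal{F}(G[Y],Z)$, I would take a maximum $(p,r_0,\mathcal{F})$-packing $A_1,\ldots,A_\ell$ of $Z$ in $G[Y]$. The shape condition ``$G[Y]^p[A_i]$ is isomorphic to a graph in $\mathcal{F}$'' transfers to $G$ by the preliminary observation. For the distance condition, suppose for contradiction that $\dist_G(A_i,A_j) \le r_0$ for some $i \neq j$; then witnesses $u \in A_i \subseteq Z$ and $v \in A_j \subseteq Z$ satisfy $\dist_G(u,v) \le r_0 \le 2r+1$, hence $\dist_{G[Y]}(u,v) = \dist_G(u,v) \le r_0$, contradicting the fact that the $A_i$ form a packing in $G[Y]$. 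For the reverse inequality, I would take a maximum packing in $G$; the shape conditions transfer by the same argument, and the distance conditions are immediate from $\dist_{G[Y]} \ge \dist_G$.

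I do not anticipate any genuine obstacle, since the entire statement is a direct unpacking of the $(2r+1)$-path closure definition under the hypotheses $p,r_0 \le 2r+1$. The only detail meriting care is that one must invoke the full pairwise distance equivalence on $Z$, not merely existence of a single short path, in order to conclude that the \emph{induced-subgraph isomorphism type} of $G^p[A_i]$ coincides with that of $G[Y]^p[A_i]$; this is exactly what the preliminary observation delivers.
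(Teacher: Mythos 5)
Your proof is correct and follows exactly the route the paper intends: the observation is left as an immediate consequence of $Y$ being a $(2r+1)$-path closure of $Z'\supseteq Z$ in $G$, and your two-inequality argument via the distance equivalence on $Z$ for all thresholds $q\leq 2r+1$ (applied with $q=p$ for the isomorphism types $G^p[B]\cong G[Y]^p[B]$ and with $q=r_0$ for the separation condition) is precisely the intended unpacking, mirroring the corresponding steps in the proof of Claim~\ref{clm:dom}. No gaps.
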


To prove Lemma~\ref{lem:core}, we will use the following lemma.

\begin{LEM}\label{lem:core recursion}
    For every nowhere dense class $\mathcal{C}$ of graphs, there exists a function $f_{\mathrm{core}}:\mathbb{N}\times\mathbb{N}\times\mathbb{R}_+\to\mathbb{N}$ satisfying the following.
    For every nonempty family $\mathcal{F}$ of connected graphs with at most $d$ vertices, $p,r\in\mathbb{N}$ with $p\leq2r+1$, and $\varepsilon>0$, there exists a polynomial-time algorithm that given a graph $G\in\mathcal{C}$, $A\subseteq V(G)$, $k\in\mathbb{N}$, and a $(p,r,\mathcal{F})$-core $Z$ of $A$ in $G$ with $\abs{Z}>f_{\mathrm{core}}(r,d,\varepsilon)\cdot k^{1+\varepsilon}$, either
    \begin{enumerate}
        \item[$\bullet$]    correctly decides that $\gamma_{p,r}^\mathcal{F}(G,A)>k$, or
        \item[$\bullet$]    outputs a vertex $z\in Z$ such that $Z\setminus\{z\}$ is a $(p,r,\mathcal{F})$-core of $A$ in $G$.
    \end{enumerate}
\end{LEM}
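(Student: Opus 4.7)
The plan is to realize the three-tool recipe from the paper's proof ideas: an $O(\log k)$-approximation for the $(p,r,\mathcal{F})$-covering number paired with a close-set inflation, the uniformly quasi-wide property of $\mathcal{C}$, and the bounded count of projection profiles. I would first build a small approximate $(p,r,\mathcal{F})$-cover $X$ of $A$ in $G$: for each $F\in\mathcal{F}$ the set system $\{N_G^r[B]:B\subseteq V(G),\ G^p[B]\text{ is isomorphic to }F\}$ is a Boolean combination of closed neighborhoods in $G^r$ and $G^p$, so by Corollary~\ref{cor:AA2014} and Proposition~\ref{prop:Matousek2002} it has VC-dimension bounded by a constant depending only on $p$, $r$, $d$, and $\mathcal{C}$. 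Theorem~\ref{thm:approx hitting} thus yields in polynomial time either a certificate that $\gamma_{p,r}^\mathcal{F}(G,A)>k$, or a $(p,r,\mathcal{F})$-cover $X$ of $A$ of size $O(k\log k)$; applying Lemma~\ref{lem:cl} to $X$ produces an $(r,\cdot)$-close superset $X'$ of size $O(k^{1+\varepsilon})$.

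Next I would set $r':=2pd+3r$ and apply Theorem~\ref{thm:uqw} to $Z$ with a target $m$ that is a sufficiently large constant multiple of $\abs{X'}$. Choosing $f_{\mathrm{core}}(r,d,\varepsilon)$ large enough forces $\abs{Z}$ to exceed the polynomial threshold of Theorem~\ref{thm:uqw} applied with parameters $r'$ and $m$, so we extract a set $S\subseteq V(G)$ of bounded size and an $L\subseteq Z\setminus S$ with $\abs{L}\geq m$ that is distance-$r'$ independent in $G\setminus S$. If some $u\in L$ is contained in no set $B$ with $G^p[B]$ isomorphic to a graph in $\mathcal{F}$, then $z:=u$ works since the covering conditions around $z$ are vacuous. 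Otherwise, for each $u\in L$ I fix a witness $B_u\ni u$ with $G^p[B_u]$ isomorphic to a graph in $\mathcal{F}$ and associate the \emph{pattern} of $u$ consisting of the isomorphism type of the labelled graph $(G\setminus S)^p[B_u\setminus S]$, the trace $B_u\cap S$, the $(2r+1)$-projection profiles $\rho_{2r+1}^G[v,S]$ for each $v\in B_u\setminus S$, and the $r$-projection profile $\rho_r^G[u,X']$ on the inflated cover. Since $\abs{S}$ is bounded by a constant, Lemma~\ref{lem:proj} bounds the number of $(2r+1)$-profiles on $S$ by a constant and the number of $r$-profiles on $X'$ by a constant times $\abs{X'}^{1+\varepsilon}$; pigeonhole therefore yields $L'\subseteq L$ of size $\Omega(m/\abs{X'}^{1+\varepsilon})$ sharing a common pattern, and by tuning $m$ I arrange $\abs{L'}\geq d+2$.

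Finally, I would take $z$ to be an arbitrary vertex of $L'$ and verify that $Z\setminus\{z\}$ is a $(p,r,\mathcal{F})$-core of $A$ in $G$. It suffices to show that every minimum-size $(p,r,\mathcal{F})$-cover $D$ of $Z\setminus\{z\}$ in $G$ is also a cover of $Z$: then $\abs{D}=\gamma_{p,r}^\mathcal{F}(G,Z)$ and $D$ is a minimum cover of $Z$, so the core property of $Z$ promotes it to a cover of $A$. Assume for contradiction that some $B\subseteq Z\setminus N_G^r[D]$ with $z\in B$ has $G^p[B]$ isomorphic to a graph in $\mathcal{F}$. Using the common pattern of $L'$, I would transport $B$ to a twin witness $B^u$ around each $u\in L'\setminus\{z\}$ by replacing the part $B\setminus S$ with its pattern-image relative to $u$; equality of the $(2r+1)$-projection profiles on $S$ preserves $G^p$-isomorphism to the same graph in $\mathcal{F}$. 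Since $D$ covers each $u\in Z\setminus\{z\}$, some $B^u$ must meet $N_G^r[D]$; combining this with the common $r$-profile on $X'$ pins a short walk in $G$ from the corresponding $D$-center through $B^u$ and $S$ via matched profiles, eventually reaching some $B^{u'}$ for a distinct $u'\in L'\setminus\{z\}$. The main obstacle, and exactly what dictates the threshold $r':=2pd+3r$, is the path arithmetic of this walk: the profile-matched legs through $S$, the $G^p$-to-$G$ path expansions of length at most $p$ inside each $B^u$ (totalling up to $pd$ per side), and the short $r$-excursions off the $D$-center must together fit within the $r'$ slack in $G\setminus S$ to produce a walk of length at most $r'$ between two distinct vertices of $L'$, contradicting distance-$r'$ independence of $L'$ in $G\setminus S$.
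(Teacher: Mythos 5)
Your setup (approximate cover via VC-dimension, close-set inflation, uniform quasi-wideness with $r'=2pd+3r$, and pattern pigeonholing) matches the paper's, but the final contradiction is wrong, and this is where the real work of the lemma lies. You argue that since $D$ covers $Z\setminus\{z\}$, each transported witness $B^u$ meets $N_G^r[D]$, and that chaining these incidences yields a walk of length at most $r'$ in $G\setminus S$ between two distinct vertices of $L'$, contradicting distance-$r'$ independence. But distinct $u,u'\in L'$ are in general covered by \emph{distinct} vertices $d_u\neq d_{u'}$ of $D$, and nothing forces two such coverers to be close to one another; indeed the paper \emph{proves} $d_u\neq d_{u'}$ (its Claim~\ref{clm:new4}) precisely because a coincidence $d_u=d_{u'}$ would give the short walk you describe. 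A pigeonhole over the coverers would require $\abs{L'}>\abs{D}$, and $\abs{D}$ can be as large as $\Theta(k^{1+\varepsilon})$, whereas your $L'$ has size $d+2$ (and cannot be made larger than $k^{O(\varepsilon)}$, see below). The missing idea is the exchange step: having shown the $d_u$ are pairwise distinct and that the connecting paths avoid $S\cup X_{\mathrm{cl}}$, one sets $D_{\mathrm{sell}}:=\{d_u:u\in\kappa_3\}$ and $D_{\mathrm{buy}}:=M_{r'}^G(z,X_{\mathrm{cl}})\cup S$, and shows that $(D\setminus D_{\mathrm{sell}})\cup D_{\mathrm{buy}}$ is still a $(p,r,\mathcal{F})$-cover of $Z\setminus\{z\}$ with $\abs{D_{\mathrm{buy}}}<\abs{D_{\mathrm{sell}}}$, contradicting the \emph{minimality} of $D$ rather than the independence of $L$. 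Verifying that the swapped set still covers is itself delicate and is exactly where the shared $r'$-projection profile onto the close set $X_{\mathrm{cl}}$ (not the $r$-profile onto $X'$ that you record) and the choice of $z$ maximizing $\dist_{G\setminus S}(\mathcal{B}_z,X_{\mathrm{cl}})$ are used.

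There is also a quantitative flaw in your order of operations. You apply Theorem~\ref{thm:uqw} to $Z$ with target $m=\Omega(\abs{X'})=\Omega(k^{1+\varepsilon})$, which requires $\abs{Z}\geq m^{p(r')}=\Omega(k^{(1+\varepsilon)p(r')})$; since $p(r')$ is a (possibly large) constant, this is incompatible with the hypothesis $\abs{Z}>f_{\mathrm{core}}(r,d,\varepsilon)\cdot k^{1+\varepsilon}$ and would wreck the almost-linear kernel. For the same reason you cannot afterwards pigeonhole $L$ over the $\Omega(k^{1+\varepsilon})$ many projection profiles on $X'$. The paper instead partitions $Z\setminus X_{\mathrm{cl}}$ by $r'$-projection profiles on $X_{\mathrm{cl}}$ \emph{first}, finds one class of size exceeding $m^{p(r')}$ for an $m$ of order $k^{O(\varepsilon)}$, and applies uniform quasi-wideness inside that class; the subsequent pigeonholing is then only over the constantly many $S$-patterns. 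With $\abs{L'}$ necessarily this small, the direct independence contradiction is hopeless and the exchange argument is unavoidable.
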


Since $A$ is a $(p,r,\mathcal{F})$-core of $A$ in $G$, we can prove Lemma~\ref{lem:core} by iteratively applying Lemma~\ref{lem:core recursion} to $G$ with $A$ at most $\abs{A}$ times.
To prove Lemma~\ref{lem:core recursion}, we will use the following approximation algorithm.

\begin{PROP}\label{prop:apx}
    For every nowhere dense class $\mathcal{C}$ of graphs, there exists a function $f_{\mathrm{apx}}:\mathbb{N}\times\mathbb{N}\times\mathbb{R}_+\to\mathbb{N}$ satisfying the following.
    For every nonempty family $\mathcal{F}$ of connected graphs with at most~$d$ vertices, $p,r\in\mathbb{N}$, and $\varepsilon>0$, there exists a polynomial-time algorithm that given a graph $G\in\mathcal{C}$ and $A\subseteq V(G)$, outputs a $(p,r,\mathcal{F})$-cover of $A$ in $G$ having size at most $f_{\mathrm{apx}}(r,d,\varepsilon)\cdot\gamma_{p,r}^\mathcal{F}(G,A)^{1+\varepsilon}$.
\end{PROP}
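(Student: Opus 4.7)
The plan is to reduce the $(p,r,\mathcal{F})$-Covering problem to a hitting-set problem on a set-system of bounded VC-dimension, and then invoke the approximation algorithm of Theorem~\ref{thm:approx hitting}. First, I will rewrite the covering condition in hitting-set form: a set $D\subseteq V(G)$ is a $(p,r,\mathcal{F})$-cover of $A$ in $G$ if and only if for every $B\subseteq A$ with $G^p[B]$ isomorphic to a member of $\mathcal{F}$, we have $D\cap N_G^r[B]\neq\emptyset$. Accordingly I set
\[
    \mathcal{S}:=\bigl\{N_G^r[B]:B\subseteq A,\ \abs{B}\leq d,\ G^p[B]\text{ is isomorphic to a graph in }\mathcal{F}\bigr\},
\]
and observe that $\tau(\mathcal{S})=\gamma_{p,r}^\mathcal{F}(G,A)$. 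Since each such $B$ has at most $d$ vertices and $d$ is fixed, $\mathcal{S}$ can be enumerated in polynomial time by running over all at most $\abs{V(G)}^d$ candidates for $B$, computing pairwise distances in $G$, and testing the isomorphism condition against the fixed finite family $\mathcal{F}$.

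Next, I will bound the VC-dimension of $\mathcal{S}$. By Corollary~\ref{cor:AA2014}, the neighborhood system $\mathcal{N}:=\{N_G^r[v]:v\in V(G)\}$ has VC-dimension at most $f_{\mathrm{vc}}(r)$, a constant depending only on $\mathcal{C}$ and $r$. Every element of $\mathcal{S}$ is a union of at most $d$ members of $\mathcal{N}$, since $N_G^r[B]=\bigcup_{v\in B}N_G^r[v]$. Proposition~\ref{prop:Matousek2002} then yields a constant $c=c(r,d)$ such that $\mathcal{S}$ has VC-dimension at most $c$.

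Finally, if $\mathcal{S}=\emptyset$ I output $\emptyset$ as the cover; otherwise I apply Theorem~\ref{thm:approx hitting} to $\mathcal{S}$, obtaining in polynomial time a hitting set $D$ of $\mathcal{S}$, equivalently a $(p,r,\mathcal{F})$-cover of $A$ in $G$, of size at most $C_\tau\cdot c\cdot\gamma_{p,r}^\mathcal{F}(G,A)\cdot\ln\gamma_{p,r}^\mathcal{F}(G,A)+1$. Using the elementary inequality $\ln x\leq x^\varepsilon/\varepsilon$ valid for all $x\geq 1$, this bound is at most $f_{\mathrm{apx}}(r,d,\varepsilon)\cdot\gamma_{p,r}^\mathcal{F}(G,A)^{1+\varepsilon}$ for a suitable choice of $f_{\mathrm{apx}}$.

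The only nontrivial ingredient is the VC-dimension bound for the neighborhood hypergraph on nowhere dense classes, which is precisely what the Adler--Adler theorem supplies through Corollary~\ref{cor:AA2014}; the rest is a direct combination of known tools (enumeration of small bad witnesses, closure of bounded-VC families under bounded-size unions, and the approximate hitting-set algorithm). I do not foresee a serious obstacle beyond properly setting the constants so that the $\gamma\ln\gamma$ bound absorbs into $\gamma^{1+\varepsilon}$.
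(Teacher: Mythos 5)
Your proposal is correct and follows essentially the same route as the paper's proof: encode the witnesses $B$ with $G^p[B]$ isomorphic to a member of $\mathcal{F}$ as the set-system $\{N_G^r[B]\}$, bound its VC-dimension via Corollary~\ref{cor:AA2014} and Proposition~\ref{prop:Matousek2002} (the paper bounds the larger family of all $\le d$-fold unions first and then restricts, which is the same argument), and apply Theorem~\ref{thm:approx hitting}, absorbing the $\ln\gamma$ factor into $\gamma^{\varepsilon}$. No gaps.
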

\begin{proof}
    Let $\mathcal{N}:=\{N_G^r[v]:v\in V(G)\}$ and $\mathcal{N}_A:=\{N_G^r[v]:v\in A\}$.
    By Corollary~\ref{cor:AA2014}, $\mathcal{N}$ has VC-dimension at most $f_{\mathrm{vc}}(r)$.
    Since $\mathcal{N}_A\subseteq\mathcal{N}$, $\mathcal{N}_A$ has VC-dimension at most $f_{\mathrm{vc}}(r)$.
    Let $\mathcal{H}_0:=\{N_G^r[B]:B\subseteq A,\ \abs{B}\leq d\}$.
    Let $\mathcal{H}_1$ be the family of sets $B\subseteq A$ such that $G^p[B]$ is isomorphic to a graph in $\mathcal{F}$, and $\mathcal{H}_2:=\{N_G^r[B]:B\in\mathcal{H}_1\}$.
    Since $\mathcal{N}$ has VC-dimension at most $f_{\mathrm{vc}}(r)$, by Proposition~\ref{prop:Matousek2002}, $\mathcal{H}_0$ has VC-dimension at most $O(f_{\mathrm{vc}}(r)\cdot d\log d)$.
    Since $\mathcal{H}_2\subseteq\mathcal{H}_0$, $\mathcal{H}_2$ has VC-dimension at most $O(f_{\mathrm{vc}}(r)\cdot d\log d)$. 
    
    Let $\gamma:=\gamma_{p,r}^\mathcal{F}(G,A)$ and $\delta$ be the VC-dimension of $\mathcal{H}_2$.
    Observe that $(p,r,\mathcal{F})$-covers of $A$ in $G$ correspond to hitting sets of $\mathcal{H}_2$, and vice versa.
	By Theorem~\ref{thm:approx hitting}, one can find in polynomial time a hitting set $X$ of $\mathcal{H}_2$ having size at most $C_\tau\cdot\delta\cdot\gamma\cdot\ln\gamma+1$.
	Thus, one can choose the function $f_{\mathrm{apx}}(r,d,\varepsilon)$ with $\abs{X}\leq f_{\mathrm{apx}}(r,d,\varepsilon)\cdot\gamma^{1+\varepsilon}$, and this completes the proof.
\end{proof}

We now prove Lemma~\ref{lem:core recursion}.

\begin{proof}[Proof of Lemma~\ref{lem:core recursion}]
    The function $f_{\mathrm{core}}(r,d,\varepsilon)$ will be defined later.
    At the beginning, we assume that $\abs{Z}>f_{\mathrm{core}}(r,d,\varepsilon)\cdot k^{1+C\varepsilon}$ for some constant $C$, and at the end, we scale $\varepsilon$ accordingly.

    If $Z$ contains a vertex $v$ such that for every set $B\subseteq Z\setminus\{v\}$ with $\abs{B}\leq d-1$, $G^p[B\cup\{v\}]$ is isomorphic to no graph in $\mathcal{F}$, then the statement holds by taking $v$ as $z$.
    Thus, we may assume that for every $v\in Z$, $Z\setminus\{v\}$ has a subset $B$ such that $G^p[B\cup\{v\}]$ is isomorphic to a graph in $\mathcal{F}$.
    
    By Proposition~\ref{prop:apx}, one can find in polynomial time a $(p,r,\mathcal{F})$-cover $X$ of $Z$ in $G$ having size at most $f_{\mathrm{apx}}(r,d,\varepsilon)\cdot\gamma_{p,r}^\mathcal{F}(G,Z)^{1+\varepsilon}$.
    If $\abs{X}>f_{\mathrm{apx}}(r,d,\varepsilon)\cdot k^{1+\varepsilon}$, then $\gamma_{p,r}^\mathcal{F}(G,A)=\gamma_{p,r}^\mathcal{F}(G,Z)>k$.
    Thus, we may assume that $\abs{X}\leq f_{\mathrm{apx}}(r,d,\varepsilon)\cdot k^{1+\varepsilon}$.
    Let
    \begin{equation*}
        r':=2pd+3r.
    \end{equation*}
    By Lemma~\ref{lem:cl}, one can find in polynomial time an $(r',f_{\mathrm{cl}}(r',\varepsilon)\cdot\abs{X}^\varepsilon)$-close set $X_{\mathrm{cl}}\supseteq X$ of size at most $f_{\mathrm{cl}}(r',\varepsilon)\cdot\abs{X}^{1+\varepsilon}\leq f_{\mathrm{cl}}(r',\varepsilon)\cdot f_{\mathrm{apx}}(r,d,\varepsilon)^{1+\varepsilon}\cdot k^{1+3\varepsilon}$.
    
    We define an equivalence relation $\sim$ on $Z\setminus X_{\mathrm{cl}}$ such that for vertices $u,v\in Z\setminus X_{\mathrm{cl}}$, $u\sim v$ if and only if $\rho_{r'}^G[u,X_{\mathrm{cl}}]=\rho_{r'}^G[v,X_{\mathrm{cl}}]$.
    By Lemma~\ref{lem:proj}, 
    \begin{align*}
        \idx(\sim)
        &\leq f_{\mathrm{proj}}(r',\varepsilon)\cdot\abs{X_{\mathrm{cl}}}^{1+\varepsilon}\\
        &\leq f_{\mathrm{proj}}(r',\varepsilon)\cdot f_{\mathrm{cl}}(r',\varepsilon)^{1+\varepsilon}\cdot f_{\mathrm{apx}}(r,d,\varepsilon)^{1+3\varepsilon}\cdot k^{1+7\varepsilon}.
    \end{align*}
    Let $p(r')$ and $s:=s(r')$ be the constants in Theorem~\ref{thm:uqw}.
    Let
    \begin{align*}
        \xi&:=2\cdot f_{\mathrm{cl}}(r',\varepsilon)\cdot f_{\mathrm{apx}}(r,d,\varepsilon)^\varepsilon\cdot k^{2\varepsilon}+d^2/4+s+1,\\
        m&:=2^{2^{d^2/2+sd}\cdot(r+1)^{sd}}\cdot\xi+1.
    \end{align*}
    By setting $C=7+2\cdot p(r')$, one can choose the function $f_{\mathrm{core}}(r,d,\varepsilon)$ with
    \begin{equation*}
        f_{\mathrm{core}}(r,d,\varepsilon)\cdot k^{1+C\varepsilon}\geq\abs{X_{\mathrm{cl}}}+\idx(\sim)\cdot m^{p(r')}.
    \end{equation*}
    Since $\abs{Z}>f_{\mathrm{core}}(r,d,\varepsilon)\cdot k^{1+C\varepsilon}$, we have that $\abs{Z\setminus X_{\mathrm{cl}}}>\idx(\sim)\cdot m^{p(r')}$.
    Thus, by the pigeonhole principle, there is an equivalence class $\lambda$ of $\sim$ with $\abs{\lambda}>m^{p(r')}$.
    By Theorem~\ref{thm:uqw}, one can find in polynomial time sets $S\subseteq V(G)$ and $L\subseteq\lambda\setminus S$ such that $\abs{S}\leq s$, $\abs{L}\geq m$, and $L$ is distance-$r'$ independent in $G\setminus S$.
    
    We are going to find a desired vertex $z$ from $L$.
    To do this, we define the following.
    For each $i\in[d]$, let $\mathcal{G}_i$ be the set of all graphs whose vertex sets are $[i]$.
    Note that $\abs{\mathcal{G}_i}=2^{i(i-1)/2}$ for each $i\in[d]$.
    Let $\mathcal{H}$ be the set of functions $\rho:S\to[2r+1]\cup\{\infty\}$.
    Since $\abs{S}\leq s$, we have that $\abs{\mathcal{H}}\leq(2r+2)^s$.
    For each $i\in[d]$, let $\mathcal{H}_i$ be the set of all vectors $(h_1,\ldots,h_i,g)$ of length $i+1$ where $h_j\in\mathcal{H}$ for each $j\in[i]$ and $g\in\mathcal{G}_i$.
    Let $\overline{\mathcal{H}}:=\bigcup_{i=1}^d\mathcal{H}_i$.
    Note that
    \begin{align*}
        \abs{\overline{\mathcal{H}}}&=\sum_{i=1}^d\abs{\mathcal{H}_i}=\sum_{i=1}^d(\abs{\mathcal{H}}^i\cdot\abs{\mathcal{G}_i})\leq\sum_{i=1}^d((2r+2)^{si}\cdot2^{i(i-1)/2})\\
        &\leq2^{d(d-1)/2}\cdot\sum_{i=1}^d(2r+2)^{si}\leq2^{d(d-1)/2}\cdot2(2r+2)^{sd}\leq2^{d^2/2+sd}\cdot(r+1)^{sd}.
    \end{align*}
    
    Let $\ell:=\abs{\overline{\mathcal{H}}}$.
    We take an arbitrary ordering $\sigma_1,\ldots,\sigma_\ell$ of $\overline{\mathcal{H}}$.
    For each $v\in L$, let $\mathcal{A}_v:=\emptyset$ and $\mathbf{x}(v)$ be a zero vector of length $\ell$.
    One can enumerate in polynomial time the sets $B\subseteq Z\setminus\{v\}$ of size at most $d-1$ such that $G^p[B\cup\{v\}]$ is isomorphic to a graph in $\mathcal{F}$.
    For each such $B$, we do the following.
    If there is an index $i\in[\ell]$ such that the $i$-th entry of $\mathbf{x}(v)$ is $0$ and for $\sigma_i=(h_1^i,\ldots,h_t^i,g_i)\in\overline{\mathcal{H}}$, there exists
    \begin{enumerate}
        \item[$\bullet$]    an isomorphism $\phi_i:(B\setminus S)\cup\{v\}\to[t]$ between $(G\setminus S)^p[(B\setminus S)\cup\{v\}]$ and $g_i$ where $\phi_i(v)=1$ and for each $j\in[t]$, $\rho_{2r+1}^G[\phi_i^{-1}(j),S]=h^i_j$,
    \end{enumerate}
    then we put $B$ into $\mathcal{A}_v$ and convert the $i$-th entry of $\mathbf{x}(v)$ to $1$.
    Otherwise, we do nothing for the chosen $B$.
    Since $\abs{B}\leq d-1$, one can check in polynomial time whether~$B$ satisfies the conditions.
    Thus, the resulting $\mathcal{A}_v$ and $\mathbf{x}(v)$ can be computed in polynomial time.
    
    For each $v\in L$, since $Z\setminus\{v\}$ has a subset $B$ such that $G^p[B\cup\{v\}]$ is isomorphic to a graph in $\mathcal{F}$, $\mathcal{A}_v\neq\emptyset$ and $\mathbf{x}(v)$ has a nonzero entry.
    For each set $B\in\mathcal{A}_v$, let $B^*$ be the vertex set of the component of $(G\setminus S)^p[(B\setminus S)\cup\{v\}]$ having $v$, and $\mathcal{B}_v:=\bigcup_{B\in\mathcal{A}_v}B^*$.
    
    Since $\abs{L}\geq m=2^{2^{d^2/2+sd}\cdot(r+1)^{sd}}\cdot\xi+1$ and $\ell\leq2^{d^2/2+sd}\cdot(r+1)^{sd}$, by the pigeonhole principle, $L$ has a subset $\kappa_1$ such that $\abs{\kappa_1}\geq\xi+1$ and $\mathbf{x}(v)=\mathbf{x}(w)$ for all $v,w\in\kappa_1$.
    Let $z$ be a vertex in $\kappa_1$ such that $\dist_{G\setminus S}(\mathcal{B}_z,X_{\mathrm{cl}})\geq\dist_{G\setminus S}(\mathcal{B}_v,X_{\mathrm{cl}})$ for every $v\in\kappa_1$.
    
    We show that $Z\setminus\{z\}$ is a $(p,r,\mathcal{F})$-core of $A$ in $G$.
    Let $D$ be a minimum-size $(p,r,\mathcal{F})$-cover of $Z\setminus\{z\}$ in $G$.
    To show that $Z\setminus\{z\}$ is a $(p,r,\mathcal{F})$-core of $A$ in $G$, we need to show that $D$ is a $(p,r,\mathcal{F})$-cover of $A$ in $G$.
    Since $Z$ is a $(p,r,\mathcal{F})$-core of $A$ in $G$, it suffices to show that $D$ is a $(p,r,\mathcal{F})$-cover of $Z$ in~$G$.
    
    Suppose for contradiction that $D$ is not a $(p,r,\mathcal{F})$-cover of $Z$ in $G$.
    Since $D$ is a $(p,r,\mathcal{F})$-cover of $Z\setminus\{z\}$ in $G$, there is a set $B_z\subseteq Z\setminus(N^r_G[D]\cup\{z\})$ such that $G^p[B_z\cup\{z\}]$ is isomorphic to a graph in~$\mathcal{F}$.
    In particular, there exist a graph $H\in\mathcal{G}_t$ for some $t\leq d$ and an isomorphism $\psi_z:(B_z\setminus S)\cup\{z\}\to[t]$ between $(G\setminus S)^p[(B_z\setminus S)\cup\{z\}]$ and $H$ where $\psi_z(z)=1$.
    For each vertex $v\in\kappa_1\setminus\{z\}$, there exist $B_v\in\mathcal{A}_v$ and
    \begin{enumerate}
        \item[$\bullet$]    an isomorphism $\psi_v:(B_v\setminus S)\cup\{v\}\to[t]$ between $(G\setminus S)^p[(B_v\setminus S)\cup\{v\}]$ and $H$ where $\psi_v(v)=1$ and for each $j\in[t]$, $\rho_{2r+1}^G[\psi_v^{-1}(j),S]=\rho_{2r+1}^G[\psi_z^{-1}(j),S]$.
    \end{enumerate}
    
    To derive a contradiction, we do the following steps.
    \begin{enumerate}
        \item[(1)]  Find a set $\kappa_3\subseteq\kappa_1\setminus\{z\}$ such that for each $u\in\kappa_3$, $\dist_{G\setminus S}(\mathcal{B}_u,X_{\mathrm{cl}})>r$ and $G^p[B_u^*\cup(B_z\setminus B_z^*)]$ is isomorphic to $G^p[B_z\cup\{z\}]$.
        \item[(2)]  Show that $\abs{D}\geq\abs{\kappa_3}$.
        \item[(3)]  Construct a $(p,r,\mathcal{F})$-cover of $Z\setminus\{z\}$ in $G$ having size less than $\abs{D}$.
    \end{enumerate}
    Since $D$ is a minimum-size $(p,r,\mathcal{F})$-cover of $Z\setminus\{z\}$ in $G$, these steps derive a contradiction.
    
    For the first step, we will use the following three claims.
    Let $\kappa'_1$ be the set of vertices $v\in\kappa_1$ with $\dist_{G\setminus S}(\mathcal{B}_v,X_{\mathrm{cl}})\leq r$.

    \begin{CLM}\label{clm:new1}
        $\abs{\kappa'_1}\leq\abs{M_{r'}^G(z,X_{\mathrm{cl}})}$.
    \end{CLM}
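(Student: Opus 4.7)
The plan is to construct an injection $v\mapsto x_v$ from $\kappa'_1$ into $M_{r'}^G(z,X_{\mathrm{cl}})$. The intuition is that each $v\in\kappa'_1$ is close to $X_{\mathrm{cl}}$ through the set $\mathcal{B}_v$, and because $z$ and $v$ both lie in the equivalence class $\lambda$ of $\sim$, this proximity witnessed by $v$ can be transported to $z$ via the equality of the $r'$-projection profiles on $X_{\mathrm{cl}}$.

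First, for each $v\in\kappa'_1$ I would produce a short $X_{\mathrm{cl}}$-avoiding path of $G\setminus S$ starting at $v$. By definition of $\kappa'_1$, there exist $u\in\mathcal{B}_v$ and $x\in X_{\mathrm{cl}}$ with $\dist_{G\setminus S}(u,x)\leq r$. The vertex $u$ lies in $B^*$ for some $B\in\mathcal{A}_v$, and $B^*$ is a component of $(G\setminus S)^p[(B\setminus S)\cup\{v\}]$ of order at most $d$, so $\dist_{G\setminus S}(v,u)\leq p(d-1)$. Concatenating these two walks and truncating at the first vertex of $X_{\mathrm{cl}}$ encountered yields an $X_{\mathrm{cl}}$-avoiding path $P_v$ in $G\setminus S$ from $v$ to some $x_v\in X_{\mathrm{cl}}$ of length at most $p(d-1)+r<r'$. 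Viewing $P_v$ as an $X_{\mathrm{cl}}$-avoiding path in $G$ gives $\rho_{r'}^G[v,X_{\mathrm{cl}}](x_v)\leq p(d-1)+r<r'$, and since $v\sim z$ the same bound holds for $\rho_{r'}^G[z,X_{\mathrm{cl}}](x_v)$, placing $x_v$ in $M_{r'}^G(z,X_{\mathrm{cl}})$.

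For injectivity, suppose $x_{v_1}=x_{v_2}$ for distinct $v_1,v_2\in\kappa'_1$. Concatenating $P_{v_1}$ with the reverse of $P_{v_2}$ produces a walk in $G\setminus S$ from $v_1$ to $v_2$ of length at most $2(p(d-1)+r)\leq 2pd+2r<r'=2pd+3r$, contradicting the distance-$r'$ independence of $L$ in $G\setminus S$ since $v_1,v_2\in L$. The main subtlety, and what motivates the specific choice $r'=2pd+3r$, is to leave enough slack that the concatenated walk is strictly shorter than $r'$, while also ensuring $P_v$ is genuinely $X_{\mathrm{cl}}$-avoiding so that the projection-profile equality can be applied along this path; both are handled simultaneously by truncating at the first vertex of $X_{\mathrm{cl}}$ encountered along a shortest connecting walk in $G\setminus S$.
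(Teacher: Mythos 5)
Your proposal is correct and follows essentially the same route as the paper: build a short $X_{\mathrm{cl}}$-avoiding path in $G\setminus S$ from each $v\in\kappa'_1$ to $X_{\mathrm{cl}}$ via the connected set $B^*$, transfer the reached projection vertex to $M_{r'}^G(z,X_{\mathrm{cl}})$ using the shared $r'$-projection profile, and use the distance-$r'$ independence of $L$ to rule out collisions (your injectivity argument is just the contrapositive of the paper's pigeonhole step). The only cosmetic difference is that the paper takes a shortest $v$--$X_{\mathrm{cl}}$ path rather than truncating a concatenated walk, and notes the bound as $\leq r'$ rather than $<r'$; both suffice.
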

    \begin{subproof}
        We first show that for each $v\in\kappa'_1$, $G\setminus S$ has a path of length at most $r'/2$ between $v$ and $X_{\mathrm{cl}}$.
        Since $v\in\kappa'_1$, for some $B\in\mathcal{A}_v$, $G\setminus S$ has a path $R$ of length at most $r$ between some $b\in B^*$ and $x\in X_{\mathrm{cl}}$.
        Since $(G\setminus S)^p[B^*]$ is connected and $\abs{B^*}\leq d$, $G\setminus S$ has a path $R'$ between $v$ and $b$ having length at most $p(d-1)$.
        By concatenating $R'$ and $R$, we obtain a walk of $G\setminus S$ between $v$ and $x$ having length at most $p(d-1)+r\leq r'/2$.
        Therefore, $G\setminus S$ has a path of length at most $r'/2$ between $v$ and $X_{\mathrm{cl}}$.
        
        We choose a shortest path $P_v$ among such paths. Let $y_v$ be the vertex in $V(P_v)\cap X_{\textrm{cl}}$.
        Note that $P_v$ is an $X_{\mathrm{cl}}$-avoiding path of $G\setminus S$ having length at most $r'/2$ and $y_v$ is contained in $M_{r'}^G(v,X_{\mathrm{cl}})$.
        
        We now show that $\abs{\kappa'_1}\leq\abs{M_{r'}^G(z,X_{\mathrm{cl}})}$.
        Suppose not.
        Since $\{v,z\}\subseteq\kappa_1\subseteq\lambda$ where $\lambda$ is an equivalence class of $\sim$, $M_{r'}^G(v,X_{\mathrm{cl}})$ and $M_{r'}^G(z,X_{\mathrm{cl}})$ are same, and therefore $y_v$ is contained in $M_{r'}^G(z,X_{\mathrm{cl}})$.
        By the pigeonhole principle, there are distinct $v,v'\in\kappa'_1$ with $y_v=y_{v'}$.
        By concatenating $P_v$ and $P_{v'}$, we obtain a walk of $G\setminus S$ having length at most $r'$ between $v$ and $v'$, contradicting the assumption that $L$ is distance-$r'$ independent in $G\setminus S$.
    \end{subproof}
    
    Let $\kappa_2:=\kappa_1\setminus\kappa'_1$.
    Note that by Claim~\ref{clm:new1},
    \begin{align*}
        \abs{\kappa_2}
        &\geq\xi+1-\abs{M_{r'}^G(z,X_{\mathrm{cl}})}\\
        &\geq\xi+1-f_{\mathrm{cl}}(r',\varepsilon)\cdot\abs{X}^\varepsilon\\
        &\geq\xi+1-f_{\mathrm{cl}}(r',\varepsilon)\cdot f_{\mathrm{apx}}(r,d,\varepsilon)^\varepsilon\cdot k^{2\varepsilon}\\
        &=f_{\mathrm{cl}}(r',\varepsilon)\cdot f_{\mathrm{apx}}(r,d,\varepsilon)^\varepsilon\cdot k^{2\varepsilon}+d^2/4+s+2,
    \end{align*}
    and $z\in\kappa_2$.
    Let $B_z^*$ be the vertex set of the component of $(G\setminus S)^p[(B_z\setminus S)\cup\{z\}]$ having $z$.
    Note that for vertices $v,w\in\kappa_2$, $\psi_v^{-1}\circ\psi_w$ is an isomorphism between $(G\setminus S)^p[(B_w\setminus S)\cup\{w\}]$ and $(G\setminus S)^p[(B_v\setminus S)\cup\{v\}]$ assigning $w$ to $v$.
    Thus, $\psi_v^{-1}\circ\psi_z(B_z^*)=B_v^*$.
    
    \begin{CLM}\label{clm:new2-1}
        For vertices $v,w\in\kappa_2$, $\psi_w^{-1}\circ\psi_v$ is an isomorphism between $G^p[(B_v\setminus S)\cup\{v\}]$ and $G^p[(B_w\setminus S)\cup\{w\}]$.
    \end{CLM}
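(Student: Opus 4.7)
The plan is to show that for any two vertices $a, b \in (B_v \setminus S) \cup \{v\}$ with images $a' := \psi_w^{-1}(\psi_v(a))$ and $b' := \psi_w^{-1}(\psi_v(b))$, one has $\dist_G(a, b) \leq p$ if and only if $\dist_G(a', b') \leq p$. Since edges of $G^p$ are exactly pairs at distance at most $p$ in $G$, this equivalence is precisely the statement that $\psi_w^{-1}\circ\psi_v$ is an isomorphism between $G^p[(B_v\setminus S)\cup\{v\}]$ and $G^p[(B_w\setminus S)\cup\{w\}]$.

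First I would collect the information already packaged in the hypotheses. Since $\psi_v$ and $\psi_w$ are both isomorphisms onto the same graph $H$, the composition $\psi_w^{-1}\circ\psi_v$ is automatically an isomorphism between $(G\setminus S)^p[(B_v\setminus S)\cup\{v\}]$ and $(G\setminus S)^p[(B_w\setminus S)\cup\{w\}]$; equivalently, $\dist_{G\setminus S}(a,b)\leq p$ if and only if $\dist_{G\setminus S}(a',b')\leq p$. Moreover, the defining condition of $\kappa_1$, applied to $v$ and $w$ through the reference vertex $z$ and chained by transitivity, yields $\rho_{2r+1}^G[a,S]=\rho_{2r+1}^G[a',S]$ and $\rho_{2r+1}^G[b,S]=\rho_{2r+1}^G[b',S]$.

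The main step is a decomposition of walks in $G$ through $S$. Any path $P$ in $G$ from $a$ to $b$ either avoids $S$ entirely, or, on cutting it at the first vertex $s$ and the last vertex $s'$ of $V(P)\cap S$, splits into an $S$-avoiding piece from $a$ to $s$ of length at least $\rho_p^G[a,S](s)$, a subpath from $s$ to $s'$ of length at least $\dist_G(s,s')$, and an $S$-avoiding piece from $s'$ to $b$ of length at least $\rho_p^G[b,S](s')$. Matching concatenations realize these minima, so
\[
\dist_G(a,b)=\min\!\Bigl(\dist_{G\setminus S}(a,b),\ \min_{s,s'\in S}\bigl(\rho_p^G[a,S](s)+\dist_G(s,s')+\rho_p^G[b,S](s')\bigr)\Bigr),
\]
and the analogous identity holds for $a',b'$. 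Since $p\leq 2r+1$, the truncated projection values $\rho_p^G[\cdot,S]$ appearing on each side are fully determined by the $(2r+1)$-profiles, which agree across the bijection, while $\dist_G(s,s')$ depends only on $S$. Comparing the two formulas yields the desired equivalence.

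I expect the distance decomposition to be the one genuinely non-routine step, since one has to argue both directions: a shortest $a$-$b$ path that enters $S$ is at least as long as some sum of the three pieces, and conversely every such sum is witnessed by a concrete walk. Once the formula above is in place, every remaining piece is immediate from the equality of $(2r+1)$-profiles, the adjacency preservation in $(G\setminus S)^p$, and the invariance of distances among vertices of $S$.
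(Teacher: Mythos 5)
Your argument is correct and takes essentially the same route as the paper's proof: both reduce the $S$-avoiding case to the given isomorphism of $(G\setminus S)^p$-induced subgraphs and, for a length-$\leq p$ path meeting $S$, cut it at the first and last vertices of $S$ and replace the two $S$-avoiding end segments using the equality of the $(2r+1)$-projection profiles on $S$. The only cosmetic difference is that the paper keeps the middle segment of the chosen path verbatim instead of passing to a shortest path between $s$ and $s'$ and an explicit $\min$-formula, but the surgery and the conclusion are the same.
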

    \begin{subproof}
        It suffices to show that for $i,j\in[t]$, $\psi_v^{-1}(i)$ is adjacent to $\psi_v^{-1}(j)$ in $G^p$ if and only if $\psi_w^{-1}(i)$ is adjacent to $\psi_w^{-1}(j)$ in $G^p$.
        Suppose that $\psi_v^{-1}(i)$ is adjacent to $\psi_v^{-1}(j)$ in $G^p$.
        Since $\psi_w^{-1}\circ\psi_v$ is an isomorphism between $(G\setminus S)^p[(B_v\setminus S)\cup\{v\}]$ and $(G\setminus S)^p[(B_w\setminus S)\cup\{w\}]$, we may assume that $\psi_v^{-1}(i)$ and $\psi_v^{-1}(j)$ are nonadjacent in $(G\setminus S)^p[(B_v\setminus S)\cup\{v\}]$.
        Thus, every path of length at most $p$ in $G$ between $\psi_v^{-1}(i)$ and $\psi_v^{-1}(j)$ has a vertex in $S$.
        
        We take an arbitrary path $Q$ of $G$ between $\psi_v^{-1}(i)$ and $\psi_v^{-1}(j)$ having length at most $p$.
        Let $q_i$ and $q_j$ be the vertices in $V(Q)\cap S$ such that each of $\dist_Q(\psi_v^{-1}(i),q_i)$ and $\dist_Q(\psi_v^{-1}(j),q_j)$ is minimum.
        Such $q_i$ and $q_j$ exist, because $Q$ has a vertex in $S$.
        Let $Q_i$ be the subpath of $Q$ between $\psi_v^{-1}(i)$ and $q_i$, and $Q_j$ be the subpath of $Q$ between $\psi_v^{-1}(j)$ and $q_j$.
        Note that both $Q_i$ and $Q_j$ are $S$-avoiding paths of length at most $p\leq2r+1$.
        
        Since $\{v,w\}\subseteq\kappa_2\subseteq\kappa_1$, $\rho_{2r+1}^G[\psi_v^{-1}(i),S]$ and $\rho_{2r+1}^G[\psi_w^{-1}(i),S]$ are same, and therefore $G$ has an $S$-avoiding path $Q'_i$ between $\psi_w^{-1}(i)$ and $q_i$ whose length is at most that of $Q_i$.
        Similarly, $G$ has an $S$-avoiding path $Q'_j$ between $\psi_w^{-1}(j)$ and $q_j$ whose length is at most that of $Q_j$.
        By substituting $Q_i$ and $Q_j$ with $Q'_i$ and $Q'_j$ from $Q$, respectively, we obtain a walk of $G$ between $\psi_w^{-1}(i)$ and $\psi_w^{-1}(j)$ whose length is at most $p$.
        Therefore, $\psi_w^{-1}(i)$ is adjacent to $\psi_w^{-1}(j)$ in $G^p$, and this proves the claim.
    \end{subproof}
    
    \begin{CLM}\label{clm:new2}
        $\kappa_2$ contains at most $d^2/4$ vertices $v$ such that $G^p[B_v^*\cup(B_z\setminus B_z^*)]$ is not isomorphic to $G^p[B_z\cup\{z\}]$.
    \end{CLM}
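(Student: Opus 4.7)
My plan is to exhibit, for each $v \in \kappa_2$, a natural candidate bijection $\phi_v \colon B_v^* \cup (B_z \setminus B_z^*) \to B_z \cup \{z\}$, and to argue that $\phi_v$ is a $G^p$-isomorphism for all but at most $d^2/4$ choices of $v$. I define $\phi_v := \psi_z^{-1} \circ \psi_v$ on $B_v^*$ (so $B_v^*$ is bijected onto $B_z^*$, with $v \mapsto z$) and $\phi_v := \mathrm{id}$ on $B_z \setminus B_z^*$; any residual overlap $B_v^* \cap (B_z \setminus B_z^*)$ already makes $\phi_v$ fail to be well-defined and will be absorbed into the final count. Adjacencies of pairs inside $B_v^*$ are preserved by Claim~\ref{clm:new2-1} applied to $B_v^* \subseteq (B_v \setminus S) \cup \{v\}$, and adjacencies inside $B_z \setminus B_z^*$ are preserved trivially. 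Hence $\phi_v$ is an isomorphism exactly when, for every $u \in B_v^*$ and $w \in B_z \setminus B_z^*$, writing $u' := \psi_z^{-1}(\psi_v(u)) \in B_z^*$, the conditions $\dist_G(u,w) \leq p$ and $\dist_G(u',w) \leq p$ are equivalent; I call a triple $(v,u,w)$ violating this an \emph{offence}.

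The truncation-and-swap argument from the proof of Claim~\ref{clm:new2-1} still applies: given any $u$-to-$w$ (or $u'$-to-$w$) path of length at most $p$ that visits $S$, cut it at its first (and, if applicable, last) $S$-vertex and use the identity $\rho_{2r+1}^G[\psi_v^{-1}(i),S] = \rho_{2r+1}^G[\psi_z^{-1}(i),S]$ for $i = \psi_v(u)$, which is built into the selection of $\psi_v$ and $\psi_z$, to reroute the $S$-avoiding segment onto the other side, producing a walk of length at most $p$ that witnesses the other equivalence. If $w \in B_z \cap S$ then every $u$-to-$w$ and $u'$-to-$w$ path must end at the $S$-vertex $w$, so the swap always applies and no offence arises. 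For $w \in (B_z \setminus S) \setminus B_z^*$, any offence must be witnessed by an entirely $S$-avoiding path of length at most $p$; the $u'$-variant is impossible, because $u' \in B_z^*$ and $w$ lie in distinct components of $(G \setminus S)^p[(B_z \setminus S) \cup \{z\}]$, forcing $\dist_{G \setminus S}(u',w) > p$. Consequently every offence forces $w \in (B_z \setminus S) \setminus B_z^*$ and $\dist_{G \setminus S}(u,w) \leq p$.

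Because $B_v^*$ induces a connected subgraph of $(G \setminus S)^p$ of order at most $d$, we have $\dist_{G \setminus S}(v,u) \leq p(d-1)$, so any offence at $(v,u,w)$ yields $\dist_{G \setminus S}(v,w) \leq pd$. I then count bad vertices by index-pairs $(i,w)$ with $i \in \psi_z(B_z^*)$ and $w \in (B_z \setminus S) \setminus B_z^*$: if two distinct $v, v' \in \kappa_2$ both offended through the same pair $(i,w)$ then $\dist_{G \setminus S}(v,v') \leq 2pd \leq r'$, contradicting the distance-$r'$ independence of $L$ in $G \setminus S$; hence each pair carries at most one bad $v$. Setting $a := |B_z^*|$ and $c := |(B_z \setminus S) \setminus B_z^*|$, we have $a + c = |(B_z \setminus S) \cup \{z\}| \leq |B_z \cup \{z\}| \leq d$, so the number of pairs is $ac \leq (a+c)^2/4 \leq d^2/4$, yielding the required bound. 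The main delicate point I expect is the case analysis around whether $w$ lies in $S$, and ensuring the truncation-swap produces a valid walk when $u$ lies off $S$ but $w$ possibly sits in $S$; this is essentially a bookkeeping variant of the argument in Claim~\ref{clm:new2-1}, but must be carried out with care so that Case B (the $u'$-variant) is cleanly eliminated in every sub-case.
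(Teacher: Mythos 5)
Your argument is correct and follows essentially the same route as the paper's proof: reroute any $S$-touching witness path using the shared $(2r+1)$-projection profiles on $S$, observe that a genuine discrepancy therefore requires an $S$-avoiding witness from $\psi_v^{-1}(i)$ to some $w\in B_z\setminus(B_z^*\cup S)$, and charge each discrepancy to the pair $(i,w)$ — of which there are at most $d^2/4$ — with the distance-$r'$ independence of $L$ in $G\setminus S$ guaranteeing at most one offender per pair. The only differences are organizational (you establish up front that all offence witnesses are $S$-avoiding and then derive the distance contradiction directly, where the paper performs the case split inside the final uniqueness step), together with your flagged but unproved remark that a possible overlap $B_v^*\cap(B_z\setminus B_z^*)\neq\emptyset$ is ``absorbed'' — a point the paper also leaves implicit, and which is indeed handled by the same per-pair uniqueness argument since such an overlap at $(i,w)$ likewise forces $\dist_{G\setminus S}(v,w)\leq p(d-1)$.
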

    \begin{subproof}
        For vertices $u\in\kappa_2\setminus\{z\}$ and $i\in\psi_z(B_z^*)$, since $\{u,z\}\subseteq\kappa_2\subseteq\kappa_1$, $\rho_{2r+1}^G[\psi_u^{-1}(i),S]$ and $\rho_{2r+1}^G[\psi_z^{-1}(i),S]$ are same.
        Therefore, for each $w\in S$, $\psi_u^{-1}(i)$ is adjacent to $w$ in $G^p$ if and only if $\psi_z^{-1}(i)$ is adjacent to $w$ in $G^p$.
        By Claim~\ref{clm:new2-1}, the restriction of $\psi_u^{-1}\circ\psi_z$ on $B_z^*$ is an isomorphism between $G^p[B_z^*]$ and $G^p[B_u^*]$.
        
        We first show that for all vertices $v\in\kappa_2$, $i\in\psi_z(B_z^*)$, and $w\in B_z\setminus(B_z^*\cup S)$, if $\psi_z^{-1}(i)$ is adjacent to $w$ in $G^p$, then $\psi_v^{-1}(i)$ is adjacent to $w$ in $G^p$.
        Suppose that $\psi_z^{-1}(i)$ is adjacent to $w$ in $G^p$.
        We take an arbitrary path $Q'$ of $G$ between $\psi_z^{-1}(i)$ and $w$ having length at most $p$.
        Since $(G\setminus S)^p[B_z^*]$ is a component of $(G\setminus S)^p[(B_z\setminus S)\cup\{z\}]$ having $z$ and $w\notin B_z^*$, $Q'$ must have a vertex in $S$.
        
        Let $q$ be the the vertex in $V(Q')\cap S$ such that $\dist_{Q'}(\psi_z^{-1}(i),q)$ is minimum.
        Such $q$ exists, because $Q'$ has a vertex in $S$.
        Let $Q'_1$ be the subpath of $Q'$ between $\psi_z^{-1}(i)$ and $q$.
        Note that $Q'_1$ is an $S$-avoiding path of length at most $p\leq2r+1$.
        Since $\rho_{2r+1}^G[\psi_v^{-1}(i),S]=\rho_{2r+1}^G[\psi_z^{-1}(i),S]$, $G$ has an $S$-avoiding path $Q'_2$ between $\psi_v^{-1}(i)$ and $q$ having length at most that of $Q'_1$.
        By substituting $Q'_1$ with $Q'_2$ from $Q'$, we obtain a walk of $G$ between $\psi_v^{-1}(i)$ and $w$ having length at most $p$.
        Hence, $\psi_v^{-1}(i)$ is adjacent to~$w$ in $G^p$.
        
        Thus, there is no pair of vertices $i\in\psi_z(B_z^*)$ and $w\in B_z\setminus(B_z^*\cup S)$ such that $\psi_z^{-1}(i)$ is adjacent to $w$ in $G^p$ and $\psi_u^{-1}(i)$ is nonadjacent to $w$ in $G^p$.
        
        We now show that if there exist vertices $i\in\psi_z(B_z^*)$ and $w\in B_z\setminus(B_z^*\cup S)$ such that $\psi_z^{-1}(i)$ is nonadjacent to $w$ in $G^p$, then $\kappa_2$ contains at most one vertex $x$ such that $\psi_x^{-1}(i)$ is adjacent to $w$ in $G^p$.
        To prove the claim, it suffices to show this statement, because $\abs{B_z^*}\cdot \abs{B_z\setminus (B_z^*\cup S)}\le d^2/4$.
        
        Suppose for contradiction that there exist vertices $i\in\psi_z(B_z^*)$, $w\in B_z\setminus(B_z^*\cup S)$, and distinct $x,x'\in\kappa_2$ such that $\psi_z^{-1}(i)$ is nonadjacent to $w$ in $G^p$ and both $\psi_x^{-1}(i)$ and $\psi_{x'}^{-1}(i)$ are adjacent to $w$ in $G^p$.
        Then $G$ has paths $R$ and $R'$ of length at most $p$ from $w$ to $\psi_x^{-1}(i)$ and $\psi_{x'}^{-1}(i)$, respectively.
        
        We first verify that $R$ or $R'$ has a vertex in $S$.
        Suppose not.
        Since $\abs{B_x^*}\leq d$, $G\setminus S$ has a path $R_1$ of length at most $p(d-1)$ between $x$ and $\psi_x^{-1}(i)$.
        Similarly, $G\setminus S$ has a path $R'_1$ of length at most $p(d-1)$ between $x'$ and $\psi_{x'}^{-1}(i)$.
        Since neither $R$ nor $R'$ has a vertex in $S$, by concatenating $R_1$, $R$, $R'$, and $R'_1$, we obtain a walk of $G\setminus S$ of length at most $2pd\leq r'$ between $x$ and $x'$, contradicting the assumption that $L$ is distance-$r'$ independent in $G\setminus S$.
        Hence, $R$ or $R'$ has a vertex in $S$.
        By symmetry, we may assume that $R$ has a vertex in $S$.
        
        Let $t$ be the vertex in $V(R)\cap S$ such that $\dist_R(\psi_x^{-1}(i),t)$ is minimum.
        Let $R_0$ be the subpath of $R$ between $\psi_x^{-1}(i)$ and $t$.
        Note that $R_0$ is an $S$-avoiding path of length at most $p\leq2r+1$.
        Since $\rho_{2r+1}^G[\psi_x^{-1}(i),S]=\rho_{2r+1}^G[\psi_z^{-1}(i),S]$, $G$ has an $S$-avoiding path $R'_0$ between $\psi_z^{-1}(i)$ and $t$ having length at most that of $R_0$.
        By substituting $R_0$ with $R'_0$ from $R$, we obtain a walk of $G$ between $\psi_z^{-1}(i)$ and $w$ having length at most $p$, contradicting the assumption that $\psi_z^{-1}(i)$ is nonadjacent to $w$ in $G^p$, and this proves the claim.
    \end{subproof}
    
    Since $\abs{\kappa_2}\geq f_{\mathrm{cl}}(r',\varepsilon)\cdot f_{\mathrm{apx}}(r,d,\varepsilon)^\varepsilon\cdot k^{2\varepsilon}+d^2/4+s+2$, by Claim~\ref{clm:new2}, $\kappa_2\setminus\{z\}$ has a subset $\kappa_3$ of size at least $f_{\mathrm{cl}}(r',\varepsilon)\cdot f_{\mathrm{apx}}(r,d,\varepsilon)^\varepsilon\cdot k^{2\varepsilon}+s+1$ such that for each vertex $u\in\kappa_3$, $G^p[B_u^*\cup(B_z\setminus B_z^*)]$ is isomorphic to $G^p[B_z\cup\{z\}]$, which is isomorphic to a graph in $\mathcal{F}$.
    
    We now show that $\abs{D}\geq\abs{\kappa_3}$.
    Since $B_u^*\cup(B_z\setminus B_z^*)\subseteq Z\setminus\{z\}$ and $D$ is a $(p,r,\mathcal{F})$-cover of $Z\setminus\{z\}$ in $G$, for each vertex $u\in\kappa_3$, there exist vertices $x_u\in B_u^*\cup(B_z\setminus B_z^*)$ and $d_u\in D$ with $\dist_G(x_u,d_u)\leq r$.
    Observe that $x_u\in\psi_u^{-1}\circ\psi_z(B_z^*)$, because $B_z\setminus B_z^*\subseteq B_z\subseteq Z\setminus(N_G^r[D]\cup\{z\})$.
    We take an arbitrary path $P_u$ of $G$ between $x_u$ and $d_u$ having length at most~$r$.
    
    To show that $\abs{D}\geq\abs{\kappa_3}$, it suffices to show that for distinct $u,u'\in\kappa_3$, $d_u\neq d_{u'}$, because $\{d_u,d_{u'}\}\subseteq D$.
    We show this by the following two claims.
    
    \begin{CLM}\label{clm:new3}
        For each $u\in\kappa_3$, $V(P_u)\cap(S\cup X_{\mathrm{cl}})=\emptyset$.
    \end{CLM}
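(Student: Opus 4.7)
The proof proceeds by contradiction: assume $V(P_u)\cap (S\cup X_{\mathrm{cl}})\neq \emptyset$, and let $y$ be the vertex in that intersection nearest to $x_u$ along $P_u$. Set $Q:=P_u[x_u,y]$, so that $|Q|\le r$ and every vertex of $Q$ other than $y$ lies outside $S\cup X_{\mathrm{cl}}$. Since $x_u\in B_u^*\subseteq \mathcal{B}_u\subseteq V(G)\setminus S$, and since $u\in\kappa_3\subseteq\kappa_2$ forces $\dist_{G\setminus S}(\mathcal{B}_u,X_{\mathrm{cl}})>r$, we already have $x_u\notin S\cup X_{\mathrm{cl}}$, hence $x_u\neq y$. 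I then split on whether $y\in S$.

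If $y\in X_{\mathrm{cl}}\setminus S$, then $Q$ lies entirely in $V(G)\setminus S$, so it is a path of $G\setminus S$ from $x_u\in\mathcal{B}_u$ to $y\in X_{\mathrm{cl}}$ of length at most $r$, directly contradicting $\dist_{G\setminus S}(\mathcal{B}_u,X_{\mathrm{cl}})>r$.

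The harder case is $y\in S$. Here $Q$ is an $S$-avoiding path of length at most $r\le 2r+1$, so $\rho_{2r+1}^G[x_u,S](y)\le r$. Writing $i:=\psi_u(x_u)\in \psi_u(B_u^*)=\psi_z(B_z^*)$ and $x_z:=\psi_z^{-1}(i)\in B_z^*$, the profile equality $\rho_{2r+1}^G[\psi_u^{-1}(i),S]=\rho_{2r+1}^G[\psi_z^{-1}(i),S]$, guaranteed by $u,z\in\kappa_1$ via the signature matching in the definition of $\mathcal{A}_u$, yields an $S$-avoiding path $Q'$ of length at most $|Q|\le r$ from $x_z$ to $y$. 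Since $(G\setminus S)^p[B_u^*]$ and $(G\setminus S)^p[B_z^*]$ are connected on at most $d$ vertices, there are walks of length at most $p(d-1)$ in $G\setminus S$ from $u$ to $x_u$ and from $z$ to $x_z$; concatenating yields $S$-avoiding walks $R_u$ and $R_z$ in $G$ of length at most $p(d-1)+r$ from $u$ and $z$ respectively to $y$.

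To close the argument, my plan is to turn these two $S$-avoiding walks into a short walk between $u$ and $z$ lying entirely in $G\setminus S$. The natural route is to exploit once more the $(2r+1)$-projection-profile equalities attached to $u,z\in\kappa_1$ and the pattern isomorphism $\psi_u^{-1}\circ\psi_z$, in order to replace the final segment of $R_u$ ending at $y$ by a matching segment of $R_z$, producing an $S$-avoiding walk from $u$ to $z$ of length at most $2p(d-1)+2r<r'=2pd+3r$. This would contradict the distance-$r'$ independence of $L$ in $G\setminus S$ and finish the claim. I expect this rerouting step to be the main obstacle: both $R_u$ and $R_z$ terminate in $S$, so one must carefully use the equality of shortest $S$-avoiding path lengths to identify a common penultimate attachment point in $V(G)\setminus S$, leveraging $p\le 2r+1$ and the $r'=2pd+3r$ slack built into the construction of $\kappa_1$.
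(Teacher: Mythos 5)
Your handling of the $X_{\mathrm{cl}}$ part is fine: if the first vertex $y$ of $V(P_u)\cap(S\cup X_{\mathrm{cl}})$ along $P_u$ lies in $X_{\mathrm{cl}}\setminus S$, the prefix is a path of $G\setminus S$ of length at most $r$ from $x_u\in B_u^*\subseteq\mathcal{B}_u$ to $X_{\mathrm{cl}}$, contradicting $u\notin\kappa_1'$. But the case $y\in S$ has a genuine gap, and the route you sketch cannot be completed. Having $S$-avoiding walks of length at most $p(d-1)+r$ from both $u$ and $z$ to the \emph{same vertex} $y\in S$ yields no contradiction with the distance-$r'$ independence of $L$ in $G\setminus S$: that independence is asserted only after deleting $S$, and two vertices of $L$ may perfectly well both be adjacent to a common vertex of $S$. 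Concatenating $R_u$ and $R_z$ produces a walk through $y\in S$, and there is no "common penultimate attachment point" outside $S$ to be found --- the two $S$-avoiding paths guaranteed by the equal projection profiles need not share any vertex of $V(G)\setminus S$. So the rerouting step you flag as the main obstacle is not an obstacle to be overcome; it is where the argument breaks.

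The tell is that your proof never uses the other endpoint $d_u\in D$ of $P_u$. The intended contradiction in this case is not with the independence of $L$ but with $B_z\subseteq Z\setminus(N_G^r[D]\cup\{z\})$: take the first vertex $q$ of $V(P_u)\cap S$ along $P_u$, let $P_1$ be the ($S$-avoiding) prefix from $x_u$ to $q$, use the equality $\rho_{2r+1}^G[x_u,S]=\rho_{2r+1}^G[\psi_z^{-1}(\psi_u(x_u)),S]$ to replace $P_1$ by an $S$-avoiding path $P_2$ from $\psi_z^{-1}(\psi_u(x_u))\in B_z^*$ to $q$ of no greater length, and then keep the \emph{suffix} of $P_u$ from $q$ to $d_u$ unchanged (it is allowed to pass through $S$, since the contradiction lives in $G$, not $G\setminus S$). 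This yields a walk of length at most $r$ from a vertex of $B_z^*\subseteq B_z\cup\{z\}$ to $d_u\in D$, contradicting that $B_z$ avoids $N_G^r[D]$. You need to redo the $y\in S$ case along these lines.
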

    \begin{subproof}
        Let $u$ be a vertex in $\kappa_3$.
        Suppose for contradiction that $V(P_u)\cap S\neq\emptyset$.
        Let $q$ be the vertex in $V(P_u)\cap S$ such that $\dist_{P_u}(x_u,q)$ is minimum.
        Let $P_1$ be the subpath of $P_u$ between $x_u$ and $q$.
        Note that $P_1$ is an $S$-avoiding path of length at most $r$.
        Since $\{u,z\}\subseteq\kappa_2\subseteq\kappa_1$, $G$ has an $S$-avoiding path $P_2$ between $\psi_z^{-1}\circ\psi_u(x_u)$ and $q$ having length at most that of $P_1$.
        By substituting $P_1$ with $P_2$ from $P_u$, we obtain a walk of $G$ between $\psi_z^{-1}\circ\psi_u(x_u)\in B_z^*\subseteq B_z\cup\{z\}$ and $d_u$ having length at most $r$, contradicting the assumption that $B_z\cap N^r_G[D]=\emptyset$.
        Hence, $V(P_u)\cap S=\emptyset$.
        
        Since $u\notin\kappa'_1$ and $B_u\in\mathcal{A}_u$, we have that
        \begin{equation*}
            \dist_{G\setminus S}(x_u,X_{\mathrm{cl}})\geq\dist_{G\setminus S}(B_u^*,X_{\mathrm{cl}})\geq\dist_{G\setminus S}(\mathcal{B}_u,X_{\mathrm{cl}})>r.
        \end{equation*}
        Since $P_u$ is a path of $G\setminus S$ having length at most $r$, $V(P_u)\cap X_{\mathrm{cl}}=\emptyset$.
    \end{subproof}
    
    \begin{CLM}\label{clm:new4}
        For distinct $u,u'\in\kappa_3$, $d_u\neq d_{u'}$.
    \end{CLM}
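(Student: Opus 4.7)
The plan is to derive a contradiction from the assumption $d_u = d_{u'}$ for distinct $u, u' \in \kappa_3$ by exhibiting a short walk between $u$ and $u'$ entirely inside $G \setminus S$, which would violate the fact that $L$ is distance-$r'$ independent in $G \setminus S$. The walk will be assembled from four pieces: from $u$ to $x_u$, from $x_u$ to $d_u$ via $P_u$, from $d_{u'} = d_u$ to $x_{u'}$ via the reverse of $P_{u'}$, and finally from $x_{u'}$ to $u'$.

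First I would observe that $x_u$ actually lies in $B_u^*$, not in $B_z \setminus B_z^*$. This was already noted just before the claim: since $B_z \setminus B_z^* \subseteq B_z \subseteq Z \setminus N_G^r[D]$, no vertex of $B_z \setminus B_z^*$ can be within distance $r$ of any vertex of $D$, so $x_u \in \psi_u^{-1}\circ\psi_z(B_z^*) = B_u^*$. Because $(G\setminus S)^p[B_u^*]$ is connected by definition of $B_u^*$ and has at most $d$ vertices, the two vertices $u$ and $x_u$ are joined by a path in $(G \setminus S)^p[B_u^*]$ of length at most $d - 1$, which lifts to a walk in $G \setminus S$ between $u$ and $x_u$ of length at most $p(d-1)$. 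The same argument applied to $u'$ yields a walk in $G \setminus S$ between $u'$ and $x_{u'}$ of length at most $p(d-1)$.

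Second, by Claim~\ref{clm:new3} the path $P_u$ satisfies $V(P_u) \cap S = \emptyset$, so $P_u$ is a path of $G \setminus S$ from $x_u$ to $d_u$ of length at most $r$, and similarly for $P_{u'}$. Concatenating the four pieces inside $G \setminus S$ (using $d_u = d_{u'}$ to glue the two middle pieces), we obtain a walk of $G \setminus S$ from $u$ to $u'$ of length at most
\[
    2p(d-1) + 2r \leq 2pd + 3r = r'.
\]
This gives $\dist_{G \setminus S}(u, u') \leq r'$, which contradicts the distance-$r'$ independence of $L$ in $G \setminus S$ and establishes $d_u \neq d_{u'}$.

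The verification is essentially routine once the right pieces are in hand; the only slightly subtle point is confirming $x_u \in B_u^*$, which is exactly where the choice of $B_z$ as lying outside $N_G^r[D]$ is used. The arithmetic $2p(d-1) + 2r \leq 2pd + 3r$ is precisely why $r'$ was set to $2pd + 3r$, so no slack needs to be squeezed.
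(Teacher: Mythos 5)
Your proof is correct and follows essentially the same route as the paper's: establish $x_u\in B_u^*$, use connectivity of $(G\setminus S)^p[B_u^*]$ to connect $u$ to $x_u$ by a walk of length at most $p(d-1)$ in $G\setminus S$, invoke Claim~\ref{clm:new3} to keep $P_u$ and $P_{u'}$ inside $G\setminus S$, and concatenate to contradict the distance-$r'$ independence of $L$. The bound $2p(d-1)+2r\leq r'$ matches the paper's computation exactly.
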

    \begin{subproof}
        Suppose not.
        Since $(G\setminus S)^p[B_u^*]$ is connected and $\abs{B_u^*}\leq d$, $G\setminus S$ has a path $Q$ between $u$ and $x_u$ having length at most $p(d-1)$.
        Similarly, $G\setminus S$ has a path $Q'$ between $u'$ and $x_{u'}$ having length at most $p(d-1)$.
        By Claim~\ref{clm:new3}, neither $P_u$ nor $P_{u'}$ has a vertex in $S$.
        By concatenating $Q$, $P_u$, $P_{u'}$, and $Q'$, we obtain a walk of $G\setminus S$ between $u$ and $u'$ having length at most $2p(d-1)+2r\leq r'$, contradicting the assumption that $L$ is distance-$r'$ independent in $G\setminus S$.
    \end{subproof}
    
    We now construct a $(p,r,\mathcal{F})$-cover of $Z\setminus\{z\}$ in $G$ having size less than~$\abs{D}$.
    Let $D_{\mathrm{sell}}:=\{d_u:u\in\kappa_3\}$, $D_{\mathrm{buy}}:=M_{r'}^G(z,X_{\mathrm{cl}})\cup S$, and $D':=(D\setminus D_{\mathrm{sell}})\cup D_{\mathrm{buy}}$.
    By Claim~\ref{clm:new4},
    \begin{align*}
        \abs{D_{\mathrm{sell}}}&=\abs{\kappa_3}\geq f_{\mathrm{cl}}(r',\varepsilon)\cdot f_{\mathrm{apx}}(r,d,\varepsilon)^\varepsilon\cdot k^{2\varepsilon}+s+1,\\
        \abs{D_{\mathrm{buy}}}&\leq\abs{M_{r'}^G(z,X_{\mathrm{cl}})}+\abs{S}\leq f_{\mathrm{cl}}(r',\varepsilon)\cdot f_{\mathrm{apx}}(r,d,\varepsilon)^\varepsilon\cdot k^{2\varepsilon}+s.
    \end{align*}
    Since $D_{\mathrm{sell}}\subseteq D$, we have that $\abs{D'}<\abs{D}$.
    
    To derive a contradiction, we show the following claim.
    
    \begin{CLM}\label{clm:new5}
        $D'$ is a $(p,r,\mathcal{F})$-cover of $Z\setminus\{z\}$ in $G$.
    \end{CLM}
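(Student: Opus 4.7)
The plan is to assume for contradiction that $D'$ fails to be a $(p,r,\mathcal{F})$-cover of $Z\setminus\{z\}$: there exists $B\subseteq(Z\setminus\{z\})\setminus N_G^r[D']$ with $G^p[B]$ isomorphic to some $F\in\mathcal{F}$. Because $D$ is a $(p,r,\mathcal{F})$-cover of $Z\setminus\{z\}$ containing $D\setminus D_{\mathrm{sell}}\subseteq D'$, the set $B$ must meet $N_G^r[D_{\mathrm{sell}}]$; fix $b\in B$ and $u\in\kappa_3$ with $\dist_G(b,d_u)\le r$. Since $D_{\mathrm{buy}}=S\cup M_{r'}^G(z,X_{\mathrm{cl}})\subseteq D'$, we simultaneously have $b\notin N_G^r[S]$ and $b\notin N_G^r[M_{r'}^G(z,X_{\mathrm{cl}})]$; the second of these is the constraint I aim to violate, by producing a vertex $y\in M_{r'}^G(z,X_{\mathrm{cl}})$ with $\dist_G(b,y)\le r$.

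First I would glue three short pieces into a single walk from $u$ to $b$ in $G\setminus S$. Using the $(G\setminus S)^p$-connectivity of $B_u^*$ together with the bound $\dist_{G\setminus S}(\mathcal{B}_u,X_{\mathrm{cl}})>r$ furnished by $u\in\kappa_3\subseteq\kappa_2$, I take a walk from $u$ to $x_u$ inside $B_u^*$ of length at most $p(d-1)$, realising each $(G\setminus S)^p$-edge by a shortest $G\setminus S$-path and checking that every such lift avoids $X_{\mathrm{cl}}$. Concatenating this with the reverse of $P_u$ (which avoids $S\cup X_{\mathrm{cl}}$ by Claim~\ref{clm:new3} and has length at most $r$) and with a shortest $b$-$d_u$ path $P$ in $G$ (which avoids $S$ because $b\notin N_G^r[S]$ and has length at most $r$) produces a walk $W$ from $u$ to $b$ in $G\setminus S$ of length at most $p(d-1)+2r\le r'$.

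To finish I would split on whether $V(P)\cap X_{\mathrm{cl}}$ is empty. In the non-empty case, let $y$ be the vertex of $V(P)\cap X_{\mathrm{cl}}$ nearest to $d_u$ along $P$: then $\dist_G(b,y)\le\abs{P}\le r$, and truncating $W$ at $y$ yields an $X_{\mathrm{cl}}$-avoiding path from $u$ to $y$ of length at most $r'$ in $G$, so $y\in M_{r'}^G(u,X_{\mathrm{cl}})=M_{r'}^G(z,X_{\mathrm{cl}})$ by $u,z\in\lambda$, contradicting $b\notin N_G^r[M_{r'}^G(z,X_{\mathrm{cl}})]$. In the empty case $W$ itself is $X_{\mathrm{cl}}$-avoiding except possibly at $b$; if $b\in X_{\mathrm{cl}}$ the same reasoning closes with $y:=b$, while if $b\notin X_{\mathrm{cl}}$ the pigeonhole principle applied to $\abs{B}\le d$ and the large size of $\kappa_3$ produces two distinct $u,u'\in\kappa_3$ whose associated path-endpoints in $B$ coincide, so splicing two copies of $W$ through the shared vertex yields a $G\setminus S$-walk between $u$ and $u'$ of length at most $2(p(d-1)+2r)$, which collides with the distance-$r'$ independence of $L$ once the constants are checked. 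The principal obstacle throughout is justifying that the interior walk from $u$ to $x_u$ may be realised avoiding $X_{\mathrm{cl}}$: this hinges on the slack between $\dist_{G\setminus S}(\mathcal{B}_u,X_{\mathrm{cl}})>r$ and the maximum edge-length $p$ in $(G\setminus S)^p$, which must be traced carefully through the definition $r'=2pd+3r$.
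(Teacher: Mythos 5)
Your construction of the walk $u\to x_u\to d_u\to b$ and the ``first vertex of $X_{\mathrm{cl}}$ on the walk'' argument handle the non-empty case correctly and in the same spirit as the paper, but the case where the $b$--$d_u$ path misses $X_{\mathrm{cl}}$ and $b\notin X_{\mathrm{cl}}$ contains a genuine gap. Your pigeonhole fallback presupposes that many vertices $u'\in\kappa_3$ have $d_{u'}$ within distance $r$ of the new bad set $B$, so that two of them can be assigned the same endpoint in $B$; but the covering property of $D$ only guarantees \emph{one} such $u'$, since $B$ need only meet $N_G^r[D]$, not $N_G^r[d_{u'}]$ for every $u'\in\kappa_3$. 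Moreover, even granting two such vertices, splicing the two walks through a shared vertex of $B$ gives length $2(p(d-1)+2r)=2p(d-1)+4r$, which exceeds $r'=2pd+3r$ whenever $r>2p$, so the constants do not check out either.

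The missing idea is to keep walking past $b$ rather than stopping at $B$. Since $X\subseteq X_{\mathrm{cl}}$ is itself a $(p,r,\mathcal{F})$-cover of $Z$ and $B\subseteq Z$ induces a member of $\mathcal{F}$ in $G^p$, some $w_2\in B$ lies within distance $r$ of $X_{\mathrm{cl}}$; connectivity of $G^p[B]$ gives a walk of length at most $p(d-1)$ from $b$ to $w_2$ inside $B$ (in segments of length at most $p$ with both ends in $B$), and appending the path of length at most $r$ from $w_2$ to $X_{\mathrm{cl}}$ yields a $u$--$X_{\mathrm{cl}}$ walk of total length at most $2p(d-1)+3r\le r'$. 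This is exactly what the paper does: it then takes the first vertex $b^*$ of $S\cup X_{\mathrm{cl}}$ along the resulting path, checks that $\dist_G(b^*,B)\le r$ (each tail segment has length at most $r$ with an end in $B$, or length at most $p\le 2r+1$ with both ends in $B$), concludes $b^*\notin D'$ while also $b^*\in M_{r'}^G(z,X_{\mathrm{cl}})\cup S\subseteq D'$, and obtains the contradiction uniformly, with no case split on where $X_{\mathrm{cl}}$ is first met.
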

    \begin{subproof}
        Suppose not.
        Then there is a set $B'\subseteq Z\setminus(N^r_G[D']\cup\{z\})$ such that $G^p[B']$ is isomorphic to a graph in $\mathcal{F}$.
        Since $D$ is a $(p,r,\mathcal{F})$-cover of $Z\setminus\{z\}$ in $G$ and $D\setminus D'\subseteq D_{\mathrm{sell}}$, $D_{\mathrm{sell}}$ contains a vertex $d_u$ for some $u\in\kappa_3$ with $\dist_G(d_u,B')\leq r$.
        
        Since $(G\setminus S)^p[B_u^*]$ is connected and $\abs{B_u^*}\leq d$, $G\setminus S$ has a path $Q_0$ of length at most $p(d-1)$ between $u$ and $x_u$.
        More specifically, $Q_0$ is a concatenation of paths $Q_0^1,\ldots,Q_0^{t_1}$ for $t_1\leq d-1$ such that for each $i\in[t_1]$, the length of $Q_0^i$ is at most $p$ and the ends of $Q_0^i$ are in $B_u^*$.
        Since $\dist_G(d_u,B')\leq r$, $G$ has a path $Q_1$ of length at most $r$ between $d_u$ and $w_1\in B'$.
        Since $X_{\mathrm{cl}}$ is a $(p,r,\mathcal{F})$-cover of $Z$ in $G$, $G$ has a path $Q_2$ of length at most $r$ between $w_2\in B'$ and $x\in X_{\mathrm{cl}}$.
        Since $G^p[B']$ is isomorphic to a connected graph in $\mathcal{F}$ and $\abs{B'}\leq d$, $G$ has a path $R$ of length at most $p(d-1)$ between $w_1$ and $w_2$.
        More specifically, $R$ is a concatenation of paths $R_1,\ldots,R_{t_2}$ for $t_2\leq d-1$ such that for each $i\in[t_2]$, the length of $R_i$ is at most $p$ and the ends of $R_i$ are in $B'$.
        By concatenating $Q_0$, $P_u$, $Q_1$, $R$, and $Q_2$, we obtain a walk of $G$ between $u$ and $x$ having length at most
        \begin{align*}
            &\abs{E(Q_0)}+\abs{E(P_u)}+\abs{E(Q_1)}+\abs{E(R)}+\abs{E(Q_2)}\\
            &\leq p(d-1)+r+r+p(d-1)+r=2p(d-1)+3r\leq r'.
        \end{align*}
        Let $P$ be a path of $G$ between $u$ and $x$ consisting of edges of the walk.
        Let~$b$ be the vertex in $V(P)\cap(S\cup X_{\mathrm{cl}})$ such that $\dist_P(u,b)$ is minimum.
        Such~$b$ exists, because $x\in X_{\mathrm{cl}}$.
        
        We first show that $\dist_G(b,B')\leq r$.
        Note that $Q_0$ has no vertex in $S$.
        Since $u\notin\kappa'_1$, $\dist_{G\setminus S}(\mathcal{B}_u,X_{\mathrm{cl}})>r$.
        Since $p\leq2r+1$, for some $j\in[t_1]$, if $Q_0^j$ has a vertex in $X_{\mathrm{cl}}$, then $\dist_{G\setminus S}(\mathcal{B}_u,X_{\mathrm{cl}})\leq\dist_{G\setminus S}(B_u^*,X_{\mathrm{cl}})\leq r$, a contradiction.
        Therefore, $Q_0$ has no vertex in $X_{\mathrm{cl}}$.
        By Claim~\ref{clm:new3}, $V(P_u)\cap(S\cup X_{\mathrm{cl}})=\emptyset$.
        These imply that $b\in V(Q_1)\cup V(R)\cup V(Q_2)$.
        If $b\in V(Q_1)\cup V(Q_2)$, then $\dist_G(b,B')\leq r$ clearly.
        Since $p\leq2r+1$, for some $j\in[t_2]$, if $b\in R_j$, then $\dist_G(b,B')\leq r$.
        Therefore, $\dist_G(b,B')\leq r$.
        
        Since $B'\subseteq Z\setminus(N^r_G[D']\cup\{z\})$, $b$ is not contained in $D'$.
        Since $S\subseteq D_{\mathrm{buy}}\subseteq D'$, $b$ is contained in $X_{\mathrm{cl}}\setminus S$.
        Since the subpath of $P$ between $u$ and~$b$ is an $X_{\mathrm{cl}}$-avoiding path of length at most $r'$, $b$ is contained in $M_{r'}^G(u,X_{\mathrm{cl}})$.
        Since $\{u,z\}\subseteq\kappa_2\subseteq\lambda$ where $\lambda$ is an equivalence class of $\sim$, $M_{r'}^G(u,X_{\mathrm{cl}})$ and $M_{r'}^G(z,X_{\mathrm{cl}})$ are same.
        Therefore, $b\in M_{r'}^G(z,X_{\mathrm{cl}})\subseteq D'$, a contradiction.
    \end{subproof}
    
    Claim~\ref{clm:new5} contradicts the assumption that $D$ is a minimum-size $(p,r,\mathcal{F})$-cover of $Z\setminus\{z\}$ in $G$.
    Therefore, $Z\setminus\{z\}$ is a $(p,r,\mathcal{F})$-core of $A$ in $G$.
    We conclude the proof by scaling $\varepsilon$ to $\varepsilon/C$ throughout the reasoning.
\end{proof}

We present a linear kernel for the \textsc{Annotated $(p,r,\mathcal{F})$-Covering} problem on every class of graphs with bounded expansion, which generalizes the linear kernel of~\cite{KnS2016}.

\begin{THM}\label{thm:ker1'}
    For every class $\mathcal{C}$ of graphs with bounded expansion, there exists a function $f_{\mathrm{cov}}:\mathbb{N}\times\mathbb{N}\to\mathbb{N}$ satisfying the following.
    For every nonempty family $\mathcal{F}$ of connected graphs with at most~$d$ vertices and $p,r,r_0\in\mathbb{N}$ with $\max\{p,r_0\}\leq2r+1$, there exists a polynomial-time algorithm that given a graph $G\in\mathcal{C}$, $A\subseteq V(G)$, and $k\in\mathbb{N}$, either
    \begin{enumerate}
        \item[$\bullet$]    correctly decides that $\gamma_{p,r}^\mathcal{F}(G,A)>k$, or
        \item[$\bullet$]    outputs sets $Y\subseteq V(G)$ of size at most $f_{\mathrm{cov}}(r,d)\cdot k$ and $Z\subseteq A\cap Y$ such that $\gamma_{p,r}^\mathcal{F}(G[Y],Z)=\gamma_{p,r}^\mathcal{F}(G,A)$.
    \end{enumerate}
\end{THM}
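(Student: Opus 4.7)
The plan is to follow the two-step strategy of Theorem~\ref{thm:ker1} verbatim, replacing every appeal to a nowhere-dense lemma by its bounded-expansion counterpart. Concretely, I would first prove a bounded-expansion analog of Lemma~\ref{lem:core}: given $(G,A,k)$ with $G \in \mathcal{C}$, either certify $\gamma_{p,r}^\mathcal{F}(G,A) > k$ or output a $(p,r,\mathcal{F})$-core $Z \subseteq A$ of linear size $f_{\mathrm{core}}(r,d) \cdot k$. As in the nowhere-dense case, this reduces by iteration (starting from $Z = A$) to a one-vertex shaving lemma that, whenever $|Z| > f_{\mathrm{core}}(r,d) \cdot k$, finds a vertex $z \in Z$ such that $Z \setminus \{z\}$ remains a core.

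The shaving argument is the adaptation of Lemma~\ref{lem:core recursion}. I would first compute an approximate $(p,r,\mathcal{F})$-cover $X$ of linear size in $k$ (see the obstacle below), rejecting if $|X|$ is too large. Then, setting $r' := 2pd + 3r$, I would take an $(r', f_{\mathrm{cl}}(r'))$-close superset $X_{\mathrm{cl}} \supseteq X$ via Lemma~\ref{lem:cl'}, so that $|X_{\mathrm{cl}}| = O(|X|)$, and by Lemma~\ref{lem:proj'} the number of distinct $r'$-projection profiles on $X_{\mathrm{cl}}$ is $O(|X_{\mathrm{cl}}|)$. Since bounded-expansion classes are nowhere dense, Theorem~\ref{thm:uqw} still applies and yields a small separator $S$ together with a large distance-$r'$-independent set $L$ lying in a single profile class of $Z \setminus X_{\mathrm{cl}}$. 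The pattern-matching argument of Lemma~\ref{lem:core recursion}---classifying each witness by an isomorphism type on at most $d$ vertices together with the $(2r+1)$-projection profile on $S$---then transfers verbatim and produces the desired $z$.

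Once the linear core $Z$ is in hand, $Y$ is built in the same way as in the proof of Theorem~\ref{thm:ker1}: group $V(G)\setminus Z$ by equal $(2r+1)$-projection profiles on $Z$ and pick one representative per class (at most $O(|Z|)$ classes by Lemma~\ref{lem:proj'}), then take an $r_0$-path closure of $Z' := Z \cup \{\text{representatives}\}$ via Lemma~\ref{lem:pth'}. The linearity of both lemmas yields $|Y| = O(|Z|) = O(k)$. The identity $\gamma_{p,r}^\mathcal{F}(G[Y],Z) = \gamma_{p,r}^\mathcal{F}(G,A)$ is established exactly as in Claim~\ref{clm:dom}, invoking the hypothesis $p \leq 2r+1$ to ensure that induced subgraphs of $G[Y]^p$ on subsets of $Z'$ coincide with those of $G^p$. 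The freedom to choose the closure radius as any $r_0 \leq 2r+1$ is included so that the same construction can also be reused, with $r_0$ tuned differently, inside the packing kernel of Section~\ref{sec:ker2}, in the spirit of Observation~\ref{obs:obs}.

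The main obstacle I anticipate is producing an approximate cover $X$ of genuinely linear size. Theorem~\ref{thm:approx hitting}, when fed through Proposition~\ref{prop:apx}, yields only $|X| = O(\tau \ln \tau)$, which would propagate to an $O(k \log k)$ kernel rather than $O(k)$. To close this gap on bounded-expansion classes, I would replace Proposition~\ref{prop:apx} by a constant-factor approximation modelled on Dvo\v{r}\'{a}k's algorithm for \textsc{Distance-$r$ Dominating Set}: using the bounded $(2pd+2r)$-weak coloring number supplied by Lemma~\ref{lem:zhu} (with $\varepsilon = 0$ possible in the bounded-expansion regime) together with a charging argument on $\mathcal{F}$-witnesses in $G^p$, one selects in polynomial time a cover of size $O(\gamma_{p,r}^\mathcal{F}(G,A))$. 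With such a linear-factor approximation in place, every quantity in the cascade above stays free of logarithmic factors, and the target bound $|Y| \leq f_{\mathrm{cov}}(r,d) \cdot k$ is reached.
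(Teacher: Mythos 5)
Your proposal matches the paper's proof: Theorem~\ref{thm:ker1'} is obtained from the proof of Theorem~\ref{thm:ker1} by substituting the bounded-expansion Lemmas~\ref{lem:core'},~\ref{lem:proj'}, and~\ref{lem:pth'}, and Lemma~\ref{lem:core'} in turn swaps the VC-dimension approximation (Proposition~\ref{prop:apx}) for the Dvo\v{r}\'{a}k-style weak-coloring-number algorithm of Proposition~\ref{prop:approx1} applied with $r_0=2r$ --- exactly the constant-factor approximation you propose to build, the point being that $\alpha_{p,2r}^{\mathcal{F}}(G,A)\le\gamma_{p,r}^{\mathcal{F}}(G,A)$ and $\wc_{2r}$ is constant on a bounded-expansion class. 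The only detail to adjust is that the path closure of $Z'$ should be taken with radius $2r+1$ as in Theorem~\ref{thm:ker1}, not $r_0$ (which may be smaller than $p$ or $r$); the parameter $r_0$ plays no role in the construction itself.
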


To prove Theorem~\ref{thm:ker1'}, we will use the following lemma.

\begin{LEM}\label{lem:core'}
    For every class $\mathcal{C}$ of graphs with bounded expansion, there exists a function $f_{\mathrm{core}}:\mathbb{N}\times\mathbb{N}\to\mathbb{N}$ satisfying the following.
    For every nonempty family $\mathcal{F}$ of connected graphs with at most $d$ vertices and $p,r\in\mathbb{N}$ with $p\leq2r+1$, there exists a polynomial-time algorithm that given a graph $G\in\mathcal{C}$, $A\subseteq V(G)$, and $k\in\mathbb{N}$, either
    \begin{enumerate}
        \item[$\bullet$]    correctly decides that $\gamma_{p,r}^\mathcal{F}(G,A)>k$, or
        \item[$\bullet$]    outputs a $(p,r,\mathcal{F})$-core $Z$ of $A$ with $\abs{Z}\leq f_{\mathrm{core}}(r,d)\cdot k$.
    \end{enumerate}
\end{LEM}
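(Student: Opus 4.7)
The plan is to mirror the proof of Lemma~\ref{lem:core} step by step, replacing every sublinear-overhead tool for nowhere dense classes by its linear counterpart for bounded-expansion classes. Specifically, I first establish the following bounded-expansion analog of Lemma~\ref{lem:core recursion}: given $G\in\mathcal{C}$, $A\subseteq V(G)$, $k\in\mathbb{N}$, and a $(p,r,\mathcal{F})$-core $Z$ of $A$ in $G$ with $\abs{Z}>f_{\mathrm{core}}(r,d)\cdot k$, one can in polynomial time either correctly decide that $\gamma_{p,r}^\mathcal{F}(G,A)>k$, or output a vertex $z\in Z$ such that $Z\setminus\{z\}$ remains a $(p,r,\mathcal{F})$-core of $A$ in $G$. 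Since $A$ is itself a $(p,r,\mathcal{F})$-core of $A$ in $G$, iterating this recursion step at most $\abs{A}$ times and stopping once $\abs{Z}\le f_{\mathrm{core}}(r,d)\cdot k$ yields Lemma~\ref{lem:core'}.

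For the recursion step, I rerun the proof of Lemma~\ref{lem:core recursion} essentially verbatim, performing the substitutions of Lemma~\ref{lem:proj'} for Lemma~\ref{lem:proj}, Lemma~\ref{lem:cl'} for Lemma~\ref{lem:cl}, Lemma~\ref{lem:pth'} for Lemma~\ref{lem:pth}, and replacing Proposition~\ref{prop:apx} by a linear-factor analog: a polynomial-time algorithm that, given $(G,A)$ with $G\in\mathcal{C}$, outputs a $(p,r,\mathcal{F})$-cover of $A$ in $G$ of size at most $f_{\mathrm{apx}}(r,d)\cdot\gamma_{p,r}^\mathcal{F}(G,A)$. Setting $r':=2pd+3r$ as before, the approximate cover $X$ then has size $O(k)$; the closure $X_{\mathrm{cl}}\supseteq X$ produced by Lemma~\ref{lem:cl'} has size $O(k)$ and is moreover $(r',t)$-close for a \emph{constant} $t=f_{\mathrm{cl}}(r')$; and the $r'$-projection-profile equivalence relation $\sim$ on $Z\setminus X_{\mathrm{cl}}$ has $O(\abs{X_{\mathrm{cl}}})=O(k)$ classes by Lemma~\ref{lem:proj'}. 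Crucially, $\abs{M_{r'}^G(z,X_{\mathrm{cl}})}\le t$ is now a constant, so the threshold parameter $\xi$ and the target size $m$ appearing in the proof of Lemma~\ref{lem:core recursion} collapse to constants depending only on $r$ and $d$. With $f_{\mathrm{core}}(r,d)$ chosen large enough, the hypothesis $\abs{Z}>f_{\mathrm{core}}(r,d)\cdot k$ forces an equivalence class of $\sim$ of size more than $m^{p(r')}$; Theorem~\ref{thm:uqw} then supplies a constant-size separator $S$ and a distance-$r'$ independent subset $L$ of size at least $m$, and the pattern-matching argument via analogs of Claims~\ref{clm:new1}--\ref{clm:new5} carries over unchanged to produce the desired vertex $z$.

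The main obstacle is establishing the linear-factor approximation that replaces Proposition~\ref{prop:apx}, since the generic Br\"onnimann--Goodrich bound of Theorem~\ref{thm:approx hitting} delivers only an $O(\log\mathsf{OPT})$-factor approximation and would inflate the final kernel to $O(k\log k)$ rather than $O(k)$. I would obtain the required $O(1)$-factor approximation by adapting Dvo\v{r}\'ak's weak-coloring-number technique for \textsc{Distance-$r$ Dominating Set}: on classes of bounded expansion, $\wc_{2r+1}$ is bounded by an absolute constant depending only on $\mathcal{C}$ and $r$, so for a linear ordering witnessing this bound, each set $B$ with $G^p[B]$ isomorphic to a member of $\mathcal{F}$ admits only constantly many candidate $(p,r,\mathcal{F})$-hitters among the $(2r+1)$-weakly-reachable ancestors of its vertices, and a greedy rounding along the ordering produces a $(p,r,\mathcal{F})$-cover of size $O(\gamma_{p,r}^\mathcal{F}(G,A))$. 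Once this approximation is in place, all remaining steps reduce to routine bookkeeping of constants, and the claimed linear-size core follows.
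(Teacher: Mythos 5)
Your proposal is correct and follows essentially the same route as the paper: the paper proves Lemma~\ref{lem:core'} by rerunning Lemmas~\ref{lem:core} and~\ref{lem:core recursion} with Lemmas~\ref{lem:proj'} and~\ref{lem:cl'} in place of their nowhere-dense counterparts and with a constant $\xi$. The one ``obstacle'' you flag --- a linear-factor replacement for Proposition~\ref{prop:apx} --- is resolved exactly as you suggest: the paper's Proposition~\ref{prop:approx1} (the greedy algorithm along a weak $r_0$-coloring order, generalizing Dvo\v{r}\'ak), applied with $r_0=2r$, yields a $(p,r,\mathcal{F})$-cover of size at most $(d\cdot\wc_{2r}(G)+1)^2\cdot\alpha_{p,2r}^\mathcal{F}(G,A)\leq O(1)\cdot\gamma_{p,r}^\mathcal{F}(G,A)$ on bounded-expansion classes, which is precisely the constant-factor approximation you set out to construct.
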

\begin{proof}
    It easily follows from the proofs of Lemmas~\ref{lem:core} and~\ref{lem:core recursion} by setting $\xi:=2\cdot f_{\mathrm{cl}}(r')+d^2/4+s+1$ and replacing Lemmas~\ref{lem:proj} and~\ref{lem:cl} and Proposition~\ref{prop:apx} with Lemmas~\ref{lem:proj'} and ~\ref{lem:cl'} and Proposition~\ref{prop:approx1} for $r_0=2r$, respectively.
\end{proof}

We can now easily prove Theorem~\ref{thm:ker1'} from the proof of Theorem~\ref{thm:ker1} by replacing Lemmas~\ref{lem:core},~\ref{lem:proj}, and~\ref{lem:pth} with Lemmas~\ref{lem:core'},~\ref{lem:proj'}, and~\ref{lem:pth'}, respectively.

\subsection{Proof of the first main theorem}

The kernel of Theorem~\ref{thm:ker1} always returns an instance $(G[Y],Z,k)$ for the \textsc{Annotated $(p,r,\mathcal{F})$-Covering} problem, even if the kernel gets an input $(G,V(G),k)$, which is basically an input for the \textsc{$(p,r,\mathcal{F})$-Covering}.
From the resulting instance $(G[Y],Z,k)$, however, we can prove Theorem~\ref{thm:ker1 origin} by constructing an instance $(G',V(G'),k+1)$ of size $O(k^{1+\varepsilon})$ which is equivalent to $(G,V(G),k)$ in polynomial time.

To do this, we will use the following two lemmas.
For a nonnegative integer $q$ and a nonempty family $\mathcal{G}$ of graphs, a graph $H$ is \emph{$(q,\mathcal{G})$-critical} if one of the following holds.
\begin{enumerate}
    \item[$\bullet$]    $H$ is an $1$-vertex graph and $\mathcal{G}$ contains an $1$-vertex graph.
    \item[$\bullet$]    $H$ has at least two vertices, $H^q$ has an induced subgraph isomorphic to a graph in $\mathcal{G}$, and for every vertex $v$ of $H$, $(H\setminus v)^q$ has no induced subgraph isomorphic to a graph in $\mathcal{G}$.
\end{enumerate}

The following lemma shows the existence of a $(q,\mathcal{G})$-critical graph for every positive integer $q$.

\begin{LEM}\label{lem:critical}
    Let $q$ be a positive integer and $\mathcal{G}$ be a nonempty family of graphs.
    Let $F$ be an arbitrary graph in $\mathcal{G}$ and $d:=\abs{V(F)}$.
    There exists a $(q,\mathcal{G})$-critical graph having at most $d(dq+1)/2$ vertices.
    Moreover, if every graph in $\mathcal{G}$ has at most $d$ vertices, then one can construct in polynomial time the $(q,\mathcal{G})$-critical graph.
\end{LEM}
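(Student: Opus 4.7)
The plan is to start with the $q$-subdivision $F_q$ of $F$ (the graph obtained by replacing each edge of $F$ with a path of length $q$) and iteratively delete vertices while preserving the property that the $q$-th power of what remains still contains an induced subgraph isomorphic to some graph in $\mathcal{G}$. What remains at the end is $(q,\mathcal{G})$-critical by definition.

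The key observation is that $(F_q)^q[V(F)]\cong F$. For distinct $u,v\in V(F)$: if $uv\in E(F)$, the subdivided edge provides a path of length exactly $q$, and no shorter $uv$-path exists in $F_q$, because any alternative $uv$-path must traverse an additional original vertex and so contributes at least $q$ more edges; thus $\dist_{F_q}(u,v)=q$. If $uv\notin E(F)$, then any $uv$-walk in $F_q$ projects onto a $uv$-walk of length at least $2$ in $F$, with each original edge contributing exactly $q$ steps, so $\dist_{F_q}(u,v)\geq 2q>q$. Hence $(F_q)^q[V(F)]\cong F\in\mathcal{G}$. Moreover $\abs{V(F_q)}=d+(q-1)\abs{E(F)}\leq d+(q-1)d(d-1)/2$, and a short calculation bounds this by $d(dq+1)/2$. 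The degenerate case $d=1$ is handled directly: $F$ itself is a $1$-vertex graph and is already $(q,\mathcal{G})$-critical by the first clause of the definition.

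Now set $H:=F_q$ and iterate: while some $v\in V(H)$ has the property that $(H\setminus v)^q$ still contains an induced subgraph isomorphic to a graph in $\mathcal{G}$, replace $H$ by $H\setminus v$. Upon termination, $H^q$ still contains such an induced subgraph, and either $\abs{V(H)}\geq 2$ with no removable vertex (matching the second clause of the definition), or $\abs{V(H)}=1$, in which case the existence of such an induced subgraph forces $\mathcal{G}$ to contain a $1$-vertex graph (matching the first clause). In either case, $H$ is $(q,\mathcal{G})$-critical and $\abs{V(H)}\leq\abs{V(F_q)}\leq d(dq+1)/2$.

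For the algorithmic claim, assume every graph in $\mathcal{G}$ has at most $d$ vertices, so only finitely many isomorphism types are relevant. Testing whether $J^q$ contains an induced subgraph isomorphic to a graph in $\mathcal{G}$ reduces to enumerating the $O(\abs{V(J)}^d)$ subsets of $V(J)$ of size at most $d$ and comparing each induced subgraph of $J^q$ against $\mathcal{G}$, all in polynomial time. Since the deletion loop runs for at most $\abs{V(F_q)}$ iterations and each iteration performs $O(\abs{V(H)})$ such tests, the overall running time is polynomial. The main work of the proof is the distance computation establishing $(F_q)^q[V(F)]\cong F$; once this is secured, the iterative-deletion scheme yields the critical graph and its size bound essentially by definition.
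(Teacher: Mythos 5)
Your proposal is correct and follows essentially the same route as the paper's proof: take the $q$-subdivision of $F$, verify the size bound, and greedily delete vertices until criticality is reached, with the same polynomial-time enumeration argument. Your verification that $(F_q)^q[V(F)]\cong F$ and your explicit treatment of the one-vertex case are slightly more detailed than the paper's, but the argument is the same.
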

\begin{proof}
    Let $F_0$ be the $q$-subdivision of $F$.
    Since $F$ has at most $d(d-1)/2$ edges,
    \begin{equation*}
        \abs{V(F_0)}\leq d+\frac{d(d-1)(q-1)}{2}=d\cdot\frac{dq-d-q+3}{2}\leq\frac{d(dq+1)}{2}.
    \end{equation*}
    
    Let $H$ be a graph which is initially set as $F_0$.
    Note that $H^q$ has an induced subgraph isomorphic to $F\in\mathcal{G}$.
    If $\abs{V(H)}=1$, then $H$ is $(q,\mathcal{G})$-critical.
    Otherwise, for each vertex $v$ of $H$, we check whether $(H\setminus v)^q$ has an induced subgraph isomorphic to a graph in $\mathcal{G}$.
    If $H$ has no such vertex, then $H$ is $(q,\mathcal{G})$-critical.
    Otherwise, we set $H$ by $H\setminus v$ and do the above process until either $\abs{V(H)}=1$ or $H$ has no such a vertex.
    It is readily seen that the resulting graph is $(q,\mathcal{G})$-critical graph and has at most $d(dq+1)/2$ vertices.
    Whole these processes work in polynomial time when every graph in $\mathcal{G}$ has at most $d$ vertices.
\end{proof}

We remark that the positive condition of $q$ in Lemma~\ref{lem:critical} is necessary, because if $q=0$ and $\mathcal{G}$ contains no $1$-vertex graph, then there exists no $(q,\mathcal{G})$-critical graph.

The following lemma shows that for a $(q,\mathcal{G})$-critical graph $H$ and every vertex $x$ of $H$, $\{x\}$ is a $(q,\lfloor q/2\rfloor,\mathcal{G})$-cover of $H$.

\begin{LEM}\label{lem:guard}
    Let $q$ be a positive integer and $\mathcal{G}$ be a nonempty family of graphs.
    If $H$ is a $(q,\mathcal{G})$-critical graph and $V(H)$ has a subset $B$ such that $H^q[B]$ is isomorphic to a graph in $\mathcal{G}$, then for every vertex $x$ of~$H$, $B$ contains a vertex in $N^{\lfloor q/2\rfloor}_H[x]$.
\end{LEM}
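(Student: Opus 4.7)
The plan is to argue by contradiction: assume some vertex $x\in V(H)$ has $B\cap N_H^{\lfloor q/2\rfloor}[x]=\emptyset$, and show that $(H\setminus x)^q[B]$ is still isomorphic to a graph in $\mathcal{G}$, contradicting the $(q,\mathcal{G})$-criticality of $H$.

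First I would dispose of the trivial case: if $\abs{V(H)}=1$, then by definition $\mathcal{G}$ contains a one-vertex graph and $B=V(H)$, so the statement holds immediately with $x$ itself in $B$. Henceforth assume $\abs{V(H)}\geq 2$, so that for every $v\in V(H)$, the graph $(H\setminus v)^q$ contains no induced subgraph isomorphic to a member of $\mathcal{G}$.

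Now fix $x$ and suppose for contradiction that no vertex of $B$ lies in $N_H^{\lfloor q/2\rfloor}[x]$. Since $x\in N_H^{\lfloor q/2\rfloor}[x]$, this forces $x\notin B$, hence $B\subseteq V(H\setminus x)$. The key distance comparison is: for any two vertices $u,v\in B$, every shortest $u$--$v$ path of $H$ avoids $x$. Indeed, if such a path passed through $x$, its length would be at least $\dist_H(u,x)+\dist_H(x,v)\geq 2(\lfloor q/2\rfloor+1)>q$, because both $u$ and $v$ are at distance more than $\lfloor q/2\rfloor$ from $x$, while $2\lfloor q/2\rfloor+2$ exceeds $q$ for every $q\geq 0$ (equalling $q+2$ when $q$ is even and $q+1$ when $q$ is odd). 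Consequently $\dist_{H\setminus x}(u,v)=\dist_H(u,v)$ whenever $\dist_H(u,v)\leq q$, and in particular $uv\in E(H^q)$ if and only if $uv\in E((H\setminus x)^q)$ for all $u,v\in B$. Therefore $(H\setminus x)^q[B]=H^q[B]$, which is isomorphic to a graph in $\mathcal{G}$, contradicting the criticality of $H$.

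There is no real obstacle in this argument; the only thing to be careful about is the arithmetic showing $2\lfloor q/2\rfloor+2>q$ in both parities, and the observation that $x\notin B$ automatically follows from $B\cap N_H^{\lfloor q/2\rfloor}[x]=\emptyset$ since $x$ itself is in the closed neighborhood, so one may legitimately speak of $B$ as a subset of $V(H\setminus x)$ and invoke the criticality hypothesis directly.
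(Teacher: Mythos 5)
Your proof is correct and follows essentially the same route as the paper's: assume $B$ misses $N_H^{\lfloor q/2\rfloor}[x]$, note that any $u$--$v$ path through $x$ with $u,v\in B$ has length at least $2\lfloor q/2\rfloor+2>q$, conclude that $(H\setminus x)^q[B]=H^q[B]$, and contradict criticality. The extra care you take with the one-vertex case and with observing $x\notin B$ is fine but not substantively different from the paper's argument.
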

\begin{proof}
    Suppose for contradiction that $B$ contains no vertex in $N^{\lfloor q/2\rfloor}_H[x]$.
    Since $H$ is $(q,\mathcal{G})$-critical, $(H\setminus x)^q[B]$ is isomorphic to no graph in $\mathcal{G}$.
    Since $H^q[B]$ is isomorphic to a graph in $\mathcal{G}$, $B$ contains distinct vertices $v$ and $w$ such that $v$ and $w$ are adjacent in $H^q$ and every path of $H$ between $v$ and $w$ having length at most $q$ should contain $x$.
    However, since neither $v$ nor $w$ is in $N^{\lfloor q/2\rfloor}_H[x]$, if $H$ has a path $P$ between $v$ and $w$ having $x$ as an internal vertex, then the length of $P$ is at least $2\lfloor q/2\rfloor+2>q$, contradicting that $v$ and $w$ are adjacent in $H^q$.
\end{proof}

We now prove Theorem~\ref{thm:ker1 origin}.

\domnowhere*

\begin{proof}
    Let $d$ be the maximum order of a graph in $\mathcal{F}$.
    We apply Theorem~\ref{thm:ker1} for $(G,V(G),k)$.
    We may assume that the first outcome of Theorem~\ref{thm:ker1} does not arise.
    Let $(G[Y],Z,k)$ be the resulting instance, and $F$ be a graph in $\mathcal{F}$ such that $\abs{V(F)}$ is minimum.
    Note that $\gamma_{p,r}^\mathcal{F}(G)=\gamma_{p,r}^\mathcal{F}(G[Y],Z)$.
    
    Suppose that $r=0$.
    Note that $p\leq1$.
    Then for every set $B\subseteq Z$, $G[Y]^p[B]$ is isomorphic to $G[Z]^p[B]$.
    Since $r=0$, every minimal $(p,0,\mathcal{F})$-cover of $Z$ in $G[Y]$ is a subset of $Z$.
    Therefore, for a set $D\subseteq Z$, $D$ is a $(p,0,\mathcal{F})$-cover of $Z$ in $G[Y]$ if and only if $D$ is a $(p,0,\mathcal{F})$-cover of $G[Z]$.
    Hence, $\gamma_{p,0}^\mathcal{F}(G[Y],Z)=\gamma_{p,0}^\mathcal{F}(G[Z])$.
    Let $G'$ be the disjoint union of $G[Z]$ and $F$.
    Note that $G'$ has at most $\abs{Z}+d\leq\abs{Y}+d$ vertices.
    Since $\abs{Y}=O(k^{1+\varepsilon})$, one can choose the function $g_{\mathrm{cov}}(0,d,\varepsilon)$ such that $\abs{V(G')}\leq g_{\mathrm{cov}}(0,d,\varepsilon)\cdot k^{1+\varepsilon}$.
    
    If $p=0$ and $\abs{V(F)}\geq2$, then $\gamma_{0,0}^\mathcal{F}(G[Y],Z)=0=\gamma_{0,0}^\mathcal{F}(G')$, so that the statement holds.
    Otherwise, either $p=0$ and $\abs{V(F)}=1$, or $p=1$ holds.
    In both cases, the following hold.
    \begin{enumerate}
        \item[$\bullet$]    For every $(p,0,\mathcal{F})$-cover $D$ of $Z$ in $G[Y]$ and every $v \in V(F)$, $D\cup \{v\}$ is a $(p,0,\mathcal{F})$-cover of $G'$.
        \item[$\bullet$]    For every $(p,0,\mathcal{F})$-cover $D$ of $G'$, $D$ should contain a vertex of $F$, and $D\setminus V(F)$ is a $(p,0,\mathcal{F})$-cover of $Z$ in $G[Y]$.
    \end{enumerate}
    Therefore, $\gamma_{p,0}^\mathcal{F}(G)=\gamma_{p,0}^\mathcal{F}(G[Y],Z)\leq k$ if and only if $\gamma_{p,0}^\mathcal{F}(G')\leq k+1$.
    Hence, the statement holds for $r=0$.
    
    Thus, we may assume that $r>0$.
    Suppose that $p=0$.
    Since $G^0$ is edgeless, if $\abs{V(F)}\geq2$, then $\gamma_{0,r}^\mathcal{F}(G)=0$, so the statement holds by taking $G'$ as an $1$-vertex graph.
    Thus, we may assume that $\abs{V(F)}=1$.
    We construct a graph $G'$ from $G[Y]$ as follows.
    \begin{enumerate}
        \item[$\bullet$]    Add two new vertices $h$ and $h'$.
        \item[$\bullet$]    For each vertex $v\in(Y\setminus Z)\cup\{h'\}$, connect $h$ and $v$ by a path $P_v$ of length $r$.
    \end{enumerate}
    Let $G'$ be the resulting graph.
    Note that $G'$ can be constructed in polynomial time, and
    \begin{equation*}
        \abs{V(G')}\leq1+\abs{Z}+r(\abs{Y\setminus Z}+1)\leq1+r(\abs{Y}+1).
    \end{equation*}
    Since $\abs{Y}=O(k^{1+\varepsilon})$, one can choose the function $g_{\mathrm{cov}}(r,d,\varepsilon)$ such that $\abs{V(G')}\leq g_{\mathrm{cov}}(r,d,\varepsilon)\cdot k^{1+\varepsilon}$.
    
    Since $\gamma_{0,r}^\mathcal{F}(G)=\gamma_{0,r}^\mathcal{F}(G[Y],Z)$, to show that $\gamma_{0,r}^\mathcal{F}(G)\leq k$ if and only if $\gamma_{0,r}^\mathcal{F}(G')\leq k+1$, it suffices to show that $\gamma_{0,r}^\mathcal{F}(G[Y],Z)\leq k$ if and only if $\gamma_{0,r}^\mathcal{F}(G')\leq k+1$.
    
    Firstly, suppose that $\gamma_{0,r}^\mathcal{F}(G[Y],Z)\leq k$.
    Then $G[Y]$ has a $(0,r,\mathcal{F})$-cover $D$ of $Z$ having size at most $k$.
    Since $V(G')\setminus Z\subseteq N^r_{G'}[h]$ and $Z\setminus N^r_{G'}[D]\subseteq Z\setminus N^r_{G[Y]}[D]$, we have that
    \begin{equation*}
        V(G')\setminus N^r_{G'}[D\cup\{h\}]\subseteq Z\setminus N^r_{G'}[D]\subseteq Z\setminus N^r_{G[Y]}[D].
    \end{equation*}
    Since $D$ is a $(0,r,\mathcal{F})$-cover of $Z$ in $G[Y]$ and $\abs{V(F)}=1$, $Z\setminus N^r_{G[Y]}[D]$ is empty, and so is $V(G')\setminus N^r_{G'}[D\cup\{h\}]$.
    Thus, $D\cup\{h\}$ is a $(0,r,\mathcal{F})$-cover of $G'$.
    Therefore, $\gamma_{0,r}^\mathcal{F}(G')\leq k+1$.
    
    Conversely, suppose that $\gamma_{0,r}^\mathcal{F}(G')\leq k+1$.
    Then $G'$ has a $(0,r,\mathcal{F})$-cover $D$ having size at most $k+1$.
    Note that $D$ contains a vertex in $V(P_{h'})$.
    Let $D'$ be the set obtained from $D$ by substituting each vertex $v\in D\setminus(Y\cup\{h\})$ with the end of $P_v$ distinct to $h$, and then removing $V(P_{h'})$.
    Note that $D'\subseteq Y$ and $\abs{D'}<\abs{D}\leq k+1$.
    
    Observe that $h$ is the only vertex in $V(P_{h'})$ at distance at most $r$ from $Y$.
    Since $N^r_{G'}[h]\cap Y=Y\setminus Z$, no vertex in $V(P_{h'})$ is at distance at most $r$ from $Z$, and therefore
    \begin{equation*}
        Z\setminus N^r_{G[Y]}[D']\subseteq Z\setminus N^r_{G'}[D\setminus V(P_{h'})]\subseteq Z\setminus N^r_{G'}[D].
    \end{equation*}
    Since $D$ is a $(0,r,\mathcal{F})$-cover of $G'$ and $\abs{V(F)}=1$, $Z\setminus N^r_{G'}[D']$ is empty, and so is $Z\setminus N^r_{G[Y]}[D']$.
    Thus, $D'$ is a $(0,r,\mathcal{F})$-cover of $Z$ in $G[Y]$.
    Therefore, $\gamma_{p,r}^\mathcal{F}(G[Y],Z)\leq k$.
    Hence, the statement holds for $r>0$ and $p=0$.
    
    Thus, we may further assume that $p>0$.
    By Lemma~\ref{lem:critical}, one can find in polynomial time a $(p,\mathcal{F})$-critical graph $H$ having at most $d(dp+1)/2$ vertices.
    Let $p':=\lfloor p/2\rfloor$ and $x$ be a vertex of $H$.
    We construct a graph $G'$ as follows.
    \begin{enumerate}
        \item[$\bullet$]    Take the disjoint union of $G[Y]$ and $H$, and add a new vertex $h$.
        \item[$\bullet$]    For each vertex $v\in(Y\setminus Z)\cup N^{p'}_H[x]$, connect $h$ and $v$ by a path $P_v$ of length $r$.
    \end{enumerate}
    Let $G'$ be the resulting graph.
    Note that $G'$ can be constructed in polynomial time, and
    \begin{equation*}
        \abs{V(G')}\leq1+\abs{Z}+r(\abs{Y\setminus Z}+\abs{V(H)})\leq1+r(\abs{Y}+d(dp+1)/2).
    \end{equation*}
    Since $\abs{Y}=O(k^{1+\varepsilon})$, one can choose the function $g_{\mathrm{cov}}(r,d,\varepsilon)$ such that $\abs{V(G')}\leq g_{\mathrm{cov}}(r,d,\varepsilon)\cdot k^{1+\varepsilon}$.
    
    We verify that for every set $B\subseteq Z$, $G'^p[B]$ is isomorphic to $G[Y]^p[B]$.
    If $G'$ has a path $P$ between two vertices in $Z$ such that $V(P)\setminus Y$ is nonempty, then $P$ should have $h$, and therefore the length of $P$ is at least $2r+2$.
    Thus, every path of $G'$ between two vertices in $Z$ having length at most $p$ is also a path of $G[Y]$.
    Therefore, for every set $B\subseteq Z$, $G'^p[B]$ is isomorphic to $G[Y]^p[B]$.
    
    We will use the following claim.
    
    \begin{CLM}\label{clm:preserve}
        For vertices $v,w\in V(H)$, $v$ and $w$ are adjacent in $H^p$ if and only if $v$ and $w$ are adjacent in $G'^p$.
    \end{CLM}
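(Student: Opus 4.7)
The forward direction is immediate: the construction of $G'$ only adds edges incident to either $h$ or to the internal vertices of the subdivided paths $P_u$, so no new edge is added between two vertices of $V(H)$. Hence $H$ is an induced subgraph of $G'$, so $\dist_{G'}(v,w)\le\dist_H(v,w)$ for all $v,w\in V(H)$, and adjacency in $H^p$ transfers to adjacency in $G'^p$.

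For the backward direction, suppose $v,w\in V(H)$ are adjacent in $G'^p$, so $\dist_{G'}(v,w)\le p$; the goal is to conclude that $\dist_H(v,w)\le p$. The key structural remark is that the only edges of $G'$ leaving $V(H)$ are those on the paths $P_u$ for $u\in N^{p'}_H[x]$, and each such $P_u$ has length $r$ and meets $V(H)$ only at $u$, while the other end (inside the path) leads only to $h$. Consequently, a shortest $v$-$w$ walk in $G'$ either stays inside $V(H)$, in which case $\dist_H(v,w)\le p$ and we are done, or it must pass through $h$, entering via some $P_{u_1}$ (with $u_1\in N^{p'}_H[x]$) and leaving via some $P_{u_2}$ (with $u_2\in N^{p'}_H[x]$); any detour into $G[Y]$ would only make it longer.

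In the through-$h$ case the length is at least $\dist_H(v,u_1)+r+r+\dist_H(u_2,w)$. Since this is bounded by $p$ and $p\le 2r+1$, we obtain $\dist_H(v,u_1)+\dist_H(u_2,w)\le p-2r\le 1$. If this sum is $0$, then $v,w\in N^{p'}_H[x]$ and $\dist_H(v,w)\le 2p'\le p$. If the sum is $1$, then we must have $p-2r\ge 1$, forcing $p=2r+1$, which is odd, so $p'=(p-1)/2=r$; up to symmetry $v=u_1$ and $\dist_H(u_2,w)=1$, hence $\dist_H(w,x)\le p'+1$ and $\dist_H(v,w)\le p'+(p'+1)=2p'+1=p$. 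Either way $v,w$ are adjacent in $H^p$.

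The main technical point is the parity bookkeeping: the second subcase is only possible when $p=2r+1$, which is exactly the situation in which $p'=\lfloor p/2\rfloor$ satisfies $2p'+1=p$, so the triangle-inequality bound still gives $\dist_H(v,w)\le p$. Everything else reduces to carefully tracking how a walk in $G'$ between vertices of $V(H)$ must traverse the subdivided paths $P_u$, and to using $N^{p'}_H[x]$ as the only entry points to $h$ from $V(H)$.
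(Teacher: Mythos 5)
Your proof is correct and follows essentially the same route as the paper: both arguments locate the entry and exit vertices $u_1,u_2\in N^{p'}_H[x]$ of the through-$h$ excursion, use that this excursion has length at least $2r$, and replace it by a path through $x$ in $H$ of length at most $2\lfloor p/2\rfloor\leq 2r$. The paper performs this as a single subpath substitution (the replacement is never longer, so the total stays at most $p$), which avoids your explicit case analysis on $p-2r\in\{0,1\}$, but the content is the same.
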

    \begin{subproof}
        Since $H$ is an induced subgraph of $G'$, the forward direction is obvious.
        We show the backward direction.
        Suppose for contradiction that $v$ and $w$ are nonadjacent in $H^p$ and $G'$ has a path $P$ of length at most $p$ between $v$ and $w$.
        Since $v$ and $w$ are nonadjacent in $H^p$, $P$ has a vertex not in $V(H)$.
        Thus, $P$ should have $h$.
        Since $h$ is a cut-vertex of $G'$ and the ends of $P$ are in $V(H)$, $P$ has vertices $v',w'\in N^{p'}_H[x]$ such that the subpath $Q_1$ of $P$ between $v'$ and $w'$ has $h$ as an internal vertex.
        Note that every vertex of $P$ which is not an internal vertex of $Q_1$ is contained in $V(H)$.
        Since both $v'$ and $w'$ are in $N^{p'}_H[x]$, $H$ has a path $Q_2$ of length at most $2p'=2\lfloor p/2\rfloor$ between $v'$ and~$w'$.
        
        Since $p\leq2r+1$, we have that $2\lfloor p/2\rfloor\leq2r$.
        Thus, the length of $Q_2$ is at most that of $Q_1$.
        Therefore, by substituting $Q_1$ with $Q_2$ from $P$, we obtain a path of $H$ between $v$ and $w$ having length at most~$p$, contradicting the assumption that $v$ and $w$ are nonadjacent in $H^p$.
    \end{subproof}
    
    Since $\gamma_{p,r}^\mathcal{F}(G)=\gamma_{p,r}^\mathcal{F}(G[Y],Z)$, to show that $\gamma_{p,r}^\mathcal{F}(G)\leq k$ if and only if $\gamma_{p,r}^\mathcal{F}(G')\leq k+1$, it suffices to show that $\gamma_{p,r}^\mathcal{F}(G[Y],Z)\leq k$ if and only if $\gamma_{p,r}^\mathcal{F}(G')\leq k+1$.
    
    Firstly, suppose that $\gamma_{p,r}^\mathcal{F}(G[Y],Z)\leq k$.
    Then $G[Y]$ has a $(p,r,\mathcal{F})$-cover $D$ of $Z$ having size at most $k$.
    Since $Y\setminus N^r_{G'}[h]=Z$, we have that
    \begin{equation*}
        V(G')\setminus N^r_{G'}[D\cup\{h\}]\subseteq (Z\setminus N^r_{G'}[D])\cup(V(H)\setminus N^{p'}_H[x]).
    \end{equation*}
    Note that $G'\setminus N^r_{G'}[D\cup\{h\}]$ has no path between $Z\setminus N^r_{G'}[D]$ and $V(H)\setminus N^{p'}_H[x]$.
    Since $D$ is a $(p,r,\mathcal{F})$-cover of $Z$ in $G[Y]$ and $H$ is $(p,\mathcal{F})$-critical, by Lemma~\ref{lem:guard}, $V(G')\setminus N^r_{G'}[D\cup\{h\}]$ has no subset $B$ such that $G'^p[B]$ is isomorphic to a graph in $\mathcal{F}$.
    Thus, $D\cup\{h\}$ is a $(p,r,\mathcal{F})$-cover of $G'$.
    Therefore, $\gamma_{p,r}^\mathcal{F}(G')\leq k+1$.
    
    Conversely, suppose that $\gamma_{p,r}^\mathcal{F}(G')\leq k+1$.
    Then $G'$ has a $(p,r,\mathcal{F})$-cover $D$ having size at most $k+1$.
    Since $H$ is $(p,\mathcal{F})$-critical, $V(H)$ has a subset $X$ such that $H^p[X]$ is isomorphic to a graph in $\mathcal{F}$.
    By Claim~\ref{clm:preserve}, $H^p[X]$ is isomorphic to $G'^p[X]$.
    Let
    \begin{equation*}
        W:=V(H)\cup\bigcup_{v\in N^{p'}_H[x]}V(P_v).
    \end{equation*}
    Note that $N^r_{G'}[X]\subseteq N^r_{G'}[V(H)]\subseteq W$.
    Thus, $D$ should contain at least one vertex in $W$.
    Let $D'$ be the set obtained from $D$ by substituting each vertex $v\in D\setminus(W\cup Y)$ with the end of $P_v$ distinct to $h$, and then removing $W$.
    Note that $D'\subseteq Y$ and $\abs{D'}<\abs{D}\leq k+1$.
    
    Observe that $h$ is the only vertex in $W$ at distance at most $r$ from $Y$.
    In addition, $N^r_{G'}[h]\cap Y=Y\setminus Z$.
    Thus, no vertex in $W$ is at distance at most $r$ from $Z$, and therefore
    \begin{equation*}
        Z\setminus N^r_{G[Y]}[D']\subseteq Z\setminus N^r_{G'}[D\setminus W]\subseteq Z\setminus N^r_{G'}[D].
    \end{equation*}
    Recall that for every set $B\subseteq Z$, $G'^p[B]$ is isomorphic to $G[Y]^p[B]$.
    Since $D$ is a $(p,r,\mathcal{F})$-cover of $G'$, $Z\setminus N^r_{G[Y]}[D']$ has no subset $B$ such that $G[Y]^p[B]$ is isomorphic to a graph in $\mathcal{F}$.
    Thus, $D'$ is a $(p,r,\mathcal{F})$-cover of $Z$ in $G[Y]$.
    Therefore, $\gamma_{p,r}^\mathcal{F}(G[Y],Z)\leq k$.
    Hence, the statement holds for $r>0$ and $p>0$, and this completes the proof.
\end{proof}

Similarly, we can show the following.

\begin{THM}\label{thm:ker1' origin}
    For every class $\mathcal{C}$ of graphs with bounded expansion, there exists a function $g_{\mathrm{cov}}:\mathbb{N}\times\mathbb{N}\to\mathbb{N}$ satisfying the following.
    For every nonempty family $\mathcal{F}$ of connected graphs with at most~$d$ vertices and $p,r\in\mathbb{N}$ with $p\leq2r+1$, there exists a polynomial-time algorithm that given a graph $G\in\mathcal{C}$ and $k\in\mathbb{N}$, either
    \begin{enumerate}
        \item[$\bullet$]    correctly decides that $\gamma_{p,r}^\mathcal{F}(G)>k$, or
        \item[$\bullet$]    outputs a graph $G'$ such that $\abs{V(G')}=g_{\mathrm{cov}}(r,d)\cdot k$, and $\gamma_{p,r}^\mathcal{F}(G)\leq k$ if and only if $\gamma_{p,r}^\mathcal{F}(G')\leq k+1$.
    \end{enumerate}
\end{THM}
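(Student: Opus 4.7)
The plan is to mirror the proof of Theorem~\ref{thm:ker1 origin} line by line, replacing only the kernel for the annotated variant. I would first invoke Theorem~\ref{thm:ker1'} on the input instance $(G, V(G), k)$; if it does not report $\gamma_{p,r}^\mathcal{F}(G) > k$, it outputs sets $Y \subseteq V(G)$ with $|Y| \leq f_{\mathrm{cov}}(r, d) \cdot k$ and $Z \subseteq V(G) \cap Y$ with $\gamma_{p,r}^\mathcal{F}(G[Y], Z) = \gamma_{p,r}^\mathcal{F}(G)$. From this annotated pair $(G[Y], Z)$ I would build the graph $G'$ via the same gadget construction as in the proof of Theorem~\ref{thm:ker1 origin}, and argue correctness and size in essentially the same way.

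The case analysis is identical: for $r = 0$ the power $p \leq 1$, so $G[Y]^p$ and $G[Z]^p$ agree on subsets of $Z$, and taking $G'$ to be the disjoint union of $G[Z]$ and a minimum-order member $F$ of $\mathcal{F}$ works (trivially if $p = 0$ and $|V(F)| \geq 2$; otherwise the extra vertex of $F$ must be picked by any cover, matching the $k+1$ shift). For $r > 0$ and $p = 0$ with $|V(F)| = 1$, I would attach paths of length $r$ from a fresh vertex $h$ to every vertex in $(Y \setminus Z) \cup \{h'\}$, ensuring the new vertex $h$ is forced into any cover. For $r > 0$ and $p > 0$, I would apply Lemma~\ref{lem:critical} to obtain in polynomial time a $(p, \mathcal{F})$-critical graph $H$ on $O(d^2 p)$ vertices, pick an arbitrary $x \in V(H)$, and attach paths of length $r$ from a fresh vertex $h$ to every vertex of $(Y \setminus Z) \cup N_H^{\lfloor p/2 \rfloor}[x]$.

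The equivalence $\gamma_{p,r}^\mathcal{F}(G) \leq k \Longleftrightarrow \gamma_{p,r}^\mathcal{F}(G') \leq k+1$ is verified exactly as in Theorem~\ref{thm:ker1 origin}. The key ingredients are unchanged: Claim~\ref{clm:preserve} still holds because it only uses the hypothesis $p \leq 2r+1$, so the $p$-th power interaction between $V(H)$ and the rest of $G'$ collapses to $H^p$; Lemma~\ref{lem:guard} guarantees that any $B \subseteq V(H)$ with $H^p[B] \in \mathcal{F}$ meets $N_H^{\lfloor p/2 \rfloor}[x]$, so $h$ together with a $(p,r,\mathcal{F})$-cover of $Z$ in $G[Y]$ covers all bad sets living in $V(H)$; and paths of length $r$ from $h$ to every vertex of $Y \setminus Z$ and to $N_H^{\lfloor p/2 \rfloor}[x]$ ensure no bad set crosses the gadget. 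Conversely, any cover of $G'$ of size $k+1$ must spend at least one vertex inside the gadget (since $H$ already hosts a set in $\mathcal{F}$), and the remaining vertices can be moved into $Y$ without worsening the coverage of $Z$.

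For the size bound, $|V(G')| \leq 1 + |Z| + r(|Y \setminus Z| + |V(H)|) \leq 1 + r(|Y| + d(dp+1)/2)$, which is $O(f_{\mathrm{cov}}(r, d) \cdot k)$, so choosing $g_{\mathrm{cov}}(r, d)$ appropriately (absorbing the $r$, $d$, and $p \leq 2r+1$ dependencies) yields the claimed linear bound. The only place where the nowhere dense proof and the bounded expansion proof differ is the invocation of the annotated kernel (Theorem~\ref{thm:ker1'} instead of Theorem~\ref{thm:ker1}); since the gadget analysis is insensitive to the underlying graph class and uses only $p \leq 2r+1$ and the finiteness of $\mathcal{F}$, I do not anticipate a real obstacle, and the entire argument of Theorem~\ref{thm:ker1 origin} transfers verbatim with the linear size bound on $Y$ propagated through.
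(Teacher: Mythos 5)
Your proposal is correct and matches the paper's intent exactly: the paper proves Theorem~\ref{thm:ker1' origin} by remarking that the argument of Theorem~\ref{thm:ker1 origin} goes through verbatim once the nowhere dense annotated kernel (Theorem~\ref{thm:ker1}) is replaced by the bounded expansion one (Theorem~\ref{thm:ker1'}), which is precisely what you do, including the unchanged gadget constructions, Claim~\ref{clm:preserve}, and Lemma~\ref{lem:guard}.
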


\section{Kernels for the \textsc{$(p,r,\mathcal{F})$-Packing} problems}\label{sec:ker2}

Let $p$ and $r$ be nonnegative integers with $p\leq r+1$, and $\mathcal{F}$ be a nonempty finite family of connected graphs.
We present an almost linear kernel for the \textsc{$(p,r,\mathcal{F})$-Packing} problem on every nowhere dense class of graphs, which generalizes the almost linear kernel of~\cite{PS2021} for the \textsc{Distance-$r$ Independent Set} problem.

We first present an almost linear kernel for the \textsc{Annotated $(p,r,\mathcal{F})$-Packing} problem.

\begin{THM}\label{thm:ker2}
    For every nowhere dense class $\mathcal{C}$ of graphs, there exists a function $f_{\mathrm{pck}}:\mathbb{N}\times\mathbb{N}\times\mathbb{R}_+\to\mathbb{N}$ satisfying the following.
    For every nonempty family $\mathcal{F}$ of connected graphs with at most~$d$ vertices, $p,r\in\mathbb{N}$ with $p\leq2\lfloor r/2\rfloor+1$, and $\varepsilon>0$, there exists a polynomial-time algorithm that given a graph $G\in\mathcal{C}$, $A\subseteq V(G)$, and $k\in\mathbb{N}$, either 
    \begin{enumerate}
        \item[$\bullet$]    correctly decides that $\alpha_{p,r}^\mathcal{F}(G,A)>k$, or
        \item[$\bullet$]    outputs sets $Y\subseteq V(G)$ of size at most $f_{\mathrm{pck}}(r,d,\varepsilon)\cdot k^{1+\varepsilon}$ and $Z\subseteq A\cap Y$ such that $\alpha_{p,r}^\mathcal{F}(G,A)\geq k$ if and only if $\alpha_{p,r}^\mathcal{F}(G[Y],Z)\geq k$.
    \end{enumerate} 
\end{THM}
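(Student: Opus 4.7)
The plan is to import the two-phase strategy used to prove Theorem~\ref{thm:ker1} and adapt it to the maximization setting. Call $Z\subseteq A$ a \emph{$(p,r,\mathcal{F})$-packing core} of $A$ in $G$ if $\alpha_{p,r}^\mathcal{F}(G,A)\geq k$ if and only if $\alpha_{p,r}^\mathcal{F}(G,Z)\geq k$; note that $A$ is trivially such a core. Phase one is an iterative shrinking lemma, analogous to Lemma~\ref{lem:core recursion}: given a packing core $Z$ with $\abs{Z}>f_{\mathrm{core}}(r,d,\varepsilon)\cdot k^{1+\varepsilon}$, in polynomial time we either certify $\alpha_{p,r}^\mathcal{F}(G,A)>k$ or output a vertex $z\in Z$ for which $Z\setminus\{z\}$ is still a packing core. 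Iterating from $Z=A$ at most $\abs{A}$ times brings us down to a core of size $O(k^{1+\varepsilon})$. Phase two replays the profile-and-path-closure construction of Theorem~\ref{thm:ker1}: merge $V(G)\setminus Z$ by $r$-projection profile on $Z$ using Lemma~\ref{lem:proj}, take representatives to obtain $Z'\supseteq Z$ of size $O(\abs{Z}^{1+\varepsilon})$, and take a $(2r+1)$-path closure $Y$ via Lemma~\ref{lem:pth}. Observation~\ref{obs:obs}, introduced in Section~\ref{sec:ker1} precisely for this purpose (with $r_0=r\leq 2r+1$), then yields $\alpha_{p,r}^\mathcal{F}(G[Y],Z)=\alpha_{p,r}^\mathcal{F}(G,Z)$, completing the reduction.

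For the shrinking lemma, I would first build a polynomial-time approximation routine that, whenever the packing number exceeds $k$, returns a certificate of that fact. Proposition~\ref{prop:apx} applied to $(G,Z)$ gives an approximate $(p,r,\mathcal{F})$-cover $X$ of size $O(\gamma_{p,r}^\mathcal{F}(G,Z)^{1+\varepsilon})$; pairing this with a greedy maximal $(p,r,\mathcal{F})$-packing on the $d$-arity candidate hypergraph certifies $\alpha_{p,r}^\mathcal{F}(G,Z)>k$ whenever the greedy packing witnesses it, and otherwise lets us retain $\abs{X}=O(k^{1+\varepsilon})$. Inflate $X$ to an $(r',\cdot)$-close set $X_{\mathrm{cl}}$ via Lemma~\ref{lem:cl} with $r':=4pd+3r$, and partition the vertices of $Z\setminus X_{\mathrm{cl}}$ by their $r'$-projection profile on $X_{\mathrm{cl}}$. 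The size of $Z$ forces some profile class $\lambda$ to be enormous; applying uniform quasi-wideness (Theorem~\ref{thm:uqw}) to $\lambda$ extracts a small separator $S$ and a set $L\subseteq\lambda$ of polynomial-in-$k$ size that is distance-$r'$ independent in $G\setminus S$. Tag each $u\in L$ with a \emph{pattern} recording the isomorphism type of the component of $u$ in $(G\setminus S)^p[(B_u\setminus S)\cup\{u\}]$ for some $B_u\subseteq Z\setminus\{u\}$ with $G^p[B_u\cup\{u\}]\cong F'\in\mathcal{F}$, together with the $(2r+1)$-projection profiles on $S$ of the vertices in that component. Since the total number of patterns is bounded by a constant in $r$, $d$, and $\abs{S}$, pigeonhole yields a pattern class $\kappa_1\subseteq L$ of whatever size we need.

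Pick $z\in\kappa_1$ maximizing $\dist_{G\setminus S}(\mathcal{B}_z,X_{\mathrm{cl}})$, as in the proof of Lemma~\ref{lem:core recursion}, and suppose for contradiction that $Z\setminus\{z\}$ is not a packing core. Then some $(p,r,\mathcal{F})$-packing $I$ of $Z$ has $\abs{I}\geq k$ while no such packing avoids $z$, so $z\in B_z^{\ast}$ for a unique $B_z^{\ast}\in I$, and we may take $\abs{I}=k$. For each $u\in\kappa_1\setminus\{z\}$ the substitute $B_u\cup\{u\}$ induces a graph in $\mathcal{F}$ in $G^p$; replacing $B_z^{\ast}$ by $B_u\cup\{u\}$ in $I$ would produce a size-$k$ packing of $Z\setminus\{z\}$ unless some $C_u\in I\setminus\{B_z^{\ast}\}$ satisfies $\dist_G(B_u\cup\{u\},C_u)\leq r$. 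Trimming $\kappa_1$ by additive $\mathrm{poly}(r,d,\abs{S})$ via the analogues of Claims~\ref{clm:new1},~\ref{clm:new2-1},~\ref{clm:new2}, and~\ref{clm:new3} in the proof of Lemma~\ref{lem:core recursion} isolates a subset $\kappa_3\subseteq\kappa_1\setminus\{z\}$ of size still $\geq k$ on which the re-routing machinery succeeds and the candidates $B_u\cup\{u\}$ are anchored far from $X_{\mathrm{cl}}$. I would then show that the assignment $u\mapsto C_u$ on $\kappa_3$ is injective: if $C_u=C_{u'}$ for distinct $u,u'\in\kappa_3$, concatenating short paths $u\leadsto B_u\leadsto C_u=C_{u'}\leadsto B_{u'}\leadsto u'$ and re-routing any $S$-visits through matching $(2r+1)$-profiles produces a walk of length $\leq 2p(d-1)+2r\leq r'$ in $G\setminus S$, contradicting the distance-$r'$ independence of $L$. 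Hence $\abs{\kappa_3}\leq\abs{I\setminus\{B_z^{\ast}\}}=k-1$, contradicting $\abs{\kappa_3}\geq k$.

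The hard part will be the substitution step. To swap $B_z^{\ast}$ for $B_u\cup\{u\}$ in $I$ we must guarantee that the distance in $G$ from $B_u\cup\{u\}$ to every other element of $I$ stays strictly above $r$; pattern equivalence only controls $(G\setminus S)^p$, so distances realized through $S$ must be re-routed via matching $(2r+1)$-profiles on $S$. The strict hypothesis $p\leq 2\lfloor r/2\rfloor+1$ (compared to $p\leq 2r+1$ in the covering setting) enters precisely here: in the packing case the re-routing must preserve \emph{both} a witness of a $G^p$-adjacency inside $B_u\cup\{u\}$ of length $\leq p$ \emph{and} the more-than-$r$ distance to a nearby $C\in I$, and the symmetric detour on both ends of these witnesses consumes a budget that shrinks from $2r+1$ down to $2\lfloor r/2\rfloor+1$ due to parity. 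Once this profile-based substitution is formalized (mirroring the claim structure in the proof of Lemma~\ref{lem:core recursion} but with ``cover'' swapped for ``packing''), the injectivity-plus-pigeonhole contradiction above closes the shrinking lemma, and combining it with phase two delivers Theorem~\ref{thm:ker2}.
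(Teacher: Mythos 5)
Your overall architecture (an iterative shrinking lemma for a ``packing core'' followed by a path closure, with uniform quasi-wideness, a close set $X_{\mathrm{cl}}$, projection classes, and a pattern/exchange argument inside the wide set $L$) matches the paper's, and your Phase two is correct --- in fact the paper gets away with just an $(r+1)$-path closure of $Z$ itself, since in the packing setting nothing outside $Z$ ever needs to be rerouted. The genuine problem is the endgame of your shrinking lemma. You derive the contradiction by showing that $u\mapsto C_u$ is injective on $\kappa_3$ and concluding $\abs{\kappa_3}\le\abs{I}-1=k-1$, which forces you to arrange $\abs{\kappa_3}\ge k$ and hence $\abs{L}\ge k+\mathrm{poly}(r,d,s)$. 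But Theorem~\ref{thm:uqw} only produces a distance-$r'$ independent set of size $m$ from a profile class of size $m^{p(r')}$, so demanding $\abs{L}\gtrsim k$ pushes the firing threshold of the shrinking lemma up to $\abs{A}\gtrsim\idx(\sim)\cdot k^{p(r')}$, and the core you end with has size $\Omega(k^{p(r')})$: a polynomial kernel, not the claimed $k^{1+\varepsilon}$. The paper's pigeonhole is against a much smaller set: every surviving $u$ is joined to $X_{\mathrm{cl}}$ by an $S$-avoiding walk of length at most $\lfloor r'/2\rfloor$ (through $B_u^*$, the conflict path $P_u$, the element $C_u$, and the fact that $X_{\mathrm{cl}}$ contains a $(p,\lfloor r/2\rfloor,\mathcal{F})$-cover), whose first $X_{\mathrm{cl}}$-vertex lies in $M_{r'}^G(z,X_{\mathrm{cl}})$, a set of size $O(k^{2\varepsilon})$ by the closeness of $X_{\mathrm{cl}}$; two $u$'s sharing that vertex already violate distance-$r'$ independence. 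Hence $\kappa_1$ only ever needs size $O(k^{2\varepsilon}+s+d^2)$, and the almost-linear bound survives.

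Two further points are needed to make that endgame run. First, the approximate cover must be a $(p,\lfloor r/2\rfloor,\mathcal{F})$-cover whose size is bounded by a function of $\alpha_{p,r}^{\mathcal{F}}$; this is exactly Proposition~\ref{prop:approx2}, proved by bootstrapping Proposition~\ref{prop:approx1} and the weak coloring number through Theorem~\ref{thm:ker1}. A greedy maximal $(p,r,\mathcal{F})$-packing only yields a radius-$r$ cover, whereas the radius $\lfloor r/2\rfloor$ is what lets one discard the at most $s$ vertices $u$ whose $C_u$ is within $\lfloor r/2\rfloor$ of $S$ (two such $C_u$'s would be within distance $2\lfloor r/2\rfloor\le r$ of each other, contradicting that $I$ is a packing) and then argue that the connecting path from $C_u$ to $X_{\mathrm{cl}}$ avoids $S$ --- this, rather than a detour-parity issue in the substitution, is where $p\le2\lfloor r/2\rfloor+1$ earns its keep. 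Second, the substitute for $B_z$ must be the hybrid $f(u)=B_u^*\cup(B_z\setminus B_z^*)$ rather than $B_u\cup\{u\}$: only then can one show that the conflict path $P_u$ witnessing $\dist_G(f(u),C_u)\le r$ starts inside $B_u^*$ (the part inherited from $B_z$ is already at distance more than $r$ from every other element of $I$), which is what anchors the whole concatenation near $u$ in $G\setminus S$.
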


The following lemma is a key lemma for proving Theorem~\ref{thm:ker2}.

\begin{LEM}\label{lem:rd}
    For every nowhere dense class $\mathcal{C}$ of graphs, there exists a function $f_{\mathrm{rd}}:\mathbb{N}\times\mathbb{N}\times\mathbb{R}_+\to\mathbb{N}$ satisfying the following.
    For every nonempty family $\mathcal{F}$ of connected graphs with at most~$d$ vertices, $p,r\in\mathbb{N}$ with $p\leq2\lfloor r/2\rfloor+1$, and $\varepsilon\in\mathbb{R}_+$, there exists a polynomial-time algorithm that given a graph $G\in\mathcal{C}$, $A\subseteq V(G)$, and $k\in\mathbb{N}$, the algorithm either
    \begin{enumerate}
        \item[$\bullet$]    correctly decides that $\alpha_{p,r}^\mathcal{F}(G,A)>k$, or
        \item[$\bullet$]    outputs a set $Z\subseteq A$ of size at most $f_{\mathrm{rd}}(r,d,\varepsilon)\cdot k^{1+\varepsilon}$ such that $\alpha_{p,r}^\mathcal{F}(G,A)\geq k$ if and only if $\alpha_{p,r}^\mathcal{F}(G,Z)\geq k$.
    \end{enumerate}
\end{LEM}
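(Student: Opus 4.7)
The plan is to mirror the two-level structure used for \textsc{$(p,r,\mathcal{F})$-Covering} in Lemmas~\ref{lem:core} and~\ref{lem:core recursion}. I first prove a single-vertex reduction: whenever $\abs{A}>f_{\mathrm{rd}}(r,d,\varepsilon)\cdot k^{1+\varepsilon}$, the algorithm either certifies $\alpha_{p,r}^\mathcal{F}(G,A)>k$ (by exhibiting an explicit packing of size exceeding $k$) or identifies a vertex $z\in A$ such that $\alpha_{p,r}^\mathcal{F}(G,A)\ge k$ if and only if $\alpha_{p,r}^\mathcal{F}(G,A\setminus\{z\})\ge k$. Iterating this step at most $\abs{A}$ times yields the required $Z$. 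Because $\alpha_{p,r}^\mathcal{F}(G,A\setminus\{z\})\le\alpha_{p,r}^\mathcal{F}(G,A)$ holds automatically, the substantive direction at each step is that any $(p,r,\mathcal{F})$-packing of $A$ of size $\ge k$ can be reshuffled into one of $A\setminus\{z\}$ of the same size.

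To produce the vertex $z$, I set $r':=4pd+3r$ and follow the preparatory scheme of Lemma~\ref{lem:core recursion}. An approximation analogue of Proposition~\ref{prop:apx}, built on the bounded VC-dimension of the family $\{N_G^r[B]:B\subseteq V(G),\ G^p[B]\text{ is isomorphic to a graph in }\mathcal{F}\}$, produces an auxiliary $(p,r,\mathcal{F})$-cover of $A$ whose companion greedy packing either already certifies $\alpha_{p,r}^\mathcal{F}(G,A)>k$ or has size $O(k^{1+\varepsilon})$. Lemma~\ref{lem:cl} extends it to a close superset $X_{\mathrm{cl}}$ of size $O(k^{1+\varepsilon})$, Lemma~\ref{lem:proj} bounds the number of $r'$-projection profiles on $X_{\mathrm{cl}}$ by $O(\abs{X_{\mathrm{cl}}}^{1+\varepsilon})$, and pigeonhole supplies a profile class $\lambda\subseteq A\setminus X_{\mathrm{cl}}$ of size exceeding $m^{p(r')}$ for a suitably chosen $m$. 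Theorem~\ref{thm:uqw} then produces $S\subseteq V(G)$ with $\abs{S}\le s(r')$ and a set $L\subseteq\lambda\setminus S$ of size $\ge m$ that is distance-$r'$ independent in $G\setminus S$. To each $u\in L$ I attach a \emph{pattern} recording, for some $B\subseteq A\setminus\{u\}$ of size at most $d-1$ with $G^p[B\cup\{u\}]$ isomorphic to a graph in $\mathcal{F}$, the isomorphism type of $(G\setminus S)^p[(B\cup\{u\})\setminus S]$ with $u$ marked together with the $(2r+1)$-projection profiles of its vertices on $S$. Since the number of patterns is bounded in $r$ and $d$, a subclass $\kappa\subseteq L$ of size a constant fraction of $\abs{L}$ shares a common pattern, and I pick $z\in\kappa$ maximizing $\dist_{G\setminus S}(\mathcal{B}_z,X_{\mathrm{cl}})$, where $\mathcal{B}_z$ is the union of the candidate sets recorded for $z$.

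To see that $z$ is removable I argue by contradiction. Suppose $I$ is a packing of $A$ with $\abs{I}\ge k$ and no packing of $A\setminus\{z\}$ attains that size; then some $B_z\in I$ contains $z$. For each $u\in\kappa\setminus\{z\}$, the shared pattern supplies a twin set $B_u\subseteq A\setminus\{u\}$, and the rerouting of Claim~\ref{clm:new2-1}, which is precisely where the hypothesis $p\le 2\lfloor r/2\rfloor+1$ enters, transports the isomorphism from $(G\setminus S)^p$ into $G^p$, so $G^p[B_u\cup\{u\}]$ is isomorphic to $G^p[B_z\cup\{z\}]$. If $B_u\cup\{u\}$ were at distance greater than $r$ from every element of $I\setminus\{B_z\}$, substituting $B_z$ by $B_u\cup\{u\}$ in $I$ would yield a packing of $A\setminus\{z\}$ of size $\abs{I}$; therefore some $C_u\in I\setminus\{B_z\}$ satisfies $\dist_G(B_u\cup\{u\},C_u)\le r$. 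The distance-$r'$ independence of $L$ in $G\setminus S$ then forces $u\mapsto C_u$ on $\kappa\setminus\{z\}$ to be essentially injective: any pair $u,u'$ with $C_u=C_{u'}$ would be joined by a walk in $G\setminus S$ obtained by concatenating a path inside $B_u\cup\{u\}$, a length-$\le r$ bridge to $C_u$, a path inside $C_u$, a symmetric bridge to $B_{u'}\cup\{u'\}$, and a path inside $B_{u'}\cup\{u'\}$, with every crossing of $S$ rerouted through the shared $(2r+1)$-profiles exactly as in Claims~\ref{clm:new3}--\ref{clm:new4}; total length at most $4pd+3r=r'$, contradicting the independence of $L$. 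Tuning $m$ so that the resulting near-injective map from $\kappa$ into $I\setminus\{B_z\}$ is inconsistent with the assumption (here choosing $I$ of minimum size $k$ bounds the target) yields the required contradiction.

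The principal obstacle will be the same as in the covering proof: every path that must cross $S$ has to be re-routed via the pattern-encoded profile equalities, which is the reason the full $(2r+1)$-projection profile on $S$ is what patterns must record, and why the hypothesis $p\le 2\lfloor r/2\rfloor+1$ is essential, as it guarantees that any $G^p$-edge crossing $S$ decomposes into two $S$-avoiding segments of length at most $2r+1$ that can be translated between pattern twins. The genuinely new ingredient, absent from the covering case, is the swap $B_z\mapsto B_u\cup\{u\}$ inside $I$: verifying that this move simultaneously preserves the $\mathcal{F}$-type in $G^p$ and the pairwise-$>r$-distance to the remaining elements of $I$ is what dictates the enlarged radius $r'=4pd+3r$ (versus $2pd+3r$ in the covering case) and what distinguishes this argument from its covering counterpart.
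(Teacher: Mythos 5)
Your architecture coincides with the paper's: Lemma~\ref{lem:rd} is obtained by iterating a single-vertex removal step (the paper's Lemma~\ref{lem:rd recursion}), proved with the same toolkit of covering/packing duality for rejection, an $(r',\cdot)$-close set $X_{\mathrm{cl}}$, projection-profile classes, uniform quasi-wideness yielding $S$ and a distance-$r'$ independent $L$, a pattern argument, and a swap inside a hypothetical packing $I$ whose forced injectivity $u\mapsto C_u$ contradicts the independence of $L$. However, your swap step has a genuine gap. You replace $B_z$ by $B_u\cup\{u\}$ and assert that the transported isomorphism, with ``every crossing of $S$ rerouted through the shared profiles,'' controls this set. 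But the pattern attached to $u$ only matches the component $B_u^*$ of $u$ in $(G\setminus S)^p[(B_u\setminus S)\cup\{u\}]$ with the corresponding component $B_z^*$ of $z$; the remaining pieces of $B_u$ (other components of $(G\setminus S)^p$, and vertices of $B_u\cap S$) carry no profile information relative to $z$'s configuration, so a bridge of length at most $r$ from $C_u$ landing on such a vertex $b_u$ can be neither rerouted toward $B_z$ nor extended to $u$ inside $G\setminus S$ — and $B_u$ may even contain $z$. The paper's fix is the hybrid set $f(u)=B_u^*\cup(B_z\setminus B_z^*)$: Claims~\ref{clm:pck new0-1} and~\ref{clm:pck new0-2} show $G^p[f(u)]\cong G^p[B_z]$ for all but $d^2/4$ members of the class, $f(u)$ avoids $z$, its ``far'' part inherits distance $>r$ to $I\setminus\{B_z\}$ from $B_z$, and any bridge must therefore attach to $B_u^*$, which is reachable from $u$ in $G\setminus S$. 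Without this device the concluding walk of length at most $r'$ cannot be assembled.

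A second, smaller issue is the auxiliary cover. To reject when $\alpha_{p,r}^{\mathcal{F}}(G,A)>k$ you need a cover whose size is bounded by the \emph{packing} number, i.e.\ Proposition~\ref{prop:approx2} (which internally invokes Theorem~\ref{thm:ker1} to make the weak coloring number small), not the VC-dimension bound of Proposition~\ref{prop:apx}, which is relative to $\gamma$. Moreover the paper takes a $(p,\lfloor r/2\rfloor,\mathcal{F})$-cover, not a $(p,r,\mathcal{F})$-cover: the half radius is what bounds by $s$ the number of vertices $u$ whose $C_u$ lies within distance $\lfloor r/2\rfloor$ of $S$ (two such elements would be within distance $2\lfloor r/2\rfloor\le r$ of each other, violating the packing property) and what keeps the final walk within $\lfloor r'/2\rfloor$ for $r'=4pd+3r$; with cover radius $r$ neither count works. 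Relatedly, the hypothesis $p\le2\lfloor r/2\rfloor+1$ is needed to keep paths inside a packing element $C_u$ away from $S$ and to make this half-radius duality valid, not primarily in the transport claim, where $p\le r+1$ already suffices.
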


We first prove Theorem~\ref{thm:ker2} by using Lemma~\ref{lem:rd}.

\begin{proof}[Proof of Theorem~\ref{thm:ker2}]
    We apply the algorithm of Lemma~\ref{lem:rd}.
    We may assume that the algorithm finds a set $Z\subseteq A$ of size at most $f_{\mathrm{rd}}(r,d,\varepsilon)\cdot k^{1+\varepsilon}$ such that $\alpha_{p,r}^\mathcal{F}(G,A)\geq k$ if and only if $\alpha_{p,r}^\mathcal{F}(G,Z)\geq k$.
    
    By Lemma~\ref{lem:pth}, one can find in polynomial time an $(r+1)$-path closure $Y$ of $Z$ in $G$ with
    \begin{equation*}
        \abs{Y}\leq f_{\mathrm{pth}}(r+1,\varepsilon)\cdot\abs{Z}^{1+\varepsilon}\leq f_{\mathrm{pth}}(r+1,\varepsilon)\cdot f_{\mathrm{rd}}(r,d,\varepsilon)^{1+\varepsilon}\cdot k^{1+3\varepsilon}.
    \end{equation*}
    Note that $Z\subseteq A\cap Y$.
    Since $Y$ is an $(r+1)$-path closure of $Z$ in $G$ and $p\leq2\lfloor r/2\rfloor+1\leq r+1$, we deduce that $\alpha_{p,r}^\mathcal{F}(G,A)\geq k$ if and only if $\alpha_{p,r}^\mathcal{F}(G,Z)\geq k$ if and only if $\alpha_{p,r}^\mathcal{F}(G[Y],Z)\geq k$.
    
    By scaling $\varepsilon$ accordingly, one can choose the function $f_{\mathrm{pck}}(r,d,\varepsilon)$ with $\abs{Y}\leq f_{\mathrm{pck}}(r,d,\varepsilon)\cdot k^{1+\varepsilon}$, and this completes the proof.
\end{proof}

To prove Lemma~\ref{lem:rd}, we will use the following lemma.

\begin{LEM}\label{lem:rd recursion}
    For every nowhere dense class $\mathcal{C}$ of graphs, there exist a function $f_{\mathrm{rd}}:\mathbb{N}\times\mathbb{N}\times\mathbb{R}_+\to\mathbb{N}$ satisfying the following.
    For every nonempty family $\mathcal{F}$ of connected graphs with at most $d$ vertices, $p,r\in\mathbb{N}$ with $p\leq2\lfloor r/2\rfloor+1$, and $\varepsilon>0$, there exists a polynomial-time algorithm that given a graph $G\in\mathcal{C}$, $A\subseteq V(G)$ with $\abs{A}>f_{\mathrm{rd}}(r,d,\varepsilon)\cdot k^{1+\varepsilon}$, and $k\in\mathbb{N}$, either
    \begin{enumerate}
        \item[$\bullet$]    correctly decides that $\alpha_{p,r}^\mathcal{F}(G,A)>k$, or
        \item[$\bullet$]    outputs a vertex $z\in A$ such that $\alpha_{p,r}^\mathcal{F}(G,A)\geq k$ if and only if $\alpha_{p,r}^\mathcal{F}(G,A\setminus\{z\})\geq k$.
    \end{enumerate}
\end{LEM}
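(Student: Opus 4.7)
The plan is to mirror the structure of Lemma~\ref{lem:core recursion} but reverse the flow of the contradiction: in the covering case one assumes a minimum cover of $Z\setminus\{z\}$ and exhibits a smaller cover; here we will assume a maximum packing $I$ of $A$ of size at least $k$ that is forced to use $z$, and produce an equivalent packing avoiding $z$. Note that $\alpha_{p,r}^\mathcal{F}(G,A\setminus\{z\})\geq k$ always implies $\alpha_{p,r}^\mathcal{F}(G,A)\geq k$, so the nontrivial direction is obtaining a size-$k$ packing of $A$ that avoids the chosen $z$.

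I would first design a polynomial-time $O(\mathsf{OPT}^\varepsilon)$-approximation for the packing number, analogous to Proposition~\ref{prop:apx} (e.g.\ output a maximal packing and apply the VC-dimension bound from Corollary~\ref{cor:AA2014}). If this returns too large a packing, decide $\alpha_{p,r}^\mathcal{F}(G,A)>k$; otherwise obtain a vertex set $X$ of size $O(k^{1+\varepsilon})$ hitting every witness. Set $r':=4pd+3r$ and invoke Lemma~\ref{lem:cl} to get an $(r',f_{\mathrm{cl}}(r',\varepsilon)\cdot|X|^\varepsilon)$-close superset $X_{\mathrm{cl}}$ of size $O(k^{1+3\varepsilon})$. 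Partition $A\setminus X_{\mathrm{cl}}$ by the equivalence relation $u\sim v\iff\rho_{r'}^G[u,X_{\mathrm{cl}}]=\rho_{r'}^G[v,X_{\mathrm{cl}}]$ (Lemma~\ref{lem:proj} bounds the number of classes by $O(|X_{\mathrm{cl}}|^{1+\varepsilon})$). Choosing $f_{\mathrm{rd}}(r,d,\varepsilon)$ so that $|A|$ exceeds $|X_{\mathrm{cl}}|+\idx(\sim)\cdot m^{p(r')}$ for an appropriately large $m$, pigeonhole produces a class of size $>m^{p(r')}$, from which Theorem~\ref{thm:uqw} extracts a small set $S$ with $|S|\leq s(r')$ and a distance-$r'$ independent set $L\subseteq A\setminus S$ in $G\setminus S$ with $|L|\geq m$, all of whose vertices share the same $r'$-projection profile on $X_{\mathrm{cl}}$.

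Next I would perform the pattern enumeration exactly as in the proof of Lemma~\ref{lem:core recursion}: for each $u\in L$, enumerate sets $B\subseteq A\setminus\{u\}$ with $|B|\leq d-1$ such that $G^p[B\cup\{u\}]$ is isomorphic to a graph in $\mathcal{F}$, record the isomorphism type of the component $B^*$ of $(G\setminus S)^p[(B\setminus S)\cup\{u\}]$ containing $u$ together with the $(2r+1)$-projection profiles on $S$ of its vertices, and collect the realized pattern-vector $\mathbf{x}(u)\in\{0,1\}^\ell$ where $\ell\leq2^{d^2/2+sd}\cdot(r+1)^{sd}$. Pigeonhole then produces $\kappa_1\subseteq L$ of size $\xi+1$ with a common pattern-vector, where $\xi$ absorbs $|M_{r'}^G(z,X_{\mathrm{cl}})|$, $|S|$, the slack $d^2/4$, and a term $d(k-1)$ accounting for the number of vertices appearing in elements of any size-$k$ packing. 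Choose $z\in\kappa_1$ maximizing $\dist_{G\setminus S}(\mathcal{B}_z,X_{\mathrm{cl}})$.

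Now suppose for contradiction that $\alpha_{p,r}^\mathcal{F}(G,A)\geq k$ but every size-$k$ packing uses $z$; fix such a packing $I$ and its element $B_z\ni z$. For each $u\in\kappa_1\setminus\{z\}$, the pattern equality (arguing as in Claim~\ref{clm:new2-1} by substituting $S$-avoiding subpaths) yields $B_u\in\mathcal{A}_u$ with $G^p[B_u\cup(B_z\setminus B_z^*)]$ isomorphic to $G^p[B_z\cup\{z\}]$. If some such replacement $B'_u:=B_u^*\cup(B_z\setminus B_z^*)$ were at distance more than $r$ in $G$ from every element of $I\setminus\{B_z\}$, the packing $(I\setminus\{B_z\})\cup\{B'_u\}$ would contradict the assumption that $z$ is forced. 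Hence each $u$ is assigned a blocking $C_u\in I\setminus\{B_z\}$ with $\dist_G(B'_u,C_u)\leq r$. Since the vertices used by $I\setminus\{B_z\}$ number at most $d(k-1)$, by the choice of $\xi$ we find distinct $u,v\in\kappa_1$ sharing a vertex $c\in C_u=C_v$ as endpoint of the witnessing path.

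The main obstacle, paralleling Claims~\ref{clm:new3}--\ref{clm:new5} in the covering proof, will be to concatenate the two short paths $u\leadsto c\leadsto v$ entirely within $G\setminus S$ and of length at most $r'$, thereby contradicting the distance-$r'$ independence of $L$. Each path has length $\leq p(d-1)+r\leq 2pd+r$, so concatenation gives length $\leq 4pd+2r<r'$; the delicate part is showing the path actually avoids $S$ and $X_{\mathrm{cl}}$. Whenever such a path enters $S$, the common $(2r+1)$-projection profiles on $S$ let us reroute through the pattern twin $z$, which either produces a legitimate swap $B_z\mapsto B'_u$ (contradicting that $z$ is forced) or forces $b\in M_{r'}^G(z,X_{\mathrm{cl}})\subseteq X_{\mathrm{cl}}$, which is impossible because we picked $z$ to be farthest from $X_{\mathrm{cl}}$ among $\kappa_1$ and used the $(r',\cdot)$-closeness of $X_{\mathrm{cl}}$ to bound its $r'$-projection. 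Rescaling $\varepsilon$ absorbs all polylogarithmic overheads into the claimed bound $f_{\mathrm{rd}}(r,d,\varepsilon)\cdot k^{1+\varepsilon}$.
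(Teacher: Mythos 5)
Your scaffolding (dual approximation giving a small cover $X$, the $r'$-closure $X_{\mathrm{cl}}$, projection classes, uniform quasi-wideness, pattern vectors, and the twin-substitution claims) matches the paper's proof. But the final combinatorial step is where your argument breaks, and it breaks quantitatively. You propose to let $\xi$ absorb "a term $d(k-1)$ accounting for the number of vertices appearing in elements of any size-$k$ packing," so that two members of $\kappa_1$ are blocked by the same element $C_u=C_v$. That forces $\xi=\Omega(k)$, hence $m=\Omega(k)$, hence the threshold $\abs{A}>\abs{X_{\mathrm{cl}}}+\idx(\sim)\cdot m^{p(r')}$ becomes $\Omega(k^{1+p(r')})$ --- far beyond the allowed $f_{\mathrm{rd}}(r,d,\varepsilon)\cdot k^{1+\varepsilon}$, where every quantity other than $\idx(\sim)$ must stay at $k^{O(\varepsilon)}$. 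So you cannot afford to pigeonhole over all $dk$ vertices of the packing, and no rescaling of $\varepsilon$ repairs this.

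The paper's argument runs in the opposite direction: after showing (via the $S$-avoidance of the paths $P_u$, your Claim-new3 analogue) that the endpoints $c_u$ are pairwise \emph{distinct}, it extracts $\kappa_3$ with pairwise \emph{distinct} blocking elements $C_u$ at a cost of only a factor $d$. It then discards the at most $s$ vertices whose $C_u$ lies within distance $\lfloor r/2\rfloor$ of $S$ (possible because distinct $C_u$'s are at distance more than $r$ from each other, which is exactly why $X$ must be a $(p,\lfloor r/2\rfloor,\mathcal{F})$-cover rather than a radius-$r$ one --- a point your "maximal packing" construction of $X$ glosses over), and routes each remaining $u$ through $B_u^*$, $P_u$, $C_u$, and the cover path from $C_u$ to $X_{\mathrm{cl}}$, obtaining an $X_{\mathrm{cl}}$-avoiding walk in $G\setminus S$ of length at most $\lfloor r'/2\rfloor$ ending in $M_{r'}^G(z,X_{\mathrm{cl}})$. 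Since that projection has size only $O(k^{2\varepsilon})$ by the closeness of $X_{\mathrm{cl}}$, the pigeonhole happens there, and two colliding walks give the forbidden short $u$--$u'$ connection. Your sketch never uses the cover property of $X_{\mathrm{cl}}$ in this routing role, and without it the contradiction cannot be reached within the permitted size of $\kappa_1$.
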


We can prove Lemma~\ref{lem:rd} by iteratively applying Lemma~\ref{lem:rd recursion} to $G$ with $A$ at most $\abs{A}$ times.
To prove Lemma~\ref{lem:rd recursion}, we will use the approximation algorithm in Proposition~\ref{prop:approx2}.

To design the algorithm in Proposition~\ref{prop:approx2}, we will use the following approximation algorithm, which generalizes the approximation algorithm of~\cite{Dvorak2013} for the \textsc{Distance-$r$ Dominating Set}.

\begin{algorithm}[t]
\caption{Finding the $(p,r,\mathcal{F})$-cover of $A$ in $G$.}
\label{alg:cover}
\begin{algorithmic}[1]
	\State{Fix a linear ordering $L$ of $V(G)$.}
	\State{Initialize $s:=0$, $D_s:=\emptyset$, $M_s:=\emptyset$, and $R_s:=A$.}
	\While{there is a set $R\subseteq R_s$ such that $G^p[R]$ is isomorphic to a graph in $\mathcal{F}$}\label{line:begin if}
		\State{Set $s:=s+1$.}
		\State{Find the set $X_s:=\{x^s_1,\ldots,x^s_{\abs{X_s}}\}\subseteq R_s$ with $x^s_1<_L\cdots<_Lx^s_{\abs{X_s}}$ such that $(x^s_1,\ldots,x^s_{\abs{X_s}})$ is the lexicographically smallest tuple satisfying that $G^p[X_s]$ is isomorphic to a graph in $\mathcal{F}$.}
		\State{Set $M_s:=M_{s-1}\cup\{X_s\}$.}
		\State{Set $W_s:=\bigcup_{i\in[\abs{X_s}]}\wre_{r_0}[G,L,x^s_i]$.}
		\State{Set $D_s:=D_{s-1}\cup W_s$.}\label{line:D}
        \State{Set $R_s:=R_{s-1}\setminus N_G^r[W_s]$.}\label{line:end if}
    \EndWhile\\
	\Return $s$, $M_s$, $D_s$, and $R_s$.
\end{algorithmic}
\end{algorithm}

\begin{PROP}\label{prop:approx1}
     For every nonempty family $\mathcal{F}$ of connected graphs with at most $d$ vertices and $p,r,r_0\in\mathbb{N}$ with $r_0\leq2r+1$, there exists a polynomial-time algorithm that given a graph $G$ and $A\subseteq V(G)$, outputs a $(p,r,\mathcal{F})$-cover of $A$ in $G$ having size at most $(d\cdot\wc_{r_0}(G)+1)^2\cdot\alpha_{p,r_0}^\mathcal{F}(G,A)$.
\end{PROP}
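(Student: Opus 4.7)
The algorithm is a greedy procedure in the spirit of Dvo\v{r}\'ak's algorithm for distance-$r$ dominating set. Correctness and polynomial running time come first. The loop terminates because every $x\in X_s$ satisfies $x\in\wre_{r_0}[G,L,x]\subseteq W_s$, so $X_s\subseteq N_G^r[W_s]$ and $R_s$ strictly decreases; since $|V(F)|\leq d$ is constant, enumerating the candidate set $X_s$ in each iteration is polynomial. When the loop halts, $R_{s^*}=A\setminus N_G^r[D_{s^*}]$ has no subset inducing a graph from $\mathcal{F}$ in $G^p$, so $D_{s^*}$ is a $(p,r,\mathcal{F})$-cover of $A$ in $G$. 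The trivial estimate $|W_s|\leq d\cdot\wc_{r_0}(G)$ gives $|D_{s^*}|\leq s^*\cdot d\cdot\wc_{r_0}(G)$, so it remains to prove $s^*\leq(d\cdot\wc_{r_0}(G)+1)\cdot\alpha_{p,r_0}^\mathcal{F}(G,A)$.

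A useful preliminary observation is that $\{X_1,\ldots,X_{s^*}\}$ is itself a $(p,r,\mathcal{F})$-packing of $A$ in $G$: for $s<s'$ one has $X_{s'}\subseteq R_{s'-1}\subseteq R_s\subseteq A\setminus N_G^r[W_s]$ while $X_s\subseteq W_s$, so $\dist_G(X_s,X_{s'})>r$. To extract a \emph{$(p,r_0,\mathcal{F})$-packing} of size at least $s^*/(d\wc_{r_0}(G)+1)$ from this collection, I would define the conflict graph $H$ on $[s^*]$ with $i\sim j$ iff $\dist_G(X_i,X_j)\leq r_0$; any independent set of $H$ is a $(p,r_0,\mathcal{F})$-packing of $A$ in $G$, so showing that $H$ is $d\cdot\wc_{r_0}(G)$-degenerate (with respect to the iteration order) would yield $\chi(H)\leq d\wc_{r_0}(G)+1$ and therefore $\alpha(H)\geq s^*/(d\wc_{r_0}(G)+1)$, which is exactly the bound on $s^*$ I need.

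For the degeneracy, fix $i$ and consider any $j<i$ with $i\sim j$. Choose endpoints $x_j\in X_j$ and $y_i\in X_i$ realizing $\dist_G(X_i,X_j)\leq r_0$, a shortest path $P$ between them of length at most $r_0\leq 2r+1$, and the $\leq_L$-minimum vertex $u$ on $P$. Then $u\in W_i\cap W_j$ by definition of weak $r_0$-reachability, and because $|P|\leq 2r+1$ at least one of $\dist_G(u,x_j)\leq r$ or $\dist_G(u,y_i)\leq r$ holds. The second case would yield $y_i\in N_G^r[W_j]\subseteq N_G^r[D_{i-1}]$, contradicting $y_i\in X_i\subseteq R_{i-1}$, so $\dist_G(u,x_j)\leq r$ and $u\in W_i$. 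The \textbf{main obstacle} is turning this shared-vertex observation into an injection $j\mapsto u_j$ from $\{j<i:i\sim j\}$ into $W_i$, which would deliver the desired degeneracy $|W_i|\leq d\cdot\wc_{r_0}(G)$; I plan to pin down $x_j$, $y_i$, $P$, and $u$ canonically, exploiting the lexicographic selection rule already built into Algorithm~\ref{alg:cover} together with the ordering $L$, so that distinct $j$'s produce distinct $u_j$. Once the degeneracy is in hand, the chain $|D_{s^*}|\leq s^*\cdot d\wc_{r_0}(G)\leq(d\wc_{r_0}(G)+1)\cdot d\wc_{r_0}(G)\cdot\alpha_{p,r_0}^\mathcal{F}(G,A)\leq(d\wc_{r_0}(G)+1)^2\cdot\alpha_{p,r_0}^\mathcal{F}(G,A)$ completes the proof.
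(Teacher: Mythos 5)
Your algorithm, correctness argument, and counting strategy (the conflict graph $H$ on $[s^*]$, bounding its degeneracy by $d\cdot\wc_{r_0}(G)$, and extracting a $(p,r_0,\mathcal{F})$-packing from an independent set) coincide with the paper's proof, and your analysis of a single edge $ij$ of $H$ -- producing the $L$-minimum vertex $u$ of a connecting path with $u\in W_i\cap W_j$ and $\dist_G(u,x_j)\leq r$ -- is exactly the paper's Claim~\ref{clm:new11}. The one step you flag as unresolved, however, is the crux of the count, and the route you propose for it is misdirected: injectivity of $j\mapsto u_j$ does not come from canonicalizing the choice of $x_j$, $y_i$, $P$, or $u$. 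Canonical choices give you no control here, because for distinct $j<j'$ the paths run between different pairs of sets and nothing in a lexicographic tie-breaking rule prevents their $L$-minimum vertices from coinciding.

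What actually forces injectivity is the greedy removal step of the algorithm, i.e.\ precisely the facts you have already derived. Suppose $j<j'<i$ and $u_j=u_{j'}=u$. You know $u\in W_j$ and $\dist_G(u,x_{j'})\leq r$ with $x_{j'}\in X_{j'}\subseteq R_{j'-1}\subseteq R_j$. But $R_j=R_{j-1}\setminus N_G^r[W_j]$ is disjoint from $N_G^r[W_j]\ni x_{j'}$, a contradiction. This is the content of the paper's Claim~\ref{clm:new12}, phrased there via the sets $T_w$ (each $w\in D_s$ serves at most one iteration, because once $w$ enters some $W_q$ everything within distance $r$ of $w$ is deleted from $R_q$ and can never reappear in a later $X_{q'}$). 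With that one observation your map $j\mapsto u_j$ is injective into $W_i$, the degeneracy bound follows, and the rest of your chain of inequalities is correct. So the gap is real but small, and it is closed by the removal mechanics you already invoked to rule out the case $\dist_G(u,y_i)\leq r$, not by any refinement of the selection rule.
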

\begin{proof}
    Firstly, we show that Algorithm~\ref{alg:cover} terminates in polynomial time.
	For each $i\in[s]$, by investigating all subsets of $R_i$ having size at most~$d$, we can enumerate in polynomial time all tuples $(a_1,\ldots,a_\ell)$ of distinct vertices such that $\{a_1,\ldots,a_\ell\}\subseteq R_i$ and $G^p[\{a_1,\ldots,a_\ell\}]$ is isomorphic to a graph in~$\mathcal{F}$.
	Thus, by lexicographically ordering all such tuples, we can find the set $X_s$ in polynomial time.
	If $R_i$ has a subset $R$ such that $G^p[R]$ is isomorphic to a graph in $\mathcal{F}$, then $\abs{R_i}$ is decreased at the end of the iteration.
	Thus, unless $R_i$ has no such subset, the loop runs, and it terminates in at most $\abs{V(G)}$ iterations.
	Therefore, Algorithm~\ref{alg:cover} terminates in polynomial time.
	
	Secondly, we show that $D_s$ is a $(p,r,\mathcal{F})$-cover of $A$ in $G$.
	We may assume that $s>0$.
	Since $R_s$ has no subset $R$ such that $G^p[R]$ is isomorphic to a graph in $\mathcal{F}$, it suffices to show that for each $i\in[s]$, $R_i=A\setminus N_G^r[D_i]$.
	We proceed by induction on $i$ to show that $R_i=A\setminus N_G^r[D_i]$.
	Since $R_0=A$ and $D_1=W_1$, the equality holds for $i=1$.
	Thus, we may assume that $i>1$.
	By the inductive hypothesis, $R_{i-1}=A\setminus N_G^r[D_{i-1}]$.
	Note that $N_G^r[D_{i-1}\cup W_i]=N_G^r[D_{i-1}]\cup N_G^r[W_i]$.
	Since $R_i:=R_{i-1}\setminus N_G^r[W_i]$ and $D_i:=D_{i-1}\cup W_i$,
	\begin{align*}
	    R_i&=R_{i-1}\setminus N_G^r[W_i]\\
	    &=(A\setminus N_G^r[D_{i-1}])\setminus N_G^r[W_i]\\
	    &=A\setminus N_G^r[D_{i-1}\cup W_i]=A\setminus N_G^r[D_i].
	\end{align*}
	Therefore, by induction, $D_s$ is a $(p,r,\mathcal{F})$-cover of $A$ in $G$.
	
	We now show that $\abs{D_s}\leq(d\cdot\wc_{r_0}(G)+1)^2\cdot\alpha_{p,r_0}^\mathcal{F}(G,A)$.
	If $s=0$, then $\gamma_{p,r}^\mathcal{F}(G,A)=0$, so there is nothing to prove.
	Thus, we may assume that $s>0$.
    Since $\abs{D_s}\leq s\cdot d\cdot\wc_{r_0}(G)$, it suffices to find a $(p,r_0,\mathcal{F})$-packing $M\subseteq M_s$ of $A$ in $G$ with $s\leq(d\cdot\wc_{r_0}(G)+1)\cdot\abs{M}$.
	To find such $M$, we construct an auxiliary graph $H$ on the vertex set $[s]$ such that for distinct $i,j\in[s]$, $i$ and $j$ are adjacent in $H$ if and only if $\dist_G(X_i,X_j)\leq r_0$.
	For every $i\in[s]$, since $G^p[X_i]$ is isomorphic to a graph in $\mathcal{F}$, if $H$ has an independent set $I$, then $\{X_i:i\in I\}$ is a $(p,r_0,\mathcal{F})$-packing of $A$ in $G$.
	Therefore, it suffices to find an independent set in $H$ having size at least $s/(d\cdot\wc_{r_0}(G)+1)$.
	
	To do this, we show that the degeneracy of $H$ is at most $d\cdot\wc_{r_0}(G)$.
    Since $\dist_G(X_i,X_j)>r$ for all distinct $i,j\in[s]$, if $r_0\leq r$, then $H$ has no edge.
    Thus, we may assume that $r_0>r$.
	For each $w\in D_s$, let $T_w$ be the set of all integers $s'\in[s]$ such that
	\begin{equation*}
	    w\in\bigcup_{\ell\in[\abs{X_{s'}}]}\wre_r[G,L,x^{s'}_\ell].
	\end{equation*}
	
	We will use the following two claims.
	
	\begin{CLM}\label{clm:new11}
	    For $1\leq j<i\leq s$, if $i$ and $j$ are adjacent in $H$, then $j\in\bigcup_{w\in W_i}T_w$.
	\end{CLM}
	\begin{subproof}
	    Since $i$ and $j$ are adjacent in $H$, $G$ has a path $P$ of length at most $r_0$ between $c_j\in X_j$ and $c_i\in X_i$.
	    By taking a shorter path if necessary, we may assume that $c_i$ and $c_j$ are the only vertices in $V(P)\cap(X_i\cup X_j)$.
	    Let $z$ be the least vertex of $P$ in the ordering $L$.
	    Then $z$ is weakly $r_0$-accessible in $G$ from each of $c_i$ and $c_j$, and therefore $z\in W_j\cap W_i$.
	    Since $X_i\subseteq R_j=R_{j-1}\setminus N_G^r[W_j]$, we have that $\dist_G(X_i,z)\geq\dist_G(X_i,W_j)>r$.
	    Since $r_0\leq2r+1$, the length of the subpath of $P$ between $z$ and~$c_j$ is at most $r$.
        Since $z$ is the least vertex of $P$, $z$ is weakly $r$-accessible in~$G$ from $c_j\in X_j$, and therefore $j\in T_z$.
	    Since $z\in W_i$, we have that $j\in T_z\subseteq\bigcup_{w\in W_i}T_w$, and this proves the claim.
	\end{subproof}
	
	\begin{CLM}\label{clm:new12}
	    For each vertex $w\in D_s$, $\abs{T_w}\leq1$.
	\end{CLM}
	\begin{subproof}
	    Suppose that $D_s$ contains a vertex $w$ with $\abs{T_w}\geq2$.
	    Let $q$ be the smallest integer in $T_w$.
	    Since
	    \begin{equation*}
	        w\in\bigcup_{\ell\in[\abs{X_q}]}\wre_r[G,L,x^q_\ell]\subseteq W_q,
	    \end{equation*}
	    no vertex in $N_G^r[w]$ is in $R_q$.
	    For every $q'\in T_w\setminus\{q\}$, by the definition of $T_w$, $X_{q'}$ has a vertex at distance at most $r$ from $w$.
	    However, since $q<q'$, at least one vertex in $X_{q'}$ should be removed, a contradiction.
	\end{subproof}
	
	For each $s'\in[s]$, by Claim~\ref{clm:new12}, $\abs{\bigcup_{w\in W_{s'}}T_w}\leq\abs{W_{s'}}\leq d\cdot\wc_{r_0}(G)$.
	By Claim~\ref{clm:new11}, $i$ is adjacent in $H$ to at most $d\cdot\wc_{r_0}(G)$ vertices in $[i]$.
	Therefore, the degeneracy of $H$ is at most $d\cdot\wc_{r_0}(G)$, and this completes the proof.
\end{proof}

As mentioned before, we will use the following approximation algorithm to prove Lemma~\ref{lem:rd recursion}.
This algorithm generalizes the approximation algorithm of~\cite{PS2021} for the \textsc{Distance-$r$ Dominating Set} problem.

\begin{PROP}\label{prop:approx2}
    For every nowhere dense class $\mathcal{C}$ of graphs, there exists a function $f_{\mathrm{dual}}:\mathbb{N}\times\mathbb{N}\times\mathbb{R}_+\to\mathbb{N}$ satisfying the following.
    For every nonempty family $\mathcal{F}$ of connected graphs with at most~$d$ vertices, $p,r,r_0\in\mathbb{N}$ with $\max\{p,r_0\}\leq2r+1$, and $\varepsilon>0$, there exists a polynomial-time algorithm that given a graph $G\in\mathcal{C}$ and $A\subseteq V(G)$, outputs a $(p,r,\mathcal{F})$-cover of $A$ in $G$ having size at most $f_{\mathrm{dual}}(r,d,\varepsilon)\cdot\alpha_{p,r_0}^\mathcal{F}(G,A)^{1+\varepsilon}$.
\end{PROP}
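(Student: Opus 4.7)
The plan is to bootstrap Proposition~\ref{prop:approx1}: its bound $(d\cdot\wc_{r_0}(G)+1)^2\cdot\alpha_{p,r_0}^\mathcal{F}(G,A)$ is not useful by itself, since Lemma~\ref{lem:zhu} only gives $\wc_{r_0}(G)=O(|V(G)|^\varepsilon)$, but on a subgraph $H=G[Y]$ with $|V(H)|=O(\alpha^{1+\varepsilon})$ the same lemma yields $\wc_{r_0}(H)=O(\alpha^{\varepsilon(1+\varepsilon)})$, and then Proposition~\ref{prop:approx1} delivers a cover of size $O(\alpha^{1+O(\varepsilon)})$. Here and below I write $\alpha:=\alpha_{p,r_0}^\mathcal{F}(G,A)$.

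First I would greedily compute a maximal $(p,r_0,\mathcal{F})$-packing $M=\{X_1,\dots,X_\ell\}$ of $A$ in $G$, by repeatedly adding a subset $X\subseteq A$ with $G^p[X]$ isomorphic to a graph in $\mathcal{F}$ and at $G$-distance more than $r_0$ from $X_1,\dots,X_{i-1}$; this runs in polynomial time because each candidate has at most $d$ vertices, and yields $\ell\le\alpha$. Writing $T:=\bigcup_{i=1}^\ell X_i\subseteq A$, we have $|T|\le d\alpha$. The key structural property of $T$ is that, by maximality of $M$, every ``bad'' subset $B\subseteq A$ with $G^p[B]$ isomorphic to some graph in $\mathcal{F}$ satisfies $\dist_G(B,T)\le r_0$, and since such a $B$ is connected in $G^p$ with $|B|\le d$, in fact $B\subseteq N_G^{r_0+p(d-1)}[T]$.

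Second, using Lemma~\ref{lem:pth} with a suitably large radius $r^*:=r+r_0+pd$, I would build an $r^*$-path closure $Y\supseteq T$ in $G$ of size at most $f_{\mathrm{pth}}(r^*,\varepsilon)\cdot(d\alpha)^{1+\varepsilon}=O(\alpha^{1+O(\varepsilon)})$, and set $H:=G[Y]$; by the definition of a path closure, $\dist_H(u,v)=\dist_G(u,v)$ for every pair $u,v\in T$ with $\dist_G(u,v)\le r^*$. Third, I would run the algorithm of Proposition~\ref{prop:approx1} on $(H,T)$ with parameters $p,r,r_0$, producing a $(p,r,\mathcal{F})$-cover $D\subseteq Y$ of $T$ in $H$ of size at most
\[
(d\cdot\wc_{r_0}(H)+1)^2\cdot\alpha_{p,r_0}^\mathcal{F}(H,T)\le\bigl(d\cdot f_{\mathrm{wcol}}(r_0,\varepsilon)\cdot|Y|^\varepsilon+1\bigr)^2\cdot\alpha,
\]
where I used that $\alpha_{p,r_0}^\mathcal{F}(H,T)\le\alpha$ since $H$ is a subgraph of $G$ (so distances can only increase) and $T\subseteq A$. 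Plugging in the bound on $|Y|$ and simplifying gives $|D|=O(\alpha^{1+O(\varepsilon)})$.

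The main obstacle is the final step: showing that $D$, possibly augmented with a small additional set (e.g.\ $T$ itself, which has size at most $d\alpha$), is a $(p,r,\mathcal{F})$-cover of $A$ in $G$ and not merely of $T$ in $H$. The intended argument uses the maximality of $M$, which provides, for every bad $B\subseteq A$, a witness $b\in B$ and $t\in T$ with $\dist_G(b,t)\le r_0$, together with the distance preservation inside $H$ guaranteed by the $r^*$-path closure, in order to translate any uncovered bad $B$ into a contradiction with $D$ being a cover of $T$ in $H$. The hypothesis $\max\{p,r_0\}\le2r+1$ is precisely what makes the $r$-ball in $G$ wide enough to close the gap between the packing parameter $r_0$ and the covering parameter $r$ in this translation. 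Once this verification is in place, choosing $f_{\mathrm{dual}}(r,d,\varepsilon)$ and rescaling $\varepsilon$ at the end absorbs the accumulated constants and exponents into the required bound $f_{\mathrm{dual}}(r,d,\varepsilon)\cdot\alpha^{1+\varepsilon}$.
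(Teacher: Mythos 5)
The step you flag as ``the main obstacle'' is not a verification you deferred; it is a gap that the construction cannot close. Proposition~\ref{prop:approx1} applied to $(H,T)$ only guarantees that no subset of $T$ uncovered by $D$ induces a forbidden pattern in $H^p$; it says nothing about a bad set $B\subseteq A$ with $B\cap Y=\emptyset$. Maximality of $M$ only places such a $B$ at $G$-distance at most $r_0$ from $T$, and in the regime that actually matters (the application in Lemma~\ref{lem:rd recursion} takes $r_0\approx 2r+1>r$), distance $r_0$ from $T$ does not put $B$ inside $N_G^r[D\cup T]$. Concretely, take $\mathcal{F}=\{K_1\}$, $p=1$, $r_0=2r+1$, $G$ a path $v_0v_1\cdots v_{2r+1}$ and $A=\{v_0,v_{2r+1}\}$: the greedy packing picks $T=\{v_0\}$, any $D\subseteq Y$ covering $T$ in $H$ can be $\{v_0\}$, and $v_{2r+1}\notin N_G^r[D\cup T]$, so the output is not a $(p,r,\mathcal{F})$-cover of $A$ in $G$. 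There is no small augmentation that repairs this: the bad sets at distance between $r+1$ and $r_0$ from $T$ can occupy a ball whose size is unbounded in $\alpha$. (A smaller issue: your justification that $\alpha_{p,r_0}^{\mathcal{F}}(H,T)\le\alpha_{p,r_0}^{\mathcal{F}}(G,A)$ ``since $H$ is a subgraph'' is wrong as stated --- deleting vertices can destroy the isomorphism $H^p[A_i]\cong F$ even though it increases distances; the inequality does hold here, but only because $Y$ is an $r^*$-path closure of $T$ with $r^*\ge\max\{p,r_0\}$.)

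The paper avoids this trap by never trying to lift a cover from a restricted instance. It first runs the VC-dimension approximation (Proposition~\ref{prop:apx}) to obtain a set $X$ that is already a genuine $(p,r,\mathcal{F})$-cover of $A$ in $G$, of size $O(\gamma^{1+\delta})$ where $\gamma=\gamma_{p,r}^{\mathcal{F}}(G,A)$; this $X$ is the algorithm's output. The kernelization (Theorem~\ref{thm:ker1} with $k=|X|$), Observation~\ref{obs:obs}, Lemma~\ref{lem:zhu}, and Proposition~\ref{prop:approx1} on the kernel $(G[Y],Z)$ are used \emph{only} to prove the numerical duality inequality $\gamma\le C'\cdot\gamma^{8\delta}\cdot\alpha_{p,r_0}^{\mathcal{F}}(G,A)$, hence $\gamma=O(\alpha^{1+9\delta})$, which converts the size bound on $X$ from a function of $\gamma$ into one of $\alpha$. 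Your proposal is missing exactly this ingredient --- a cover of all of $A$ in all of $G$ whose size is controlled by $\gamma$ --- and without Proposition~\ref{prop:apx} (or an equivalent) I do not see how your route can be completed.
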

\begin{proof}
    Let $\delta>0$ be a constant depending on $\varepsilon$ which will be defined later, and $\gamma:=\gamma_{p,r}^\mathcal{F}(G,A)$.
    If $\gamma=0$, then the statement holds by taking an empty set as a $(p,r,\mathcal{F})$-cover of $A$ in $G$.
    Note that $\gamma=0$ if and only if $\alpha_{p,r}^\mathcal{F}(G,A)=0$, and one can check in polynomial time whether $\alpha_{p,r}^\mathcal{F}(G,A)>0$.
    Thus, we may assume that $\gamma>0$.
    
    By Proposition~\ref{prop:apx}, one can find in polynomial time a $(p,r,\mathcal{F})$-cover $X$ of $A$ in $G$ having size at most $q:=f_{\mathrm{apx}}(r,d,\delta)\cdot\gamma^{1+\delta}$.
    Note that $q\geq\gamma\geq1$.
    
    We apply Theorem~\ref{thm:ker1} with $k=q$.
    Since $q\geq\gamma$, the first outcome of Theorem~\ref{thm:ker1} does not arise.
    Thus, the algorithm finds sets $Y\subseteq V(G)$ of size at most $f_{\mathrm{cov}}(r,d,\delta)\cdot q^{1+\delta}$ and $Z\subseteq A\cap Y$ such that $\gamma=\gamma_{p,r}^\mathcal{F}(G[Y],Z)$.
    By Observation~\ref{obs:obs}, $\alpha_{p,r_0}^\mathcal{F}(G[Y],Z)=\alpha_{p,r_0}^\mathcal{F}(G,Z)$.
    By Lemma~\ref{lem:zhu},
    \begin{equation*}
        1\leq\wc_{r_0}(G[Y])\leq f_{\mathrm{wcol}}(r_0,\delta)\cdot\abs{Y}^\delta\leq f_{\mathrm{wcol}}(r_0,\delta)\cdot f_{\mathrm{cov}}(r,d,\delta)^\delta\cdot q^{2\delta}.
    \end{equation*}
    
    By Proposition~\ref{prop:approx1},
    \begin{align*}
        \gamma
        &=\gamma_{p,r}^\mathcal{F}(G[Y],Z)\\
        &\leq(d\cdot\wc_{r_0}(G[Y])+1)^2\cdot\alpha_{p,r_0}^\mathcal{F}(G[Y],Z)\\
        &=(d\cdot\wc_{r_0}(G[Y])+1)^2\cdot\alpha_{p,r_0}^\mathcal{F}(G,Z)\\
        &\leq(d\cdot\wc_{r_0}(G[Y])+1)^2\cdot\alpha_{p,r_0}^\mathcal{F}(G,A)\\
        &\leq(d\cdot f_{\mathrm{wcol}}(r_0,\delta)\cdot f_{\mathrm{cov}}(r,d,\delta)^\delta\cdot q^{2\delta}+1)^2\cdot\alpha_{p,r_0}^\mathcal{F}(G,A).
    \end{align*}
    Note that there is a constant $C>0$ with
    \begin{equation*}
        (d\cdot f_{\mathrm{wcol}}(r_0,\delta)\cdot f_{\mathrm{cov}}(r,d,\delta)^\delta\cdot q^{2\delta}+1)^2\leq C\cdot d^2\cdot f_{\mathrm{wcol}}(r_0,\delta)^2\cdot f_{\mathrm{cov}}(r,d,\delta)^{2\delta}\cdot q^{4\delta}.
    \end{equation*}
    Thus, $\gamma\leq C'\cdot\gamma^{8\delta}\cdot\alpha_{p,r_0}^\mathcal{F}(G,A)$ for $C':=C\cdot d^2\cdot f_{\mathrm{wcol}}(r_0,\delta)^2\cdot f_{\mathrm{cov}}(r,d,\delta)^{2\delta}\cdot f_{\mathrm{apx}}(r,d,\varepsilon)^{4\delta}$, and therefore $\gamma^{1-8\delta}\leq C'\cdot\alpha_{p,r_0}^\mathcal{F}(G,A)$.
    
    Let $\delta$ be a real satisfying that $(1-8\delta)(1+9\delta)\geq1$ and $\delta\leq\varepsilon/11$.
    Such $\delta$ exists, because $(1-8x)(1+9x)\geq1$ for a sufficiently small $x$.
    We remark that $(1+9\delta)(1+\delta)\leq1+11\delta$.
    Then
    \begin{equation*}
        \gamma\leq(C'\cdot\alpha_{p,r_0}^\mathcal{F}(G,A))^{1/(1-8\delta)}\leq(C'\cdot\alpha_{p,r_0}^\mathcal{F}(G,A))^{1+9\delta}.
    \end{equation*}
    Since $(1+9\delta)(1+\delta)\leq1+11\delta$ and $\delta\leq\varepsilon/11$,
    \begin{align*}
        \abs{X}
        &\leq q=f_{\mathrm{apx}}(r,d,\delta)\cdot\gamma^{1+\delta}\\
        &\leq f_{\mathrm{apx}}(r,d,\delta)\cdot(C'\cdot\alpha_{p,r_0}^\mathcal{F}(G,A))^{1+11\delta}\\
        &\leq f_{\mathrm{apx}}(r,d,\delta)\cdot(C'\cdot\alpha_{p,r_0}^\mathcal{F}(G,A))^{1+\varepsilon}.
    \end{align*}
    Thus, the statement holds by setting $f_{\mathrm{dual}}(r,d,\varepsilon):=f_{\mathrm{apx}}(r,d,\delta)\cdot C'^{1+\varepsilon}$.
\end{proof}

We now prove Lemma~\ref{lem:rd recursion}.

\begin{proof}[Proof of Lemma~\ref{lem:rd recursion}]
    The function $f_{\mathrm{rd}}(r,d,\varepsilon)$ will be defined later.
    At the beginning, we assume that $\abs{A}>f_{\mathrm{rd}}(r,d,\varepsilon)\cdot k^{1+C\varepsilon}$ for some constant $C$, and at the end, we scale $\varepsilon$ accordingly.
    
    If $A$ contains a vertex $v$ such that for every $B\subseteq A\setminus\{v\}$ with $\abs{B}\leq d-1$, $G^p[B\cup\{v\}]$ is isomorphic to no graphs in $\mathcal{F}$, then the statement holds by taking $v$ as $z$.
    Thus, we may assume that for every $v\in A$, $A\setminus\{v\}$ has a subset $B$ such that $G^p[B\cup\{v\}]$ is isomorphic to a graph in $\mathcal{F}$.
    
    Since $p\leq2\lfloor r/2\rfloor+1$, by Proposition~\ref{prop:approx2}, one can find a $(p,\lfloor r/2\rfloor,\mathcal{F})$-cover $X$ of $A$ in $G$ having size at most $f_{\mathrm{dual}}(\lfloor r/2\rfloor,d,\varepsilon)\cdot\alpha_{p,r}^\mathcal{F}(G,A)^{1+\varepsilon}$.
    If $\abs{X}>f_{\mathrm{dual}}(\lfloor r/2\rfloor,d,\varepsilon)\cdot k^{1+\varepsilon}$, then $\alpha_{p,r}^\mathcal{F}(G,A)>k$.
    Thus, we may assume that $\abs{X}\leq f_{\mathrm{dual}}(\lfloor r/2\rfloor,d,\varepsilon)\cdot k^{1+\varepsilon}$.
    Let
    \begin{equation*}
        r':=4pd+3r.
    \end{equation*}
    By Lemma~\ref{lem:cl}, one can find in polynomial time an $(r',f_{\mathrm{cl}}(r',\varepsilon)\cdot\abs{X}^\varepsilon)$-close set $X_{\mathrm{cl}}\supseteq X$ of size at most $f_{\mathrm{cl}}(r',\varepsilon)\cdot\abs{X}^{1+\varepsilon}\leq f_{\mathrm{cl}}(r',\varepsilon)\cdot f_{\mathrm{dual}}(\lfloor r/2\rfloor,d,\varepsilon)^{1+\varepsilon}\cdot k^{1+3\varepsilon}$.
    
    We define an equivalence relation $\sim$ on $A\setminus X_{\mathrm{cl}}$ such that for $u,v\in A\setminus X_{\mathrm{cl}}$, $u\sim v$ if and only if $\rho_{r'}^G[u,X_{\mathrm{cl}}]=\rho_{r'}^G[v,X_{\mathrm{cl}}]$.
    Then by Lemma~\ref{lem:proj},
    \begin{align*}
        \idx(\sim)
        &\leq f_{\mathrm{proj}}(r',\varepsilon)\cdot\abs{X_{\mathrm{cl}}}^{1+\varepsilon}\\
        &\leq f_{\mathrm{proj}}(r',\varepsilon)\cdot f_{\mathrm{cl}}(r',\varepsilon)^{1+\varepsilon}\cdot f_{\mathrm{dual}}(\lfloor r/2\rfloor,d,\varepsilon)^{1+3\varepsilon}\cdot k^{1+7\varepsilon}.
    \end{align*}
    Let $p(r')$ and $s:=s(r')$ be the constants in Theorem~\ref{thm:uqw}.
    Let
    \begin{align*}
        \xi&:=d\cdot(f_{\mathrm{cl}}(r',\varepsilon)\cdot f_{\mathrm{dual}}(\lfloor r/2\rfloor,d,\varepsilon)^\varepsilon\cdot k^{2\varepsilon}+s+d^2/4+1),\\
        m&:=2^{2^{d^2/2}\cdot(r+2)^{sd}}\cdot\xi+1.
    \end{align*}
    By setting $C=7+2\cdot p(r')$, one can choose the function $f_{\mathrm{rd}}(r,d,\varepsilon)$ with
    \begin{equation*}
        f_{\mathrm{rd}}(r,d,\varepsilon)\cdot k^{1+C\varepsilon}\geq\abs{X_{\mathrm{cl}}}+\idx(\sim)\cdot m^{p(r')}.
    \end{equation*}
    Since $\abs{A}>f_{\mathrm{rd}}(r,d,\varepsilon)\cdot k^{1+C\varepsilon}$, we have that $\abs{A\setminus X_{\mathrm{cl}}}>\idx(\sim)\cdot m^{p(r')}$.
    Thus, by the pigeonhole principle, there is an equivalence class $\lambda$ of $\sim$ with $\abs{\lambda}>m^{p(r')}$.
    By Theorem~\ref{thm:uqw}, one can find in polynomial time sets $S\subseteq V(G)$ and $L\subseteq\lambda\setminus S$ such that $\abs{S}\leq s$, $\abs{L}\geq m$, and $L$ is distance-$r'$ independent in $G\setminus S$.

    We are going to find a desired vertex $z$ from $L$.
    To do this, we define the following.
    For each $i\in[d]$, let $\mathcal{G}_i$ be the set of all graphs whose vertex sets are $[i]$.
    Note that $\abs{\mathcal{G}_i}=2^{i(i-1)/2}$ for each $i\in[d]$.
    Let $\mathcal{H}'$ be the set of functions $\rho:S\to[r+1]\cup\{\infty\}$.
    Since $\abs{S}\leq s$, we have that $\abs{\mathcal{H}'}\leq(r+2)^s$.
    For each $i\in[d]$, let $\mathcal{H}'_i$ be the set of all vectors $(h_1,\ldots,h_i,g)$ of length $i+1$ where $h_j\in\mathcal{H}'$ for each $j\in[i]$ and $g\in\mathcal{G}_i$.
    Let $\overline{\mathcal{H}'}:=\bigcup_{i=1}^d\mathcal{H}'_i$.
    Note that
    \begin{align*}
        \abs{\overline{\mathcal{H}'}}
        &=\sum_{i=1}^d\abs{\mathcal{H}'_i}=\sum_{i=1}^d(\abs{\mathcal{H}'}^i\cdot\abs{\mathcal{G}_i})\leq\sum_{i=1}^d((r+2)^{si}\cdot2^{i(i-1)/2})\\
        &\leq2^{d(d-1)/2}\cdot\sum_{i=1}^d(r+2)^{si}\leq2^{d(d-1)/2}\cdot2(r+2)^{sd}\leq2^{d^2/2}\cdot(r+2)^{sd}.
    \end{align*}
    
    Let $\ell:=\abs{\overline{\mathcal{H}'}}$.
    We take an arbitrary ordering $\sigma_1,\ldots,\sigma_\ell$ of $\overline{\mathcal{H}'}$.
    For each $v\in L$, let $\mathcal{A}_v:=\emptyset$ and $\mathbf{x}(v)$ be a zero vector of length $\ell$.
    One can enumerate in polynomial time the sets $B\subseteq A\setminus\{v\}$ of size at most $d-1$ such that $G^p[B\cup\{v\}]$ is isomorphic to a graph in $\mathcal{F}$ in polynomial time.
    For each such $B$, we do the following.
    If there is an index $i\in[\ell]$ such that the $i$-th entry of $\mathbf{x}(v)$ is $0$ and for $\sigma_i=(h_1^i,\ldots,h_t^i,g_i)\in\overline{\mathcal{H}'}$, there exists
    \begin{enumerate}
        \item[$\bullet$]    an isomorphism $\phi_i:(B\setminus S)\cup\{v\}\to[t]$ between $(G\setminus S)^p[(B\setminus S)\cup\{v\}]$ and $g_i$ where $\phi_i(v)=1$ and for each $j\in[t]$, $\rho_r^G[\phi^{-1}(j),S]=h^i_j$,
    \end{enumerate}
    then we put $B$ into $\mathcal{A}_v$ and convert the $i$-th entry of $\mathbf{x}(v)$ to $1$.
    Otherwise, we do nothing for the chosen $B$.
    Since $\abs{B}\leq d-1$, one can check in polynomial time whether $B$ satisfies the conditions.
    Thus, the resulting $\mathcal{A}_v$ and $\mathbf{x}(v)$ can be computed in polynomial time.
    
    For each $v\in L$, since $A\setminus\{v\}$ has a subset $B$ such that $G^p[B\cup\{v\}]$ is isomorphic to a graph in $\mathcal{F}$, $\mathcal{A}_v\neq\emptyset$ and $\mathbf{x}(v)$ has a nonzero entry.
    For each $B\in\mathcal{A}_v$, let $B^*$ be the vertex set of the component of $(G\setminus S)^p[(B\setminus S)\cup\{v\}]$ having $v$.
    
    Since $\abs{L}\geq m=2^{2^{d^2/2}\cdot(r+2)^{sd}}\cdot\xi+1$ and $\ell\leq2^{d^2/2}\cdot(r+2)^{sd}$, by the pigeonhole principle, $L$ has a subset $\kappa_1$ such that $\abs{\kappa_1}\geq\xi+1$ and $\mathbf{x}(v)=\mathbf{x}(w)$ for all $v,w\in\kappa_1$.
    Let $z$ be an arbitrary vertex in $\kappa_1$.
    
    We show that $\alpha_{p,r}^\mathcal{F}(G,A)\geq k$ if and only if $\alpha_{p,r}^\mathcal{F}(G,A\setminus\{z\})\geq k$.
    The backward direction is obvious.
    Suppose that $G$ has a $(p,r,\mathcal{F})$-packing $I$ of $A$ in $G$ having size at least $k$.
    We may assume that $z$ is contained in some $B_z\in I$, because otherwise $I$ is also a $(p,r,\mathcal{F})$-packing of $A\setminus\{z\}$.
    In particular, there exist a graph $H\in\mathcal{G}_t$ for some $t\leq d$ and an isomorphism $\psi_z:(B_z\setminus S)\cup\{z\}\to[t]$ between $(G\setminus S)^p[(B_z\setminus S)\cup\{z\}]$ and $H$ where $\psi_z(z)=1$.
    To show that $\alpha_{p,r}^\mathcal{F}(G,A\setminus\{z\})\geq k$, it suffices to show that there exist a vertex $z'\in\kappa_1\setminus\{z\}$ and a set $B_{z'}\subseteq A\setminus\{z\}$ such that $z'\in B_{z'}$ and $(I\setminus\{B_z\})\cup\{B_{z'}\}$ is a $(p,r,\mathcal{F})$-packing of $A\setminus\{z\}$ in $G$ having the same size as $I$.
    
    Suppose for contradiction that no such $z'$ exists.
    It means that for each $v\in\kappa_1\setminus\{z\}$, if $A\setminus\{z\}$ has a subset $B$ such that $v\in B$ and $G^p[B]$ is isomorphic to a graph in $\mathcal{F}$, then $I\setminus\{B_z\}$ contains an element $B'$ with $\dist_G(B,B')\leq r$, because otherwise we can substitute $B_z$ with $B$ from~$I$.
    For each $v\in\kappa_1\setminus\{z\}$, there exist $B_v\in\mathcal{A}_v$ and
    \begin{enumerate}
        \item[$\bullet$]    an isomorphism $\psi_v:(B_v\setminus S)\cup\{v\}\to[t]$ between $(G\setminus S)^p[(B_v\setminus S)\cup\{v\}]$ and $H$ where $\psi_v(v)=1$ and for each $j\in[t]$, $\rho_{r+1}^G[\psi_v^{-1}(j),S]=\rho_{r+1}^G[\psi_z^{-1}(j),S]$.
    \end{enumerate}
    For each $v\in\kappa_1\setminus\{z\}$, let $f(v):=B_v^*\cup(B_z\setminus B_z^*)$.
    
    To derive a contradiction, we do the following steps.
    \begin{enumerate}
        \item[(1)]  Find a set $\kappa_4\subseteq\kappa_1\setminus\{z\}$ such that for each $u\in\kappa_4$, $G^p[f(u)]$ is isomorphic to $G^p[B_z\cup\{z\}]$ and $I$ contains an element $C_u$ with $\dist_G(f(u),C_u)\leq r$ and $\dist_G(C_u,S)>\lfloor r/2\rfloor$.
        \item[(2)]  Show that $\kappa_4$ contains distinct vertices $v$ and $v'$ with $\dist_{G\setminus S}(v,v')\leq r'$.
    \end{enumerate}
    Since $\kappa_4\subseteq L$ is distance-$r'$ independent in $G\setminus S$, these steps derive a contradiction.
    
    Let $B_z^*$ be the vertex set of the component of $(G\setminus S)^p[(B_z\setminus S)\cup\{z\}]$ having $z$.
    Note that for vertices $v,w\in\kappa_1$, $\psi_v^{-1}\circ\psi_w$ is an isomorphism between $(G\setminus S)^p[(B_w\setminus S)\cup\{w\}]$ and $(G\setminus S)^p[(B_v\setminus S)\cup\{v\}]$ assigning $w$ to $v$.
    Thus, $\psi_v^{-1}\circ\psi_z(B_z^*)=B_v^*$.
    
    For the first step, we will use the following three claims.
    The proofs of Claims~\ref{clm:pck new0-1} and~\ref{clm:pck new0-2} are similar to that of Claims~\ref{clm:new2-1} and~\ref{clm:new2}, respectively.
    
    \begin{CLM}\label{clm:pck new0-1}
        For vertices $v,w\in\kappa_1$, $\psi_w^{-1}\circ\psi_v$ is an isomorphism between $G^p[(B_v\setminus S)\cup\{v\}]$ and $G^p[(B_w\setminus S)\cup\{w\}]$.
    \end{CLM}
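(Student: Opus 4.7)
The plan is to mimic the proof of Claim~\ref{clm:new2-1}, but with the projection-profile parameter adjusted to the current setup (profiles of length $r+1$ rather than $2r+1$), and to exploit the hypothesis $p\leq2\lfloor r/2\rfloor+1\leq r+1$ at the critical moment.

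First, I would observe that from the construction of $\psi_v$ and $\psi_w$ (both witnessing the same pattern $\sigma_i\in\overline{\mathcal{H}'}$ since $\mathbf{x}(v)=\mathbf{x}(w)$), the map $\psi_w^{-1}\circ\psi_v$ is already an isomorphism between $(G\setminus S)^p[(B_v\setminus S)\cup\{v\}]$ and $(G\setminus S)^p[(B_w\setminus S)\cup\{w\}]$, and for every $j\in[t]$ we have the equality $\rho_{r+1}^G[\psi_v^{-1}(j),S]=\rho_{r+1}^G[\psi_w^{-1}(j),S]$. Hence it remains to verify that $G^p$-adjacency is preserved in both directions, and by symmetry it suffices to prove that if $\psi_v^{-1}(i)$ is adjacent to $\psi_v^{-1}(j)$ in $G^p$, then so are their $\psi_w^{-1}\circ\psi_v$-images.

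Assuming such adjacency, I would take a shortest path $Q$ of length at most $p$ in $G$ between $\psi_v^{-1}(i)$ and $\psi_v^{-1}(j)$. If $V(Q)\cap S=\emptyset$, the conclusion follows immediately from the $(G\setminus S)^p$-isomorphism. Otherwise, let $q_i$ and $q_j$ be the vertices of $V(Q)\cap S$ closest along $Q$ to $\psi_v^{-1}(i)$ and $\psi_v^{-1}(j)$ respectively, and let $Q_i,Q_j$ be the corresponding $S$-avoiding subpaths, each of length at most $p\leq r+1$. Using that $\rho_{r+1}^G[\psi_v^{-1}(i),S]=\rho_{r+1}^G[\psi_w^{-1}(i),S]$ and $\rho_{r+1}^G[\psi_v^{-1}(j),S]=\rho_{r+1}^G[\psi_w^{-1}(j),S]$, I can find $S$-avoiding paths $Q_i'$ from $\psi_w^{-1}(i)$ to $q_i$ and $Q_j'$ from $\psi_w^{-1}(j)$ to $q_j$ of lengths at most those of $Q_i$ and $Q_j$. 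Substituting $Q_i$ and $Q_j$ by $Q_i'$ and $Q_j'$ in $Q$ yields a walk in $G$ between $\psi_w^{-1}(i)$ and $\psi_w^{-1}(j)$ of length at most $p$, giving the required $G^p$-adjacency.

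The only place where the geometry of the parameters bites is the step where we invoke the $(r+1)$-projection profiles to reroute $Q_i$ and $Q_j$; here the inequality $p\leq r+1$ guaranteed by the hypothesis $p\leq2\lfloor r/2\rfloor+1$ is exactly what makes the lengths of $Q_i$ and $Q_j$ fall within the range where the profiles carry the needed information. This is the main (and essentially only) obstacle, and it is resolved by the problem's own hypothesis, so the proof should go through in close parallel to Claim~\ref{clm:new2-1}.
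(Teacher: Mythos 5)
Your proposal is correct and follows essentially the same route as the paper's proof: reduce to one direction of adjacency preservation, handle an $S$-avoiding witness path via the $(G\setminus S)^p$-isomorphism, and otherwise reroute the two $S$-avoiding end-segments using the equality of $(r+1)$-projection profiles, which applies precisely because $p\leq r+1$. The only cosmetic difference is that the paper phrases the easy case as ``we may assume the vertices are nonadjacent in $(G\setminus S)^p$'' rather than splitting on whether a shortest path meets $S$.
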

    \begin{subproof}
        It suffices to show that for $i,j\in[t]$, $\psi_v^{-1}(i)$ is adjacent to $\psi_v^{-1}(j)$ in $G^p$ if and only if $\psi_w^{-1}(i)$ is adjacent to $\psi_w^{-1}(j)$ in $G^p$.
        Suppose that $\psi_v^{-1}(i)$ is adjacent to $\psi_v^{-1}(j)$ in $G^p$.
        Since $\psi_w^{-1}\circ\psi_v$ is an isomorphism between $(G\setminus S)^p[(B_v\setminus S)\cup\{v\}]$ and $(G\setminus S)^p[(B_w\setminus S)\cup\{w\}]$, we may assume that $\psi_v^{-1}(i)$ and $\psi_v^{-1}(j)$ are nonadjacent in $(G\setminus S)^p[(B_v\setminus S)\cup\{v\}]$.
        Thus, if $G$ has a path of length at most $p$ between $\psi_v^{-1}(i)$ and $\psi_v^{-1}(j)$, then the path must have a vertex in $S$.
        
        We take an arbitrary path $Q$ of $G$ between $\psi_v^{-1}(i)$ and $\psi_v^{-1}(j)$ having length at most $p$.
        Let $q_i$ and $q_j$ be the vertices in $V(Q)\cap S$ such that each of $\dist_Q(\psi_v^{-1}(i),q_i)$ and $\dist_Q(\psi_v^{-1}(j),q_j)$ is minimum.
        Such $q_i$ and $q_j$ exist, because $Q$ has a vertex in $S$.
        Let $Q_i$ be the subpath of $Q$ between $\psi_v^{-1}(i)$ and $q_i$, and $Q_j$ be the subpath of $Q$ between $\psi_v^{-1}(j)$ and $q_j$.
        Note that both $Q_i$ and $Q_j$ are $S$-avoiding paths of length at most $p\leq r+1$.
        
        Since $\{v,w\}\subseteq\kappa_1$, $\rho_{r+1}^G[\psi_v^{-1}(i),S]=\rho_{r+1}^G[\psi_w^{-1}(i),S]$.
        Thus, $G$ has an $S$-avoiding path $Q'_i$ between $\psi_w^{-1}(i)$ and $q_i$ whose length is at most that of $Q_i$.
        Similarly, $G$ also has an $S$-avoiding path $Q'_j$ between $\psi_w^{-1}(j)$ and $q_j$ whose length is at most that of $Q_j$.
        By substituting $Q_i$ and $Q_j$ with $Q'_i$ and $Q'_j$ from $Q$, respectively, we obtain a walk of $G$ between $\psi_w^{-1}(i)$ and $\psi_w^{-1}(j)$ whose length is at most $p$.
        Therefore, $\psi_w^{-1}(i)$ is adjacent to $\psi_w^{-1}(j)$ in $G^p$, and this proves the claim.
    \end{subproof}
    
    \begin{CLM}\label{clm:pck new0-2}
        $\kappa_1$ contains at most $d^2/4$ vertices $v$ such that $G^p[f(v)]$ is not isomorphic to $G^p[B_z]$.
    \end{CLM}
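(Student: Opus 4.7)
The plan is to mirror the proof of Claim~\ref{clm:new2} from the covering part of the paper. For each $v\in\kappa_1\setminus\{z\}$, the natural candidate isomorphism $\phi_v\colon G^p[B_z]\to G^p[f(v)]$ is the map that restricts to $\psi_v^{-1}\circ\psi_z$ on $B_z^*$ (which by Claim~\ref{clm:pck new0-1} bijects $B_z^*$ onto $B_v^*$ while preserving $G^p$-adjacencies) and to the identity on $B_z\setminus B_z^*$. The task reduces to counting the vertices $v$ for which this candidate fails to preserve some cross-adjacency between $B_z^*$ and $B_z\setminus B_z^*$.

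Adjacencies inside $B_z^*$ and inside $B_z\setminus B_z^*$ are preserved for free. For a pair $(i,w)$ with $i\in\psi_z(B_z^*)$ and $w\in(B_z\setminus B_z^*)\cap S$, the agreement $\rho_{r+1}^G[\psi_v^{-1}(i),S]=\rho_{r+1}^G[\psi_z^{-1}(i),S]$ together with $p\le2\lfloor r/2\rfloor+1\le r+1$ immediately gives that $\psi_z^{-1}(i)$ is adjacent to $w$ in $G^p$ if and only if $\psi_v^{-1}(i)$ is, so such pairs cause no failure. The remaining relevant cross-pairs lie in $\psi_z(B_z^*)\times(B_z\setminus(B_z^*\cup S))$, a set whose size is at most $\abs{B_z^*}\cdot\abs{B_z\setminus(B_z^*\cup S)}\le d^2/4$ since $\abs{B_z}\le d$.

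For each such pair $(i,w)$ I would establish two facts, paralleling Claim~\ref{clm:new2}. First, if $\psi_z^{-1}(i)$ is adjacent to $w$ in $G^p$, then so is $\psi_v^{-1}(i)$ for every $v\in\kappa_1$: any length-at-most-$p$ path from $\psi_z^{-1}(i)$ to $w$ must enter $S$ because $w\notin B_z^*\cup S$ and $B_z^*$ is the component of $z$ in $(G\setminus S)^p[(B_z\setminus S)\cup\{z\}]$, so its initial $S$-avoiding segment, of length at most $p\le r+1$, transports via the profile agreement to an $S$-avoiding segment starting at $\psi_v^{-1}(i)$ and reaching the same vertex of $S$, which gives a length-at-most-$p$ walk from $\psi_v^{-1}(i)$ to $w$. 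Second, if $\psi_z^{-1}(i)$ is nonadjacent to $w$ in $G^p$, then at most one $x\in\kappa_1\setminus\{z\}$ satisfies $\psi_x^{-1}(i)\sim_{G^p}w$: supposing two distinct such $x,x'$ realize this via paths $R,R'$ of length at most $p$, either neither meets $S$, in which case concatenating $R$, $R'$ with the within-$B_x^*$ and within-$B_{x'}^*$ connectivity paths in $G\setminus S$ (each of length at most $p(d-1)$) yields a walk in $G\setminus S$ from $x$ to $x'$ of length at most $2pd\le r'$, contradicting that $L$ is distance-$r'$ independent in $G\setminus S$; or one of them, say $R$, meets $S$, in which case the first-$S$-vertex argument combined with profile transfer produces a length-at-most-$p$ walk from $\psi_z^{-1}(i)$ to $w$, contradicting the assumed nonadjacency. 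Combining the two facts bounds the exceptional $v$ by $\abs{B_z^*}\cdot\abs{B_z\setminus(B_z^*\cup S)}\le d^2/4$. I expect the main delicate point to be the ``no $S$-vertex'' subcase of the second fact, where the constants must line up so that $2pd\le r'=4pd+3r$ and where the projection profiles, which here only agree up to distance $r+1$ rather than $2r+1$ as in Claim~\ref{clm:new2}, still control every path surgery; both requirements are met because $p\le r+1$.
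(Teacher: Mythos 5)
Your proposal is correct and follows essentially the same route as the paper's proof: the same reduction to cross-pairs in $\psi_z(B_z^*)\times(B_z\setminus(B_z^*\cup S))$, the same two facts (adjacency of $\psi_z^{-1}(i)$ to $w$ transfers to every $\psi_v^{-1}(i)$ via the first-$S$-vertex surgery, and nonadjacency admits at most one exceptional $x$ by the $2pd\le r'$ concatenation argument), and the same $\abs{B_z^*}\cdot\abs{B_z\setminus(B_z^*\cup S)}\le d^2/4$ count. The constants and profile radii you flag ($p\le r+1$, $2pd\le r'$) line up exactly as in the paper.
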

    \begin{subproof}
        For vertices $u\in\kappa_1\setminus\{z\}$ and $i\in\psi_z(B_z^*)$, since $\{u,z\}\subseteq\kappa_1$, $\rho_{r+1}^G[\psi_u^{-1}(i),S]$ and $\rho_{r+1}^G[\psi_z^{-1}(i),S]$ are same.
        Therefore, for each $w\in S$, $\psi_u^{-1}(i)$ is adjacent to $w$ in $G^p$ if and only if $\psi_z^{-1}(i)$ is adjacent to $w$ in $G^p$.
        By Claim~\ref{clm:pck new0-1}, the restriction of $\psi_u^{-1}\circ\psi_z$ on $B_z^*$ is an isomorphism between $G^p[B_z^*]$ and $G^p[B_u^*]$.
        
        We first show that for all vertices $v\in\kappa_1$, $i\in\psi_z(B_z^*)$, and $w\in B_z\setminus(B_z^*\cup S)$, if $\psi_z^{-1}(i)$ is adjacent to $w$ in $G^p$, then $\psi_v^{-1}(i)$ is adjacent to $w$ in $G^p$.
        Suppose that $\psi_z^{-1}(i)$ is adjacent to $w$ in $G^p$.
        We take an arbitrary path $Q'$ of $G$ between $\psi_z^{-1}(i)$ and $w$ having length at most $p$.
        Since $(G\setminus S)^p[B_z^*]$ is a component of $(G\setminus S)^p[(B_z\setminus S)\cup\{z\}]$ having $z$ and $w\notin B_z^*$, $Q'$ must have a vertex in $S$.
        
        Let $q$ be the the vertex in $V(Q')\cap S$ such that $\dist_{Q'}(\psi_z^{-1}(i),q)$ is minimum.
        Let $Q'_1$ be the subpath of $Q'$ between $\psi_z^{-1}(i)$ and $q$.
        Note that $Q'_1$ is an $S$-avoiding path of length at most $p\leq r+1$.
        Since $\rho_{r+1}^G[\psi_v^{-1}(i),S]=\rho_{r+1}^G[\psi_z^{-1}(i),S]$, $G$ has an $S$-avoiding path $Q'_2$ between $\psi_v^{-1}(i)$ and $q$ having length at most that of $Q'_1$.
        By substituting $Q'_1$ with $Q'_2$ from $Q'$, we obtain a walk of $G$ between $\psi_v^{-1}(i)$ and $w$ having length at most $p$.
        Hence, $\psi_v^{-1}(i)$ is adjacent to $w$ in $G^p$.
        
        Thus, there is no pair of vertices $i\in\psi_z(B_z^*)$ and $w\in B_z\setminus(B_z^*\cup S)$ such that $\psi_z^{-1}(i)$ is adjacent to $w$ in $G^p$ and $\psi_u^{-1}(i)$ is nonadjacent to $w$ in $G^p$.
        
        We now show that if there exist vertices $i\in\psi_z(B_z^*)$ and $w\in B_z\setminus(B_z^*\cup S)$ such that $\psi_z^{-1}(i)$ is nonadjacent to $w$ in $G^p$, then $\kappa_1$ contains at most one vertex $x$ such that $\psi_x^{-1}(i)$ is adjacent to $w$ in $G^p$.
        To prove the claim, it suffices to show this statement, because $\abs{B_z^*}\cdot\abs{B_z\setminus(B_z^*\cup S)}\leq d^2/4$.
        
        Suppose for contradiction that there exist $i\in\psi_z(B_z^*)$, $w\in B_z\setminus(B_z^*\cup S)$, and distinct $x,x'\in\kappa_1$ such that $\psi_z^{-1}(i)$ is nonadjacent to $w$ in $G^p$ and both $\psi_x^{-1}(i)$ and $\psi_{x'}^{-1}(i)$ are adjacent to $w$ in $G^p$.
        Then $G$ has paths $R$ and $R'$ of length at most $p$ from $w$ to $\psi_x^{-1}(i)$ and $\psi_{x'}^{-1}(i)$, respectively.
        
        We first verify that $R$ or $R'$ has a vertex in $S$.
        Suppose not.
        Since $\abs{B_x^*}\leq d$, $G\setminus S$ has a path $R_1$ of length at most $p(d-1)$ between $x$ and $\psi_x^{-1}(i)$.
        Similarly, $G\setminus S$ has a path $R'_1$ of length at most $p(d-1)$ between $x'$ and $\psi_{x'}^{-1}(i)$.
        Since neither $R$ nor $R'$ has a vertex in $S$, by concatenating $R_1$, $R$, $R'$, and $R'_1$, we obtain a walk of $G\setminus S$ of length at most $2pd\leq r'$ between $x$ and $x'$, contradicting the assumption that $L$ is distance-$r'$ independent in $G\setminus S$.
        Hence, $R$ or $R'$ has a vertex in $S$.
        By symmetry, we may assume that $R$ has a vertex in $S$.
        
        Let $t$ be the vertex in $V(R)\cap S$ such that $\dist_R(\psi_x^{-1}(i),t)$ is minimum.
        Let $R_0$ be the subpath of $R$ between $\psi_x^{-1}(i)$ and $t$.
        Note that $R_0$ is an $S$-avoiding path of length at most $p\leq r+1$.
        Since $\rho_{r+1}^G[\psi_x^{-1}(i),S]=\rho_{r+1}^G[\psi_z^{-1}(i),S]$, $G$ has an $S$-avoiding path $R'_0$ between $\psi_z^{-1}(i)$ and $t$ having length at most that of $R_0$.
        By substituting $R_0$ with $R'_0$ from $R$, we obtain a walk of $G$ between $\psi_z^{-1}(i)$ and $w$ having length at most $p$, contradicting the assumption that $\psi_z^{-1}(i)$ is nonadjacent to $w$ in $G^p$, and this proves the claim.
    \end{subproof}
    
    Since $\abs{\kappa_1}\geq d\cdot(f_{\mathrm{cl}}(r',\varepsilon)\cdot f_{\mathrm{dual}}(\lfloor r/2\rfloor,d,\varepsilon)^\varepsilon\cdot k^{2\varepsilon}+s+d^2/4+1)+1$, by Claim~\ref{clm:pck new0-2}, $\kappa_1\setminus\{z\}$ has a subset $\kappa_2$ of size at least $d\cdot(f_{\mathrm{cl}}(r',\varepsilon)\cdot f_{\mathrm{dual}}(\lfloor r/2\rfloor,d,\varepsilon)^\varepsilon\cdot k^{2\varepsilon}+s+1)$ such that for each vertex $u\in\kappa_2$, $G^p[f(u)]$ is isomorphic to $G^p[B_z]$, which is isomorphic to a graph in $\mathcal{F}$.

    For each $u\in\kappa_2$, since $f(u)\subseteq A\setminus\{z\}$, by assumption, $I\setminus\{B_z\}$ contains an element $C_u$ with $\dist_G(f(u),C_u)\leq r$.
    We take an arbitrary path $P_u$ of $G$ between $b_u\in f(u)$ and $c_u\in C_u$ having length at most $r$.
    Since $\{B_z,C_u\}\subseteq I$ which is a $(p,r,\mathcal{F})$-packing of $A$ in $G$, $\dist_G(B_z\setminus B_z^*,C_u)\geq\dist_G(B_z,C_u)>r$.
    Thus, $b_u\in f(u)\setminus(B_z\setminus B_z^*)=B_u^*$.
    
    We will use the following claim.
    
    \begin{CLM}\label{clm:pck new1}
        For each $u\in\kappa_2$, $V(P_u)\cap S=\emptyset$.
    \end{CLM}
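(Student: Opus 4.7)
The plan is to argue by contradiction in direct analogy with Claim~\ref{clm:new3} from the covering proof, using the profile-swap technique enabled by $u,z\in\kappa_1$. Suppose $V(P_u)\cap S\neq\emptyset$ and let $q$ be the vertex of $V(P_u)\cap S$ minimizing $\dist_{P_u}(b_u,q)$. The subpath $P_1$ of $P_u$ between $b_u$ and $q$ is then an $S$-avoiding path of $G$ of length at most $r$, and in particular at most $r+1$.

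Since $b_u\in B_u^*\subseteq(B_u\setminus S)\cup\{u\}$, there is an index $j\in[t]$ with $b_u=\psi_u^{-1}(j)$. By the defining property of $\psi_u$, we have $\rho_{r+1}^G[\psi_u^{-1}(j),S]=\rho_{r+1}^G[\psi_z^{-1}(j),S]$, so $G$ admits an $S$-avoiding path $P_2$ between $\psi_z^{-1}(j)$ and $q$ of length at most $|E(P_1)|$. Concatenating $P_2$ with the subpath of $P_u$ between $q$ and $c_u$ yields a walk of $G$ from $\psi_z^{-1}\circ\psi_u(b_u)$ to $c_u$ of length at most $|E(P_1)|+|E(P_u)|-|E(P_1)|=|E(P_u)|\leq r$.

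I would then close the argument by locating the endpoint $\psi_z^{-1}\circ\psi_u(b_u)$ inside $B_z$: since $z\in L\subseteq\lambda\setminus S$, we have $z\notin S$, so $(B_z\setminus S)\cup\{z\}=B_z\setminus S\subseteq B_z$, and the image $\psi_z^{-1}\circ\psi_u(b_u)$ of $\psi_z^{-1}$ therefore lies in $B_z$. The walk above then witnesses $\dist_G(B_z,C_u)\leq r$, which contradicts the fact that $B_z$ and $C_u$ are distinct elements of the $(p,r,\mathcal{F})$-packing $I$ of $A$ in $G$. The only minor subtlety, and the part that needs to be checked carefully, is that the profile agreement holds up to distance $r+1$ (which is exactly how $\mathcal{H}'$ was defined using $\rho_{r+1}$) so that the swap of $P_1$ by $P_2$ is actually available for the length bound $|E(P_1)|\leq r$; everything else is a routine concatenation argument.
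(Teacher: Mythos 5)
Your proposal is correct and matches the paper's argument essentially verbatim: take the first vertex $q$ of $P_u$ in $S$, use the agreement of the $(r+1)$-projection profiles of $\psi_u^{-1}(j)$ and $\psi_z^{-1}(j)$ on $S$ (guaranteed since $u,z\in\kappa_1$) to swap the $S$-avoiding prefix, and obtain a walk of length at most $r$ from a vertex of $B_z$ to $c_u\in C_u$, contradicting $\dist_G(B_z,C_u)>r$. The subtlety you flag is handled exactly as you suspect, since $|E(P_1)|\le r\le r+1$ and the profiles in $\overline{\mathcal{H}'}$ are defined via $\rho_{r+1}$.
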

    \begin{subproof}
        Suppose for contradiction that for some $u\in\kappa_2$, $V(P_u)\cap S\neq\emptyset$.
        Let $q$ be the vertex in $V(P_u)\cap S$ such that $\dist_{P_u}(b_u,q)$ is minimum.
        Let $P_1$ be the subpath of $P_u$ between $b_u$ and $q$.
        Note that $P_1$ is an $S$-avoiding path of length at most~$r$.
        Since $\{u,z\}\subseteq\kappa_1$, $G$ has an $S$-avoiding path $P_2$ between $\psi_z^{-1}\circ\psi_u(b_u)$ and $q$ having length at most that of $P_1$.
        By substituting $P_1$ with $P_2$ from $P_u$, we obtain a walk of $G$ between $\psi_z^{-1}\circ\psi_u(b_u)\in B_z$ and $c_u$ having length at most $r$, contradicting the assumption that $\dist_G(B_z,C_u)>r$.
    \end{subproof}
    
    Since $L$ is distance-$r'$ independent in $G\setminus S$ and $2r\leq r'$, by Claim~\ref{clm:pck new1}, $c_u\neq c_{u'}$ for distinct $u,u'\in\kappa_2$.
    Since $\abs{\kappa_2}\geq d\cdot(f_{\mathrm{cl}}(r',\varepsilon)\cdot f_{\mathrm{dual}}(\lfloor r/2\rfloor,d,\varepsilon)^\varepsilon\cdot k^{2\varepsilon}+s+1)$ and every element in $I$ contains at most $d$ vertices, there is a set $\kappa_3\subseteq\kappa_2$ of size at least $f_{\mathrm{cl}}(r',\varepsilon)\cdot f_{\mathrm{dual}}(\lfloor r/2\rfloor,d,\varepsilon)^\varepsilon\cdot k^{2\varepsilon}+s+1$ such that $C_u\neq C_{u'}$ for all distinct $u,u'\in\kappa_3$.
    
    Let $\kappa'_3$ be the set of vertices $u\in\kappa_3$ with $\dist_G(C_u,S)\leq\lfloor r/2\rfloor$.
    Since $I$ is a $(p,r,\mathcal{F})$-packing of $A$ in $G$, for all distinct $u,u'\in\kappa_3$, $\dist_G(C_u,C_{u'})>r$.
    Thus, we deduce that $\abs{\kappa'_3}\leq\abs{S}\leq s$.
    Let $\kappa_4:=\kappa_3\setminus\kappa'_3$.
    Note that $\abs{\kappa_4}\geq f_{\mathrm{cl}}(r',\varepsilon)\cdot f_{\mathrm{dual}}(\lfloor r/2\rfloor,d,\varepsilon)^\varepsilon\cdot k^{2\varepsilon}+1$.
    
    We now show that $\kappa_4$ contains distinct vertices $v$ and $v'$ with $\dist_{G\setminus S}(v,v')\leq r'$.
    For each $u\in\kappa_4$, since $G^p[C_u]$ is isomorphic to a graph in $\mathcal{F}$ and $X_{\mathrm{cl}}$ is a $(p,\lfloor r/2\rfloor,\mathcal{F})$-cover of $A$ in $G$, $G$ has a path $R_u$ of length at most $\lfloor r/2\rfloor$ between some $y_u\in C_u$ and $x_u\in X_{\mathrm{cl}}$.
    Since $u\notin\kappa'_3$, $V(R_u)\cap S=\emptyset$.
    Since $G^p[C_u]$ is isomorphic to a connected graph in $\mathcal{F}$, $G$ has a path $Q_u$ of length at most $p(d-1)$ between $c_u$ and $y_u$.
    More specifically, $Q_u$ is a concatenation of $Q_u^1,\ldots,Q_u^{t'}$ for $t'\leq d-1$ such that for each $i\in[t']$, the length of $Q_u^i$ is at most $p$ and the ends of $Q_u^i$ are in $C_u$.
    Since $p\leq2\lfloor r/2\rfloor+1$, for some $j\in[t']$, if $V(Q_u^j)\cap S\neq\emptyset$, then $\dist_G(C_u,S)\leq\lfloor r/2\rfloor$, contradicting that $u\notin\kappa'_3$.
    Thus, $V(Q_u)\cap S=\emptyset$.
    By Claim~\ref{clm:pck new1}, $V(P_u)\cap S=\emptyset$.
    Since $(G\setminus S)^p[B_u^*]$ is connected and $\abs{B_u^*}\leq d$, $G\setminus S$ has a path $O_u$ of length at most $p(d-1)$ between $u$ and $b_u$.
    By concatenating $O_u$, $P_u$, $Q_u$, and $R_u$, we obtain a walk of $G\setminus S$ between $u$ and $x_u$ having length at most
    \begin{align*}
        &\abs{E(O_u)}+\abs{E(P_u)}+\abs{E(Q_u)}+\abs{E(R_u)}\\
        &\leq p(d-1)+r+p(d-1)+\lfloor r/2\rfloor=2p(d-1)+r+\lfloor r/2\rfloor\leq\lfloor r'/2\rfloor.
    \end{align*}
    Let $W_u$ be a path of $G\setminus S$ between $u$ and $x_u$ consisting of edges of the walk.
    Let $w_u$ be the vertex in $V(W_u)\cap X_{\mathrm{cl}}$ such that $\dist_{W_u}(u,w_u)$ is minimum.
    Such $w_u$ exists, because $x_u\in X_{\mathrm{cl}}$.
    Note that the subpath of $W_u$ between $u$ and $w_u$ is an $X_{\mathrm{cl}}$-avoiding path of length at most $\lfloor r'/2\rfloor$.
    Thus, $w_u$ is contained in $M_{r'}^G(u,X_{\mathrm{cl}})$.
    Since $\{u,z\}\subseteq\kappa_1\subseteq\lambda$ where $\lambda$ is an equivalence class of $\sim$, $M_{r'}^G(u,X_{\mathrm{cl}})$ and $M_{r'}^G(z,X_{\mathrm{cl}})$ are same.
    Therefore, $w_u\in M_{r'}^G(z,X_{\mathrm{cl}})$.
    
    Since $\abs{\kappa_4}\geq f_{\mathrm{cl}}(r',\varepsilon)\cdot f_{\mathrm{dual}}(\lfloor r/2\rfloor,d,\varepsilon)^\varepsilon\cdot k^{2\varepsilon}+1\geq\abs{M_{r'}^G(z,X_{\mathrm{cl}})}+1$, by the pigeonhole principle, there are distinct $v,v'\in\kappa_4$ with $w_v=w_{v'}$.
    By concatenating $W_v$ and $W_{v'}$, we obtain a walk of $G\setminus S$ between $v$ and $v'$ having length at most $r'$, contradicting the assumption that $L$ is distance-$r'$ independent in $G\setminus S$.
    
    Therefore, there are a vertex $z'\in\kappa_1\setminus\{z\}$ and a set $B_{z'}\subseteq A\setminus\{z\}$ such that $z'\in B_{z'}$ and $(I\setminus\{B_z\})\cup\{B_{z'}\}$ is a $(p,r,\mathcal{F})$-packing of $A\setminus\{z\}$ in $G$ having the same size as $I$.
    We conclude the proof by scaling $\varepsilon$ to $\varepsilon/C$ throughout the reasoning.
\end{proof}

We present a linear kernel for the \textsc{Annotated $(p,r,\mathcal{F})$-Packing} problem on every class of graphs with bounded expansion, which generalizes the linear kernel of~\cite{PS2021}.

\begin{THM}\label{thm:ker2'}
    For every class $\mathcal{C}$ of graphs with bounded expansion, there exists a function $f_{\mathrm{pck}}:\mathbb{N}\times\mathbb{N}\to\mathbb{N}$ satisfying the following.
    For every nonempty family $\mathcal{F}$ of connected graphs with at most~$d$ vertices and $p,r\in\mathbb{N}$ with $p\leq2\lfloor r/2\rfloor+1$, there exists a polynomial-time algorithm that given a graph $G\in\mathcal{C}$, $A\subseteq V(G)$, and $k\in\mathbb{N}$, either
    \begin{enumerate}
        \item[$\bullet$]    correctly decides that $\alpha_{p,r}^\mathcal{F}(G,A)>k$, or
        \item[$\bullet$]    outputs sets $Y\subseteq V(G)$ of size at most $f_{\mathrm{pck}}(r,d)\cdot k$ and $Z\subseteq A\cap Y$ such that $\alpha_{p,r}^\mathcal{F}(G,A)\geq k$ if and only if $\alpha_{p,r}^\mathcal{F}(G[Y],Z)\geq k$.
    \end{enumerate}
\end{THM}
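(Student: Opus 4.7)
The plan is to follow the proof of Theorem~\ref{thm:ker2} essentially verbatim while replacing each nowhere-dense ingredient by its bounded-expansion counterpart, in the same spirit in which Theorem~\ref{thm:ker1'} was derived from Theorem~\ref{thm:ker1}. First I would establish a linear analog of Lemma~\ref{lem:rd}: given $(G,A,k)$ with $G\in\mathcal{C}$, either correctly decide that $\alpha_{p,r}^\mathcal{F}(G,A)>k$, or output $Z\subseteq A$ of size at most $f_{\mathrm{rd}}(r,d)\cdot k$ with $\alpha_{p,r}^\mathcal{F}(G,A)\geq k$ if and only if $\alpha_{p,r}^\mathcal{F}(G,Z)\geq k$. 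This is obtained by iterating a linear analog of Lemma~\ref{lem:rd recursion} at most $\abs{A}$ times, in which Lemmas~\ref{lem:proj} and~\ref{lem:cl} are replaced by Lemmas~\ref{lem:proj'} and~\ref{lem:cl'}, so that $\idx(\sim)$ and $\abs{M^G_{r'}(z,X_{\mathrm{cl}})}$ become linear in $\abs{X_{\mathrm{cl}}}$ with constants depending only on $r',d$. Setting $\xi$ to a constant multiple of $s+d^2/4+1$ (independent of $k$) and $m$ accordingly, the threshold $\abs{A}>f_{\mathrm{rd}}(r,d)\cdot k$ suffices to extract, via Theorem~\ref{thm:uqw}, the desired distance-$r'$ independent set $L$ inside a single equivalence class of $\sim$.

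The crucial substitution is replacing Proposition~\ref{prop:approx2} with its bounded-expansion counterpart. On a class of bounded expansion, Lemma~\ref{lem:zhu} yields a constant bound $\wc_{r_0}(G)\leq w(r_0)$ depending only on $r_0$ (and $\mathcal{C}$). Hence Proposition~\ref{prop:approx1}, applied with the ``$r$'' parameter equal to $\lfloor r/2\rfloor$ and the ``$r_0$'' parameter equal to $r$ (valid since $p\leq 2\lfloor r/2\rfloor+1$ and $r\leq 2\lfloor r/2\rfloor+1$), produces directly a $(p,\lfloor r/2\rfloor,\mathcal{F})$-cover of $A$ of size at most $(d\cdot w(r)+1)^2\cdot\alpha_{p,r}^\mathcal{F}(G,A)$. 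Either this size exceeds $c(r,d)\cdot k$, in which case we correctly report $\alpha_{p,r}^\mathcal{F}(G,A)>k$, or we obtain the approximate cover $X$ of linear size in $k$ that is needed downstream, and the subsequent $X_{\mathrm{cl}}$ produced by Lemma~\ref{lem:cl'} also has linear size. The remainder of the proof of Lemma~\ref{lem:rd recursion} (the pattern analysis, the choice of $\kappa_1,\ldots,\kappa_4$, and the contradiction with distance-$r'$ independence of $L$) carries through unchanged, since it only uses combinatorial properties of $S$, $L$, and $\psi_v$ that depend on $r$ and $d$ but not on $\varepsilon$.

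Once the linear analog of Lemma~\ref{lem:rd} is in hand, Theorem~\ref{thm:ker2'} follows by the same reasoning as the derivation of Theorem~\ref{thm:ker2} from Lemma~\ref{lem:rd}, replacing Lemma~\ref{lem:pth} by Lemma~\ref{lem:pth'}: compute an $(r+1)$-path closure $Y$ of $Z$ of size at most $f_{\mathrm{pth}}(r+1)\cdot\abs{Z}\leq f_{\mathrm{pth}}(r+1)\cdot f_{\mathrm{rd}}(r,d)\cdot k$. Since $p\leq 2\lfloor r/2\rfloor+1\leq r+1$, the path-closure property gives $\alpha_{p,r}^\mathcal{F}(G,Z)=\alpha_{p,r}^\mathcal{F}(G[Y],Z)$, hence $\alpha_{p,r}^\mathcal{F}(G,A)\geq k$ if and only if $\alpha_{p,r}^\mathcal{F}(G[Y],Z)\geq k$. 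Setting $f_{\mathrm{pck}}(r,d):=f_{\mathrm{pth}}(r+1)\cdot f_{\mathrm{rd}}(r,d)$ then yields the claimed linear kernel.

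The main obstacle is essentially bookkeeping: verifying that all constants produced at each step (in particular $\xi$, $m$, the bounded-expansion approximation constant $(d\cdot w(r)+1)^2$, and the final $(r+1)$-path closure factor) combine into a single linear bound depending only on $r$ and $d$. No new structural ingredient is required, since the proof of Theorem~\ref{thm:ker2} was already designed so that the only factor exceeding linear is $\idx(\sim)$, and that factor collapses from $k^{1+\varepsilon}$ to a constant multiple of $k$ on bounded-expansion classes.
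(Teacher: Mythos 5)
Your proposal is correct and follows essentially the same route as the paper: the authors likewise derive Theorem~\ref{thm:ker2'} by rerunning the proofs of Lemmas~\ref{lem:rd recursion} and~\ref{lem:rd} with Lemmas~\ref{lem:proj'} and~\ref{lem:cl'} in place of their nowhere-dense versions, with Proposition~\ref{prop:approx1} (exploiting the constant weak coloring number on bounded-expansion classes) in place of Proposition~\ref{prop:approx2}, with $\xi$ set to a constant, and then conclude via Lemma~\ref{lem:pth'}. Your bookkeeping, including the parameter choice $(\lfloor r/2\rfloor, r)$ in Proposition~\ref{prop:approx1}, matches the intended argument.
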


To prove Theorem~\ref{thm:ker2'}, we will use the following lemma.

\begin{LEM}\label{lem:rd'}
    For every class $\mathcal{C}$ of graphs with bounded expansion, there exists a function $f_{\mathrm{rd}}:\mathbb{N}\times\mathbb{N}\to\mathbb{N}$ satisfying the following.
    For every nonempty family $\mathcal{F}$ of connected graphs with at most~$d$ vertices and $p,r\in\mathbb{N}$ with $p\leq2\lfloor r/2\rfloor+1$, there exists a polynomial-time algorithm that given a graph $G\in\mathcal{C}$, $A\subseteq V(G)$, and $k\in\mathbb{N}$, either
    \begin{enumerate}
        \item[$\bullet$]    correctly decides that $\alpha_{p,r}^\mathcal{F}(G,A)>k$, or
        \item[$\bullet$]    outputs a set $Z\subseteq A$ of size at most $f_{\mathrm{rd}}(r,d)\cdot k$ such that $\alpha_{p,r}^\mathcal{F}(G,A)\geq k$ if and only if $\alpha_{p,r}^\mathcal{F}(G,A\setminus\{z\})\geq k$.
    \end{enumerate}
\end{LEM}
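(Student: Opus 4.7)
The plan is to mirror the proof of Lemma~\ref{lem:rd}: first establish a single-step reduction analogous to Lemma~\ref{lem:rd recursion} that, whenever $\abs{A} > f_{\mathrm{rd}}(r,d) \cdot k$, either certifies $\alpha_{p,r}^\mathcal{F}(G,A) > k$ or finds a vertex $z \in A$ for which $\alpha_{p,r}^\mathcal{F}(G,A) \geq k$ if and only if $\alpha_{p,r}^\mathcal{F}(G, A \setminus \{z\}) \geq k$; then iterate this step at most $\abs{A}$ times, starting from the input $A$, to obtain the desired set $Z$ of size at most $f_{\mathrm{rd}}(r,d) \cdot k$. Only the recursive step requires a new argument; the iteration framework is exactly as in the proof of Lemma~\ref{lem:rd}.

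The single-step reduction follows the proof of Lemma~\ref{lem:rd recursion} almost verbatim, with three substitutions that eliminate every $k^{\varepsilon}$ factor. First, in place of Proposition~\ref{prop:approx2} I invoke Proposition~\ref{prop:approx1} with $r_0 := r$, which is legal since $r \leq 2\lfloor r/2\rfloor + 1$: by Zhu's characterization, bounded expansion of $\mathcal{C}$ implies that $\wc_r(G)$ is bounded by a constant $c_1 = c_1(\mathcal{C},r)$, so Proposition~\ref{prop:approx1} outputs a $(p, \lfloor r/2\rfloor, \mathcal{F})$-cover $X$ of $A$ of size at most $(d c_1 + 1)^2 \cdot \alpha_{p,r}^\mathcal{F}(G,A)$; if $\abs{X} > (d c_1 + 1)^2 \cdot k$ we reject, and otherwise $\abs{X} \leq (d c_1 + 1)^2 \cdot k$. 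Second, I replace Lemmas~\ref{lem:cl} and~\ref{lem:proj} by their bounded-expansion counterparts Lemmas~\ref{lem:cl'} and~\ref{lem:proj'}, so that an $(r', f_{\mathrm{cl}}(r'))$-close superset $X_{\mathrm{cl}} \supseteq X$ has linear size $f_{\mathrm{cl}}(r') \cdot \abs{X}$ and the number of distinct $r'$-projection profiles on $X_{\mathrm{cl}}$ is at most $f_{\mathrm{proj}}(r') \cdot \abs{X_{\mathrm{cl}}}$. Third, I set the threshold constant $\xi := d \cdot (f_{\mathrm{cl}}(r') + d^2/4 + s + 1)$, echoing the adjustment used to deduce Lemma~\ref{lem:core'} from its nowhere-dense counterpart; then $m := 2^{2^{d^2/2} (r+2)^{sd}} \cdot \xi + 1$ is an absolute constant, so the pigeonhole threshold $\abs{X_{\mathrm{cl}}} + \idx(\sim) \cdot m^{p(r')}$ is linear in $k$, and the coefficient of $k$ is what I take for $f_{\mathrm{rd}}(r,d)$.

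With these parameter choices, every remaining step in the proof of Lemma~\ref{lem:rd recursion} goes through unchanged: Theorem~\ref{thm:uqw} extracts a distance-$r'$ independent set $L \subseteq \lambda \setminus S$ of size at least $m$ inside an equivalence class $\lambda$ of $\sim$; the pattern set $\overline{\mathcal{H}'}$, whose cardinality remains bounded by the same absolute constant $2^{d^2/2} (r+2)^{sd}$, yields via pigeonhole a subset $\kappa_1 \subseteq L$ of size at least $\xi + 1$ on which all projection patterns agree; the candidate $z$ is then any vertex of $\kappa_1$; and the contradiction via the chain $\kappa_2 \supseteq \kappa_3 \supseteq \kappa_4$, Claims~\ref{clm:pck new0-1}, \ref{clm:pck new0-2}, and \ref{clm:pck new1}, and concatenation into a short $G \setminus S$-walk violating the distance-$r'$-independence of $L$, proceeds identically, since none of those arguments use $\varepsilon$.

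The main obstacle is purely bookkeeping: making sure the cascade of constants $c_1, f_{\mathrm{cl}}(r'), f_{\mathrm{proj}}(r'), s(r'), p(r'), m, \xi$ aggregates into a single constant $f_{\mathrm{rd}}(r,d)$ depending only on $\mathcal{C}$, $r$, and $d$, and checking that the constant-factor approximation of Proposition~\ref{prop:approx1} remains valid across all iterations — this is automatic because $G$ itself never changes while $A$ only shrinks, so $\wc_r(G)$ stays bounded by the same $c_1$ throughout.
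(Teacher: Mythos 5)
Your proposal is correct and matches the paper's own proof, which likewise derives the lemma from Lemmas~\ref{lem:rd} and~\ref{lem:rd recursion} by substituting Lemmas~\ref{lem:proj'} and~\ref{lem:cl'} for Lemmas~\ref{lem:proj} and~\ref{lem:cl}, replacing Proposition~\ref{prop:approx2} by the constant-factor Proposition~\ref{prop:approx1} (valid since weak coloring numbers are constant on bounded expansion classes), and resetting $\xi$ to a constant. The only difference is an immaterial bookkeeping discrepancy in the exact value of $\xi$.
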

\begin{proof}
    It easily follows from the proofs of Lemmas~\ref{lem:rd} and~\ref{lem:rd recursion} by setting $\xi:=d\cdot(f_{\mathrm{cl}}(r')+s+1+d/4)$ and replacing Lemmas~\ref{lem:proj} and~\ref{lem:cl} and Proposition~\ref{prop:approx2} with Lemma~\ref{lem:proj'} and~\ref{lem:cl'} and Proposition~\ref{prop:approx1}, respectively.
\end{proof}

We can now easily prove Theorem~\ref{thm:ker2'} from the proof of Theorem~\ref{thm:ker2} by replacing Lemmas~\ref{lem:rd} and~\ref{lem:pth} with Lemmas~\ref{lem:rd'} and~\ref{lem:pth'}, respectively.

\subsection{Proof of the second main theorem}

The kernel of Theorem~\ref{thm:ker2} always returns an instance $(G[Y],Z,k)$ for the \textsc{Annotated $(p,r,\mathcal{F})$-Packing} problem, even if the kernel gets an input $(G,V(G),k)$, which is basically an input for the \textsc{$(p,r,\mathcal{F})$-Packing}.
From the resulting instance $(G[Y],Z,k)$, however, we can prove Theorem~\ref{thm:ker2} by constructing an instance $(G',V(G'),k+1)$ of size $O(k^{1+\varepsilon})$ which is equivalent to $(G,V(G),k)$ in polynomial time.

\pcknowhere*

\begin{proof}
    Let $d$ be the maximum order of a graph in $\mathcal{F}$.
    We apply Theorem~\ref{thm:ker2} for $(G,V(G),k)$.
    We may assume that the first outcome of Theorem~\ref{thm:ker2} does not arise.
    Let $(G[Y],Z,k)$ be the resulting instance, and $F$ be a graph in $\mathcal{F}$ such that $\abs{V(F)}$ is minimum.
    Note that $\alpha_{p,r}^\mathcal{F}(G)\geq k$ if and only if $\alpha_{p,r}^\mathcal{F}(G[Y],Z)\geq k$.
    
    Suppose that $r\leq1$.
    Note that $p\leq1$.
    Then for every set $B\subseteq Z$, $G[Y]^p[B]$ is isomorphic to $G[Z]^p[B]$.
    Since $r\leq 1$, every $(p,r,\mathcal{F})$-packing of $Z$ in $G[Y]$ is also a $(p,r,\mathcal{F})$-packing of $G[Z]$, and vice versa.
    Hence, $\alpha_{p,r}^\mathcal{F}(G[Y],Z)=\alpha_{p,r}^\mathcal{F}(G[Z])$.
    Let $G'$ be the disjoint union of $G[Z]$ and $F$.
    Note that $G'$ has at most $\abs{Z}+d\leq\abs{Y}+d$ vertices.
    Since $\abs{Y}=O(k^{1+\varepsilon})$, one can choose the function $g_{\mathrm{pck}}(r,d,\varepsilon)$ such that $\abs{V(G')}\leq g_{\mathrm{pck}}(r,d,\varepsilon)\cdot k^{1+\varepsilon}$.
    
    If $p=0$ and $\abs{V(F)}\geq2$, then $\alpha_{p,r}^\mathcal{F}(G)=0$, so we report that $\alpha_{p,r}^\mathcal{F}(G)=0$.
    Otherwise, either $p=0$ and $\abs{V(F)}=1$, or $p=1$.
    In both cases, the following hold.
    \begin{enumerate}
        \item[$\bullet$]    For every $(p,r,\mathcal{F})$-packing $I$ of $Z$ in $G[Y]$, $I\cup\{V(F)\}$ is a $(p,r,\mathcal{F})$-packing of $G'$.
        \item[$\bullet$]    For every maximal $(p,r,\mathcal{F})$-packing $I$ of $G'$, by the minimality of $F$, $I$ should contain $\{V(F)\}$, and $I\setminus\{V(F)\}$ is a $(p,r,\mathcal{F})$-packing of $Z$ in $G[Y]$.
    \end{enumerate}
    Therefore, $\alpha_{p,r}^\mathcal{F}(G)\geq k$ if and only if $\alpha_{p,r}^\mathcal{F}(G[Y],Z)\geq k$ if and only if $\alpha_{p,r}^\mathcal{F}(G')\geq k+1$.
    Hence, the statement holds for $r\leq1$.
    
    Thus, we may assume that $r\geq2$.
    Suppose that $p=0$.
    Since $G^0$ is edgeless, if $\abs{V(F)}\geq2$, then $\alpha_{0,r}^\mathcal{F}(G)=0$, so we report that $\alpha_{0,r}^\mathcal{F}(G)=0$.
    Thus, we may assume that $\abs{V(F)}=1$.
    We construct a graph $G'$ from $G[Y]$ as follows.
    \begin{enumerate}
        \item[$\bullet$]    Add two new vertices $h$ and $h'$, and connect them by a path of length $\lceil r/2\rceil$.
        \item[$\bullet$]    For each vertex $v\in Y\setminus Z$, connect $h$ and $v$ by a path of length $\lfloor r/2\rfloor$.
    \end{enumerate}
    Let $G'$ be the resulting graph.
    Since $r\geq2$, $G'$ is well defined.
    Note that $G'$ can be constructed in polynomial time, and
    \begin{equation*}
        \abs{V(G')}\leq1+\abs{Z}+\lfloor r/2\rfloor\cdot\abs{Y\setminus Z}+\lceil r/2\rceil\leq1+\abs{Y}+\lceil r/2\rceil.
    \end{equation*}
    Since $\abs{Y}=O(k^{1+\varepsilon})$, one can choose the function $g_{\mathrm{pck}}(r,d,\varepsilon)$ such that $\abs{V(G')}\leq g_{\mathrm{pck}}(r,d,\varepsilon)\cdot k^{1+\varepsilon}$.
    
    Since $\alpha_{0,r}^\mathcal{F}(G)\geq k$ if and only if $\alpha_{0,r}^\mathcal{F}(G[Y],Z)\geq k$, to show that $\alpha_{0,r}^\mathcal{F}(G)\geq k$ if and only if $\alpha_{0,r}^\mathcal{F}(G')\geq k+1$, it suffices to show that $\alpha_{0,r}^\mathcal{F}(G[Y],Z)\geq k$ if and only if $\alpha_{0,r}^\mathcal{F}(G')\geq k+1$.
    
    Firstly, suppose that $\alpha_{0,r}^\mathcal{F}(G[Y],Z)\geq k$.
    Then $G[Y]$ has a $(0,r,\mathcal{F})$-packing $I$ of $Z$ having size at least $k$.
    Note that $\dist_{G'}(h',Z)>r$.
    Thus, $I\cup\{h'\}$ is a $(0,r,\mathcal{F})$-packing of $G'$ having size at least $k+1$.
    Therefore, $\alpha_{0,r}^\mathcal{F}(G')\geq k+1$.
    
    Conversely, suppose that $\alpha_{0,r}^\mathcal{F}(G')\geq k+1$.
    Then $G'$ has a $(0,r,\mathcal{F})$-packing $I$ of $G'$ having size at least $k+1$.
    We verify that $I$ contains at most one element having a vertex in $V(G')\setminus Z$.
    Note that $V(G')\setminus Z=N^{\lfloor r/2\rfloor}_{G'}[h]\cup\{h'\}$.
    Thus, the distance in $G'$ between two vertices in $V(G')\setminus Z$ is at most~$r$.
    Since $I$ is a $(p,r,\mathcal{F})$-packing, $I$ contains at most one element having a vertex in $V(G')\setminus Z$.
    
    Thus, $I$ has a subset $I'$ of size at least $k$ such that each element in $I'$ is a subset of $Z$.
    Since $p=0$ and $G[Y]$ is an induced subgraph of $G'$, $I'$ is a $(0,r,\mathcal{F})$-packing of $Z$ in $G[Y]$.
    Therefore, $\alpha_{0,r}^\mathcal{F}(G[Y],Z)\geq k$.
    Hence, the statement holds for $r\geq2$ and $p=0$.
    
    Thus, we may further assume that $p>0$.
    Let $p':=\lfloor p/2\rfloor$.
    By Lemma~\ref{lem:critical}, one can find in polynomial time a $(p,\mathcal{F})$-critical graph $H$ having at most $d(dp+1)/2$ vertices.
    Let $x$ be a vertex of $H$.
    We construct a graph $G'$ as follows.
    \begin{enumerate}
        \item[$\bullet$]    Take the disjoint union of $G[Y]$ and $H$, and add a new vertex $h$.
        \item[$\bullet$]    For each vertex $v\in Y\setminus Z$, connect $h$ and $v$ by a path $P_v$ of length $\lfloor r/2\rfloor$.
        \item[$\bullet$]    For each vertex $v\in N_H^{p'}[x]$, connect $h$ and $v$ by a path $P_v$ of length $\lceil r/2\rceil$.
    \end{enumerate}
    Let $G'$ be the resulting graph.
    Note that $G'$ can be constructed in polynomial time, and
    \begin{equation*}
        \abs{V(G')}\leq1+\abs{Z}+\lfloor r/2\rfloor\cdot\abs{Y\setminus Z}+\lceil r/2\rceil\cdot\abs{N_H^{p'}[x]}\leq1+\abs{Y}+\lceil r/2\rceil\cdot d(dp+1)/2.
    \end{equation*}
    Since $\abs{Y}=O(k^{1+\varepsilon})$, one can choose the function $g_{\mathrm{pck}}(r,d,\varepsilon)$, such that $\abs{V(G')}\leq g_{\mathrm{pck}}(r,d,\varepsilon)\cdot k^{1+\varepsilon}$.
    
    We verify that for every set $B\subseteq Z$, $G'^p[B]$ is isomorphic to $G[Y]^p[B]$.
    If $G'$ has a path $P$ between two vertices in $Z$ such that $V(P)\setminus Y$ is nonempty, then $P$ should have $h$, and therefore the length of $P$ is at least $2\lfloor r/2\rfloor+2>p$.
    Thus, every path of $G'$ between two vertices in $Z$ having length at most $p$ is also a path of $G[Y]$.
    Therefore, for every set $B\subseteq Z$, $G'^p[B]$ is isomorphic to $G[Y]^p[B]$.
    
    We remark that $\lfloor p/2\rfloor\leq\lfloor r/2\rfloor$.
    Note that $p\leq2\lfloor r/2\rfloor\leq r+1$.
    If $p\leq r$, then $\lfloor p/2\rfloor\leq\lfloor r/2\rfloor$ clearly.
    Otherwise, $r$ should be even, so that $\lfloor p/2\rfloor=\lfloor (r+1)/2\rfloor=\lfloor r/2\rfloor$.
    Thus, $\lfloor p/2\rfloor\leq\lfloor r/2\rfloor$.
    
    We will use the following claim.
    
    \begin{CLM}\label{clm:preserve new}
        For vertices $v,w\in V(H)$, $v$ and $w$ are adjacent in $H^p$ if and only if $v$ and $w$ are adjacent in $G'^p$.
    \end{CLM}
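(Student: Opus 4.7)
The plan is to follow exactly the strategy of Claim~\ref{clm:preserve} from the proof of Theorem~\ref{thm:ker1 origin}, with minor modifications to account for the altered gadget (paths from $h$ to vertices of $N_H^{p'}[x]$ now have length $\lceil r/2\rceil$ rather than $r$, and paths from $h$ to $Y\setminus Z$ have length $\lfloor r/2\rfloor$). The forward direction is immediate because $H$ sits in $G'$ as an induced subgraph, so any path of length at most $p$ in $H$ is also a path of length at most $p$ in $G'$.

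For the backward direction I would argue by contradiction: assume some $v,w\in V(H)$ are nonadjacent in $H^p$ but adjacent in $G'^p$, and fix a path $P$ in $G'$ of length at most $p$ between them. Since $v$ and $w$ are nonadjacent in $H^p$, $P$ must leave $V(H)$, hence must traverse $h$. The only connections from $V(H)$ into the remainder of $G'$ are the attachment paths $P_u$ for $u\in N_H^{p'}[x]$, so $P$ enters the gadget at some $v'\in N_H^{p'}[x]$ and re-enters $V(H)$ at some $w'\in N_H^{p'}[x]$; because $P$ is a path it visits $h$ at most once, so this excursion is unique. The subpath $Q_1$ of $P$ between $v'$ and $w'$ must therefore traverse $P_{v'}$ and $P_{w'}$ in full, and so has length at least $2\lceil r/2\rceil$.

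Next, I would use the fact that both $v'$ and $w'$ lie in $N_H^{p'}[x]$ to produce a walk $Q_2$ of $H$ between $v'$ and $w'$ of length at most $2p'=2\lfloor p/2\rfloor$, by concatenating shortest $v'$--$x$ and $x$--$w'$ paths in $H$. By the remark preceding the claim, which gives $\lfloor p/2\rfloor\leq\lfloor r/2\rfloor$, combined with $\lfloor r/2\rfloor\leq\lceil r/2\rceil$, we obtain $|E(Q_2)|\leq 2\lfloor r/2\rfloor\leq 2\lceil r/2\rceil\leq |E(Q_1)|$. Substituting $Q_2$ for $Q_1$ in $P$ produces a walk of $H$ between $v$ and $w$ of length at most $|E(P)|\leq p$, contradicting nonadjacency of $v$ and $w$ in $H^p$.

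The only genuinely new ingredient compared to Claim~\ref{clm:preserve} is verifying the inequality $2\lfloor p/2\rfloor\leq 2\lceil r/2\rceil$, which is where the halved gadget lengths are exactly balanced against the hypothesis $p\leq 2\lfloor r/2\rfloor+1$ by placing $\lceil r/2\rceil$ on the $H$-side of the gadget. I do not anticipate any real obstacle beyond this parity check and the routine verification that $h$ is indeed a cut vertex separating $V(H)$ from $Y$ in $G'$, which makes the ``single-excursion'' structure of $P$ transparent.
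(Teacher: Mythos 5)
Your proposal is correct and follows essentially the same route as the paper's own proof: forward direction by induced-subgraph containment, backward direction by locating the unique excursion of $P$ through $h$ between two attachment vertices $v',w'\in N_H^{p'}[x]$, replacing it with a $v'$--$w'$ walk through $x$ of length at most $2\lfloor p/2\rfloor$, and closing with the inequality $2\lfloor p/2\rfloor\leq2\lceil r/2\rceil$ drawn from the remark $\lfloor p/2\rfloor\leq\lfloor r/2\rfloor$. No gaps.
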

    \begin{subproof}
        Since $H$ is an induced subgraph of $G'$, the forward direction is obvious.
        We show the backward direction.
        Suppose for contradiction that $v$ and $w$ are nonadjacent in $H^p$ and $G'$ has a path $P$ of length at most $p$ between $v$ and $w$.
        Since $v$ and $w$ are nonadjacent in $H^p$, $P$ has a vertex not in $V(H)$.
        Thus, $P$ should have $h$.
        Since $h$ is a cut-vertex of $G'$ and the ends of $P$ are in $V(H)$, $P$ has vertices $v',w'\in N^{p'}_H[x]$ such that the subpath $Q_1$ of $P$ between $v'$ and $w'$ has $h$ as an internal vertex.
        Note that every vertex of $P$ which is not an internal vertex of $Q_1$ is contained in $V(H)$.
        Since both $v'$ and $w'$ are in $N^{p'}_H[x]$, $H$ has a path $Q_2$ of length at most $2p'=2\lfloor p/2\rfloor$ between $v'$ and~$w'$.
        
        Since $2\lfloor p/2\rfloor\leq2\lceil r/2\rceil$ and $Q_2$ has length $2\lceil r/2\rceil$, by substituting $Q_1$ with $Q_2$ from $P$, we obtain a path of $H$ between $v$ and $w$ having length at most $p$, contradicting the assumption that $v$ and $w$ are nonadjacent in $H^p$.
    \end{subproof}
    
    Since $\alpha_{p,r}^\mathcal{F}(G)\geq k$ if and only if $\alpha_{p,r}^\mathcal{F}(G[Y],Z)\geq k$, to show that $\alpha_{p,r}^\mathcal{F}(G)\geq k$ if and only if $\alpha_{p,r}^\mathcal{F}(G')\geq k+1$, it suffices to show that $\alpha_{p,r}^\mathcal{F}(G[Y],Z)\geq k$ if and only if $\alpha_{p,r}^\mathcal{F}(G')\geq k+1$.
    
    Firstly, suppose that $\alpha_{p,r}^\mathcal{F}(G[Y],Z)\geq k$.
    Then $G[Y]$ has a $(p,r,\mathcal{F})$-packing $I$ of $Z$ having size at least $k$.
    Since $H$ is $(p,\mathcal{F})$-critical, $V(H)$ has a subset $X$ such that $H^p[X]$ is isomorphic to a graph in $\mathcal{F}$.
    By Claim~\ref{clm:preserve new}, $H^p[X]$ is isomorphic to $G'^p[X]$.
    Note that $\dist_{G'}(X,Z)>r$.
    Thus, $I\cup\{X\}$ is a $(p,r,\mathcal{F})$ is a $(p,r,\mathcal{F})$-packing of $G'$ having size at least $k+1$.
    Therefore, $\alpha_{p,r}^\mathcal{F}(G')\geq k+1$.
    
    Conversely, suppose that $\alpha_{p,r}^\mathcal{F}(G')\geq k+1$.
    Then $G'$ has a $(p,r,\mathcal{F})$-packing $I$ having size at least $k+1$.
    We verify that $I$ contains at most one element having a vertex in $V(G')\setminus Z$.
    Since $p'=\lfloor p/2\rfloor\leq\lfloor r/2\rfloor$, the distance in $G'$ between two vertices in $N_{G'}^{\lfloor r/2\rfloor}[h]\cup N_{G'}^{p'}[x]$ is at most $r$.
    By Lemma~\ref{lem:guard}, $I$ has no element which is a subset of $V(H)\setminus N_{G'}^{p'}[x]$.
    Therefore, $I$ contains at most one element having a vertex in $V(G')\setminus Z$.
    
    Thus, $I$ has a subset $I'$ of size at least $k$ such that each element in $I'$ is a subset of $Z$.
    Note that for every $B\in I'$, $G'^p[B]$ is isomorphic to $G[Y]^p[B]$.
    Since $G[Y]$ is an induced subgraph of $G'$, $I'$ is a $(p,r,\mathcal{F})$-packing of $Z$ in $G[Y]$.
    Therefore, $\alpha_{p,r}^\mathcal{F}(G[Y],Z)\geq k$.
    Hence, the statement holds for $r\geq2$ and $p>0$, and this completes the proof.
\end{proof}

Similarly, we can show the following.

\begin{THM}\label{thm:ker2' origin}
    For every class $\mathcal{C}$ of graphs with bounded expansion, there exists a function $g_{\mathrm{pck}}:\mathbb{N}\times\mathbb{N}\to\mathbb{N}$ satisfying the following.
    For every nonempty family $\mathcal{F}$ of connected graphs with at most~$d$ vertices and $p,r\in\mathbb{N}$ with $p\leq2\lfloor r/2\rfloor+1$, there exists a polynomial-time algorithm that given a graph $G\in\mathcal{C}$ and $k\in\mathbb{N}$, either
    \begin{enumerate}
        \item[$\bullet$]    correctly decides that $\alpha_{p,r}^\mathcal{F}(G)=0$,
        \item[$\bullet$]    correctly decides that $\alpha_{p,r}^\mathcal{F}(G)>k$, or
        \item[$\bullet$]    constructs a graph $G'$ such that $\abs{V(G')}=O(k)$, and $\alpha_{p,r}^\mathcal{F}(G)\geq k$ if and only if $\alpha_{p,r}^\mathcal{F}(G')\geq k+1$.
    \end{enumerate}
\end{THM}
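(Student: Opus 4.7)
The plan is to mirror the proof of Theorem~\ref{thm:ker2 origin} verbatim, replacing the almost-linear annotated kernel (Theorem~\ref{thm:ker2}) by its linear counterpart (Theorem~\ref{thm:ker2'}), and checking that the gadget used to convert an annotated instance into a non-annotated one adds only $O(1)$ vertices. Concretely, given $(G,k)$ with $G\in\mathcal{C}$, I would first invoke Theorem~\ref{thm:ker2'} on the instance $(G,V(G),k)$. If that algorithm correctly declares $\alpha_{p,r}^\mathcal{F}(G)>k$, we output the second verdict. Otherwise we obtain sets $Y\subseteq V(G)$ and $Z\subseteq V(G)\cap Y$ with $\abs{Y}\leq f_{\mathrm{pck}}(r,d)\cdot k$ and $\alpha_{p,r}^\mathcal{F}(G)\geq k$ iff $\alpha_{p,r}^\mathcal{F}(G[Y],Z)\geq k$. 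If no graph in $\mathcal{F}$ embeds into $G^p$, we can detect this in polynomial time (by brute force over $\abs{V(G)}^d$ candidate vertex sets) and report $\alpha_{p,r}^\mathcal{F}(G)=0$; in particular we may assume $\alpha_{p,r}^\mathcal{F}(G)\geq 1$.

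Next I would carry out exactly the same case distinction used in the proof of Theorem~\ref{thm:ker2 origin}, using the minimum-order witness $F\in\mathcal{F}$ and (when $p>0$) a $(p,\mathcal{F})$-critical graph $H$ on at most $d(dp+1)/2$ vertices provided by Lemma~\ref{lem:critical}. In the case $r\leq 1$, take $G'$ to be the disjoint union $G[Z]\sqcup F$; in the case $p=0$, $r\geq 2$, $\abs{V(F)}=1$, attach a new vertex $h$ and a pendant $h'$ to $G[Y]$ through paths of length $\lfloor r/2\rfloor$ to $Y\setminus Z$ and a path of length $\lceil r/2\rceil$ to $h'$; and in the case $p>0$, attach a copy of $H$ to $G[Y]$ via paths of length $\lfloor r/2\rfloor$ and $\lceil r/2\rceil$ from a new vertex $h$ to $Y\setminus Z$ and to $N_H^{\lfloor p/2\rfloor}[x]$ respectively. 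The equivalence $\alpha_{p,r}^\mathcal{F}(G)\geq k\iff\alpha_{p,r}^\mathcal{F}(G')\geq k+1$ follows line-by-line from the corresponding argument in the proof of Theorem~\ref{thm:ker2 origin}: Claim~\ref{clm:preserve new} shows the gadget does not create spurious short paths, Lemma~\ref{lem:guard} forces every $(p,r,\mathcal{F})$-packing of $G'$ to use at most one block inside the gadget, and the properties of $Y$ (containing an $(r+1)$-path closure of $Z$) ensure $G[Y]^p[B]\cong G^p[B]$ for every $B\subseteq Z$ of size at most $d$.

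For the size bound, observe that the gadget contributes at most $1+\lceil r/2\rceil\cdot d(dp+1)/2$ vertices (or fewer in the trivial cases), which is a constant depending only on $r,d,p$ and hence on $r,d$ since $p\leq 2\lfloor r/2\rfloor+1$. Combined with $\abs{Y}\leq f_{\mathrm{pck}}(r,d)\cdot k$, this yields $\abs{V(G')}\leq g_{\mathrm{pck}}(r,d)\cdot k$ for an appropriate function $g_{\mathrm{pck}}$, and the whole procedure runs in polynomial time because every component (Theorem~\ref{thm:ker2'}, Lemma~\ref{lem:critical}, the explicit gadget construction) does. I do not foresee a genuine obstacle here: once Theorem~\ref{thm:ker2'} is available, the only thing requiring care is to re-verify that the additive $O(1)$ gadget does not spoil linearity, which is immediate because $r,d,p$ are fixed.
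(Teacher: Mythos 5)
Your proposal is correct and follows exactly the route the paper intends: the paper proves this theorem by the single remark ``Similarly, we can show the following,'' i.e.\ by rerunning the proof of Theorem~\ref{thm:ker2 origin} with the linear annotated kernel of Theorem~\ref{thm:ker2'} in place of Theorem~\ref{thm:ker2} and observing that the critical-graph gadget adds only a constant (in $r$, $d$, $p$) number of vertices. Your spelled-out version, including the case analysis and the $O(1)$ gadget size accounting, matches this intended argument.
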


\providecommand{\bysame}{\leavevmode\hbox to3em{\hrulefill}\thinspace}
\providecommand{\MR}{\relax\ifhmode\unskip\space\fi MR }
\providecommand{\MRhref}[2]{%
  \href{http://www.ams.org/mathscinet-getitem?mr=#1}{#2}
}
\providecommand{\href}[2]{#2}

\end{document}